\PassOptionsToPackage{prologue,dvipsnames}{xcolor}
\documentclass[acmsmall,nonacm]{acmart}

\citestyle{acmauthoryear}

\usepackage{ifthen}
\newif\ifappendix
\appendixtrue

\usepackage{preamble}

\newcommand{\mit}[1]{{\small#1}}

\begin{document}

\title{Logical Relations for Session-Typed Concurrency}

\author{Stephanie Balzer}
\authornotemark[1]
\affiliation{%
\institution{Carnegie Mellon University}
\country{USA}
}
\author{Farzaneh Derakhshan}
\authornote{The first two authors have equal contributions.}
\affiliation{%
\institution{Illinois Institute of Technology}
\country{USA}
}
\author{Robert Harper}
\affiliation{%
\institution{Carnegie Mellon University}
\country{USA}
}
\author{Yue Yao}
\affiliation{%
\institution{Carnegie Mellon University}
\country{USA}
}
\begin{abstract}

Program \emph{equivalence} is the heart of
reasoning about and proving properties of programs.
To assert noninterference, for example,
a program is shown to be equivalent to itself up to the confidentiality level of an observer.
A powerful enabler for such proofs are \emph{logical relations}, which,
guided by the type structure,
prescribe when two programs are indistinguishable.
Logical relations enjoy ample exploration in functional languages,
including languages with general recursion and a higher-order store---%
yet logical relations for session types only exist for terminating languages.
This paper scales logical relations to \emph{general recursive} session types.
It develops a logical relation for \emph{progress-sensitive equivalence}
for \emph{intuitionistic linear logic session types},
tackling the challenges non-termination and concurrency pose.
In particular, the relation only equates a diverging program with another diverging one and
accounts for nondeterminism of scheduling.
The logical relation has two distinguishing characteristics: it is
\text{(i)} indexed with an \emph{intuitionistic linear sequent},
validating \emph{cut reductions} and affording \emph{biorthogonal closure},
and \text{(ii)} bound by an \emph{observation index},
stratifying the logical relation in the presence of recursion.
Biorthogonal closure validates the logical relation,
proving that the induced equivalence is \emph{sound and complete} with regard to
closure of weak bisimilarity under parallel composition.
Soundness guarantees that the equivalence has enough discriminatory power,
completeness ensures that it is maximally permissive.
The logical relation is then put to test on the example of noninterference.

\end{abstract}

\begin{CCSXML}
<ccs2012>
   <concept>
       <concept_id>10003752.10003790.10003801</concept_id>
       <concept_desc>Theory of computation~Linear logic</concept_desc>
       <concept_significance>500</concept_significance>
       </concept>
   <concept>
       <concept_id>10002978.10002986.10002990</concept_id>
       <concept_desc>Security and privacy~Logic and verification</concept_desc>
       <concept_significance>300</concept_significance>
       </concept>
   <concept>
       <concept_id>10003752.10003753.10003761.10003764</concept_id>
       <concept_desc>Theory of computation~Process calculi</concept_desc>
       <concept_significance>300</concept_significance>
       </concept>
   <concept>
       <concept_id>10003752.10003790.10011740</concept_id>
       <concept_desc>Theory of computation~Type theory</concept_desc>
       <concept_significance>500</concept_significance>
       </concept>
 </ccs2012>
\end{CCSXML}

\ccsdesc[500]{Theory of computation~Linear logic}
\ccsdesc[300]{Security and privacy~Logic and verification}
\ccsdesc[300]{Theory of computation~Process calculi}
\ccsdesc[500]{Theory of computation~Type theory}

\keywords{Logical relations, step-indexing, biorthogonality, intuitionistic linear logic session types, progress-sensitive noninterference}

\maketitle

\section{Introduction}
Whether two programs are \emph{equivalent} is at the heart of many program verification problems,
such as proofs of parametricity, compiler correctness, and noninterference.
This paper develops a \emph{logical relation} to reason about program equivalence of \emph{session-typed processes} and
proves soundness and completeness of the relation via a \emph{biorthogonality} argument.

\paragraph{Message-passing concurrency.}

Rooted in process calculi \myCite{HoareBOOK1985,MilnerBook1980,MilnerBook1989},
message-passing concurrency enjoys widespread adoption, including languages such as Erlang, Go, and Rust.
A program in this setting amounts to a number of \emph{processes} connected by \emph{channels},
which compute by exchanging \emph{messages} along these channels.
Messages can even amount to channels themselves,
giving rise to so-called \emph{higher-order} channels,
as present in the $\pi$-calculus \myCite{MilnerBook1999,SangiorgiWalkerBook2001}.
This feature is equally empowering as daunting because it changes the process topology dynamically.
Originally untyped, the $\pi$-caluclus \myCite{MilnerBook1999} has gradually been enriched with types \myCite{SangiorgiWalkerBook2001}
to prescribe the kinds of messages that can be exchanged over a channel.
More advanced type systems
\myCite{KobayashiLICS1997,IgarashiPOPL2001,IgarashiARTICLE2004,KobayashiCONCUR2006}
additionally assert correctness properties, such as deadlock freedom and data race freedom.

To prescribe not only the types of exchanged messages but also the \emph{protocol} underlying the exchange,
\emph{session types} \myCite{HondaCONCUR1993,HondaESOP1998} were introduced.
Session types rely on a \emph{linear} treatment of channels to model the state transitions induced by a protocol.
This foundation manifests in a correspondence between linear logic and the session-typed $\pi$-calculus
\myCite{CairesCONCUR2010,WadlerICFP2012},
resulting in two families of session type languages:
\emph{intuitionistic linear logic session types (\illst)}
\myCite{CairesCONCUR2010,ToninhoESOP2013,ToninhoPhD2015,CairesARTICLE2016}
and \emph{classical linear logic session types (\cllst)}
\myCite{WadlerICFP2012,LindleyMorrisESOP2015,LindleyMorrisICFP2016,KokkePOPL2019}.
Due to their logical foundation well-typed  \illst/\cllst processes
not only are protocol-compliant (\aka preservation),
but also free of data races and deadlocks (\aka progress).

This paper studies logical relations for \illst-typed processes.
\illst reject linear negation and distinguish the \emph{provider} from the \emph{client} side of a channel.
Run-time configurations of processes thus give raise to a \emph{rooted tree},
with a \emph{child} node as the provider and the \emph{parent} node as the client.

\paragraph{Program equivalence.}

Today's predominant techniques for reasoning about program equivalence of stateful programs are
Kripke logical relations (KLRs) \myCite{PittsStarkHOOTS1998,AhmedPOPL2009,DreyerICFP2010,NeisJFP2011,HurDreyerPOPL2011,ThamsborgBirkedalICFP2011}
and bisimulations \myCite{KoutavasWandPOPL2006,SumiiPierceARTICLE2007,StovringLassenPOPL2007,SangiorgiLICS2007}.
KLRs tend to be used for sequential, ML-like languages,
bisimulations for concurrent process calculi.
KLRs are phrased by structural induction on types,
bisimulations by coinduction.
Hur et al. \myCite{HurPOPL2012} note that KLRs and bisimulations have mutually disjoint strengths and weaknesses.
Whereas KLRS naturally support higher-order features, but struggle with recursive types,
bisimulations naturally support recursion, but struggle with higher-order features.
To support recursive types KLRs employ step-indexing \myCite{AppelMcAllesterTOPLAS2001,AhmedPhD2004,AhmedESOP2006},
complicating proofs with step arithmetic \myCite{BentonHurDagstuhl2010} and
challenging transitivity of the logical relation \myCite{HurPOPL2012}.

This paper scales logical relations to session-typed concurrency in the presence of \emph{general recursive types},
contributing a \emph{recursive session logical relation (\slr)} for intuitionistic linear logic session types (\illst).
In contrast to KLRs, which index the logical relation with the type of the considered expression,
\slrs index the logical relation with an \emph{intuitionistic linear sequent},
denoting the types of the free channels along which the considered process configuration exchanges messages with the outside world.
The usual term interpretation clause
$$(e_1; e_2) \in  \mc{E}\llbracket \tau \rrbracket \; \m{iff} \; \dots$$
for expressions \tm{e_1} and \tm{e_2} and a type \tm{\tau} in KLRs
thus takes the form
$$(\mc{D}_1; \mc{D}_2) \in  \mc{E}\llbracket \Delta \Vdash A \rrbracket \; \m{iff} \; \dots$$
in \slrs,
for configurations \tm{\mc{D}_1} and \tm{\mc{D}_1} with the types of free channels denoted by the sequent \tm{\Delta \Vdash A}.
This generalization is necessitated by the underlying computational model (message-passing concurrency)
and the goal to accommodate various program verification problems, including noninterference.
The use of a sequent as an index makes explicit the \emph{duality} of session types,
allowing a type to be associated with two interpretations:
as a \emph{provider} of a session of that type and a \emph{client}.
For example, as a provider of a session \tm{\& \{\mathit{left}{:}A, \mathit{right}{:} B \}},
two related configurations may \emph{assume} to receive related messages,
\ie either they both receive \tm{\mathit{left}} or \tm{\mathit{right}}.
As a client of such a session, conversely,
related configurations must \emph{assert} to either both send \tm{\mathit{left}} or \tm{\mathit{right}}.
The latter condition, crucial for establishing noninterference,
is lost when only indexing the logical relation with one type (\ie the type of the provider).
Duality is also central to enforcing a \emph{resource} semantics for channels.
For example, as a provider of a session \tm{\chanout{A}{B}}, promising to send a channel of type \tm{A},
related configurations must give up access to the sent channel,
whereas a client of such a session will become its owner.
Thanks to the validity of semantic cut (\Cref{lem:compositionality}),
the usual, single-type-index interpretation falls out for free as a special case.

To accommodate general recursive types, \slrs use an \emph{observation index} to stratify the logical relation.
An observation index \tm{m} is associated with the free channels in
\tm{(\mc{D}_1; \mc{D}_2) \in  \mc{E}\llbracket \Delta \Vdash A \rrbracket^m}
and bounds the number of messages that \tm{\mc{D}_1} and \tm{\mc{D}_2} exchange along those channels.
The index is thus associated with an \emph{externally observable} event,
symmetrically bounding both configurations.
In contrast KLRs employ a step index
\myCite{AppelMcAllesterTOPLAS2001,AhmedESOP2006,DreyerLICS2009},
which is tied to the internal computation or unfolding steps of one of the two terms, making the relation asymmetric.

We establish several metatheoretic properties of \slrs.
In particular, we show that \slrs entail an \emph{equivalence relation},
whose proof of transitivity (\Cref{lem:transitivity-eq})
benefits from the extensionality of the observation index.
Given the possibility of divergence,
the equivalence is \emph{progress-sensitive} \myCite{HeidinSabelfeldMartkoberdorf2011}
in that it equates a divergent program only with another diverging one.
We also show that \slrs are \emph{sound and complete} (\Cref{thm:top-top-closure}) with regard to
closure of weak \emph{bisimilarity} under parallel composition,
using a \emph{biorthogonality} argument \`{a} la Pitts \myCite{PittsMSCS2000}.
The \emph{duality} of session types establishes a natural connection to biorthogonality.
To put the \slr to test, we employ it to verify \emph{progress-sensitive noninterference}.

\paragraph{Contributions.}

In summary, this paper makes the following contributions:

\begin{itemize}

\item \emph{Recursive session logical relation (\slr)},
a binary logical relation for \emph{progress-sensitive equivalence} of \illst processes with general recursion,
featuring an \emph{intuitionistic linear sequent} and \emph{observation index},
to capture duality and to support general recursion without compromising extensionality, \respb.

\item Closure of the \slr under parallel composition, \aka \emph{semantic cut} (\Cref{lem:compositionality}).

\item \slr induces an \emph{equivalence relation}, including \emph{transitivity} (\Cref{lem:transitivity-eq}).

\item Soundness and completeness of the \slr with regard to \emph{weak asynchronous bisimilarity},
\aka \emph{adequacy} (\Cref{lem:soundness-completeness}).

\item Biorthogonal (\tm{\top\top}-) closure of the induced equivalence relation (\Cref{thm:top-top-closure}),
guaranteeing that the equivalence has enough discriminatory power (soundness),
while being maximally permissive (completeness).

\item Small case study applying the \slr to \emph{progress-sensitive noninterference}.

\end{itemize}

\ifappendix
\paragraph{Appendix.}
The complete formalization and all the proofs are available in the Appendix.
\else
\paragraph{Technical report.}
The complete formalization and all proofs are available in \myCite{BalzerARXIV2026}.
\fi

\section{Background}
\label{sec:background}
This section familiarizes with intuitionistic linear logic session types (\illst).

\subsubsection{\illst type system.}
\label{subsec:type-system}

Our development is based on a variant of \illst languages
\myCite{CairesCONCUR2010,ToninhoESOP2013,ToninhoPhD2015,CairesARTICLE2016}
that supports general recursive types \myCite{ToninhoESOP2013,BalzerICFP2017}.
We refer to this language as \lang.

\lang's connectives are drawn from intuitionistic linear logic and obey the following grammar
$$A, B ::= \intchoice{A} \mid \extchoice{A} \mid \chanout{A}{B} \mid \chanin{A}{B} \mid 1 \mid Y,$$
where \tm{L} ranges over finite, non-empty sets of labels denoted by \tm{\ell} and \tm{k},
the primitive values in \lang.
Type variable \tm{Y} is a fixed point whose definition \tm{Y=A} is collected in a global signature \tm{\Sigma}.
General recursive types are supported through this definition mechanism.
They must be \emph{contractive} \myCite{GayHoleARTICLE2005}, demanding a message exchange before recurring,
and \emph{equi-recursive} \myCite{CraryPLDI1999},
avoiding explicit (un)fold messages and relating types up to their unfolding.

Process terms are typed using the \emph{intuitionistic sequent}
$$\Omega \vdash_{\Sigma} P::  x{:}A$$
to be read as
\textit{``process \mit{$P$} provides a session of type \mit{$A$} along channel variable \mit{$x$},
given the typing of sessions offered along channel variables in \mit{$\Omega$} and
given the type and process definitions in \mit{$\Sigma$}''}.
\mit{$\Omega$} is a \emph{linear} context, consisting of assumptions \mit{$y_i{:}B_i$},
indicating for each channel variable \mit{$y_i$} its session type \mit{$B_i$}.
The signature \mit{$\Sigma$} contains global type and process definitions
and facilitates recursive type and process definitions.
Bound channel variables are substituted with channels that are created at run-time upon process spawning.
When the distinction is clear from the context we refer to either as "channels" for brevity.

\illst rejects linear negation, allowing the singleton right-hand side of the sequent to be interpreted as the type of the \emph{providing} process,
which, conversely, is the \emph{client} of the sessions in \mit{$\Omega$}.
Channels can thus be typed with the type of the providing process,
rather than typing the two channel endpoints dually,
as done in classical linear logic session types (\cllst)
\myCite{WadlerICFP2012,LindleyMorrisESOP2015,LindleyMorrisICFP2016}.
To express the \emph{duality} in behavior the \illst type system is given as \emph{sequent calculus}, comprising both a \emph{right} and a \emph{left} rule per connective.
Right rules define a communication from the point of view of the provider, left rules from the point of view of the client.

\Cref{fig:structural-type-system} summarizes the typing rules.
Cut reduction in the sequent calculus invites a computational, bottom-up reading of the rules,
where the type of the conclusion denotes the protocol state of the provider before the message exchange
and the type of the premise the protocol state after the message exchange.
The polarity of type moreover determines the direction of communication: for positive connectives, the provider sends and the client receives, for negative connectives, the provider receives and the client sends.
We review each rule in turn next.

\begin{figure}
\centering
\begin{small}
\input{figs/structural-type-system}
\end{small}
\caption{Process term typing rules and signature checking rules of \lang.}
\label{fig:structural-type-system}
\end{figure}

The additive connectives \tm{\intchoice{A}} and \tm{\extchoice{A}}
denote \emph{labelled choices} of sessions \tm{A_{\ell}},
where the labels \tm{\ell} range over the finite, non-empty set \tm{L}.
The two connectives differ in who sends a label and thus makes a choice.
For \tm{\intchoice{A}}, the provider chooses,
for \tm{\extchoice{A}}, the client chooses,
giving the connectives the names \emph{internal choice} and \emph{external choice}, \respb.
The duality of behavior of a provider and client is conveyed by the process terms of the right and left rules, \respb.
For example, a provider of an internal choice (rule \tm{\oplus R}) executes the process term \tm{x.k; P} to send the label \tm{k} along its providing channel \tm{x},
after which it will continue with \tm{P} of type \tm{A_k}.
Conversely, a client of an internal choice (rule \tm{\oplus L}) executes the process term \tm{\mathbf{case}\, x(\ell \Rightarrow Q_\ell)_{\ell \in L}} to receive any label \tm{k \in L} along \tm{x},
after which it will continue with the chosen branch \tm{Q_k} of type \tm{A_k}.

The multiplicative connectives \tm{\chanout{A}{B}} (tensor) and \tm{\chanin{A}{B}} (lolli) allow channels themselves to be sent over channels,
changing the process topology at run-time.
These connectives endow \lang with \emph{higher-order channels},
making channels first-class values.
Again, the two connectives differ in who sends a channel:
for \tm{\chanout{A}{B}}, the provider sends, for \tm{\chanin{A}{B}}, the client sends.
Due to linearity, absence of weakening specifically, the sender loses access to the sent channel in its continuation;
as can be seen in the premises of rules \tm{\otimes R} and \tm{\multimap L}. 
We use \tm{\leftarrow} to denote variable binding.
For example, the process term \tm{z\leftarrow \mathbf{recv}\, x; P} in the conclusion of rule \tm{\otimes L} binds the received channel to \tm{y}, with scope \tm{P}.

The unit of \tm{\otimes}, \tm{1}, allows a process to terminate.
Here, the provider sends a close message,
\tm{\cls{x}}, along its providing channel \tm{x} (rule \tm{1R}),
awaited by the client.
Due to absence of weakening, rule \tm{1R} demands that the typing context \tm{\Omega} be empty, ensuring that no channels become orphans.
Consequently, the client continuation \tm{Q} loses access to \tm{x} after receipt of the close message (rule \tm{1L}).

Rule \tm{\msc{Spawn}} types a process that spawns another process using the process definition \tm{X}.
The invocation \mit{$x \leftarrow X[\gamma] \leftarrow \Omega_1$} binds the newly spawned process to \tm{x}
and gives it the argument channels \tm{\Omega_1}.
After the invocation, the process continues with executing \mit{$Q$}, now with \mit{$x{:}A$} in its context.
For the invocation to succeed, a corresponding process definition \mit{$\Omega'_1 \vdash X=P :: x'{:}A$} must exist in the signature \mit{$\Sig$}.
The definition associates a name \tm{X} with the process term \tm{P} and indicates the names and types of argument channels \tm{\Omega'_1} and
the name and type of the providing channel \tm{x'{:}A}.
To account for the difference in variable names,
the programmer must provide a corresponding mapping
\tm{\Omega_1, x{:}A\Vdash \gamma :: \Omega'_1, x'{:}A}
as part of the invocation,
assigning to each variable in the definition \tm{\Omega'_1, x'{:}A}
a corresponding variable of the same type in the invocation \tm{\Omega_1, x{:}A}.

Rule \tm{\msc{d:fwd}} allows a process to delegate requests to its singleton child $z$ by passing any messages received along $y$ to $z$ and vice versa.
The rule uses a forwarder definition
\tm{z'{:}C \vdash F_Y = \fwder{C,y' \leftarrow z'} :: y'{:}C},
for the type definition \tm{Y=C},
both to be present in the signature \tm{\Sigma}.
A corresponding forwarder \mit{$\fwder{C,y' \leftarrow z'}$} of type \mit{$C$} is defined
by structural induction on the type of every user-provided type definition,
amounting to an \emph{identity expansion}, and can be automatically generated.
In combination with rule $\msc{d:fwd}$, we get a coinductive definition
\ifappendix (see \Cref{apx:def:fwder} in the Appendix).\xspace \else (\refdetails).\xspace \fi
The programmer can use the syntactic sugar \mit{$y \leftarrow z$} instead, which can be expanded into the corresponding forwarder.

Rules $\msc{TVar}_R$ and $\msc{TVar}_L$ allow us to unfold type definitions.  The rules insist that a corresponding definition exists in the global signature $\typdef{Y}{A} \in \Sig$.

Rules $\Sig_1$, $\Sig_2$, and $\Sig_3$ type check the signature itself.
Rule $\Sig_2$ type checks type definitions, requiring the type be contractive (\mit{$\Vdash_{\Sigma} A\; \mb{contr}$})
and thus not a type variable itself \myCite{GayHoleARTICLE2005}.
Rule $\Sig_3$ type checks the body \mit{$P$} of a process definition \mit{$X$},
whose existence is assumed when spawning the process in rule $\msc{Spawn}$.

We illustrate the typing rules on a simple PIN-based authentication example, shown in \Cref{fig:simple-verfier}.
\Cref{fig:simple-verfier} introduces two recursive type definitions,
$\m{pin}$ and $\m{ver}$,
as well as two process definitions,
$\m{Verifier}$ and $\m{Pin}$.
A verifier expects to receive a PIN channel from its client,
which it validates.
If validation is successful, the verifier sends the message $\mi{succ}$, otherwise the message $\mi{fail}$ to the client.
In either case, the verifier also returns the PIN channel.
The PIN itself is encoded as an internal choice of labels $\mathit{tok_i}$, with one being the correct security token.

\begin{figure}
\centering
\input{figs/simple-verifier.tex}
\caption{Example: PIN-based authentication.}
\label{fig:simple-verfier}
\end{figure}

\subsubsection{\illst configuration typing}
\label{subsec:config-typing}

At runtime, \lang programs become configurations of processes, forming \emph{rooted trees} and
obeying the following grammar:
\begin{center}
\begin{minipage}[t]{1\textwidth}
\begin{tabbing}
$\mathcal{C}, \mathcal{D}, \mathcal{F}, \mathcal{T}$ \=$::=$ \=$\mathbf{proc}(x_\alpha, \hat{\delta}(P)) \; \mathcal{C} \; \mid \; \mathbf{msg}(M) \; \mathcal{C} \; \mid \; \cdot$ \\[1pt]
$M$ \> $::=$ \> $x_\alpha.k \; \mid \; \mathbf{send}\, z_\beta\,x_\alpha \; \mid \; \mb{close}\, y_\alpha$
\end{tabbing}
\end{minipage}
\end{center}
Configurations consist of a set of processes of the form
\mit{$\mathbf{proc}(y_\alpha, \hat{\delta}(P))$} and
a set of messages of the form
\mit{$\mathbf{msg}(M)$}.
Every spawn results in a process \mit{$\mathbf{proc}(y_\alpha, \hat{\delta}(P))$},
every send in a message \mit{$\mathbf{msg}(M)$}, making \lang's dynamics asynchronous.
The metavariable \mit{$y_\alpha$} in \mit{$\mathbf{proc}(y_\alpha, \hat{\delta}(P))$}
denotes the process' providing channel and the metavariable \mit{$P$} the process' source code.
Run-time channels \mit{$y_\alpha$} can be distinguished from  channel variables \mit{$y$}
by their generation subscript \mit{$\alpha$},
whose meaning we clarify in \Cref{subsec:dynamics}.
The substitution \mit{$\delta$} maps variables to channels;
\mit{$\hat{\delta}(P)$} yields the term with all free channel variables substituted by channels.

The configuration typing rules of \lang are given in \Cref{fig:config-typing}, using the judgment
$\Delta_0 \Vdash \mathcal{C}:: \Delta$.
We allow configurations to have \emph{free channels},
which amount to \mit{$\Delta_0$} and \mit{$\Delta$} for the given judgment.
The channels in \mit{$\Delta$} comprise the channels for which there exist providing processes in the configuration \mit{$\mathcal{C}$},
the channels in \mit{$\Delta_0$} comprise the channels for which there exist client processes in the configuration \mit{$\mathcal{C}$}.
It may be surprising that we refer to these channels as free,
given that they occur in \mit{$C$}.
Would linearity not imply that they occur ``exactly once'', and thus must be bound?
It is more fruitful to think of channels as shared resources,
like memory locations,
\emph{shared} among processes.
\illst then ensures that at most two processes, a provider and a client,
share a channel at any point in time, where each "owns" one endpoint of (\ie a reference to) the channel.
The careful accounting of ``resources'' is visible in the configuration typing rules.
For example, rule $\mathbf{proc}$ insists that for all the channel endpoints used by process \mit{$\mathbf{proc}(x_\alpha, \hat{\delta}(P))$}
there exists either a provider in the sub-configuration \mit{$\mathcal{C}$}, making the channel bound,
or the endpoint is free and occurs in \mit{$\Delta'_0$}.

\begin{figure}
\centering
\begin{small}
\input{figs/config-typing}
\end{small}
\caption{Configuration typing rules of \lang.}
\label{fig:config-typing}
\end{figure}

\subsubsection{\illst asynchronous dynamics}
\label{subsec:dynamics}

\lang's asynchronous dynamics is given in \Cref{fig:dynamics}.
It is phrased as multiset rewriting rules \myCite{CervesatoARTICLE2009}.
Multiset rewriting rules express the dynamics as state transitions between configurations $C \ostep C'$ and
are \emph{local} in that they only mention the parts of a configuration they rewrite.

Being \emph{asynchronous}, the dynamics implements sends by spawning off a message \tm{\mathbf{msg}(M)} that carries the sent message \mit{$M$}.
In rule $\otimes_{\m{snd}}$, for example,
the provider \mit{$\mb{proc}(y_\alpha,\mathbf{send}\,x_\beta\,y_\alpha; P)$}
generates the message \mit{$\mb{msg}( \mathbf{send}\,x_\beta\,y_\alpha)$},
indicating that the channel \mit{$x_\beta$} is sent over channel \mit{$y_\alpha$}.
In rule $\otimes_{\m{rcv}}$, the client then consumes the message when receiving the channel. 
To preserve the invariant stipulated by typing that at most two processes
share a channel at any point in time,
a new providing channel must be generated for the continuation of the provider.
This new channel must be linked to the old providing channel, otherwise proper sequencing of messages would be violated.

An elegant way to achieve proper sequencing are \emph{channel generations}.
Rather than creating an entirely fresh channel name for each send,
we keep the name of a channel constant but increment its generation, provided as a subscript to the channel name.
For example, the provider \mit{$\mb{proc}(y_\alpha,\mathbf{send}\,x_\beta\,y_\alpha; P)$} in rule $\otimes_{\m{snd}}$ steps to its continuation
\mit{$\mb{proc}(y_{\alpha+1}, [y_{\alpha+1}/y_{\alpha}]P)$},
creating a new generation \mit{$\alpha + 1$} of its providing channel \mit{$y_\alpha$}.
The receiving client process
\mit{$\mb{proc}(u_\eta,w\leftarrow \mathbf{recv}\,y_\alpha; P)$}
will then increment the channel generation of the carrier channel \mit{$y_\alpha$} upon receipt in its continuation
\mit{$\mb{proc}(u_{\eta}, [x_\beta/w][y_{\alpha+1}/y_{\alpha}] P)$} (see rule $\otimes_{\m{rcv}}$).
Like bound variables, bound channels are subject to \mit{$\alpha$}-variance
and capture-avoiding substitution, as usual \myCite{SangiorgiWalkerBook2001},
with the understanding that only the name \mit{$y$} of a channel \mit{$y_\alpha$}
can be renamed, but not its generation.

\begin{figure}
\centering
\begin{small}
\input{figs/dynamics}
\end{small}
\caption{Asynchronous dynamics of \lang.}
\label{fig:dynamics}
\end{figure}

\section{Recursive session logical relation (\slr)}
\label{sec:rslr}
This section develops a logical relation for \illst with general recursive types (\Cref{subsec:rslr}) and
shows that the logical relation is closed under parallel composition (\Cref{subsec:compositionality}),
amounts to an equivalence relation (\Cref{subsec:eq}),
and adequate, \ie a weak asynchronous bisimilarity (\Cref{subsec:adequacy}).
The logical relation is shown to have
enough discriminatory power (soundness), while being maximally permissive (completeness)
via a biorthogonality argument (\Cref{thm:top-top-closure}).

\subsection{Logical relation}
\label{subsec:rslr}
Logical relations have been developed for session types
\myCite{PerezESOP2012,CairesESOP2013,PerezARTICLE2014,DeYoungFSCD2020,DerakhshanLICS2021,RochaCairesESOP2023,DerakhshanECOOP2024,VanDenHeuvelECOOP2024},
both in unary and binary forms, but \emph{without} considering general recursion.
We contribute a logical relation for \illst with support of \emph{general recursive types}.
In devising the resulting \emph{recursive session logical relation (\slr)}, we had to tackle the following challenges:

\begin{itemize}

\item \emph{Well-foundedness:} Recursive types mandate use of a measure like step-indexing or later modalities
\myCite{AppelMcAllesterTOPLAS2001,AhmedESOP2006,DreyerLICS2009} to keep the logical relation well-founded.
Our logical relation makes use of a more \emph{extensional} notion, an \emph{observation index},
which bounds the number of messages exchanged with the outside world along the free channels of a configuration.
The observation index facilitates proofs of various results shown in this section.

\item \emph{Non-termination:} In the presence of general recursive types, divergence is a possible outcome.
Our logical relation is \emph{progress-sensitive} \myCite{HeidinSabelfeldMartkoberdorf2011} with regard to divergence;
it equates a divergent program only with another diverging one.

\item \emph{Nondeterminism:} Although \illst are confluent, processes run \emph{concurrently}.
As a result, the logical relation has to account for the relatedness of messages that may not be sent in the same order due to nondeterministic scheduling.

\end{itemize}

\subsubsection{A logical relation for message-passing concurrency.}
\label{subsubsec:schema}
It may be helpful to pause a moment and ask what a logical relation for session-typed processes should amount to.
Logical relations for functional languages
relate terms by defining their equality at their type when evaluated to values.
This can be phrased in terms of two mutually recursive relations, a value interpretation
\mit{$\mc{V} \llbracket \tau \rrbracket$} and
a term interpretation
\mit{$\mc{E} \llbracket \tau \rrbracket$},
defined by structural induction on the type \mit{$\tau$}.
The former defines equality of values by inducting over a type, with values at ground type forming the base cases, and the latter steps the terms until they reach a value, demanding that these must be related by the value interpretation.
In imperative languages, logical relations are enriched with Kripke possible worlds
to model dynamically evolving shapes of storage
\myCite{PittsStarkHOOTS1998,AhmedPOPL2009,DreyerICFP2010,NeisJFP2011,HurDreyerPOPL2011,ThamsborgBirkedalICFP2011}.

To answer our question,
it is helpful to remind ourselves that logical relations are a means to prove observational equivalence
\myCite{MorrisPhD1968},
which equates two programs, if the same observations can be made about them.
In a pure functional setting, the observables are the values to which the terms evaluate.
In an imperative setting, the observables can additionally comprise the values of locations in the store.
The observables in a message-passing concurrent setting
are the messages that a configuration of processes exchanges with the outside world.

A logical relation for session-typed processes thus relates pairs of process configurations \mit{$(\mc{D}_1; \mc{D}_2)$},
such that 
\mit{$\Delta \Vdash \mc{D}_1:: x_\alpha{:}A$} and
\mit{$\Delta \Vdash \mc{D}_2:: x_\alpha{:}A$},
demanding that the same messages can be observed along the
free channels \mit{$\Delta$} and \mit{$x_\alpha{:}A$}.
\Cref{fig:rslr} shows the resulting logical relation for well-typed process configurations in \lang,
which is indexed with an \emph{intuitionistic linear sequent} \mit{$\Delta \Vdash A$},
comprising the free channels.
We find it convenient to distinguish a value interpretation \mit{$\mc{V}$}
from a term interpretation \mit{$\mc{E}$}.
As expected, the term interpretation steps the configurations until they reach a ``value''.
But what does it mean for a configuration to be a value in a message-passing concurrent setting?
It means that a configuration is \emph{ready to} \emph{send} or \emph{receive} along any of
the free channels in \mit{$\Delta \cup \{x_\alpha{:}A\}$}.
For any such channel,
the term interpretation invokes the value interpretation,
which demands that the exchanged messages be related,
and then invokes the term interpretation for the configurations resulting after the message exchange.
The mutually recursive relations are defined by clauses of the form
$$(\mc{D}_1; \mc{D}_2) \in \mc{V}\llbracket \Delta \Vdash A \rrbracket \;\;\;\; \m{iff} \;\;\;\;
    \dots (\mc{D}_1'; \mc{D}_2') \in \mc{E}\llbracket \Delta' \Vdash A' \rrbracket.$$
In a terminating setting, the relations are defined
\emph{multiset induction over the structure of types of the free channels} \myCite{DerakhshanLICS2021}
such that \mit{$\Delta' \Vdash A' < \Delta \Vdash A$}.
To support general recursive types, we will augment the relations with an observation index,
as discussed shortly,
yielding an inductive definition.

\begin{figure}
\centering
\begin{small}
\input{figs/logical-relation}
\end{small}
\caption{Recursive session logical relation (\slr) for \illst. (Clauses (1)-(10) are only defined for well-typed configurations, a condition we elided for concision.)}
\label{fig:rslr}
\end{figure}

The sequent \mit{$\Delta \Vdash A$} singles out the typing $A$ of the providing free channel of the configurations $\mc{D}_1$ and $\mc{D}_2$.
If the providing free channel is not observable, we write \mit{$\_$}.
We use the metavariable \mit{$K$} to stand for either \mit{$x_\alpha{:}A$} or simply \mit{$\_$}.
The ability to mark some free channels as unobservable becomes important when we apply the logical relation to noninterference in \Cref{sec:psni}.

The transition from the value to the term interpretation amounts to an \emph{observation} of a message exchange along a free channel in \mit{$\Delta \Vdash K$}.
The exchange will advance the protocol state of the concerned process,
and as a result transition the sequent to \mit{$\Delta' \Vdash K'$},
describing the post-state of the configurations \mit{$(\mc{D}_1'; \mc{D}_2')$} after the message exchange.
The logical relation comprises two value interpretation clauses for each connective,
one for the communications occurring on the \emph{right} along~\mit{$K$}, and
one for those occurring on the \emph{left} along~\mit{$\Delta$}.

\subsubsection{Assume and assert---a pas de deux.}
\label{subsubsec:assume-assert}

A key characteristic of logical relations is \emph{extensionality}.
In a functional context this means that relatedness has to be shown for values of types in positive positions,
but values of types in negative positions can be assumed to be related.
This idea translates equally to a message-passing concurrent setting,
embodied by the logical relation shown in \Cref{fig:rslr} as: 

\begin{itemize}
\item \emph{Positive} types: \emph{assert} sending of related messages when communicating on the \emph{right};
\emph{assume} receipt of related messages when communicating on the \emph{left}.

\item \emph{Negative} types: \emph{assume} receipt of related messages when communicating on the \emph{right};
\emph{assert} sending of related messages when communicating on the \emph{left}.
\end{itemize}

At base types (\mit{$1$}, \mit{$\intchoicesymb$}, and \mit{$\extchoicesymb$}),
relatedness of messages means that the same messages are being exchanged.
For example, clause (2) in \Cref{fig:rslr} for \mit{$\oplus$}-right asserts
existence of messages \mit{$\mathbf{msg}(y_\alpha.k_1)$} and \mit{$\mathbf{msg}(y_\alpha.k_2)$}
in \mit{$\mc{D}_1$} and \mit{$\mc{D}_2$}, \resp such that \mit{$k_1=k_2$}.
Conversely, clause (3) in \Cref{fig:rslr} for \mit{$\&$}-right assumes
receipt of messages \mit{$\mathbf{msg}(y_\alpha.k_1)$} and \mit{$\mathbf{msg}(y_\alpha.k_2)$}
to be added to \mit{$\mc{D}_1$} and \mit{$\mc{D}_2$} in the post-states, \resp
such that \mit{$k_1=k_2$}, for arbitrary \mit{$k_1$} and \mit{$k_2$}.

For higher-order types (\mit{$\chanoutsymb$} and \mit{$\chaninsymb$})
relatedness means that future observations to be made along
the exchanged channels must be related.
For example, clause (10) in \Cref{fig:rslr} for \mit{$\multimap$}-left asserts
existence of a message \mit{$\mathbf{msg}( \mb{send}x_\beta\,y_\alpha)$}
and of subtrees \mit{$\mathcal{T}_1$} and \mit{$\mathcal{T}_2$}
in \mit{$\mc{D}_1$} and \mit{$\mc{D}_2$}, \respb.
The subtrees \mit{$\mathcal{T}_1$} and \mit{$\mathcal{T}_2$} are rooted at the sent channel $x_\beta$
and will be transferred (and thus lost) together with the sent channel.
The clause comprises two invocations of the term relation,
\mit{$({\mathcal{T}_1};{\mathcal{T}_2}) \in  \mc{E}\llbracket \Delta' \Vdash x_\beta{:}A \rrbracket$},
asserting that future observations to be made along the sent channel \mit{$x_\beta$} are related, and
\mit{$({\mc{D}''_1}; {\mc{D}''_2} ) \in  \mc{E}\llbracket \Delta'', y_{\alpha+1}{:}B \Vdash K \rrbracket$},
asserting that the continuations \mit{$\mc{D}''_1$} and \mit{${\mc{D}''_2}$} are related.
\footnote{We will momentarily explain the superscript and subscripts of these invocations.}
The channel endpoint \mit{$x_\beta$} is existentially quantified and occurs either bound or free in both \mit{$\mc{D}_1$} and \mit{$\mc{D}_2$}.
Conversely, clause (5) in \Cref{fig:rslr} for \mit{$\multimap$}-right assumes
receipt of a message \mit{$\mathbf{msg}( \mb{send}x_\beta\,y_\alpha)$}
that carries a universally quantified free channel \mit{$x_\beta$},
different from the free channels available in the pre-state
(\mit{$\forall x_\beta \not\in \mathit{dom}(\Delta,y_\alpha{:}A \multimap B)$}).
The sequent is enlarged with the received channel \mit{$x_\beta$}
for the invocation of the term relation
\mit{$( {\mc{D}_1\mathbf{msg}( \mb{send}\,x_\beta\,y_\alpha)};
{\mc{D}_2\mathbf{msg}(\mb{send}\,x_\beta\,y_\alpha)} ) \in \mc{E}\llbracket \Delta, x_\beta{:}A\Vdash y_{\alpha+1}{:}B\rrbracket$}.

The \emph{duality} of \illst,
established by ``cutting'' the left and right rule of a connective,
translates equally to our logical relation,
allowing us to compose related configurations in parallel,
such that the assumptions made by one pair of configurations are discharged by the other.
We prove this result in \Cref{subsec:compositionality} and exploit it connect our development to biorthogonality in \Cref{subsec:biorthogonality}.

\subsubsection{Syntactic well-typedness.}

The value and term interpretations of our logical relation relate pairs of process configurations only if both are syntactically well-typed.  
The predicate \mit{$(\mathcal{D}_1; \mathcal{D}_2) \in \m{Tree}(\Delta \Vdash K)$}
stipulates this requirement and
is defined as \mit{$\mathcal{D}_1\in \m{Tree}(\Delta \Vdash K)$} and
\mit{$\mathcal{D}_2\in \m{Tree}(\Delta \Vdash K)$},
with \mit{$\mathcal{D}_i\in \m{Tree}(\Delta \Vdash K)$} defined as
\mit{$\Delta  \Vdash \mathcal{D}_i :: K$}
\ifappendix (see \Cref{apx:def:well-typed-config} in the Appendix).\xspace \else (\refdetails).\xspace \fi
For concision, we elide appeals to the \tm{\m{Tree}()} predicate in the value interpretation (clauses (1)-(10) in~\Cref{fig:rslr}).
Syntactic typing ensures that configurations form \emph{rooted trees},
a property our proof of semantic cut (\Cref{lem:compositionality} in \Cref{subsec:compositionality}) relies upon.

\subsubsection{Well-foundedness.}
\label{subsubsec:well-foundedness}

In the presence of general recursive types,
it is no longer sound to define the logical relation by multiset induction over the structure of types in its sequent
\mit{$\Delta \Vdash K$}.
To restore well-foundedness a measure like step-indexing \myCite{AppelMcAllesterTOPLAS2001,AhmedESOP2006} can be employed.
Step-indexing is typically tied to the number of type unfolding or computation steps by which one of the two programs is bound.
A more natural and symmetric measure for our setting is the number of \emph{observations}
that can be made along the free channels \mit{$\Delta \Vdash K$} of the logical relation.
We thus bound the value and term interpretation of our logical relation by this number and
attach the resulting \emph{observation index} as the superscript \mit{$m$}
to the sequent \mit{$\Delta \Vdash K$},
yielding clauses of the form
$$(\mc{D}_1; \mc{D}_2) \in \mc{V}\llbracket \Delta \Vdash K \rrbracket^{m+1} \,\,\, \m{iff} \,\,\,
    \dots (\mc{D}_1'; \mc{D}_2') \in \mc{E}\llbracket \Delta' \Vdash K' \rrbracket^{m}$$
The observation index gets decremented whenever the value interpretation ``observes'' a message exchange.
If \mit{$m = 0$}, well-typed configurations are trivially related,
as expressed by clause (12) in \Cref{fig:rslr}.

\subsubsection{Non-termination and nondeterminism.}
\label{subsubsec:nondeterminism-non-termination}

It is now time to take a closer look at the definition of the term interpretation of our logical relation.
Its definition is challenged by the possibility of divergence and nondeterminism of scheduling.
For the former, a progress-sensitive statement of equivalence demands that the relation
relates a divergent program only with another diverging one.
For the latter, the term interpretation has to account for the possibility that
two configurations may not \emph{simultaneously} be ready to send or receive
along a free channel in \mit{$\Delta \Vdash K$}.
To address these challenges, the term interpretation is phrased as follows (we are repeating clause (11) in \Cref{fig:rslr}):
$$\begin{array}{@{}lcl@{}}
({\mc{D}_1}; {\mc{D}_2})\in\mc{E}\llbracket \Delta \Vdash K \rrbracket^{m+1} \,\,\,\m{iff} &\\[2pt]
({\mc{D}_1}; {\mc{D}_2})\in \m{Tree}(\Delta \Vdash K)\,\m{and} \, \forall\, \Upsilon_1, \Theta_1, \mc{D}'_1.
\m{if}\; {\mc{D}_1}\;\mapsto^{{*}_{\Upsilon_1; \Theta_1}}\; {\mc{D}'_1}\, \m{then}\\[2pt]
\,\exists \Upsilon_2, \mc{D}'_2 \m{\,such \, that\,}\,{\mc{D}_2}\mapsto^{{*}_{ \Upsilon_2}}{\mc{D}'_2}\,\m{and}\, \Upsilon_1 \subseteq \Upsilon_2\; \m{and} \\[2pt]
 \forall\, y_\alpha \in \mb{Out}(\Delta \Vdash K).\, \mb{if}\, y_\alpha \in \Upsilon_1.\, \mb{then}\, ({\mc{D}'_1}; {\mc{D}'_2})\in \mc{V}\llbracket  \Delta \Vdash K \rrbracket_{\cdot;y_\alpha}^{m+1}\,\\[2pt]
\m{and} \, \forall\, y_\alpha \in \mb{In}(\Delta \Vdash K). \mb{if}\, y_\alpha \in \Theta_1.\, \mb{then}\, ({\mc{D}'_1}; {\mc{D}'_2})\in \mc{V}\llbracket  \Delta \Vdash K \rrbracket_{y_\alpha;\cdot}^{m+1}
\end{array}$$
The term interpretation uses the transition \mit{${\mc{D}_1}\mapsto^{{*}_{\Upsilon_1; \Theta_1}}{\mc{D}'_1}$},
amounting to the iterated application of the rewriting rules defined in \Cref{fig:dynamics}.
The star expresses that zero to multiple steps can be taken.
The superscript \mit{$\Upsilon_1$} denotes the set of channels in \mit{$\Delta \Vdash K$}
for which there exist messages in \mit{$\mc{D}'_1$} to be sent along these channels.
The superscript \mit{$\Theta_1$} denotes set of channels in \mit{$\Delta \Vdash K$}
for which there exist processes in \mit{$\mc{D}'_1$} waiting to receive along these channels.

To ensure progress-sensitivity, the term interpretation asserts,
that, whenever \mit{$\mc{D}_1$} can step, so can \mit{$\mc{D}_2$}.
This correspondence is expressed by the condition \mit{$\Upsilon_1 \subseteq \Upsilon_2$},
ensuring that the messages ready to be sent to the outside world in \mit{$\mc{D}'_2$}
are at least the ones ready to be sent in \mit{$\mc{D}'_1$}.
An analogous condition for incoming messages is omitted due to asynchrony of the dynamics
where the receipt of a message is not observable.
The existentially quantified \mit{$\mc{D}'_2$} is justified by nondeterminism of scheduling,
allowing the term interpretation to choose an appropriate way of stepping \mit{$\mc{D}_2$} to catch up with \mit{$\mc{D}'_1$}.

The set $\mb{Out}(\Delta \Vdash K)$ includes all channels whose types in $\Delta, K$ indicate a send. This includes channels in $\Upsilon_1$ for which the messages are ready to be sent. Similarly, the set  $\mb{In}(\Delta \Vdash K)$ includes all channels whose types in $\Delta, K$ indicate a receive, including those in $\Theta_1$ for which the processes are ready to receive.

To resolve the issue of simultaneity, the term interpretation makes use of
two \emph{focus} channels ranging over the sets \mit{$\Upsilon_1$} and \mit{$\Theta_1$}.
These are added as a subscript \mit{$\_; \_$} to the value interpretation,
where \mit{$\cdot; y_\alpha$} indicates that \mit{$y_\alpha \in \Upsilon_1$} and
\mit{$y_\alpha; \cdot$} that \mit{$y_\alpha \in \Theta_1$}.
The term interpretation invokes the value interpretation
for all the channels in \mit{$\Upsilon_1$} and \mit{$\Theta_1$},
thus ensuring that any messages ready to be sent in \mit{$\mc{D}'_1$} and \mit{$\mc{D}'_2$}
and any processes waiting to receive in \mit{$\mc{D}'_1$} will be ``observed'' by the value interpretation.
Non-productive configurations are trivially accommodated by the term interpretation,
by allowing the sets \mit{$\Upsilon_1$} and \mit{$\Theta_1$} to be empty.

\subsubsection{Example.}

We revisit the $\m{Verifier}$ example (\Cref{fig:simple-verfier}) to showcase the design of our logical relation.
Consider the configuration~\mit{$\mc{D}=\mathbf{proc}(x_0,[x_0/x]\m{Verifier})$}, in which the channel $x_0$ instantiates the channel variable $x$ in the $\m{Verifier}$ process. 
We outline the first few steps required to establish that $\mc{D}$ is self-related at observation index $3$, i.e., \mit{$({\mc{D}}; {\mc{D}}) \in \mc{E}\llbracket \cdot \Vdash x_0 {:} \m{ver} \rrbracket^{3}$}.
By~\Cref{fig:dynamics}, we observe that $\mc{D}$ cannot take any internal steps, that no outgoing message exists in the configuration, and that the only channel on which $\mc{D}$ is ready to receive is $x_0$. Hence, the only possible transition for~$\mc{D}$ is ${\mc{D}}\;\mapsto^{{0}_{\emptyset; \{x_0\}}}\; {\mc{D}}$. 
It follows that the only value relation invoked by this term interpretation is \mit{$({\mc{D}}; {\mc{D}}) \in \mc{V}\llbracket \cdot \Vdash x_0 {:} \m{ver} \rrbracket^{3}_{x_0;\cdot}$}, which we can rewrite as
\mit{$({\mc{D}}; {\mc{D}}) \in \mc{V}\llbracket \cdot \Vdash x_0 {:} \m{pin} \multimap \oplus \{\mathit{succ}{:}\m{pin} \otimes \m{ver}, \mathit{fail}{:} \m{pin} \otimes \m{ver}\} \rrbracket^{3}_{x_0;\cdot}$} by unfolding the type $\m{ver}$; the latter now matches clause (5) in~\Cref{fig:rslr}.
To establish this value relation, it remains to show that for an arbitrary $u_\beta \neq x_\alpha$, we can derive \mit{$({\mc{D}\mathbf{msg}(\mb{send}\,u_\beta\,x_0)}; {\mc{D}\mathbf{msg}(\mb{send}\,u_\beta\,x_0)}) \in \mc{E}\llbracket u_\beta{:}\m{pin} \Vdash x_0 {:} \oplus \{\mathit{succ}{:}\m{pin} \otimes \m{ver}, \mathit{fail}{:} \m{pin} \otimes \m{ver}\} \rrbracket^{2}$}, which again invokes the term interpretation via clause (11). 
The remaining steps follow similarly.

A reader may wonder why the logical relation seems oblivious to recursive type unfoldings.
This is because the relation is indexed by the number of external observations and is therefore invariant under such unfoldings.
In particular, for the example above, 
\mit{$({\mc{D}}; {\mc{D}}) \in \mc{E}\llbracket \Delta \Vdash x_0 {:} \m{ver} \rrbracket^{3}$}
and \mit{$ ({\mc{D}}; {\mc{D}}) \in \mc{E}\llbracket \Delta \Vdash x_0{:} \m{pin} \multimap \oplus \{\mathit{succ}{:}\m{pin} \otimes \m{ver}, \mathit{fail}{:} \m{pin} \otimes \m{ver}\} \rrbracket^{3}$},
are equivalent:
the value interpretation with focus on channel \mit{$x_0$} is only triggered when the process is waiting to receive a message along \mit{$x_0$}.
After consumption of the message, the recursive invocation of the term interpretation in clause (5) will then happen at index \mit{$2$}.

\subsection{Semantic cut: closure under parallel composition}
\label{subsec:compositionality}

The use of an observation index rather than a step index by the \slr is fundamental
in proving various metatheoretic properties in the remainder of this section.
It allows us to keep the observation index \emph{constant} in the term interpretation,
while stepping the configurations internally,
permitting disagreement in the number of internal messages exchanged.
As a result, internal steps cannot depend on the observation index,
facilitating proof of the following lemma:

\begin{lemma} [Moving existential over universal quantifier]\label{lem:moving-existential}\strut

\noindent If 
\[\begin{array}{l}
    {\color{red}\forall m.}\, \forall\, \Upsilon_1, \Theta_1, \mathcal{D}'_1.\, \, \m{if}\,  \mc{D}_1 \mapsto^{*_{ \Upsilon_1; \Theta_1}} \mc{D}'_1,\, \m{then}\\
    \, \exists \Upsilon_2, \mc{D}'_2 \,\m{such\, that} \, \,\mc{D}_2\mapsto^{{*}_{ \Upsilon_2}} \mc{D}'_2\, \m{and}\, \Upsilon_1 \subseteq \Upsilon_2 \, \m{and}\\ 
     \; \; \forall\, x_\alpha \in \mb{Out}(\Delta \Vdash K).\, \mb{if}\, x_\alpha \in \Upsilon_1.\, \mb{then}\\
     \,\,\,\,\,\qquad\qquad\qquad \quad (\mc{D}'_1; \mc{D}'_2)\in \mc{V}\llbracket  \Delta \Vdash K \rrbracket_{\cdot;x_\alpha}^{m+1}\,\m{and}\\ 
     \; \; \forall\, x_\alpha \in \mb{In}(\Delta \Vdash K).\mb{if}\, x_\alpha \in \Theta_1.\,\mb{then}\\
     \,\,\,\,\,\qquad\qquad\qquad \quad (\mc{D}'_1; \mc{D}'_2)\in \mc{V}\llbracket  \Delta \Vdash K \rrbracket_{x_\alpha;\cdot}^{m+1},
    \end{array}\]
then 
\[\begin{array}{l}
    \forall\, \Upsilon_1, \Theta_1, \mathcal{D}'_1. \m{if}\,  \mc{D}_1 \mapsto^{*_{ \Upsilon_1; \Theta_1}} \mc{D}'_1,\, \m{then}\\
    \, \exists \Upsilon_2,\mc{D}'_2 \,\m{such\, that} \, \,\mc{D}_2\mapsto^{{*}_{ \Upsilon_2}} \mc{D}'_2\, \m{and}\, \Upsilon_1 \subseteq \Upsilon_2 \, \m{and}\\ 
    \; \; \forall\, x_\alpha \in \mb{Out}(\Delta \Vdash K).\, \mb{if}\, x_\alpha \in \Upsilon_1.\, \mb{then}\\
    \,\,\,\,\,\qquad\qquad\qquad \quad {\color{red}\forall \,m.}\, (\mc{D}'_1; \mc{D}'_2)\in \mc{V}\llbracket  \Delta \Vdash K \rrbracket_{\cdot;x_\alpha}^{m+1}\,\m{and}\\ 
    \; \; \forall\, x_\alpha \in \mb{In}(\Delta \Vdash K). \mb{if}\, x_\alpha \in \Theta_1.\,\mb{then}\\
    \,\,\,\,\,\qquad\qquad\qquad \quad{\color{red}\forall m.}\, (\mc{D}'_1; \mc{D}'_2)\in \mc{V}\llbracket  \Delta \Vdash K \rrbracket_{x_\alpha;\cdot}^{m+1}.
\end{array}\]
\end{lemma}
\begin{proof}
\ifappendix
See proof of \Cref{apx:lem:moving-existential} in the Appendix.
\else
\refproof
\fi
The proof relies on confluence \ifappendix (\Cref{apx:lem:confluence} in the Appendix) \else \fi
and backwards \ifappendix closure (\Cref{apx:lem:backsecondrun} in the Appendix) \else closure. \fi
\end{proof}

With this lemma in hand, we can now prove that the logical relation is closed under parallel composition.

\begin{lemma}[Semantic cut]\label{lem:compositionality}
$\forall m.\, (\mc{D}_1;\mc{D}_2) \in \mc{E}\llbracket \Delta, u_\alpha{:}T \Vdash K \rrbracket^m$ iff for all $\mc{T}_1$ and $\mc{T}_2$ if $\forall m.\, (\mc{T}_1;\mc{T}_2) \in \mc{E}\llbracket \Delta'\Vdash  u_\alpha{:}T \rrbracket^m$ then $\forall k.\, (\mc{T}_1 \mc{D}_1;\mc{T}_2\mc{D}_2) \in \mc{E}\llbracket \Delta', \Delta\Vdash K \rrbracket^k$.
\end{lemma}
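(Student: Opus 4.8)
This statement is a biconditional; the forward implication is linking proper and carries essentially all the difficulty, while the converse follows by instantiating the right-hand side with a forwarder. I describe the forward direction first. Assume (A) $\forall m.\,(\mc{D}_1;\mc{D}_2)\in\mc{E}\llbracket\Delta,u_\alpha{:}T\Vdash K\rrbracket^m$ and (B) $\forall m.\,(\mc{T}_1;\mc{T}_2)\in\mc{E}\llbracket\Delta'\Vdash u_\alpha{:}T\rrbracket^m$; I must show $(\mc{T}_1\mc{D}_1;\mc{T}_2\mc{D}_2)\in\mc{E}\llbracket\Delta',\Delta\Vdash K\rrbracket^k$ for every $k$, which I prove by induction on $k$. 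The base case $k=0$ is clause~(12) of \Cref{fig:rslr}: $\mc{T}_i\mc{D}_i\in\m{Tree}(\Delta',\Delta\Vdash K)$ follows from the $\m{Tree}$ components of (A),(B) and configuration typing (rule $\mathbf{comp}$). Before the inductive step I first pass from (A),(B) to their ``uniform in the index'' strengthenings via \Cref{lem:moving-existential}, so that the reductions witnessing the term interpretations of $\mc{D}_i$ and $\mc{T}_i$ can be chosen once and work at every index; this is exactly where the observation index pays off, since the $u_\alpha$-traffic internal to the composite must be simulated by \emph{component} observations while costing the \emph{composite} no observation budget.

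For the inductive step, unfold clause~(11) at index $k=n+1$. Given an internal reduction $\mc{T}_1\mc{D}_1\mapsto^{*_{\Upsilon_1;\Theta_1}}\mc{E}_1$, I rearrange it using confluence (\lemref{apx:lem:confluence}) into internal reductions of $\mc{T}_1$, internal reductions of $\mc{D}_1$, and communication steps across the private channel $u_\alpha$, so that $\mc{E}_1$ factors as a residual $\mc{T}$-part composed along the current generation of $u_\alpha$ with a residual $\mc{D}$-part. I then proceed by a nested induction on the length of this reduction, maintaining the invariant that the two residual parts are each related at all indices to the residuals obtained by stepping $\mc{T}_2$ resp.\ $\mc{D}_2$ in the dual fashion. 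An internal step of $\mc{T}_1$ (resp.\ $\mc{D}_1$) is matched through the term interpretation of (B) (resp.\ (A)) and touches no index. A step passing a message across $u_\alpha$ is the ``assume/assert pas de deux'' of \Cref{subsubsec:assume-assert}: by duality of $T$ exactly one of $\mc{T}$, $\mc{D}$ is the sender at the current protocol point, that side instantiates an \emph{assert} clause of (B) or (A) producing a message that is forced to be related, and the other side instantiates the matching \emph{assume} clause whose hypothesis is discharged precisely by that message; the observation that decrements a component index is always affordable because of the uniform forms. When the reduction is exhausted I read off the matching reduction $\mc{T}_2\mc{D}_2\mapsto^{*_{\Upsilon_2}}\mc{E}_2$, with $\Upsilon_1\subseteq\Upsilon_2$ inherited componentwise from (A) on $\Delta\cup\{K\}$ and (B) on $\Delta'$. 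Finally, for a focus channel $y_\alpha\in\Upsilon_1\cup\Theta_1$---which lies in $\Delta'$, in $\Delta$, or is $K$---the required value clause follows from the corresponding value clause of (A) (if $y_\alpha\in\Delta$ or $y_\alpha=K$) or of (B) (if $y_\alpha\in\Delta'$), glued onto the other, now inert residual: the structural clauses (2)--(10) inspect only interface channels. Where such a clause recursively invokes $\mc{E}\llbracket\cdot\rrbracket^{n}$ on the post-message residuals, those residuals are still related at all indices (term interpretations of (A),(B)), so the outer induction hypothesis at $n$ applies. This establishes clause~(11) at index $n+1$.

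For the converse, assume the right-hand side and instantiate it with $\mc{T}_1=\mc{T}_2=\mc{F}$, where $\mc{F}$ is the generated forwarder relaying a fresh channel $u'_\alpha{:}T$ onto $u_\alpha{:}T$, so that $\Delta'=u'_\alpha{:}T$. The pair $(\mc{F};\mc{F})$ is related at all indices (reflexivity of the forwarder; a direct induction on $T$, the two identical copies relaying every message faithfully so each assertion is met by the mirrored assumption). The hypothesis then yields $\forall k.\,(\mc{F}\mc{D}_1;\mc{F}\mc{D}_2)\in\mc{E}\llbracket u'_\alpha{:}T,\Delta\Vdash K\rrbracket^k$. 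Since composing with $\mc{F}$ only interposes internal relay steps and copies every interface observation unchanged, a forwarder-absorption lemma---$\forall k.\,(\mc{F}\mc{D}_1;\mc{F}\mc{D}_2)\in\mc{E}\llbracket u'_\alpha{:}T,\Delta\Vdash K\rrbracket^k$ implies $\forall k.\,(\mc{D}_1[u'_\alpha/u_\alpha];\mc{D}_2[u'_\alpha/u_\alpha])\in\mc{E}\llbracket u'_\alpha{:}T,\Delta\Vdash K\rrbracket^k$, proved by backward closure (\lemref{apx:lem:backsecondrun})---together with $\alpha$-renaming $u'_\alpha\mapsto u_\alpha$ gives $\forall m.\,(\mc{D}_1;\mc{D}_2)\in\mc{E}\llbracket\Delta,u_\alpha{:}T\Vdash K\rrbracket^m$, as required.

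The main obstacle I anticipate is the decomposition in the second paragraph: turning an arbitrary asynchronous reduction of the composite into synchronised reductions of $\mc{T}_i$ and $\mc{D}_i$ that agree on the in-flight queue along $u_\alpha$, and keeping the index accounting honest so that this internal traffic is paid for out of the all-index budgets supplied by \Cref{lem:moving-existential} for (A),(B) and never out of the finite budget $k$ of the composite---which is precisely the role the observation index (rather than a step index) is designed to play.
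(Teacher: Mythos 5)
Your forward direction is, in outline, the paper's own argument: pass to index-uniform witnesses via \Cref{lem:moving-existential}, decompose a reduction of the composite (using confluence) into internal steps of each component plus exchanges across $u_\alpha$, discharge the latter by playing the assert clauses of one component against the assume clauses of the other, and charge those exchanges to the components' all-index budgets rather than to the composite's index $k$. The index bookkeeping you describe --- fixing the witness reductions once so that the post-observation residuals are again related at \emph{all} indices, which is what lets the outer induction on $k$ go through --- is exactly the role \Cref{lem:moving-existential} plays in the paper, so this half is sound.

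The converse is where I see a genuine gap. Your forwarder-absorption step --- from $\forall k.\,(\mc{F}\mc{D}_1;\mc{F}\mc{D}_2)\in\mc{E}\llbracket u'_\alpha{:}T,\Delta\Vdash K\rrbracket^k$ to relatedness of $\mc{D}_1,\mc{D}_2$ themselves --- does not follow from backward closure (\lemref{apx:lem:backsecondrun}). Backward closure relates a configuration to its reducts, but for a recursive $T$ the forwarder never reduces away: $\mc{F}\mc{D}_i$ keeps relaying messages forever and is only \emph{weakly bisimilar} to $\mc{D}_i$, not reachable from it by finitely many steps, so absorbing it needs its own coinductive/indexed argument (or a detour through adequacy). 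Fortunately the whole detour is unnecessary: instantiate the right-hand side with $\Delta'=u_\alpha{:}T$ and $\mc{T}_1=\mc{T}_2=\cdot$, the empty configuration, which is well-typed with interface $u_\alpha{:}T\Vdash u_\alpha{:}T$ by rule $\mathbf{emp}_1$ and lies in $\mc{E}\llbracket u_\alpha{:}T\Vdash u_\alpha{:}T\rrbracket^m$ for every $m$ vacuously (it takes no steps and exposes no messages, so $\Upsilon_1=\Theta_1=\emptyset$). The conclusion of the right-hand side is then literally $\forall k.\,(\mc{D}_1;\mc{D}_2)\in\mc{E}\llbracket u_\alpha{:}T,\Delta\Vdash K\rrbracket^k$, which is the left-hand side. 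Replace your third paragraph with this instantiation and the proof is complete.
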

\begin{proof}
\ifappendix
See proof of \Cref{apx:lem:compositionality} in the Appendix.
\else
\refproof
\fi
The proof relies on \Cref{lem:moving-existential}.
\end{proof}

The semantic cut lemma is analogous to the syntactic cut-elimination theorem, with $T$ serving as the analog of the cut formula.
Consider the left-to-right direction. 
For internal steps in \(\mathcal{T}_1\mathcal{D}_1\) invoked by the term interpretation that do not involve the channel \(u_\alpha : T\), we can perform corresponding internal steps in \(\mathcal{T}_1\) or \(\mathcal{D}_1\). Using the fact that \(\mathcal{T}_1\) and \(\mathcal{D}_1\) are related to \(\mathcal{T}_2\) and \(\mathcal{D}_2\), resp., we can construct the next step(s) of \(\mathcal{T}_2\mathcal{D}_2\) as required by the term interpretation.
For internal steps in \(\mathcal{T}_1\mathcal{D}_1\) along the channel \(u_\alpha : T\), we use the fact that \(\mathcal{T}_1\) is ready to send a message along \(u_\alpha\) and \(\mathcal{D}_1\) is ready to receive along \(u_\alpha\) (or vice versa). 
Here, we take advantage of the duality in the value interpretation for the provider and client of each connective to show that the assertions made by the sender are sufficient to establish the assumptions of the receiver.
With this, we invoke the definition of value interpretations to transition to a new term interpretation involving subformula(s) of type \(T\) in the sequent and apply the inductive hypothesis.
This step is analogous to internal cut reductions in the cut-elimination proof.
For any communication in \(\mathcal{T}_1\mathcal{D}_1\) along its external channel, we use the fact that either \(\mathcal{T}_1\) or \(\mathcal{D}_1\) can perform the same communication. 
Using their respective value interpretations and the inductive hypothesis, we establish the value interpretation for \(\mathcal{T}_1\mathcal{D}_1\) and \(\mathcal{T}_2\mathcal{D}_2\).
This step corresponds to external cut reductions in the cut-elimination proof.

\subsection{Logical equivalence}
\label{subsec:eq}

To prove that the \slr induces an equivalence relation, we introduce 
the notation \mit{$\Delta \Vdash \mc{D}_1 :: K \equiv \Delta \Vdash \mc{D}_2 :: K $},
defined below, where
\mit{$(\mc{C}_1; \mc{C}_2)\in \m{Forest}(\Delta' \Vdash \Delta)$} stands for
\mit{$\Delta'  \Vdash \mc{C}_1 :: \Delta$} and \mit{$\Delta'  \Vdash \mc{C}_2 :: \Delta$}:

\begin{definition}[Logical equivalence]\label{def:eq}  \strut
\begin{itemize}
  \item  We define the relation $\Delta \Vdash \mc{D}_1 :: K \equiv \Delta \Vdash \mc{D}_2 :: K $  as $(\mc{D}_1; \mc{D}_2)\in \m{Tree}(\Delta \Vdash K)$ and $\forall m.\, (\mc{D}_1; \mc{D}_2)\in \mc{E}\llbracket \Delta \Vdash K \rrbracket^m$ and $\forall m. (\mc{D}_2; \mc{D}_1) \in \mc{E}\llbracket \Delta \Vdash K \rrbracket^m$.  
  \item  We define the relation $ \Delta \dashv \mc{C}_1[\,\,] \mc{F}_1 \dashv K \equiv \Delta \dashv \mc{C}_2 [\,\,]\mc{F}_2 \dashv K$  as (i) \,$\exists K'$ such that $K \Vdash \mc{F}_1 ::K' \equiv K \Vdash \mc{F}_2:: K'$ and (ii) \,$\exists \Delta'$ such that $(\mc{C}_1; \mc{C}_2)\in \m{Forest}(\Delta' \Vdash \Delta)$ and $\forall \mc{T}_1\in \mc{C}_1,\,\m{and}\, \forall \mc{T}_2 \in \mc{C}_2$ with $(\mc{T}_1;\mc{T}_2)\in \m{Tree}(\Delta'_1\Vdash K'')$, we have 
  $\Delta'_1 \Vdash \mc{T}_1 ::K'' \equiv \Delta'_1 \Vdash \mc{T}_2:: K''$.
\end{itemize}
\end{definition}

Next, we show that the relation \mit{$\equiv$} is reflexive, symmetric, and transitive.

\begin{lemma}[Reflexivity]\label{lem:reflexivity-eq}
  For all configurations $\Delta \Vdash \mathcal{D}:: x_\alpha {:}T$, we have 
  {
  \((\Delta \Vdash\mathcal{D}::x_\alpha {:}T)  \equiv (\Delta \Vdash \mathcal{D}:: x_\alpha {:}T).\)
  }
\end{lemma}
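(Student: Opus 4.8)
The plan is to unfold \Cref{def:eq} and observe that reflexivity of $\equiv$ collapses to a single \emph{self-relatedness} statement: for every well-typed configuration $\Delta \Vdash \mc{D} :: x_\alpha{:}T$ and every $m$, $(\mc{D};\mc{D}) \in \mc{E}\llbracket \Delta \Vdash x_\alpha{:}T \rrbracket^{m}$. Given this, the $\m{Tree}$-condition of \Cref{def:eq} follows from well-typedness, and the two remaining $\mc{E}$-conditions are literally the same statement, since both components are $\mc{D}$. I would prove self-relatedness by induction on the observation index $m$. For $m = 0$ the goal is precisely clause~(12) of \Cref{fig:rslr}, which only asks for $(\mc{D};\mc{D}) \in \m{Tree}(\Delta \Vdash x_\alpha{:}T)$, i.e.\ well-typedness of $\mc{D}$.

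For the step, assuming the claim at index $m$, I would expand clause~(11) at $m+1$: fix $\Upsilon_1,\Theta_1,\mc{D}'_1$ with $\mc{D} \mapsto^{*_{\Upsilon_1;\Theta_1}} \mc{D}'_1$. Because the right component is literally $\mc{D}$, the key move is to pick the \emph{same} run on the right, i.e.\ $\mc{D}'_2 := \mc{D}'_1$ and $\Upsilon_2 := \Upsilon_1$; then $\mc{D} \mapsto^{*_{\Upsilon_1}} \mc{D}'_1$ and $\Upsilon_1 \subseteq \Upsilon_2$ hold trivially, and what is left is, for each focus channel $y_\beta \in \Upsilon_1 \cap \m{Out}(\Delta \Vdash x_\alpha{:}T)$, the obligation $(\mc{D}'_1;\mc{D}'_1) \in \mc{V}\llbracket \Delta \Vdash x_\alpha{:}T\rrbracket^{m+1}_{\cdot;y_\beta}$, and symmetrically $(\mc{D}'_1;\mc{D}'_1) \in \mc{V}\llbracket \Delta \Vdash x_\alpha{:}T\rrbracket^{m+1}_{y_\beta;\cdot}$ for each $y_\beta \in \Theta_1 \cap \m{In}(\Delta \Vdash x_\alpha{:}T)$.

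I would discharge these by a case analysis on the connective at the focus channel $y_\beta$ in the interface, i.e.\ on which of the value clauses~(1)--(10) of \Cref{fig:rslr} applies — no nested induction is needed, since the value clauses only call $\mc{E}$ at the strictly smaller index $m$, for which the induction hypothesis is available, and $\mc{D}'_1$ is well typed at the relevant interface by preservation (it is reachable from the well-typed $\mc{D}$). In each clause the ``agreement'' requirements — $k_1 = k_2$ for the transmitted labels, sameness of the sent channels, or both components being the closing message — are met trivially because both components are $\mc{D}'_1$; for the clauses with a universally quantified received label or channel the two received items are identical, so again agreement is immediate. Each clause then leaves one or two residual obligations of the form $(\mc{D}''_1;\mc{D}''_1) \in \mc{E}\llbracket \Delta''\Vdash K''\rrbracket^{m}$, where $\mc{D}''_1$ comes from $\mc{D}'_1$ by deleting the consumed message (clauses~(2),(4),(8),(10)), by adjoining the received message (clauses~(3),(5),(6),(7),(9), possibly with a fresh carried channel), and additionally by splitting off the subtree rooted at the carried channel in the higher-order clauses~(4) and~(10); clause~(1) has no residual obligation. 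In each case $\mc{D}''_1$ is itself well typed at the stated interface — the message typing rules of \Cref{fig:config-typing} license adjoining a message, and the tree decomposition of a well-typed configuration licenses the $\otimes$/$\multimap$ split — so the induction hypothesis at index $m$ closes it.

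The main obstacle I anticipate is the structural bookkeeping in the higher-order clauses~(4) and~(10): one must show that a well-typed configuration carrying a head message $\mb{msg}(\mb{send}\,x_\beta\,y_\alpha)$ decomposes into a well-typed subtree in $\m{Tree}(\Delta''\Vdash x_\beta{:}A)$ providing the carried endpoint and a well-typed remainder, and that this $x_\beta$ is exactly the existential witness the clause demands. This is a property of well-typed \lang configurations rather than of the logical relation; once it is in hand, the rest of the argument is a routine unfolding of the definitions against the outer induction on $m$.
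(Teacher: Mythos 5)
Your proposal is correct and follows what is essentially the paper's own route: unfold Def.~\ref{def:eq} to reduce reflexivity to self-relatedness in $\mc{E}\llbracket \Delta \Vdash x_\alpha{:}T\rrbracket^{m}$ for all $m$, induct on the observation index, mirror the left run on the right so that all agreement conditions in the value clauses hold trivially, and close each residual $\mc{E}$-obligation at index $m$ by the induction hypothesis plus preservation and the tree-decomposition property of well-typed configurations. The one bookkeeping point you flag --- decomposing a well-typed configuration with a carried channel into a well-typed subtree and remainder for clauses (4) and (10) --- is indeed the only structural lemma needed, and it is a property of the configuration typing rather than of the logical relation.
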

\begin{proof}
\ifappendix
See proof of \Cref{apx:lem:reflexivity-eq} in the Appendix.
\else
\refproof
\fi
\end{proof}
  
\begin{lemma}[Symmetry]\label{lem:symmetry-eq}
  For all configurations $\mc{D}_1$ and $\mc{D}_2$, we have 
  {\small
  $(\Delta \Vdash \mathcal{D}_1:: x_\alpha {:}T)  \equiv (\Delta \Vdash \mathcal{D}_2:: x_\alpha {:}T),$
  iff 
  $(\Delta \Vdash \mathcal{D}_2:: x_\alpha {:}T)  \equiv (\Delta \Vdash \mathcal{D}_1:: x_\alpha {:}T),$
  }
\end{lemma}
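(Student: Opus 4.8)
The plan is to unfold Definition~\ref{def:eq} on both sides of the claimed equivalence and observe that the defining conjunction is literally invariant under swapping $\mc{D}_1$ and $\mc{D}_2$, so no induction, step-index reasoning, or appeal to the earlier metatheoretic lemmas (Linking, etc.) is required. Concretely, recall that $\Delta \Vdash \mc{D}_1 :: x_\alpha{:}T \equiv \Delta \Vdash \mc{D}_2 :: x_\alpha{:}T$ abbreviates the three-part conjunction: (i) $(\mc{D}_1;\mc{D}_2)\in\m{Tree}(\Delta\Vdash x_\alpha{:}T)$, (ii) $\forall m.\,(\mc{D}_1;\mc{D}_2)\in\mc{E}\llbracket \Delta\Vdash x_\alpha{:}T\rrbracket^m$, and (iii) $\forall m.\,(\mc{D}_2;\mc{D}_1)\in\mc{E}\llbracket \Delta\Vdash x_\alpha{:}T\rrbracket^m$. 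Swapping $\mc{D}_1$ and $\mc{D}_2$ turns (i) into itself up to reordering the conjuncts $\mc{D}_1\in\m{Tree}(\cdots)$ and $\mc{D}_2\in\m{Tree}(\cdots)$, and interchanges (ii) with (iii). Hence the statement obtained by swapping is logically identical (not merely equivalent) to the original, which gives both directions of the \textbf{iff} at once.

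First I would spell out the $\m{Tree}$ predicate using its definition as the conjunction $\mc{D}_1 \in \m{Tree}(\Delta \Vdash K)$ and $\mc{D}_2 \in \m{Tree}(\Delta \Vdash K)$, making explicit that it is a symmetric binary relation. Then I would note that conditions (ii) and (iii) of the unfolded definition are each other's verbatim mirror image, so that the pair (ii)$\wedge$(iii) is symmetric in its two arguments. Combining these two observations yields that the whole relation $\equiv$, viewed as a binary relation on trees of the form $\Delta \Vdash \mc{D} :: x_\alpha{:}T$, is symmetric. The \textbf{iff} in the statement is then discharged by running this single observation in both directions.

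There is essentially no obstacle here: the work is purely definitional unfolding and the only thing to be careful about is bookkeeping, namely that the relation $\equiv$ was deliberately \emph{defined} to include both $\mc{E}$-directions precisely so that this symmetry is immediate (in contrast to the term interpretation $\mc{E}\llbracket\cdot\rrbracket^m$ itself, clause~(11) of \Cref{fig:rslr}, which is manifestly asymmetric in its leading quantifier over steps of $\mc{D}_1$). I would therefore present the proof as a two-line argument after the unfolding, and defer any remaining routine details to \lemref{apx:lem:symmetry-eq} in the Appendix.
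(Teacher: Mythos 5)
Your proposal is correct and matches the paper's own argument, which simply observes that the result is immediate from Definition~\ref{def:eq}: the defining conjunction is invariant under swapping $\mc{D}_1$ and $\mc{D}_2$ because the $\m{Tree}$ condition is symmetric and the two $\mc{E}$-clauses exchange with one another. Your explicit unfolding is just a more detailed presentation of the same one-line definitional argument.
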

\begin{proof}
  The proof is straightforward by~\Cref{def:eq}.
\end{proof}
  
  \begin{lemma}[Transitivity]\label{lem:transitivity-eq}
  For all  configurations $\mc{D}_1$, $\mc{D}_2$, and $\mc{D}_3$, we have 
  {\small
   \[
  \begin{array}{l}
    \m{if}\,(\Delta \Vdash \mathcal{D}_1:: x_\alpha {:}T)  \equiv (\Delta \Vdash \mathcal{D}_2:: x_\alpha {:}T),\, \m{and}\\
    \,\,\,\,\, (\Delta \Vdash \mathcal{D}_2:: x_\alpha {:}T)  \equiv (\Delta \Vdash \mathcal{D}_3:: x_\alpha {:}T)\\
    \m{then}\,(\Delta \Vdash \mathcal{D}_1:: x_\alpha {:}T)  \equiv (\Delta \Vdash \mathcal{D}_3:: x_\alpha {:}T).
  \end{array}  
  \]
  }
  \end{lemma}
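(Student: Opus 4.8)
The plan is to derive \Cref{lem:transitivity-eq} from a \emph{transitivity property of the term and value interpretations}, proved by induction on the observation index. Unfolding \Cref{def:eq}, the goal $(\Delta \Vdash \mc{D}_1 :: x_\alpha{:}T) \equiv (\Delta \Vdash \mc{D}_3 :: x_\alpha{:}T)$ asks for $(\mc{D}_1;\mc{D}_3)\in\m{Tree}(\Delta\Vdash x_\alpha{:}T)$, which is immediate from the hypotheses, together with $\forall m.\,(\mc{D}_1;\mc{D}_3)\in\mc{E}\llbracket\Delta\Vdash x_\alpha{:}T\rrbracket^m$ and $\forall m.\,(\mc{D}_3;\mc{D}_1)\in\mc{E}\llbracket\Delta\Vdash x_\alpha{:}T\rrbracket^m$. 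By \Cref{lem:symmetry-eq} the two hypotheses also provide the mirrored facts $\forall m.\,(\mc{D}_2;\mc{D}_1)\in\mc{E}\llbracket\cdots\rrbracket^m$ and $\forall m.\,(\mc{D}_3;\mc{D}_2)\in\mc{E}\llbracket\cdots\rrbracket^m$. Hence it suffices to establish, for every interface $\mc{I}=\Delta\Vdash K$, the implication: if $\forall m'.\,(\mc{A};\mc{B})\in\mc{E}\llbracket\mc{I}\rrbracket^{m'}$ and $\forall m'.\,(\mc{B};\mc{C})\in\mc{E}\llbracket\mc{I}\rrbracket^{m'}$, then $\forall m.\,(\mc{A};\mc{C})\in\mc{E}\llbracket\mc{I}\rrbracket^m$; instantiating with $(\mc{D}_1,\mc{D}_2,\mc{D}_3)$ and with $(\mc{D}_3,\mc{D}_2,\mc{D}_1)$ yields both required directions.

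I would prove this by induction on $m$, simultaneously with its analogue for the value interpretation (for every connective clause and both focus polarities), the relevant well-founded order being $m$ lexicographically above the multiset of types in $\mc{I}$: a clause of $\mc{E}^{m+1}$ invokes $\mc{V}^{m+1}$ at the same interface, while a clause of $\mc{V}^{m+1}$ invokes $\mc{E}^{m}$ at a strictly smaller interface. Thus within stage $m+1$ I would first prove $\mc{V}^{m+1}$-transitivity (at all interfaces) from $\mc{E}^{m}$-transitivity, and then $\mc{E}^{m+1}$-transitivity from $\mc{V}^{m+1}$-transitivity. The base case $m=0$ is clause~(12), trivially transitive. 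For the $\mc{V}^{m+1}$-clauses, each matching exchange (same label, or the same fresh carrier channel) rewrites the two chained $\mc{V}$ facts into two $\mc{E}^{m}$ facts on the configurations obtained after consuming the identical message, which compose by the induction hypothesis; in the assert-style clauses one additionally uses that, the relevant channel being the root of a well-typed tree, its pending message is unique, so the two chained facts agree on it.

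For $\mc{E}^{m+1}$-transitivity, given $\mc{A}\mapsto^{*_{\Upsilon_1;\Theta_1}}\mc{A}'$ I apply \Cref{lem:moving-existential} to the first hypothesis to obtain a \emph{single} catch-up reduct $\mc{B}'$ with $\mc{B}\mapsto^{*_{\Upsilon_2}}\mc{B}'$, $\Upsilon_1\subseteq\Upsilon_2$, and $(\mc{A}';\mc{B}')\in\mc{V}\llbracket\mc{I}\rrbracket^{k+1}_{\_;\_}$ for \emph{all} $k$ at the relevant focus channels; then, with $\Theta_2$ the set of receive-ready interface channels of $\mc{B}'$, I apply \Cref{lem:moving-existential} again to the second hypothesis along $\mc{B}\mapsto^{*_{\Upsilon_2;\Theta_2}}\mc{B}'$ to obtain $\mc{C}'$ with $\mc{C}\mapsto^{*_{\Upsilon_3}}\mc{C}'$, $\Upsilon_2\subseteq\Upsilon_3$, and $(\mc{B}';\mc{C}')\in\mc{V}\llbracket\mc{I}\rrbracket^{k+1}_{\_;\_}$ for all $k$ at the channels in $\Upsilon_2,\Theta_2$. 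This $\mc{C}'$ is the witness for the goal, and $\Upsilon_1\subseteq\Upsilon_3$. For every output focus channel $y_\alpha\in\Upsilon_1\subseteq\Upsilon_2$ we have $(\mc{A}';\mc{B}')$ and $(\mc{B}';\mc{C}')$ both in $\mc{V}^{m+1}_{\cdot;y_\alpha}$, so $\mc{V}^{m+1}$-transitivity gives $(\mc{A}';\mc{C}')\in\mc{V}^{m+1}_{\cdot;y_\alpha}$, as required.

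The step I expect to be the main obstacle is the \emph{receive-ready} focus channels $y_\alpha\in\Theta_1$: since a receipt is not observable, the reduct $\mc{B}'$ produced by the first hypothesis need not be waiting on $y_\alpha$, so the second hypothesis does not directly yield $(\mc{B}';\mc{C}')\in\mc{V}^{m+1}_{y_\alpha;\cdot}$. My plan for this case is to view the hypothetical incoming message $\mc{M}$ (a $\mathbf{msg}(y_\alpha.k)$, $\mathbf{msg}(\mathbf{send}\,x_\beta\,y_\alpha)$, or $\mathbf{msg}(\mathbf{close}\,y_\alpha)$ according to the clause) as a standalone one-node configuration that uses $y_\alpha$ and provides the continuation channel; by \Cref{lem:reflexivity-eq} it is self-related, so \Cref{lem:compositionality} (Linking), applied with $\mc{B}$ (resp.\ $\mc{C}$) in the rôle of the glued provider of $y_\alpha$, gives $\forall k.\,(\mc{B}\,\mc{M};\mc{C}\,\mc{M})\in\mc{E}\llbracket\mc{I}'\rrbracket^k$ on the post-exchange interface $\mc{I}'$. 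The internal reductions $\mc{B}\mapsto^{*}\mc{B}'$ and $\mc{C}\mapsto^{*}\mc{C}'$ lift to $\mc{B}\,\mc{M}\mapsto^{*}\mc{B}'\,\mc{M}$ and $\mc{C}\,\mc{M}\mapsto^{*}\mc{C}'\,\mc{M}$, and closure of the term interpretation under internal reduction of both components---via confluence (\lemref{apx:lem:confluence}) and backwards closure (\lemref{apx:lem:backsecondrun}), the same ingredients underlying \Cref{lem:moving-existential}---gives $\forall k.\,(\mc{B}'\,\mc{M};\mc{C}'\,\mc{M})\in\mc{E}\llbracket\mc{I}'\rrbracket^k$. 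Unfolding the $\mc{V}^{k+1}_{y_\alpha;\cdot}$ fact already obtained for $(\mc{A}';\mc{B}')$ gives $\forall k.\,(\mc{A}'\,\mc{M};\mc{B}'\,\mc{M})\in\mc{E}\llbracket\mc{I}'\rrbracket^k$, so the induction hypothesis ($\mc{E}^{m}$-transitivity at $\mc{I}'$) yields $(\mc{A}'\,\mc{M};\mc{C}'\,\mc{M})\in\mc{E}\llbracket\mc{I}'\rrbracket^m$ for every incoming message, which is exactly $(\mc{A}';\mc{C}')\in\mc{V}\llbracket\mc{I}\rrbracket^{m+1}_{y_\alpha;\cdot}$. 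It is worth noting why the argument goes through at all: because internal steps leave the observation index fixed, \Cref{lem:moving-existential} supplies catch-up reductions of $\mc{B}$ and $\mc{C}$ that are uniform in the index---precisely the point at which a step-indexed relation would break transitivity.
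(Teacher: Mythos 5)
Your proposal is correct and follows essentially the same route as the paper: reduce the statement to transitivity of the term and value interpretations, induct on the observation index, and use Lemma~\ref{lem:moving-existential} to turn the per-index existential witnesses into a single catch-up reduct $\mc{B}'$ carrying value-interpretation facts at \emph{all} indices, so that the two chains can be composed through it---precisely the point at which the observation index (as opposed to a step index) makes transitivity go through. Your handling of the receive-ready focus channels, where $\mc{B}'$ need not itself be blocked on $y_\alpha$ and you restore the missing middle link by gluing the universally quantified incoming message onto $\mc{B}$ and $\mc{C}$ via reflexivity, Linking (Lemma~\ref{lem:compositionality}), and closure of the term interpretation under internal reduction, is exactly the delicate case the paper's proof must also address, and your workaround is sound.
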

\begin{proof}
\ifappendix
See proof of \Cref{apx:lem:transitivity-eq} in the Appendix.
\else
\refproof
\fi
\end{proof}

\subsection{Adequacy}
\label{subsec:adequacy}

To show that the \slr is adequate, we prove that configurations related by the logical relation are bisimilar and vice versa.
To facilitate this proof, we first give the definition of weak \emph{asynchronous bisimilarity} \myCite{SangiorgiWalkerBook2001}
(\Cref{def:asynchronousbisim}).
The definition relies on a standard labeled transition system displayed in \Cref{fig:lts}, which rephrases our dynamics in~\Cref{fig:dynamics}.
We define \emph{weak} transitions as

\begin{enumerate}
    \item[(1)] $\xRightarrow{}$ is the reflexive and transitive closure of $\xrightarrow{\tau}$,
    \item[(2)] $\xRightarrow{\alpha}$ is $\xRightarrow{} \xrightarrow{\alpha}$.
  \end{enumerate}

\begin{definition}[Weak asynchronous bisimilarity]\label{def:asynchronousbisim}
Asynchronous bisimilarity, written {$\mc{D}_1 \approx_{\m{a}} \mc{D}_2$}, is the largest symmetric relation such that whenever {$\mc{D}_1 \approx_{\m{a}} \mc{D}_2$}, we have
\begin{itemize}[labelindent=0em,labelsep=1 pt,leftmargin=*]
  \item{$(\tau-\mathtt{step})$} if  { $\mc{D}_1 \xrightarrow{\tau} \mc{D}'_1$} then {$\exists \mc{D}'_2. \, \mc{D}_2 \xRightarrow{\tau} \mc{D}'_2$} and  {$\mc{D}'_1 \approx_{\m{a}} \mc{D}'_2,$}
  \item{$(\m{output})$} if $\mc{D}_1 \xrightarrow{\overline{x_\alpha}\,q} \mc{D}'_1$ then $\exists \mc{D}'_2. \, \mc{D}_2 \xRightarrow{\overline{x_\alpha}\,q} \mc{D}'_2$ and {$\mc{D}'_1 {\approx_{\m{a}}} \mc{D}'_2.$}
  \item{$(\m{left\, input})$} for all $q\not \in \m{fn}(\mc{D}_1)$, if $\mc{D}_1 \xrightarrow{{\mathbf{L}\, x_\alpha\,q}} \mc{D}'_1$ then  $\exists \mc{D}'_2. \, \mc{D}_2 \xRightarrow{\tau} \mc{D}'_2$ and {$\mc{D}'_1 \approx_{\m{a}} \mathbf{msg}(x_\alpha.q) \mc{D}'_2,$}
\item{$(\m{right\, input})$} for all $q\not \in \m{fn}(\mc{D}_1)$, if $\mc{D}_1 \xrightarrow{{\mathbf{R}\, x_\alpha\,q}} \mc{D}'_1$ then  $\exists \mc{D}'_2. \, \mc{D}_2 \xRightarrow{\tau} \mc{D}'_2$ and   {$\mc{D}'_1 \approx_{\m{a}} \mc{D}'_2\mathbf{msg}(x_\alpha.q).$}
\end{itemize}
where $\mb{msg}(x_\alpha.q)$ is defined as $\mb{msg}(\mathbf{close}\, x_\alpha)$ if $q=\mathbf{close}$, $\mb{msg}(x_\alpha.k)$ if $q=k$, and $\mb{msg}(\mathbf{send}\,z_\delta\, x_\alpha)$ if $q=z_\delta$. Here, $\m{fn}(\mc{D}_1)$ is the set of free channels in the configuration $\mc{D}_1$.
\end{definition}

\begin{figure}
\centering
\begin{small}
\input{figs/lts.tex}
\end{small}
\caption{Labeled transition system for \lang.}
\label{fig:lts}
\end{figure}

Our adequacy theorem (\Cref{lem:soundness-completeness}) shows that \slrs are sound and complete with regard to asynchronous bisimilarity.

\begin{lemma}[Adequacy]\label{lem:soundness-completeness}
For all $(\mc{D}_1; \mc{D}_2) \in \m{Tree}(\Delta \Vdash K)$, we have $\mc{D}_1 \approx_{\m{a}} \mc{D}_2$ iff $\forall m. (\mc{D}_1; \mc{D}_2) \in \mc{E}\llbracket \Delta \Vdash K\rrbracket^m$ and $\forall m. (\mc{D}_2; \mc{D}_1) \in \mc{E}\llbracket \Delta \Vdash K\rrbracket^m$.
\end{lemma}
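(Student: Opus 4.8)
The plan is to split the biconditional into its two implications and, for each, to mediate between the observation-indexed, asymmetric term interpretation of \Cref{fig:rslr} and the un-indexed, symmetric bisimulation game of \Cref{def:asynchronousbisim}. For the forward direction (logical equivalence $\Rightarrow{}\approx_{\m{a}}$) I would exhibit an asynchronous bisimulation containing the pair at hand; since $\approx_{\m{a}}$ is the largest such relation, this suffices. The candidate relation $R$ pairs well-typed $(\mc{D}_1,\mc{D}_2)$ at a common interface whenever $(\mc{D}_1;\mc{D}_2),(\mc{D}_2;\mc{D}_1)\in\mc{E}\llbracket\Delta\Vdash K\rrbracket^m$ for every $m$; it is symmetric by construction, so only one move of the bisimulation game needs to be analysed.

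To verify that $R$ is an asynchronous bisimulation I would check the four clauses of \Cref{def:asynchronousbisim}. For a $\tau$-step $\mc{D}_1\mapsto\mc{D}'_1$ I answer with zero steps of $\mc{D}_2$ and argue $(\mc{D}'_1,\mc{D}_2)\in R$, using that $\mc{E}\llbracket\Delta\Vdash K\rrbracket^m$ is stable under internal reduction of either component --- immediate for the first component by prepending the step to the reduction quantified in clause (11), and for the second a consequence of confluence (\lemref{apx:lem:confluence}) and backward closure (\lemref{apx:lem:backsecondrun}). For an output $\mc{D}_1\xrightarrow{\overline{x_\alpha}\,q}\mc{D}'_1$ I feed $\forall m.\,(\mc{D}_1;\mc{D}_2)\in\mc{E}\llbracket\Delta\Vdash K\rrbracket^m$ into \Cref{lem:moving-existential} to obtain a \emph{single} reduct $\mc{D}_2\mapsto^{*}\mc{D}^{\ast}_2$ enjoying the value interpretation at focus channel $x_\alpha$ simultaneously for all indices; unfolding whichever output clause applies --- (1), (2), (4), (8), or (10) --- exhibits the matching message in $\mc{D}^{\ast}_2$, yielding $\mc{D}_2\xRightarrow{\overline{x_\alpha}\,q}\mc{D}'_2$ together with $(\mc{D}'_1;\mc{D}'_2)\in\mc{E}\llbracket\Delta'\Vdash K'\rrbracket^{k}$ for all $k$ (for $\otimes$ and $\multimap$ the clause moreover exhibits the transmitted subtree, related separately). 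The symmetric fact $(\mc{D}'_2;\mc{D}'_1)\in\mc{E}\llbracket\Delta'\Vdash K'\rrbracket^{k}$ follows by applying \Cref{lem:moving-existential} to $\forall m.\,(\mc{D}_2;\mc{D}_1)\in\mc{E}\llbracket\Delta\Vdash K\rrbracket^m$ and noting that, since the consumed message sits on a \emph{free} channel and is therefore untouched by internal reduction, the residual so produced agrees with $\mc{D}'_1$ up to internal steps, after which backward closure closes it down. The left- and right-input clauses are handled in the same style from value clauses (6), (7), (9) and (3), (5) respectively: here $\mc{D}_2$ needs only internal steps, and the required conclusion $\mc{D}'_1\approx_{\m{a}}\mathbf{msg}(x_\alpha.q)\mc{D}'_2$ (resp. $\mc{D}'_1\approx_{\m{a}}\mc{D}'_2\mathbf{msg}(x_\alpha.q)$) is precisely the relatedness of the configurations obtained after adjoining the message, which those clauses deliver.

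For the converse ($\approx_{\m{a}}\Rightarrow{}$ logical equivalence) symmetry of $\approx_{\m{a}}$ reduces the task to showing $\mc{D}_1\approx_{\m{a}}\mc{D}_2\Rightarrow\forall m.\,(\mc{D}_1;\mc{D}_2)\in\mc{E}\llbracket\Delta\Vdash K\rrbracket^m$, which I would prove by induction on $m$. The base case $m=0$ is clause (12), discharged by the standing hypothesis $(\mc{D}_1;\mc{D}_2)\in\m{Tree}(\Delta\Vdash K)$. For $m=n{+}1$ I must verify clause (11): given $\mc{D}_1\mapsto^{*_{\Upsilon_1;\Theta_1}}\mc{D}'_1$, iterating the $\tau$-clause of $\approx_{\m{a}}$ gives $\mc{D}_2\xRightarrow{}\tilde{\mc{D}}_2$ with $\mc{D}'_1\approx_{\m{a}}\tilde{\mc{D}}_2$. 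I then \emph{saturate} $\tilde{\mc{D}}_2$ with the messages ready in $\mc{D}'_1$ along the channels of $\Upsilon_1$, by repeatedly invoking the output clause of $\approx_{\m{a}}$ and using that such messages, living on free channels, persist under internal reduction (\lemref{apx:lem:confluence}); this produces $\tilde{\mc{D}}_2\xRightarrow{}\mc{D}'_2$ with $\Upsilon_1\subseteq\Upsilon_2$ and, via confluence and transitivity of $\approx_{\m{a}}$, still $\mc{D}'_1\approx_{\m{a}}\mc{D}'_2$. It remains to establish the value interpretation at each focus channel: for an output channel the matching message is in $\mc{D}'_2$ by construction and the residuals are bisimilar by one further step of the $\approx_{\m{a}}$ game, hence in $\mc{E}^{n}$ by the induction hypothesis; for an input channel, value clauses (3), (5), (6), (7), (9) quantify over the message to be adjoined, and the input clause of $\approx_{\m{a}}$ lets us simulate it, again reducing to the induction hypothesis. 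The channel-passing connectives $\otimes$ and $\multimap$ require one extra move: splitting $\mc{D}'_1$ and $\mc{D}'_2$ into the sub-forest rooted at the transmitted channel and its remainder, which is legitimate because asynchronous bisimilarity localises along a free channel, the tree topology of configurations being fixed by their typing.

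I expect the main obstacle to be two-fold. First, reconciling the observation-indexed, asymmetric term interpretation with the un-indexed, symmetric bisimulation: the witness reduct of $\mc{D}_2$ must be chosen uniformly in the index (handled by \Cref{lem:moving-existential}) and must serve \emph{both} directions of $R$ at once (handled by the fact that a message on a free channel is frozen under internal reduction, \lemref{apx:lem:confluence}, together with backward closure, \lemref{apx:lem:backsecondrun}, and --- in the channel-passing cases --- by recombining related components via Linking, \Cref{lem:compositionality}). Second, the channel-passing connectives, where the value interpretation decomposes a configuration into a sent subtree and a continuation: one must show the bisimulation game respects this decomposition, i.e.\ that relatedness of the whole forces relatedness of the corresponding sub-forests --- the single place where the argument genuinely relies on the rooted-tree structure of \ilst configurations rather than purely operational reasoning.
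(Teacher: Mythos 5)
Your proposal is correct and follows essentially the same route as the paper's appendix proof: soundness by exhibiting the symmetric, index-uniform logical relation as a candidate asynchronous bisimulation (using \Cref{lem:moving-existential} to obtain a single witness reduct of $\mc{D}_2$ serving all observation indices, plus confluence and backward closure for the $\tau$-step case and for the symmetric direction of the relation), and completeness by induction on the observation index, replaying the bisimulation game to construct and saturate the witness reduct and then discharging each value clause. The one supporting fact you assert rather than prove --- that asynchronous bisimilarity of tree-structured configurations decomposes along a transmitted free channel into bisimilarity of the corresponding disjoint sub-forests, needed for the $\otimes$/$\multimap$ value clauses in the completeness direction --- is a genuine separate lemma the argument depends on, but you identify it correctly and it does hold here because the two sub-forests share no channels, so it does not undermine the approach.
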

\begin{proof}
\ifappendix
The result is a corollary of \Cref{apx:lem:soundness-completeness} in the Appendix.
\else
\refproof
\fi
\end{proof}

\subsection{Biorthogonality}
\label{subsec:biorthogonality}

Our main result, biorthogonal closure of \mit{$\equiv$} (\Cref{thm:top-top-closure}),
relies upon various definitions, including an \emph{orthogonality} operation $\top$ \myCite{PittsMSCS2000,MelliesVouillonLICS2005,BentonICFP2009} (\Cref{def:top}), which we give next.
We follow the development of Pitts \myCite{PittsMSCS2000}.

We first define the program relations \mit{$\mathit{PRel}(\Delta \Vdash K)$} and
environment relations \mit{$\mathit{ERel}(\Delta \Vdash K)$} (\Cref{def:program-env-relations}),
reliant upon a session-typed environment (\Cref{def:session-typed-env}):

\begin{definition}[Session-typed environment]\label{def:session-typed-env}
    A session-typed environment with the sequent $\Delta \Vdash K$, is of the form $\mc{C}[\,\,]\mc{F}$, such that for some $\Delta'$ and $K'$, we have $\Delta' \Vdash \mc{C}:: \Delta$ and $K \Vdash \mc{F}:: K'$.
\end{definition}

\begin{definition}[Program- and environment- relations]\label{def:program-env-relations}
A session program-relation is a binary relation between session-typed programs, i.e., open configurations of the form $\Delta \Vdash \mc{D}:: K$. Given the sequent $\Delta \Vdash K$, we write $\mathit{PRel}(\Delta \Vdash K)$ for the set of all program relations that relate programs $\Delta \Vdash \mc{D}:: K$.

A session environment-relation is a binary relation between session-typed environments. Given the sequent $\Delta \Vdash K$, we write $\mathit{ERel}(\Delta \Vdash K)$ for the set of all environment relations that relate environments $\mc{C}[\,\,]\mc{F}$ with the sequent $\Delta \Vdash K$.
\end{definition}

We can now define the orthogonality operation \mit{$\top$} \myCite{PittsMSCS2000,MelliesVouillonLICS2005}.

\begin{definition}[The $(\_)^\top$ operation]\label{def:top}
    Given sequent $\Delta \Vdash K$ and program relation $r \in \mathit{PRel}(\Delta \Vdash K)$, we define $r^\top \in \mathit{ERel}(\Delta \Vdash K)$ by
    \[\begin{array}{ll}
      (\mc{C}_1 [\,\,]\mc{F}_1, \mc{C}_2 [\,\,] \mc{F}_2) \in r^\top  \,\, \m{iff}\\ \forall (\mc{D}_1, \mc{D}_2) \in r. (\mc{C}_1\mc{D}_1\mc{F}_1 \approx_{\m{a}} \mc{C}_2\mc{D}_2\mc{F}_2),
    \end{array}\]
    and given $s\, {\in} \mathit{ERel}(\Delta \Vdash K)$, we define $s^\top {\in} \mathit{PRel}(\Delta \Vdash K)$ by
    \[\begin{array}{l}
(\mc{D}_1, \mc{D}_2) \in s^\top  \,\, \m{iff}\\ \forall (\mc{C}_1[\,\,]\mc{F}_1, \mc{C}_2[\,\,]\mc{F}_2) \in s. (\mc{C}_1\mc{D}_1\mc{F}_1 \approx_{\m{a}} \mc{C}_2\mc{D}_2\mc{F}_2).
    \end{array}\]

   \noindent By definition, the \tm{\top} operation is inflationary and idempotent \myCite{PittsMSCS2000}.
\end{definition}

Finally, we state our main result that the logical equivalence relation \mit{$\equiv$} induced by our \slr (\Cref{def:eq})
is sound and complete with regard to closure of weak bisimilarity under parallel composition.

\begin{theorem}[$\top\top$-closure]\label{thm:top-top-closure}
Consider $(\mc{D}_1, \mc{D}_2) {\in} \m{Tree}(\Delta \Vdash K)$, we have
    {\small\[\begin{array}{l}
    (\Delta \Vdash \mc{D}_1:: K) \equiv (\Delta \Vdash \mc{D}_2:: K)\,\,
    \mathbf{iff} \\[4pt]
    \forall \mathcal{C}_1, \mathcal{C}_2, \mathcal{F}_1, \mathcal{F}_2.\,\m{if}\,
    (\Delta \dashv \mc{C}_1[\,\,]\mc{F}_1\dashv  K) \equiv (\Delta \dashv \mc{C}_2[\,\,]\mc{F}_2 \dashv K)\\
    \;\;\;\quad\qquad\qquad\quad \m{then}
    \, {{\mathcal{C}_1}{\mathcal{D}_1} {\mathcal{F}_1}}\, \approx_{\m{a}} \,\mathcal{C}_2{\mathcal{D}_2} {\mathcal{F}_2}.
  \end{array}\]}
\end{theorem}
\begin{proof}
\ifappendix
See proof of \Cref{apx:thm:biorthogonal} in the Appendix.
\else
\refproof
\fi
\end{proof}

The main purpose of \Cref{thm:top-top-closure} is to validate the logical equivalence induced by our logical relation, ensuring that it has enough discriminatory power (soundness) while being maximally permissive (completeness). To do that, \Cref{thm:top-top-closure} uses biorthogonal closure. 
As detailed in \myCite{PittsMSCS2000}, to prove that a relation $r$ is $\top\top$-closed, i.e., \mit{$r = r^{\top \top}$}, it is enough to show that for some witness relation \mit{$s$}, we have \mit{$r =s^\top$}.
\Cref{thm:top-top-closure} instantiates the program-relation \mit{$r$} and environment-relation \mit{$s$} (\Cref{def:top}) with the logical equivalences (\Cref{def:eq}) induced by our logical relation:
it defines $(\mc{D}_1, \mc{D}_2) \in r$ as the program relation {\small$
    (\Delta \Vdash \mc{D}_1:: K) \equiv (\Delta \Vdash \mc{D}_2:: K)$},
and shows that if we define the witness relation \mit{$(\mc{C}_1[\,\,]\mc{F}_1, \mc{C}_2[\,\,]\mc{F}_2)\in s$} as the environment relation \mit{$(\Delta \dashv \mc{C}_1[\,\,]\mc{F}_1\dashv  K) \equiv (\Delta \dashv \mc{C}_2[\,\,]\mc{F}_2 \dashv K)$}, then we can prove \mit{$r=s^\top$}. 
In particular, the theorem proves that related programs cannot be discriminated by related environments \mit{$(r \subseteq s^\top)$} and that programs that cannot be discriminated by related environments are related \mit{$(s^\top \subseteq r)$}. 
\Cref{thm:top-top-closure} is stated in terms of the definitions of the relations  \mit{$r$} and \mit{$s$}, and~\Cref{def:top}, i.e., the left side is equivalent to \mit{$(\mc{D}_1, \mc{D}_2) \in r$}, by the definition of \mit{$r$}, and the right side is equivalent to  
\mit{$(\mc{D}_1, \mc{D}_2) \in s^\top$} by~\Cref{def:top} and \mit{$s$}.

\Cref{thm:top-top-closure} thus establishes that our logical relation is compositional and extensional (behavioral). 
  It shows that our logical relation includes exactly the programs whose compositions with related environments are observationally equivalent.
  Soundness ensures that related environments are not very strong and do not have too much discriminatory power with respect to the observable behavior -- related environments cannot distinguish between the related programs. 
  Completeness guarantees that related environments are not very weak and have enough discriminatory power -- if related environments cannot distinguish between two programs, the programs are indeed related.
  This aligns with the assert-assume pas de deux described in \Cref{subsubsec:assume-assert}, ensuring that the assumptions made by related programs are exactly those asserted by the related environments and the assertions of the programs are exactly those assumed by the environments.

\section{Noninterference}
\label{sec:psni}

This section applies the \slr to noninterference,
phrased as an equivalence up to the secrecy level \mit{$\xi$} of an observer, \mit{$\equiv^{\Psi_0}_{\xi}$},
given a security lattice \mit{$\Psi_0$} (\Cref{subsec:ifc-upto-eq}).
The section then illustrates the up-to equivalence \mit{$\equiv^{\Psi_0}_{\xi}$} on an example (\Cref{subsec:case-study})
and concludes with its metatheoretic properties (\Cref{subsec:up-to-equiv-metatheory}),
including $\top \top$-closure.

\subsection{Attacker model}
\label{subsubsec:attacker-model}
 
Our attacker model is parametric in the secrecy level \mit{$\xi \in \Psi_0$} of an attacker and a program \mit{$P$},
given prior annotation of the free channels of \mit{$P$} with secrecy levels \mit{$c \in \Psi_0$}.
It assumes

\begin{itemize}
\item that the attacker knows the source code of \mit{$P$};
\item that the attacker can only observe the messages sent along the free channels of \mit{$P$} with secrecy level \mit{$c \sqsubseteq \xi$};
\item that an attacker cannot measure the passing of time;
\item that an attacker is oblivious of channel names;
\item a nondeterministic scheduler.
\end{itemize}

\subsection{Up-to equivalence}
\label{subsec:ifc-upto-eq}

To prove noninterference, we introduce an \emph{equivalence relation up to} the secrecy level \mit{$\xi$} of an observer,
\mit{$\equiv^{\Psi_0}_{\xi}$}, defined in \Cref{def:noninterference-sec} below.
This relation is reminiscent of logical equivalence \mit{$\equiv$} (\Cref{def:eq}),
but expects free channels to be annotated with secrecy levels and
allows ignoring those channels with secrecy level \mit{$\sqsubseteq \xi$}.

We first define two auxiliary notions: \textit{(i)}
downward projections on context of free channels \mit{$\Gamma \Downarrow \xi$}
and on the providing free channel \mit{$x_\alpha{:}T[c] \Downarrow \xi$}
(\Cref{def:typing-proj}), as well as
\textit{(ii)} the predicates \mit{$\mb{H\text{-}Provider}^\xi$} and \mit{$\mb{H\text{-}Client}^\xi$}
(\Cref{def:highprovider}).
Since channels are now annotated with secrecy levels,
we use the metavariables \mit{$\G$} and \mit{$K^s$}
to range over the linear typing context and providing channel, \respb.

\begin{definition}[Typing context projections]\label{def:typing-proj}
    Downward projection on security linear contexts \mit{$\Gamma$} and providing channels \mit{$K^s$} is defined as follows:
  {\small \mit{ \[\begin{array}{lclc}
    \Gamma, x_\alpha{:}T[c] \Downarrow \xi& \defeq& \Gamma \Downarrow \xi,  x_\alpha{:}T[c] & \m{if}\; c\sqsubseteq \xi  \\
         \Gamma, x_\alpha{:}T[c] \Downarrow \xi& \defeq& \Gamma \Downarrow \xi & \m{if}\; c \not \sqsubseteq \xi  \\
        \cdot \Downarrow \xi& \defeq& \cdot &  \\
         x_\alpha{:}T[c] \Downarrow \xi& \defeq&   x_\alpha{:}T[c] & \m{if}\; c\sqsubseteq \xi  \\
        x_\alpha{:}T[c] \Downarrow \xi& \defeq& \_{:}1[\top] & \m{if}\; c\not\sqsubseteq \xi \\
    \end{array}\]}}
\end{definition} 
The projections are used to keep only the free channels that are observable to an attacker.
We close the nonobservable channels off by composing the configuration with
high-confidentiality clients (\mit{$\mb{H\text{-}Client}^\xi$}) and providers (\mit{$\mb{H\text{-}Provider}^\xi$}),
which may differ for each configuration.
These high-confidentiality clients and providers are connected to the configurations through nonobservable channels,
making the messages they exchange with the configurations nonobservable to the attacker as well.

\begin{definition}[High provider and High client]\label{def:highprovider}
      {\small \mit{\[\begin{array}{lll}
         \cdot \,\,\in \mb{H\text{-}Provider}^\xi(\cdot) \\[5pt]
         \mc{B} \in \mb{H\text{-}Provider}^{\xi}(\Gamma, x_\alpha{:}A[c])\, \,\qquad  \m{iff} &\\
         \;\;\;\mb{either}\, c \not \sqsubseteq \xi  \, \m{and}\,
         \mc{B} = \mc{B}' \mc{T} \,\, \m{and}\,\, \mc{B}' \in \mb{H\text{-}Provider}^\xi(\Gamma)\,\,\m{and}\\ \qquad \qquad \qquad \qquad   \mc{T} \in \m{Tree}(\cdot \Vdash x_\alpha{:}A), \\
         \quad \mb{or}\,\, c \sqsubseteq \xi  \, \m{and}\,
        \mc{B} \in \mb{H\text{-}Provider}^\xi(\Gamma)\\[10pt]
     
         \mc{T} \in \mb{H\text{-}Client}^\xi( x_\alpha{:}A[c])\, \,\qquad  \m{iff} \\
         \;\;\;\mb{either}\, c \not \sqsubseteq \xi  \, \m{and}\, \mc{T} \in \m{Tree}(x_\alpha{:}A \Vdash \_:1), \mb{or}\\
         \;\;\;\mb{or}\,\,  c \sqsubseteq \xi  \, \m{and}\, \mc{B}= \cdot
       \end{array}\]}}
\end{definition}

Finally, we can state the relation \mit{$\equiv^{\Psi_0}_{\xi}$}.
Like \mit{$\equiv$} (\Cref{def:eq}), \mit{$\equiv^{\Psi_0}_{\xi}$} is phrased in terms of the term interpretation of the \slr,
but restricts observations to those free channels that are observable.
The remaining free channels will be composed in parallel with arbitrary \illst-typed configurations.

\begin{definition}[Logical equivalence up to observer level]\label{def:noninterference-sec}
    We define the relation \mit{$(\Gamma_1 \Vdash \mc{D}_1:: x_\alpha {:}A_1[{c_1}]) \equiv^{\Psi_0}_{\xi} (\Gamma_2 \Vdash \mc{D}_2:: y_\beta {:}A_2[{c_2}])$} as
$$\begin{array}{l}    
      {\mc{D}_1} \in \m{Tree}(\erasure{\Gamma_1} \Vdash x_\alpha {:}A_1)\,\, \m{and}\,\,{\mc{D}_2} \in \m{Tree}(\erasure{\Gamma_2} \Vdash y_\beta {:}A_2)\,\, \m{and}\\
      \Gamma_1 {\Downarrow} \xi= \Gamma_2 {\Downarrow} \xi=\Gamma\,\,\m{and}\,\, x_\alpha {:}A_1[{c_1}]{\Downarrow} \xi= y_\beta {:}A_2[{c_2}] {\Downarrow} \xi= K^s\,\,\m{and}\\
      \forall \, \mc{B}_1 \in \mb{H\text{-}Provider}^\xi(\Gamma_1).\, \forall \, \mc{B}_2 \in \mb{H\text{-}Provider}^\xi(\Gamma_2).\\ \,\forall \mc{T}_1 \in \mb{H\text{-}Client}^\xi(x_\alpha {:}A_1[{c_1}]). \,\forall \mc{T}_2 \in \mb{H\text{-}Client}^\xi(y_\beta {:}A_2[{c_2}]).\\
\forall\,m.\,(\mc{B}_1\mc{D}_1\mc{T}_1, \mc{B}_2\mc{D}_2\mc{T}_2) \in \mc{E}\llbracket \erasure{\Gamma} \Vdash \erasure{K^s} \rrbracket^{m},\,
    \m{and} \\[1pt]
    \forall\,m.\,(\mc{B}_2\mc{D}_2\mc{T}_2, \mc{B}_1\mc{D}_1\mc{T}_1) \in \mc{E}\llbracket \erasure{\Gamma} \Vdash \erasure{K^s} \rrbracket^{m}.
    \end{array}$$
\end{definition}

\Cref{def:noninterference-sec} makes use of an erasure operation \mit{$\erasure{\_}$} to drop the secrecy annotations,
to match the sequent of the logical relation whose free channels lack secrecy annotations.
As a result, \mit{$\equiv^{\Psi_0}_{\xi}$} can be used for configurations that type check using an information flow control (IFC) type system as well as 
``plain-vanilla'' \illst-typed configurations,
facilitating \emph{semantic typing}
\myCite{LoefARTICLE1982,ConstableBook1986,TimanyJACM2024}
for progress-sensitive noninterference (PSNI).

\subsection{Case study: progress-sensitive noninterference (PSNI)}
\label{subsec:case-study}

To illustrate \mit{$\equiv^{\Psi_0}_{\xi}$}, we instantiate it on a simplistic example,
in the spirit of an insecure version of the \mit{$\m{Verifier}$} encountered earlier (\Cref{fig:simple-verfier}), shown below.
Assuming that our security lattice ($\Psi_0$) is
{$
\mb{guest} \sqsubseteq \mb{alice}
$},
and that the attacker level is $\mb{guest}$,
the below process leaks to an attacker $y$ whether authentication was successful or not by either sending the label \mit{$\mathit{s}$} or \mit{$\mathit{f}$}, \respb.
\begin{center}
\begin{minipage}[t]{1\textwidth}
\begin{tabbing}
$\m{pin}$ $=$ \= $\& \{$\= $\mathit{tok_1}{:}\m{1}, \mathit{tok_2}{:} \m{1}\}\quad \quad$
$\m{bit} = \& \{\mathit{zero}{:}\m{1}, \mathit{one}{:} \m{1}\}$ \\[5pt]
$y{:}\m{bit}[\mb{guest}] \vdash \m{X} :: x{:}\m{pin}[\mb{alice}]=$
\=$(\mb{case} \, x \, ($\=$\mathit{tok}_1 \Rightarrow y.\mathit{zero};  \mb{wait}\, y; \,\mb{close}\,x$ \\
\>\>$\mid \mathit{tok}_{2} \Rightarrow y.\mathit{one}; \mb{wait}\, y; \,\mb{close}\,x))$    
\end{tabbing}
\end{minipage}
\end{center}
To prove that this process is secure, we would have to show that for any substitution that instantiates free variables $x$ and $y$ with free channels $x_\alpha$ and $y_\beta$
$$\begin{array}{l}
    y_\beta:\m{bit}[\mb{guest}]\Vdash \mb{proc}(x_\alpha, \m{X}) ::x_\alpha{:}\m{pin}[\mb{alice}]\,\,
     \equiv^{\Psi_0}_{\mb{guest}}\\ y_\beta:\m{bit}[\mb{guest}]\Vdash \mb{proc}(x_\alpha, \m{X}) ::x_\alpha{:}\m{pin}[\mb{alice}]
\end{array}$$
We compose the process with the following high-secrecy clients \mit{$\mathcal{T}_1$} and \mit{$\mathcal{T}_2$},
sending different tokens along the high-secrecy channel \mit{$x_\alpha{:}\m{pin}[\mb{alice}]$},
$$\begin{array}{l}
    \mathcal{T}_1= \mathbf{proc}(z_\eta, x_\alpha.\mathit{tok}_1; \mb{wait}\,x_\alpha; \mb{close}\,z_\eta)\\
    \mathcal{T}_2= \mathbf{proc}(z_\eta, x_\alpha.\mathit{tok}_2; \mb{wait}\,x_\alpha; \mb{close}\,z_\eta),
\end{array}$$
leaving us left to show that for all $m$:
\begin{center}
\begin{minipage}[t]{1\textwidth}
\begin{tabbing}
$($\=$\color{Brown}\mb{proc}(x_\alpha, \m{X})\, \mathbf{proc}(z_\eta, x_\alpha.\mathit{tok}_1; \mb{wait}\,x_\alpha; \mb{close}\,z_\eta);$ \\
\> $\color{Blue}\mb{proc}(x_\alpha, \m{X})\, \mathbf{proc}(z_\eta, x_\alpha.\mathit{tok}_2; \mb{wait}\,x_\alpha; \mb{close}\,z_\eta))
\in \mathcal{E}\llbracket y_\beta{:}\m{bit} \vdash \_{:}1 \rrbracket^m$
\end{tabbing}
\end{minipage}
\end{center}
Obviously this does not hold true, confirming that process \mit{$\m{X}$} is insecure.
Consider a variant of \mit{$\m{X}$} that sends the bit $\mathit{zero}$ in either case.
This variant would be accepted by our logical relation, as it should because it is secure.

Alternatively, we could employ an \emph{information flow control (IFC)} type system
\myCite{VolpanoARTICLE1996,SmithVolpanoPOPL1998,SabelfeldIEE2003}
to verify whether process \mit{$\m{X}$} is secure.
Such type systems label observables (\eg output, locations, channels) with secrecy levels
drawn from a given security lattice \mit{$\Psi_0$}
to prevent ``flows from high to low''.
Next, we sketch such an IFC type system as a refinement of \lang.
The full refinement type system is included in
\ifappendix
\Cref{apx:sec:ifc}.
\else
\cite{BalzerARXIV2026}.
\fi
It is based on existing work \myCite{DerakhshanLICS2021,DerakhshanECOOP2024},
and thus we only summarize its main characteristics and
discuss the typing rules relevant for the example.

To type process terms, we use the judgment
$$\Psi;\G \vdash_{\Sigma} P@c ::  x:A[d].$$
The new elements, compared to \lang's judgment (see \Cref{subsec:type-system}),
is the security lattice \mit{$\Psi$} and the possible worlds annotations
\mit{$@c$} and \mit{$[d]$} \myCite{BalzerESOP2019}, to denote a process' running secrecy and maximal secrecy, \respb.
\mit{$\G$} is a linear typing context, like \mit{$\Delta$},
but where types now have maximal secrecy annotations.
The idea is that a process' \emph{maximal secrecy} indicates the maximal level of secret information the process may ever obtain,
whereas a process' \emph{running secrecy} denotes the highest level of secret information the process has obtained so far.

To constrain the propagation of information, the following invariants are presupposed
on the typing judgment \mit{$\Psi;\G \vdash_{\Sigma} P@c ::  x:A[d]$}

\begin{enumerate}[label=(\roman*)]

\item \mit{$\forall y{:}B[d'] \in \G .\,  \Psi \Vdash d' \sqsubseteq d$}

\item \mit{$\Psi \Vdash c \sqsubseteq d$}

\end{enumerate}

\noindent ensuring that the maximal secrecy of a child node is capped by the maximal secrecy of its parent and
that the running secrecy of a process is less than or equal to its maximal secrecy, \respb.
Additionally, the type system ensures that

\begin{enumerate}
\item no channel of high secrecy is sent along one with lower secrecy; and
\item a process does not send any message along a low-secrecy channel after receipt of high secrecy information.
\end{enumerate}

\noindent The first condition prevents direct flows and is enforced by requiring that
the maximal secrecy level of a carrier channel and the channels sent over it match.
The second condition is met by ensuring that a process' running secrecy is a sound approximation
of the level of secret information a process has obtained so far.
To this end, the type system increases the running secrecy of a process upon receipt
to \textit{at least} the maximal secrecy of the sending process and, correspondingly,
guards sends by making sure that the running secrecy of the sending process is \textit{at most} the maximal secrecy of the receiving process.

This working can be seen in the typing rules for external choice:
\begin{small}
\begin{mathpar}
\inferrule*[right=$\& R$]
{\Psi; \G \vdash_{\Sigma} Q_k@c::  y{:}A_{k}[c] \\
\forall k \in L}
{\Psi;\G \vdash_{\Sigma} (\mathbf{case}\, y^c(\ell \Rightarrow Q_\ell)_{\ell \in I})@d_1::  y{:}\&\{\ell:A_{\ell}\}_{\ell \in L}[c]}

\inferrule*[right=$\& L$]
{\Psi \Vdash d_1 \sqsubseteq c \\
\Psi;  \G, x{:}A_{k}[c] \vdash_{\Sigma}  P@d_1::  y{:}C[c'] \\
k \in L}
{\Psi; \G, x{:}\&\{\ell:A_{\ell}\}_{\ell \in I}[c] \vdash_{\Sigma} (x^c.k; P)@d_1::  y{:}C[c']}
\end{mathpar}
\end{small}

\noindent The right rule increases the provider's running secrecy from \tm{d_1} to its maximal secrecy \tm{c}
after receipt of the label \tm{k}.
The left rule guards the send by the premise \tm{\Psi \Vdash d_1 \sqsubseteq c},
demanding that the provider's running secrecy \tm{d_1} is less than or equal to the maximal secrecy \tm{c} of the recipient.
It is precisely this guard that prevents process \tm{\m{X}} to type check using the refinement IFC type system.
This guard also prevents the variant of \mit{$\m{X}$}, which sends the bit $\mathit{zero}$ in either case,
to type check,
which is accepted by our logical relation.
This outcome is expected because type systems have the benefit of automatic verification,
foregoing completeness.
Our logical relation thus amounts to a \emph{semantic} logical relation
\myCite{LoefARTICLE1982,ConstableBook1986,TimanyJACM2024}
for PSNI, as it only requires configurations to be \illst-typed, but not IFC-typed.
The fact that IFC-typed processes are non-interfering is proved
as the \emph{fundamental theorem} of the logical relation,
\ifappendix
see \Cref{apx:thm:ftlr} in \Cref{apx:sec:noninterference}.
\else
see \cite{BalzerARXIV2026}.
\fi

\subsection{Metatheoretic properties of up-to equivalence}
\label{subsec:up-to-equiv-metatheory}

Unsurprisingly, \mit{$\equiv^{\Psi_0}_{\xi}$} has analogous properties to \mit{$\equiv$},
except for reflexivity,
which only holds for IFC-typed configurations,
as demonstrated by our \mit{$\m{X}$} process in \Cref{subsec:case-study}.
\ifappendix
\else
We provide a selection of the most important properties of \mit{$\equiv^{\Psi_0}_{\xi}$},
the complete listing is given in \myCite{BalzerARXIV2026}.
\fi

\ifappendix

\subsubsection{Semantic cut: closure under parallel composition}
\label{subsubsec:ifc-compositionality}

\begin{lemma}[Semantic cut]\label{lem:compositionality-sec}
 If 
 \begin{enumerate}
 \item  $(\Gamma'_1 \Vdash \mc{C}_1:: \Gamma) \equiv^\xi_{\Psi_0} (\Gamma'_2 \Vdash \mc{C}_2:: \Gamma)\,\, \m{with}\,\, \Gamma=\Gamma_1 \Downarrow_\xi= \Gamma_2\Downarrow_\xi\,\,\m{and}$
 \item $(\Gamma_1 \Vdash \mc{D}_1:: x_\alpha{:}A_1[c_1]) \equiv^\xi_{\Psi_0} (\Gamma_2 \Vdash \mc{D}_2:: y_\beta{:}A_2[c_2])\,\m{and}$
 \item $(K^s \Vdash \mc{F}_1::K^s_1) \equiv^\xi_{\Psi_0} (K^s \Vdash \mc{F}_2:: K^s_2)\\\,\,\m{with}\, K^s=  x_\alpha{:}A_1[c_1] \Downarrow \xi = y_\beta{:}A_2[c_2] \Downarrow \xi$
 \end{enumerate}
 then
  {\small\(\,(\Gamma'_1, \Gamma^h_1 \Vdash \mc{C}_1\mc{D}_1\mc{F}_1:: K^s_1) \equiv^\xi_{\Psi_0} (\Gamma'_2, \Gamma^h_2 \Vdash \mc{C}_2\mc{D}_2\mc{F}_2:: K^s_2),\)}
 where \mit{$\Gamma^h_1$} is the set of all channels \mit{$w_\eta{:}C[d] \in \Gamma_1$} with \mit{$d \not \sqsubseteq \xi$} and \mit{$\Gamma^h_2$} is the set of all channels \mit{$w_\eta{:}C[d] \in \Gamma_2$} with \mit{$d \not \sqsubseteq \xi$}.
\end{lemma}
\begin{proof}
\ifappendix
See proof of \Cref{apx:lem:compositionality-eq} in the Appendix.
\else
\refproof
\fi
\end{proof}

\subsubsection{Partial equivalence relation (PER)}
\label{subsec:per}

    \begin{lemma}[Symmetry]\label{lem:symmetry-sec}
    For all security levels \mit{$\xi$} and configurations \mit{$\mc{D}_1$} and \mit{$\mc{D}_2$}, we have 
    {\small
    $(\Gamma_1 \Vdash \mathcal{D}_1:: x_\alpha {:}T_1[c_1])  \equiv^{\Psi_0}_{\xi} (\Gamma_2 \Vdash \mathcal{D}_2:: y_\beta {:}T_2[c_2]),$
    iff 
    $(\Gamma_2 \Vdash \mathcal{D}_2:: y_\beta {:}T_2[c_2]) \equiv^{\Psi_0}_{\xi} (\Gamma_1 \Vdash \mathcal{D}_1:: x_\alpha {:}T_1[c_1]).$
    }
    \end{lemma}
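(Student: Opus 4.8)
The plan is to prove the claim by a straightforward unfolding of \Cref{def:noninterference-sec}, exploiting the fact that the up-to equivalence $\equiv^{\Psi_0}_{\xi}$ is already stated in a manifestly symmetric fashion. Recall that $(\Gamma_1 \Vdash \mc{D}_1:: x_\alpha{:}A_1[c_1]) \equiv^{\Psi_0}_{\xi} (\Gamma_2 \Vdash \mc{D}_2:: y_\beta{:}A_2[c_2])$ unfolds to a conjunction of: (a) two $\m{Tree}$-membership conditions, namely $\mc{D}_1 \in \m{Tree}(\erasure{\Gamma_1} \Vdash x_\alpha{:}A_1)$ and $\mc{D}_2 \in \m{Tree}(\erasure{\Gamma_2} \Vdash y_\beta{:}A_2)$; (b) two equalities of downward projections, $\Gamma_1{\Downarrow}\xi = \Gamma_2{\Downarrow}\xi = \Gamma$ and $x_\alpha{:}A_1[c_1]{\Downarrow}\xi = y_\beta{:}A_2[c_2]{\Downarrow}\xi = K^s$; and (c) a block universally quantified over all $\mc{B}_1 \in \mb{H\text{-}Provider}^\xi(\Gamma_1)$, $\mc{B}_2 \in \mb{H\text{-}Provider}^\xi(\Gamma_2)$, $\mc{T}_1 \in \mb{H\text{-}Client}^\xi(x_\alpha{:}A_1[c_1])$, $\mc{T}_2 \in \mb{H\text{-}Client}^\xi(y_\beta{:}A_2[c_2])$, whose body is the conjunction of $\forall m.\,(\mc{B}_1\mc{D}_1\mc{T}_1, \mc{B}_2\mc{D}_2\mc{T}_2) \in \mc{E}\llbracket \erasure{\Gamma} \Vdash \erasure{K^s} \rrbracket^{m}$ and $\forall m.\,(\mc{B}_2\mc{D}_2\mc{T}_2, \mc{B}_1\mc{D}_1\mc{T}_1) \in \mc{E}\llbracket \erasure{\Gamma} \Vdash \erasure{K^s} \rrbracket^{m}$.

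First I would observe that interchanging $(\Gamma_1 \Vdash \mc{D}_1 :: x_\alpha{:}T_1[c_1])$ with $(\Gamma_2 \Vdash \mc{D}_2 :: y_\beta{:}T_2[c_2])$ sends each of these conjuncts to a logically equivalent statement: the two $\m{Tree}$-conditions in (a) are merely swapped; each projection equality in (b) becomes its own converse, which holds since equality is symmetric; and in (c) the four universal quantifiers are reordered (reordering universals is meaning-preserving) while the body's two $\mc{E}$-membership conjuncts are swapped. In particular, because clause (c) already records \emph{both} directions of the term interpretation for every pairing of high-secrecy closures, the swap maps the defining predicate of $(\Gamma_1 \Vdash \mc{D}_1 :: \ldots) \equiv^{\Psi_0}_{\xi} (\Gamma_2 \Vdash \mc{D}_2 :: \ldots)$ exactly onto the defining predicate of $(\Gamma_2 \Vdash \mc{D}_2 :: \ldots) \equiv^{\Psi_0}_{\xi} (\Gamma_1 \Vdash \mc{D}_1 :: \ldots)$, up to propositional reassociation and reordering of quantifiers. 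Hence both directions of the biconditional hold by the same argument, mirroring the proof of \Cref{lem:symmetry-eq} for $\equiv$.

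I do not expect a genuine obstacle here; the statement is essentially its own dual. The only point worth checking is that the auxiliary operations appearing in \Cref{def:noninterference-sec}---the erasure $\erasure{\,\cdot\,}$, the downward projections $\Downarrow\xi$, and the predicates $\mb{H\text{-}Provider}^\xi$ and $\mb{H\text{-}Client}^\xi$---are applied to each side independently, so they are wholly unaffected by the swap; this is immediate from their definitions. The proof is therefore purely by inspection of \Cref{def:noninterference-sec}.
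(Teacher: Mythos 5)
Your proposal is correct and matches the paper's approach: the paper likewise treats symmetry as immediate from the definition, since Definition~\ref{def:noninterference-sec} already contains both directions of the term interpretation as conjuncts and all remaining conditions (the $\m{Tree}$ memberships, the projection equalities, and the quantifications over high-secrecy providers and clients) are applied to each side independently. Nothing further is needed.
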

    \begin{proof}
    \ifappendix
See proof of \Cref{apx:lem:symmetry} in the Appendix.
\else
\refproof
\fi
    \end{proof}
    
    \begin{lemma}[Transitivity]\label{lem:trans-sec}
    For all security levels \mit{$\xi$}, and configurations \mit{$\mc{D}_1$}, \mit{$\mc{D}_2$}, and \mit{$\mc{D}_3$}, we have 
    \mit{
     \[
    \begin{array}{l}
      \m{if}\,(\Gamma_1 \Vdash \mathcal{D}_1:: x_\alpha {:}T_1[c_1])  \equiv^{\Psi_0}_{\xi} (\Gamma_2 \Vdash \mathcal{D}_2:: y_\beta {:}T_2[c_2])\, \m{and}\\\, (\Gamma_2 \Vdash \mathcal{D}_2:: y_\beta {:}T_2[c_2])  \equiv^{\Psi_0}_{\xi} (\Gamma_3 \Vdash \mathcal{D}_3:: z_\eta {:}T_3[c_3])\\
      \m{then}\,(\Gamma_1 \Vdash \mathcal{D}_1:: x_\alpha {:}T_1[c_1]) \equiv^{\Psi_0}_{\xi} (\Gamma_3 \Vdash \mathcal{D}_3:: z_\eta {:}T_3[c_3]).
    \end{array}  
    \]
    }
    \end{lemma}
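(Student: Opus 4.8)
The plan is to reduce transitivity of $\equiv^{\Psi_0}_{\xi}$ to transitivity of the plain logical equivalence $\equiv$ (\Cref{lem:transitivity-eq}), by composing each of the three configurations with suitably chosen \emph{bridging} high-security environments so that all resulting composites sit at one common interface.

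\textbf{Step 1 (unfold and chain the structural data).} First I would unfold \Cref{def:noninterference-sec} for both hypotheses. From $(\Gamma_1 \Vdash \mc{D}_1 :: x_\alpha{:}T_1[c_1]) \equiv^{\Psi_0}_{\xi} (\Gamma_2 \Vdash \mc{D}_2 :: y_\beta{:}T_2[c_2])$ I obtain $\mc{D}_1 \in \m{Tree}(\erasure{\Gamma_1} \Vdash x_\alpha{:}T_1)$, the projection equalities $\Gamma_1 \Downarrow \xi = \Gamma_2 \Downarrow \xi$ and $x_\alpha{:}T_1[c_1] \Downarrow \xi = y_\beta{:}T_2[c_2] \Downarrow \xi$, and the composition clause; symmetrically from the second hypothesis I get $\mc{D}_3 \in \m{Tree}(\erasure{\Gamma_3}\Vdash z_\eta{:}T_3)$ and the analogous equalities. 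Chaining gives a single $\Gamma := \Gamma_1 \Downarrow \xi = \Gamma_2 \Downarrow \xi = \Gamma_3 \Downarrow \xi$ and a single $K^s := x_\alpha{:}T_1[c_1] \Downarrow \xi = y_\beta{:}T_2[c_2] \Downarrow \xi = z_\eta{:}T_3[c_3] \Downarrow \xi$, which discharges the structural conjuncts of the goal and, more importantly, guarantees that every composite $\mc{B}_i\mc{D}_i\mc{T}_i$ lives at the \emph{same} interface $\erasure{\Gamma} \Vdash \erasure{K^s}$, hence all three are comparable in $\mc{E}$.

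\textbf{Step 2 (bridge with a middle environment).} Next I would fix arbitrary witnesses $\mc{B}_1 \in \mb{H\text{-}Provider}^\xi(\Gamma_1)$, $\mc{B}_3 \in \mb{H\text{-}Provider}^\xi(\Gamma_3)$, $\mc{T}_1 \in \mb{H\text{-}Client}^\xi(x_\alpha{:}T_1[c_1])$, $\mc{T}_3 \in \mb{H\text{-}Client}^\xi(z_\eta{:}T_3[c_3])$, and choose some $\mc{B}_2 \in \mb{H\text{-}Provider}^\xi(\Gamma_2)$ and $\mc{T}_2 \in \mb{H\text{-}Client}^\xi(y_\beta{:}T_2[c_2])$. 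These sets are nonempty: for a high channel of maximal secrecy $\not\sqsubseteq\xi$ one always has a closed well-typed producer (follow the protocol, spawning closed processes for any channels sent, closing at $1$, looping via a recursive process definition where needed) and dually a closed consumer down to $\_{:}1$; and in the case $c_2 \sqsubseteq \xi$ (resp.\ a channel with secrecy $\sqsubseteq\xi$) the required tree/forest is simply empty. Instantiating hypothesis (i) with $\mc{B}_1,\mc{T}_1$ on the left and $\mc{B}_2,\mc{T}_2$ on the right yields, together with the $\m{Tree}$ facts, exactly $(\erasure{\Gamma} \Vdash \mc{B}_1\mc{D}_1\mc{T}_1 :: \erasure{K^s}) \equiv (\erasure{\Gamma} \Vdash \mc{B}_2\mc{D}_2\mc{T}_2 :: \erasure{K^s})$ in the sense of \Cref{def:eq} (membership in $\m{Tree}$, plus both directions of $\mc{E}$ at every index $m$); instantiating hypothesis (ii) with $\mc{B}_2,\mc{T}_2$ on the left and $\mc{B}_3,\mc{T}_3$ on the right yields the analogous $\equiv$ between the middle and the right composite.

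\textbf{Step 3 (conclude).} I would then apply \Cref{lem:transitivity-eq} to these two instances of $\equiv$ to obtain $(\erasure{\Gamma} \Vdash \mc{B}_1\mc{D}_1\mc{T}_1 :: \erasure{K^s}) \equiv (\erasure{\Gamma} \Vdash \mc{B}_3\mc{D}_3\mc{T}_3 :: \erasure{K^s})$, i.e.\ $(\mc{B}_1\mc{D}_1\mc{T}_1, \mc{B}_3\mc{D}_3\mc{T}_3) \in \mc{E}\llbracket \erasure{\Gamma} \Vdash \erasure{K^s} \rrbracket^{m}$ for all $m$ together with the reverse. Since $\mc{B}_1, \mc{B}_3, \mc{T}_1, \mc{T}_3$ were arbitrary, all clauses of \Cref{def:noninterference-sec} for $(\Gamma_1 \Vdash \mc{D}_1 :: x_\alpha{:}T_1[c_1]) \equiv^{\Psi_0}_{\xi} (\Gamma_3 \Vdash \mc{D}_3 :: z_\eta{:}T_3[c_3])$ are satisfied. (Alternatively, one can bypass \Cref{lem:transitivity-eq} and route through \Cref{lem:soundness-completeness}: the left and right composites are each asynchronously bisimilar to the middle one, $\approx_{\m{a}}$ is transitive, and adequacy then transports bisimilarity back into $\mc{E}$.) The only genuine subtlety—hence the ``hard part''—is the interface bookkeeping: one must pick the bridging $\mc{B}_2$ and $\mc{T}_2$ at types that match $\Gamma_2$ and $y_\beta{:}T_2[c_2]$ (whose $\xi$-projections were shown in Step 1 to agree with those of the other two configurations), so that all three composites provably live at the common erased interface $\erasure{\Gamma}\Vdash\erasure{K^s}$ and transitivity of $\equiv$ applies verbatim; everything else is routine unfolding of \Cref{def:noninterference-sec} and \Cref{def:eq}.
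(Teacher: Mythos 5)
Your proposal is correct and follows essentially the same route as the paper: the paper derives this result as a corollary of the transitivity of plain logical equivalence (\Cref{lem:transitivity-eq}), exactly by chaining the projection equalities to a common erased interface and bridging through a composite $\mc{B}_2\mc{D}_2\mc{T}_2$ built from (nonempty) high-provider/high-client sets for the middle configuration. The only point worth being explicit about in a full write-up is the one you already flag—non-emptiness of $\mb{H\text{-}Provider}^\xi(\Gamma_2)$ and $\mb{H\text{-}Client}^\xi(y_\beta{:}T_2[c_2])$, which holds in \lang since every session type admits a closed well-typed provider and client (via recursive process definitions if necessary).
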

\begin{proof}
\ifappendix
See proof of \Cref{apx:cor:transitivity} in the Appendix.
\else
\refproof
\fi
\end{proof}

\else
\fi

\ifappendix
\subsubsection{Adequacy}
\label{subsubsec:ifc-adequacy}
\else
\fi

\begin{definition}[Bisimulation with secrecy annotated sequent]\label{def:simequiv-sec}
    For \mit{$\,\mc{D}_1\in \m{Tree}(\erasure{\Gamma_1} \Vdash x_\alpha {:}A_1)$, $\mc{D}_2\in \m{Tree}(\erasure{\Gamma_2} \Vdash y_\beta {:}A_2)\,$} we define \mit{$\Gamma_1 \Vdash \mc{D}_1 :: x_\alpha {:}A_1[c_1] \approx_{\m{a}}^{\xi} \Gamma_2 \Vdash \mc{D}_2:: y_\beta {:}A_2[c_2]$} as
  %
  \mit{\[\begin{array}{c}
    \Gamma_1 \Downarrow \xi= \Gamma_2,  \Downarrow \xi\,\,\,\m{and}\,\,\, y_\beta {:}A_2[c_2] \Downarrow \xi =  x_\alpha {:}A_1[c_1] \Downarrow \xi\,\, \,\m{and}\, \\
    \forall \mc{B}_1 \in \m{H\text{-}Provider}^\xi(\Gamma_1). \forall \mc{B}_2\in \m{H\text{-}Provider}^\xi(\Gamma_2).\\
    \forall  \mc{T}_1 \in \m{H\text{-}CLient}^\xi(x_\alpha{:}A_1[c_1]).\, \forall \mc{T}_2 \in \m{H\text{-}Client}^\xi(y_\beta{:}A_2[c_2]).\\ \mc{B}_1\mc{D}_1\mc{T}_1 \approx_{\m{a}} \mc{B}_2\mc{D}_2\mc{T}_2.
  \end{array}
  \]}
  \end{definition}

  \begin{lemma}[Adequacy]\label{lem:soundness-completeness-sec}
    For all \mit{$\mc{D}_1\in \m{Tree}(\erasure{\Gamma_1} \Vdash x_\alpha {:}A_1)$ and $\mc{D}_2\in \m{Tree}(\erasure{\Gamma_2} \Vdash y_\beta {:}A_2)$}, we have
   \mit{$(\Gamma_1 \Vdash \mc{D}_1:: x_\alpha {:}A_1[{c_1}]) \equiv^{\Psi_0}_{\xi} (\Gamma_2 \Vdash \mc{D}_2:: y_\beta {:}A_2[{c_2}])$} iff \mit{$(\Gamma_1 \Vdash \mc{D}_1:: x_\alpha {:}A_1[{c_1}]) \approx_{\m{a}}^{\xi} (\Gamma_2 \Vdash \mc{D}_2:: y_\beta {:}A_2[{c_2}])$}.
    \end{lemma}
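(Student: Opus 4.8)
The plan is to reduce the statement to the ``plain'' adequacy result already in hand. Both $\equiv^{\Psi_0}_{\xi}$ (\Cref{def:noninterference-sec}) and $\approx_{\m{a}}^{\xi}$ (\Cref{def:simequiv-sec}) are obtained from their non-interference-free counterparts---logical equivalence $\equiv$ realized by the term interpretation $\mc{E}$ (\Cref{def:eq}), and asynchronous bisimilarity $\approx_{\m{a}}$ (\Cref{def:asynchronousbisim})---by the \emph{same} closure operation: project the interface with $(\_)\Downarrow\xi$, and then compose, on the left with an arbitrary high provider $\mc{B}_i \in \mb{H\text{-}Provider}^\xi(\Gamma_i)$ and on the right with an arbitrary high client $\mc{T}_i \in \mb{H\text{-}Client}^\xi(\_)$ (\Cref{def:highprovider}). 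So I expect the lemma to follow by unfolding the two definitions, noting that their outer structure---the preconditions and the quantifiers over $\mc{B}_1,\mc{B}_2,\mc{T}_1,\mc{T}_2$---is literally identical, and then applying Adequacy (\Cref{lem:soundness-completeness}) underneath those quantifiers. As a preliminary I would record the typing fact that, whenever $\Gamma_1\Downarrow\xi = \Gamma_2\Downarrow\xi = \Gamma$ and $x_\alpha{:}A_1[c_1]\Downarrow\xi = y_\beta{:}A_2[c_2]\Downarrow\xi = K^s$, the composed configurations satisfy $(\mc{B}_1\mc{D}_1\mc{T}_1, \mc{B}_2\mc{D}_2\mc{T}_2) \in \m{Tree}(\erasure{\Gamma} \Vdash \erasure{K^s})$: the high providers saturate exactly the $\not\sqsubseteq\xi$ part of each $\Gamma_i$ with closed trees, leaving left interface $\erasure{\Gamma}$, and the high client either is empty (when $c_1\sqsubseteq\xi$, so $\erasure{K^s} = x_\alpha{:}A_1$) or consumes the offered channel and offers $\_{:}1$ (when $c_1\not\sqsubseteq\xi$, so $\erasure{K^s} = \_{:}1$), in both cases by the configuration-typing composition rules read at the level of plain trees.

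With that in place, fix $\xi$ and $\mc{D}_1,\mc{D}_2$ as in the statement. Both $(\Gamma_1 \Vdash \mc{D}_1:: x_\alpha{:}A_1[c_1]) \equiv^{\Psi_0}_{\xi} (\Gamma_2 \Vdash \mc{D}_2:: y_\beta{:}A_2[c_2])$ and $(\Gamma_1 \Vdash \mc{D}_1:: x_\alpha{:}A_1[c_1]) \approx_{\m{a}}^{\xi} (\Gamma_2 \Vdash \mc{D}_2:: y_\beta{:}A_2[c_2])$ carry the same preconditions ($\m{Tree}$-membership of $\mc{D}_1,\mc{D}_2$; equality of the downward projections of the contexts and of the offered channels) and then quantify $\mc{B}_1,\mc{B}_2,\mc{T}_1,\mc{T}_2$ over the same high-provider/high-client sets. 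Under this common prefix, $\equiv^{\Psi_0}_{\xi}$ demands $\forall m.\,(\mc{B}_1\mc{D}_1\mc{T}_1, \mc{B}_2\mc{D}_2\mc{T}_2) \in \mc{E}\llbracket \erasure{\Gamma} \Vdash \erasure{K^s} \rrbracket^{m}$ together with the symmetric $\forall m.\,(\mc{B}_2\mc{D}_2\mc{T}_2, \mc{B}_1\mc{D}_1\mc{T}_1) \in \mc{E}\llbracket \erasure{\Gamma} \Vdash \erasure{K^s} \rrbracket^{m}$, whereas $\approx_{\m{a}}^{\xi}$ demands $\mc{B}_1\mc{D}_1\mc{T}_1 \approx_{\m{a}} \mc{B}_2\mc{D}_2\mc{T}_2$. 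Since by the preliminary fact these composed configurations form a pair in $\m{Tree}(\erasure{\Gamma} \Vdash \erasure{K^s})$, \Cref{lem:soundness-completeness} instantiated at $\Delta := \erasure{\Gamma}$, $K := \erasure{K^s}$ and the composed configurations states precisely that the conjunction of the two $\mc{E}$-conditions is equivalent to $\mc{B}_1\mc{D}_1\mc{T}_1 \approx_{\m{a}} \mc{B}_2\mc{D}_2\mc{T}_2$. As this equivalence holds for every admissible choice of $\mc{B}_1,\mc{B}_2,\mc{T}_1,\mc{T}_2$, the two fully quantified statements coincide, which is both directions of the lemma at once.

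The argument is thus mostly definitional bookkeeping---no fresh step-indexing or coinduction is needed here, since Adequacy has already packaged all of it---and the main obstacle I anticipate is exactly the preliminary typing fact: one must check that erasure of the secrecy-annotated interface is compatible with the un-annotated configuration typing, in particular that the high-client $\mc{T}_1\in\m{Tree}(x_\alpha{:}A_1 \Vdash \_{:}1)$ turns $\mc{D}_1$ into a tree whose only right interface is $\_{:}1 = \erasure{K^s}$ when $c_1\not\sqsubseteq\xi$, and symmetrically that the high providers cover precisely the channels of $\Gamma_i$ with maximal secrecy $\not\sqsubseteq\xi$. Once this is verified (it follows straightforwardly from \Cref{def:typing-proj}, \Cref{def:highprovider}, and the $\mathbf{comp}$/$\mathbf{proc}$ rules), the lemma falls out as a corollary of \Cref{lem:soundness-completeness}, mirroring the corollary-style proof sketched in the Appendix for \Cref{lem:soundness-completeness} itself.
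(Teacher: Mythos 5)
Your proposal is correct and matches the paper's own route: the paper likewise obtains this lemma as a corollary of the plain Adequacy result (\Cref{lem:soundness-completeness}), applied pointwise under the shared quantification over high providers and high clients after checking that each composed $\mc{B}_i\mc{D}_i\mc{T}_i$ is a tree with interface $\erasure{\Gamma}\Vdash\erasure{K^s}$. The preliminary typing fact you flag is indeed the only substantive check, and your case analysis on $c_1\sqsubseteq\xi$ versus $c_1\not\sqsubseteq\xi$ for the high client handles it correctly.
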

    \begin{proof}
\ifappendix
See proof of \Cref{apx:cor:eq-soundness-completeness} in the Appendix.
\else
\refproof
\fi
\end{proof}

\ifappendix
\subsubsection{Biorthogonality}
\label{subsubsec:ifc-biorthogonality}
\else
\fi

\begin{theorem}[\mit{$\top\top$}-closure]\label{thm:toptop-sec}
    Consider \mit{$\mc{D}_1 \in \m{Tree}(\erasure{\Gamma_1} \Vdash \erasure{x_\alpha{:}A_1[c_1]})$} and \mit{$\mc{D}_2 \in \m{Tree}(\erasure{\Gamma_2} \Vdash \erasure{y_\beta{:}A_2[c_2]})$} and a given observer level \mit{$\xi \in \Psi_0$}. We have
   {\small{\[\begin{array}{l}
      (\Gamma_1 \Vdash \mc{D}_1:: x_\alpha{:}A_1[c_1]) \equiv^\xi_{\Psi_0} (\Gamma_2 \Vdash \mc{D}_2:: y_\beta{:}A_2[c_2])\,\,
      \mathbf{iff} \\[4pt]
      \forall \mathcal{C}_1, \mathcal{C}_2, \mathcal{F}_1, \mathcal{F}_2.\,\\
      \m{if}\,(\Gamma_1 \dashv \mc{C}_1[\,\,]\mc{F}_1\dashv  x_\alpha{:}A_1[c_1]) \equiv^\xi_{\Psi_0} (\Gamma_2 \dashv \mc{C}_2[\,\,]\mc{F}_2 \dashv y_\beta{:}A_2[c_2])\,\,\\
      \m{then}\,\, \forall \mc{B}_1 \in \m{H\text{-}Provider}^\xi(\Gamma_1). \forall \mc{B}_2\in \m{H\text{-}Provider}^\xi(\Gamma_2).\\\forall \mc{T}_1 \in \m{H\text{-}CLient}^\xi(x_\alpha{:}A_1[c_1]).\,\forall \mc{T}_2 \in \m{H\text{-}Client}^\xi(y_\beta{:}A_2[c_2]).\\{\mc{B}_1{\mathcal{C}_1}{\mathcal{D}_1} {\mathcal{F}_1}\mc{T}_1}\, \approx_{\m{a}} \,\mc{B}_2\mathcal{C}_2{\mathcal{D}_2} {\mathcal{F}_2}\mc{T}_2
    \end{array}\]}}
    \end{theorem}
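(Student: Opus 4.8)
The plan is to derive \Cref{thm:toptop-sec} from the plain biorthogonality theorem \Cref{thm:top-top-closure} by peeling off the secrecy annotations. The bridge is the observation---built into \Cref{def:noninterference-sec} and into the companion definition of $\equiv^\xi_{\Psi_0}$ on session-typed environments (obtained by adapting \Cref{def:eq} with the downward projections of \Cref{def:typing-proj})---that both secrecy equivalences are exactly plain equivalences of \emph{erased} objects once the non-observable channels have been filled in by arbitrary high providers and clients. Concretely, $(\Gamma_1 \Vdash \mc{D}_1 :: x_\alpha{:}A_1[c_1]) \equiv^\xi_{\Psi_0} (\Gamma_2 \Vdash \mc{D}_2 :: y_\beta{:}A_2[c_2])$ holds iff for all $\mc{B}_1 \in \mb{H\text{-}Provider}^\xi(\Gamma_1)$, $\mc{B}_2 \in \mb{H\text{-}Provider}^\xi(\Gamma_2)$, $\mc{T}_1 \in \mb{H\text{-}Client}^\xi(x_\alpha{:}A_1[c_1])$, $\mc{T}_2 \in \mb{H\text{-}Client}^\xi(y_\beta{:}A_2[c_2])$ we have the plain logical equivalence $(\erasure{\Gamma} \Vdash \mc{B}_1\mc{D}_1\mc{T}_1 :: \erasure{K^s}) \equiv (\erasure{\Gamma} \Vdash \mc{B}_2\mc{D}_2\mc{T}_2 :: \erasure{K^s})$ of \Cref{def:eq}, where $\Gamma = \Gamma_1 \Downarrow \xi = \Gamma_2 \Downarrow \xi$ and $K^s = x_\alpha{:}A_1[c_1]\Downarrow\xi = y_\beta{:}A_2[c_2]\Downarrow\xi$; here the $\m{Tree}$-membership conditions are precisely the typing hypotheses of the theorem, and $\mb{H\text{-}Provider}/\mb{H\text{-}Client}$ supply closed providers and clients for exactly the non-observable channels, so each composite is well typed with the observable interface $\erasure{\Gamma} \Vdash \erasure{K^s}$. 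Likewise, $(\Gamma_1 \dashv \mc{C}_1[\,\,]\mc{F}_1 \dashv x_\alpha{:}A_1[c_1]) \equiv^\xi_{\Psi_0} (\Gamma_2 \dashv \mc{C}_2[\,\,]\mc{F}_2 \dashv y_\beta{:}A_2[c_2])$ unfolds to the plain environment equivalence $(\erasure{\Gamma} \dashv \mc{C}_1[\,\,]\mc{F}_1 \dashv \erasure{K^s}) \equiv (\erasure{\Gamma} \dashv \mc{C}_2[\,\,]\mc{F}_2 \dashv \erasure{K^s})$, with $\mc{C}_i$ providing $\erasure{\Gamma}$ and $\mc{F}_i$ clienting $\erasure{K^s}$.

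For the forward direction, assume the left-hand equivalence, fix environments $\mc{C}_1,\mc{C}_2,\mc{F}_1,\mc{F}_2$ with $(\Gamma_1 \dashv \mc{C}_1[\,\,]\mc{F}_1 \dashv x_\alpha{:}A_1[c_1]) \equiv^\xi_{\Psi_0} (\Gamma_2 \dashv \mc{C}_2[\,\,]\mc{F}_2 \dashv y_\beta{:}A_2[c_2])$, and fix high fillers $\mc{B}_1,\mc{B}_2,\mc{T}_1,\mc{T}_2$. By the unfolding above, $(\mc{B}_1\mc{D}_1\mc{T}_1, \mc{B}_2\mc{D}_2\mc{T}_2) \in \m{Tree}(\erasure{\Gamma} \Vdash \erasure{K^s})$ are plain-$\equiv$ related and $(\erasure{\Gamma} \dashv \mc{C}_1[\,\,]\mc{F}_1 \dashv \erasure{K^s}) \equiv (\erasure{\Gamma} \dashv \mc{C}_2[\,\,]\mc{F}_2 \dashv \erasure{K^s})$. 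The $\Rightarrow$ direction of \Cref{thm:top-top-closure} applied to these data yields $\mc{C}_1(\mc{B}_1\mc{D}_1\mc{T}_1)\mc{F}_1 \approx_{\m{a}} \mc{C}_2(\mc{B}_2\mc{D}_2\mc{T}_2)\mc{F}_2$. Since configurations are multisets, the left side is the same object as $\mc{B}_1\mc{C}_1\mc{D}_1\mc{F}_1\mc{T}_1$ and likewise on the right, so we obtain exactly $\mc{B}_1\mc{C}_1\mc{D}_1\mc{F}_1\mc{T}_1 \approx_{\m{a}} \mc{B}_2\mc{C}_2\mc{D}_2\mc{F}_2\mc{T}_2$ (the $\forall m$ on the conclusion of \Cref{thm:top-top-closure} being vacuous for $\approx_{\m{a}}$). \Cref{lem:compositionality-sec} could be used to assemble the composite step by step instead, but invoking \Cref{thm:top-top-closure} directly is shorter.

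For the backward direction, by \Cref{lem:soundness-completeness-sec} and \Cref{def:simequiv-sec} it suffices to show $\mc{B}_1\mc{D}_1\mc{T}_1 \approx_{\m{a}} \mc{B}_2\mc{D}_2\mc{T}_2$ for all high fillers $\mc{B}_i,\mc{T}_i$ (the typing side conditions of $\approx_{\m{a}}^\xi$ being hypotheses of the theorem). This follows by instantiating the right-hand side of the iff with the identity environments $\mc{C}_1 = \mc{C}_2 = \cdot$ and $\mc{F}_1 = \mc{F}_2 = \cdot$, which are $\equiv^\xi_{\Psi_0}$-related to themselves (identity forest and identity cofiller on a plain interface, as in \Cref{lem:reflexivity-eq}); the resulting weak bisimilarity is precisely $\mc{B}_1\mc{D}_1\mc{T}_1 \approx_{\m{a}} \mc{B}_2\mc{D}_2\mc{T}_2$. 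Hence $(\Gamma_1 \Vdash \mc{D}_1 :: x_\alpha{:}A_1[c_1]) \approx_{\m{a}}^\xi (\Gamma_2 \Vdash \mc{D}_2 :: y_\beta{:}A_2[c_2])$, and \Cref{lem:soundness-completeness-sec} gives the desired $\equiv^\xi_{\Psi_0}$.

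The main obstacle is the bookkeeping compressed into the first paragraph. One must fix the definition of $\equiv^\xi_{\Psi_0}$ on session-typed environments so that: (i) $\mc{C}_i$ provides exactly the observable context $\Gamma_i \Downarrow \xi$ (erased to $\erasure{\Gamma}$) while the high provider $\mc{B}_i$ supplies the complementary non-observable channels of $\Gamma_i$, using \Cref{def:typing-proj,def:highprovider}, so that $\mc{B}_i\mc{C}_i$ jointly type $\mc{D}_i$'s context; (ii) the analogous split holds on the right, so that $\mc{F}_i$ and $\mc{T}_i$ never client the same endpoint and $\mc{B}_i\mc{C}_i\mc{D}_i\mc{F}_i\mc{T}_i$ is well typed at a common plain interface $\erasure{\Gamma'} \Vdash \erasure{K^s_1}$; and (iii) the quantifier over $\equiv^\xi_{\Psi_0}$-related secrecy environments, once erased, is no weaker than the quantifier over plain-$\equiv$-related environments in \Cref{thm:top-top-closure}---every plain environment at interface $\erasure{\Gamma} \Vdash \erasure{K^s}$ being realizable up to erasure by a secrecy environment (for instance by assigning $\top$ to all its channels). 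Once these alignments are in place, together with the routine fact that erasure commutes with parallel composition, the statement follows mechanically from \Cref{thm:top-top-closure}, \Cref{lem:soundness-completeness-sec}, \Cref{lem:compositionality-sec}, and the definitions; no fresh use of the observation index or of the $\top$-operation is required beyond what already went into \Cref{thm:top-top-closure}.
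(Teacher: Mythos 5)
Your proposal is sound in outline and rests on the same two pillars as the paper's development---linking under parallel composition and adequacy with respect to $\approx_{\m{a}}$---but it factors the argument through the plain theorem (\Cref{thm:top-top-closure}) rather than re-running the biorthogonality argument at the secrecy level using \Cref{lem:compositionality-sec} and \Cref{lem:soundness-completeness-sec} directly, which is the decomposition the paper's proof follows. Your route buys brevity: once each secrecy-indexed equivalence is unfolded (via \Cref{def:noninterference-sec} and its environment analogue) into a plain $\equiv$ of erased composites ranging over high fillers, the forward direction is one application of \Cref{thm:top-top-closure} and the backward direction is the trivial-environment instantiation followed by \Cref{lem:soundness-completeness-sec}. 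What the paper's route buys is that it never has to commit to your alignment claims (i)--(iii): the secrecy-level linking lemma already packages the split of $\Gamma_i$ into observable channels served by $\mc{C}_i$ and high channels served by $\mc{B}_i$, so no separate argument that the erased environments are plain-$\equiv$-related is needed.

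Two points need tightening before your reduction is airtight. First, the step ``the secrecy environment equivalence unfolds to plain environment equivalence of the erased environments'' is doing real work and depends on the exact (appendix-only) definition of $\equiv^\xi_{\Psi_0}$ on environments: if two $\equiv^\xi_{\Psi_0}$-related environments were permitted to disagree on a high channel of their \emph{outer} interface, the plain $\approx_{\m{a}}$ in the conclusion would fail, so the definition must (and does) constrain the outer interface; you should make this explicit rather than leave it inside ``bookkeeping.'' Second, in the backward direction the side conditions $\Gamma_1 \Downarrow \xi = \Gamma_2 \Downarrow \xi$ and $x_\alpha{:}A_1[c_1]\Downarrow\xi = y_\beta{:}A_2[c_2]\Downarrow\xi$ are not literally hypotheses of the theorem as you claim; they must either be read as presuppositions of the statement or be recovered from the existence of a $\equiv^\xi_{\Psi_0}$-related environment pair before \Cref{def:simequiv-sec} and \Cref{lem:soundness-completeness-sec} can be invoked. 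Neither issue undermines the strategy, but both are exactly where a referee would push.
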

\begin{proof}
\ifappendix
See proof of \Cref{apx:thm:biorthogonal-sc} in the Appendix.  The proof relies on closure under parallel composition and soundness/completeness.
\else
\refproof
\fi
\end{proof}

Completeness ensures that the logical relation relates all secure programs,
while IFC type systems, by nature, reject infinitely many secure programs. 
Our logical relation thus becomes generally applicable.
For example, we may conceive a more permissive IFC refinement type system that accepts more secure programs.
All we would have to do in this case
is prove the fundamental theorem for that new refinement type system,
guaranteeing that well-typed programs are self-related by our logical relation and thus non-interfering.

We conclude this section by noting that all the results presented here are corollaries of the results in Section~\ref{sec:rslr}.

\section{Related work and discussion}
\label{sec:related}

\paragraph{Logical relations for session types.}
The application of logical relations to session types has focused predominantly on unary logical relations
for proving termination \myCite{PerezESOP2012,PerezARTICLE2014,DeYoungFSCD2020,RochaCairesESOP2023},
except for a binary logical relation for parametricity \myCite{CairesESOP2013} and
noninterference \myCite{DerakhshanLICS2021,DerakhshanECOOP2024,VanDenHeuvelECOOP2024}.
Our \slr shares its foundation in linear logic with this line of work,
but contributes significantly through its support of general recursive types.

Caires et al.~\myCite{CairesESOP2013} show parametricity for a terminating language and restrict observations to the single offering channel of a configuration, \ie observations are only made at the root.
Thanks to the validity of semantic cut, a single-type-indexed logical relation, such as the authors', falls out as a special case of our \slr.  
However, more fine-grained program equivalences, such as noninterference,
demand distinguishing the two roles a process configuration may assume.
To the best of our understanding, our account of duality ultimately also enabled biorthogonal closure
to prove the logical equivalence sound and complete.

Among prior work on logical relations for \illst, the work by Derakhshan et al. \myCite{DerakhshanLICS2021} is most closely related to ours.
The authors contribute an IFC session type system and develop a logical relation to show noninterference.
The refinement type system that we use in our case study (\Cref{subsec:case-study}) is based on that IFC system,
extended with secrecy-polymorphic processes \myCite{DerakhshanECOOP2024}.
Besides the authors' confinement to a terminating language,
sidestepping the intricacies of non-termination and nondeterminism,
the authors' metatheoretic results are significantly more limited than ours.
In particular, the logical relation is dependent on IFC typing,
and thus only amounts to a \emph{syntactic} logical relation for noninterference.
Moreover, their development lacks metatheoretic results about the entailed program equivalence,
\ie that it is sound with respect to bisimilarity.
There are several nuances in the design of our logical relation that enable us to show such results in a more general setting, which also allows for non-termination:
Unlike~\myCite{DerakhshanLICS2021}, which only considers closed configurations, \ie it cannot relate two programs without their clients and providers, our logical relation is defined over open configurations and thus allows for connection to biorthogonality.
Furthermore, our term interpretation carefully uses the universal and existential quantifiers and thus enables the proof of transitivity for the logical relation, which is not proven in~\myCite{DerakhshanLICS2021}.
Our logical relation also handles non-termination via focus channels and the observation index. 
These features enable defining the logical relation for equivalence in a general context, proving that it is both sound and complete with respect to asynchronous bisimilarity by establishing a connection to biorthogonality, and then instantiating it in various settings, such as noninterference.

Also noteworthy is the work by Rocha and Caires \myCite{RochaCairesESOP2023,RochaPhD2022}.
The authors contribute CLASS, a calculus based on classical linear logic session types
enriched with memory cells, acting as mutexes to share affine data between processes.
Unlike more liberal support of sharing \myCite{BalzerICFP2017,BalzerCONCUR2018,BalzerESOP2019},
the authors adopt the propositions-as-types paradigm to ensure deadlock freedom and strong normalization.
These guarantees are realized by stratifying resources into nested hierarchies.
Interesting in relation to our work, is CLASS' metatheory:
to prove strong normalization, the authors develop a unary logical relation,
defined to be $\top\top$-closed reliant on Girard's orthogonality \myCite{GirardARTICLE1987}.
Besides being binary to express program equivalence and admitting open configurations,
our logical relation is not defined to be $\top\top$-closed,
but the logical equivalence induced by our relation shown to be $\top\top$-closed (\Cref{thm:top-top-closure}).
As discussed in \Cref{subsec:biorthogonality},
the proof of admissibility of $\top\top$ closure serves a validation of the equivalence,
guaranteeing that its discriminatory power is strong enough (soundness)
while being maximally permissive (completeness).

\paragraph{Relationship to logical relations for stateful languages.}
Logical relations have been scaled to accommodate state using Kripke logical relations (KLRs)~\myCite{PittsStarkHOOTS1998}.
KLRs are indexed by a \emph{possible world $W$}, providing a semantic model of the heap.
Invariants can then be imposed that must be preserved by any future worlds $W'$.
KRLs can be combined with step indexing~\myCite{AppelMcAllesterTOPLAS2001,AhmedESOP2006}
to address circularity arising from higher-order stores~\myCite{AhmedPOPL2009,DreyerICFP2010,DreyerJFP2012} and to express state transition systems.
For example, Gregersen et al. \myCite{OddershedePOPL2021} use the KLR supported in Iris \myCite{JungJFP2018}
to prove noninterference for a functional language with a higher-order store, recursive types, and impredicative polymorphism.
The authors also adopt semantic typing \myCite{LoefARTICLE1982,ConstableBook1986,TimanyJACM2024},
admitting syntactically ill-typed programs, if shown to inhabit the logical relation.
Besides the difference in language, the authors consider termination-\emph{insensitive} noninterference,
whereas we consider progress-\emph{sensitive} noninterference.
Like KLRs, our \slr is situated in a stateful setting because channels, like locations,
are subject to concurrent mutation.
However, our \slr is rooted in \illst, which guarantees race freedom, stratifies the store (the configuration of processes),
and prescribes state transitions.
When squinting one's eyes, the sequent \mit{$\Delta \Vdash K$} of the logical relation seems
reminiscent of a possible world, since it provides the semantic typing of the free channels.
However, linearity would not comport well with the usual monotonicity requirement of logical relations.
We would like to investigate this connection as future work.

\section{Concluding remarks}
\label{sec:conclusions}
We have contributed a recursive session logical relation (\slr)
for progress-sensitive equivalence of programs.
The \slr is rooted in intuitionistic linear logic session types,
inheriting from it a strong foundation in linear logic,
ensuring that channel endpoints are treated as resources.
A novel aspect of the \slr is its use of an \emph{observation index}
to keep the logical relation well-defined in the presence of general recursive types.
This shift, from a step index / unfolding index associated with \emph{internal} computation steps
to an index only associated with \emph{external} observations,
facilitates statement and proof of metatheoretic properties of the logical relation:
closure under parallel composition,
soundness and completeness with regard to weak asynchronous bisimilarity,
and biorthogonal closure of the logical equivalence relation induced by the \slr.
A biorthogonality argument arises naturally from the duality inherent to session types.
In future work, we would like to explore these connections more deeply,
possibly connecting to Kripke logical relations and game semantics.

\begin{acks}

This material is based upon work supported by the
\grantsponsor{AFOSR}{Air Force Office of Scientific Research}{https://www.afrl.af.mil/AFOSR/}
under Award No. \grantnum{AFOSR}{FA9550-21-1-0385} (Tristan Nguyen, program manager)
and upon work supported by the
\grantsponsor{NSF}{National Science Foundation}{https://www.nsf.gov/}
under Grant No. \grantnum{NSF}{2442461}.
Any opinions, findings, and conclusions or recommendations expressed in this material are those of the author(s) and do not necessarily reflect the views of
the U.S. Department of Defense or
the National Science Foundation.
\end{acks}

\bibliographystyle{ACM-Reference-Format}
\bibliography{ref}


\begin{thebibliography}{65}


\ifx \showCODEN    \undefined \def \showCODEN     #1{\unskip}     \fi
\ifx \showISBNx    \undefined \def \showISBNx     #1{\unskip}     \fi
\ifx \showISBNxiii \undefined \def \showISBNxiii  #1{\unskip}     \fi
\ifx \showISSN     \undefined \def \showISSN      #1{\unskip}     \fi
\ifx \showLCCN     \undefined \def \showLCCN      #1{\unskip}     \fi
\ifx \shownote     \undefined \def \shownote      #1{#1}          \fi
\ifx \showarticletitle \undefined \def \showarticletitle #1{#1}   \fi
\ifx \showURL      \undefined \def \showURL       {\relax}        \fi
\providecommand\bibfield[2]{#2}
\providecommand\bibinfo[2]{#2}
\providecommand\natexlab[1]{#1}
\providecommand\showeprint[2][]{arXiv:#2}

\bibitem[Ahmed(2004)]%
        {AhmedPhD2004}
\bibfield{author}{\bibinfo{person}{Amal Ahmed}.}
  \bibinfo{year}{2004}\natexlab{}.
\newblock \emph{\bibinfo{title}{Semantics of Types for Mutable State}}.
\newblock \bibinfo{thesistype}{Ph.\,D. Dissertation}.
  \bibinfo{school}{Princeton University}.
\newblock


\bibitem[Ahmed(2006)]%
        {AhmedESOP2006}
\bibfield{author}{\bibinfo{person}{Amal Ahmed}.}
  \bibinfo{year}{2006}\natexlab{}.
\newblock \showarticletitle{Step-Indexed Syntactic Logical Relations for
  Recursive and Quantified Types}. In \bibinfo{booktitle}{\emph{15th European
  Symposium on Programming ({ESOP})}} \emph{(\bibinfo{series}{Lecture Notes in
  Computer Science}, Vol.~\bibinfo{volume}{3924})}.
  \bibinfo{publisher}{Springer}, \bibinfo{pages}{69--83}.
\newblock
\href{https://doi.org/10.1007/11693024\_6}{doi:\nolinkurl{10.1007/11693024\_6}}


\bibitem[Ahmed et~al\mbox{.}(2009)]%
        {AhmedPOPL2009}
\bibfield{author}{\bibinfo{person}{Amal Ahmed}, \bibinfo{person}{Derek Dreyer},
  {and} \bibinfo{person}{Andreas Rossberg}.} \bibinfo{year}{2009}\natexlab{}.
\newblock \showarticletitle{State-Dependent Representation Independence}. In
  \bibinfo{booktitle}{\emph{36th {ACM} {SIGPLAN-SIGACT} Symposium on Principles
  of Programming Languages (POPL)}}. \bibinfo{publisher}{{ACM}},
  \bibinfo{pages}{340--353}.
\newblock
\href{https://doi.org/10.1145/1480881.1480925}{doi:\nolinkurl{10.1145/1480881.1480925}}


\bibitem[Appel and McAllester(2001)]%
        {AppelMcAllesterTOPLAS2001}
\bibfield{author}{\bibinfo{person}{Andrew~W. Appel} {and}
  \bibinfo{person}{David~A. McAllester}.} \bibinfo{year}{2001}\natexlab{}.
\newblock \showarticletitle{An Indexed Model of Recursive Types for
  Foundational Proof-Carrying Code}.
\newblock \bibinfo{journal}{\emph{{ACM} Transactions on Programming Languages
  and Systems ({TOPLAS})}} \bibinfo{volume}{23}, \bibinfo{number}{5}
  (\bibinfo{year}{2001}), \bibinfo{pages}{657--683}.
\newblock
\href{https://doi.org/10.1145/504709.504712}{doi:\nolinkurl{10.1145/504709.504712}}


\bibitem[Balzer and Pfenning(2017)]%
        {BalzerICFP2017}
\bibfield{author}{\bibinfo{person}{Stephanie Balzer} {and}
  \bibinfo{person}{Frank Pfenning}.} \bibinfo{year}{2017}\natexlab{}.
\newblock \showarticletitle{Manifest Sharing with Session Types}.
\newblock \bibinfo{journal}{\emph{Proceedings of the ACM on Programming
  Languages}} \bibinfo{volume}{1}, \bibinfo{number}{{ICFP}}
  (\bibinfo{year}{2017}), \bibinfo{pages}{37:1--37:29}.
\newblock
\href{https://doi.org/10.1145/3110281}{doi:\nolinkurl{10.1145/3110281}}


\bibitem[Balzer et~al\mbox{.}(2018)]%
        {BalzerCONCUR2018}
\bibfield{author}{\bibinfo{person}{Stephanie Balzer}, \bibinfo{person}{Frank
  Pfenning}, {and} \bibinfo{person}{Bernardo Toninho}.}
  \bibinfo{year}{2018}\natexlab{}.
\newblock \showarticletitle{A Universal Session Type for Untyped Asynchronous
  Communication}. In \bibinfo{booktitle}{\emph{29th International Conference on
  Concurrency Theory ({CONCUR})}} \emph{(\bibinfo{series}{LIPIcs},
  Vol.~\bibinfo{volume}{118})}. \bibinfo{publisher}{Schloss Dagstuhl -
  Leibniz-Zentrum f{\"{u}}r Informatik}, \bibinfo{pages}{30:1--30:18}.
\newblock
\href{https://doi.org/10.4230/LIPICS.CONCUR.2018.30}{doi:\nolinkurl{10.4230/LIPICS.CONCUR.2018.30}}


\bibitem[Balzer et~al\mbox{.}(2019)]%
        {BalzerESOP2019}
\bibfield{author}{\bibinfo{person}{Stephanie Balzer}, \bibinfo{person}{Bernardo
  Toninho}, {and} \bibinfo{person}{Frank Pfenning}.}
  \bibinfo{year}{2019}\natexlab{}.
\newblock \showarticletitle{Manifest Deadlock-Freedom for Shared Session
  Types}. In \bibinfo{booktitle}{\emph{28th European Symposium on Programming
  ({ESOP})}} \emph{(\bibinfo{series}{Lecture Notes in Computer Science},
  Vol.~\bibinfo{volume}{11423})}. \bibinfo{publisher}{Springer},
  \bibinfo{pages}{611--639}.
\newblock
\href{https://doi.org/10.1007/978-3-030-17184-1\_22}{doi:\nolinkurl{10.1007/978-3-030-17184-1\_22}}


\bibitem[Benton and Hur(2009)]%
        {BentonICFP2009}
\bibfield{author}{\bibinfo{person}{Nick Benton} {and}
  \bibinfo{person}{Chung{-}Kil Hur}.} \bibinfo{year}{2009}\natexlab{}.
\newblock \showarticletitle{Biorthogonality, Step-Indexing and Compiler
  Correctness}. In \bibinfo{booktitle}{\emph{14th {ACM} {SIGPLAN} International
  Conference on Functional Programming ({ICFP})}}. \bibinfo{publisher}{{ACM}},
  \bibinfo{pages}{97--108}.
\newblock
\href{https://doi.org/10.1145/1596550.1596567}{doi:\nolinkurl{10.1145/1596550.1596567}}


\bibitem[Benton and Hur(2010)]%
        {BentonHurDagstuhl2010}
\bibfield{author}{\bibinfo{person}{Nick Benton} {and}
  \bibinfo{person}{Chung{-}Kil Hur}.} \bibinfo{year}{2010}\natexlab{}.
\newblock \showarticletitle{Step-Indexing: The Good, the Bad and the Ugly}. In
  \bibinfo{booktitle}{\emph{Modelling, Controlling and Reasoning About State}}
  \emph{(\bibinfo{series}{Dagstuhl Seminar Proceedings},
  Vol.~\bibinfo{volume}{10351})}. \bibinfo{publisher}{Schloss Dagstuhl -
  Leibniz-Zentrum f{\"{u}}r Informatik, Germany}, \bibinfo{pages}{1--9}.
\newblock
\urldef\tempurl%
\url{http://drops.dagstuhl.de/opus/volltexte/2010/2808/}
\showURL{%
\tempurl}


\bibitem[Caires et~al\mbox{.}(2013)]%
        {CairesESOP2013}
\bibfield{author}{\bibinfo{person}{Lu{\'{\i}}s Caires},
  \bibinfo{person}{Jorge~A. P{\'{e}}rez}, \bibinfo{person}{Frank Pfenning},
  {and} \bibinfo{person}{Bernardo Toninho}.} \bibinfo{year}{2013}\natexlab{}.
\newblock \showarticletitle{Behavioral Polymorphism and Parametricity in
  Session-Based Communication}. In \bibinfo{booktitle}{\emph{22nd European
  Symposium on Programming ({ESOP})}}. \bibinfo{pages}{330--349}.
\newblock
\href{https://doi.org/10.1007/978-3-642-37036-6\_19}{doi:\nolinkurl{10.1007/978-3-642-37036-6\_19}}


\bibitem[Caires and Pfenning(2010)]%
        {CairesCONCUR2010}
\bibfield{author}{\bibinfo{person}{Lu{\'{\i}}s Caires} {and}
  \bibinfo{person}{Frank Pfenning}.} \bibinfo{year}{2010}\natexlab{}.
\newblock \showarticletitle{Session Types as Intuitionistic Linear
  Propositions}. In \bibinfo{booktitle}{\emph{21th International Conference onf
  Concurrency Theory ({CONCUR})}} \emph{(\bibinfo{series}{Lecture Notes in
  Computer Science}, Vol.~\bibinfo{volume}{6269})}.
  \bibinfo{publisher}{Springer}, \bibinfo{pages}{222--236}.
\newblock
\href{https://doi.org/10.1007/978-3-642-15375-4\_16}{doi:\nolinkurl{10.1007/978-3-642-15375-4\_16}}


\bibitem[Caires et~al\mbox{.}(2016)]%
        {CairesARTICLE2016}
\bibfield{author}{\bibinfo{person}{Lu{\'{\i}}s Caires}, \bibinfo{person}{Frank
  Pfenning}, {and} \bibinfo{person}{Bernardo Toninho}.}
  \bibinfo{year}{2016}\natexlab{}.
\newblock \showarticletitle{Linear Logic Propositions as Session Types}.
\newblock \bibinfo{journal}{\emph{Mathematical Structures in Computer Science}}
  \bibinfo{volume}{26}, \bibinfo{number}{3} (\bibinfo{year}{2016}),
  \bibinfo{pages}{367--423}.
\newblock
\href{https://doi.org/10.1017/S0960129514000218}{doi:\nolinkurl{10.1017/S0960129514000218}}


\bibitem[Cervesato and Scedrov(2009)]%
        {CervesatoARTICLE2009}
\bibfield{author}{\bibinfo{person}{Iliano Cervesato} {and}
  \bibinfo{person}{Andre Scedrov}.} \bibinfo{year}{2009}\natexlab{}.
\newblock \showarticletitle{Relating State-Based and Process-Based Concurrency
  through Linear Logic}.
\newblock \bibinfo{journal}{\emph{Information and Computation}}
  \bibinfo{volume}{207}, \bibinfo{number}{10} (\bibinfo{year}{2009}),
  \bibinfo{pages}{1044--1077}.
\newblock
\href{https://doi.org/10.1016/j.ic.2008.11.006}{doi:\nolinkurl{10.1016/j.ic.2008.11.006}}


\bibitem[Constable et~al\mbox{.}(1986)]%
        {ConstableBook1986}
\bibfield{author}{\bibinfo{person}{Robert~L. Constable},
  \bibinfo{person}{Stuart~F. Allen}, \bibinfo{person}{Mark Bromley},
  \bibinfo{person}{Rance Cleaveland}, \bibinfo{person}{J.~F. Cremer},
  \bibinfo{person}{Robert Harper}, \bibinfo{person}{Douglas~J. Howe},
  \bibinfo{person}{Todd~B. Knoblock}, \bibinfo{person}{Nax~Paul Mendler},
  \bibinfo{person}{Prakash Panangaden}, \bibinfo{person}{James~T. Sasaki},
  {and} \bibinfo{person}{Scott~F. Smith}.} \bibinfo{year}{1986}\natexlab{}.
\newblock \bibinfo{booktitle}{\emph{Implementing Mathematics with the {Nuprl}
  Proof Development System}}.
\newblock \bibinfo{publisher}{Prentice Hall}.
\newblock
\showISBNx{978-0-13-451832-9}
\urldef\tempurl%
\url{http://dl.acm.org/citation.cfm?id=10510}
\showURL{%
\tempurl}


\bibitem[Crary et~al\mbox{.}(1999)]%
        {CraryPLDI1999}
\bibfield{author}{\bibinfo{person}{Karl Crary}, \bibinfo{person}{Robert
  Harper}, {and} \bibinfo{person}{Sidd Puri}.} \bibinfo{year}{1999}\natexlab{}.
\newblock \showarticletitle{What is a Recursive Module?}. In
  \bibinfo{booktitle}{\emph{{ACM} {SIGPLAN} Conference on Programming Language
  Design and Implementation ({PLDI})}}. \bibinfo{publisher}{{ACM}},
  \bibinfo{pages}{50--63}.
\newblock
\href{https://doi.org/10.1145/301618.301641}{doi:\nolinkurl{10.1145/301618.301641}}


\bibitem[Derakhshan et~al\mbox{.}(2021)]%
        {DerakhshanLICS2021}
\bibfield{author}{\bibinfo{person}{Farzaneh Derakhshan},
  \bibinfo{person}{Stephanie Balzer}, {and} \bibinfo{person}{Limin Jia}.}
  \bibinfo{year}{2021}\natexlab{}.
\newblock \showarticletitle{Session Logical Relations for Noninterference}. In
  \bibinfo{booktitle}{\emph{36th Annual {ACM/IEEE} Symposium on Logic in
  Computer Science ({LICS})}}. \bibinfo{publisher}{{IEEE} Computer Society},
  \bibinfo{pages}{1--14}.
\newblock
\href{https://doi.org/10.1109/LICS52264.2021.9470654}{doi:\nolinkurl{10.1109/LICS52264.2021.9470654}}


\bibitem[Derakhshan et~al\mbox{.}(2024)]%
        {DerakhshanECOOP2024}
\bibfield{author}{\bibinfo{person}{Farzaneh Derakhshan},
  \bibinfo{person}{Stephanie Balzer}, {and} \bibinfo{person}{Yue Yao}.}
  \bibinfo{year}{2024}\natexlab{}.
\newblock \showarticletitle{Regrading Policies for Flexible Information Flow
  Control in Session-Typed Concurrency}. In \bibinfo{booktitle}{\emph{38th
  European Conference on Object-Oriented Programming (ECOOP)}}
  \emph{(\bibinfo{series}{LIPIcs}, Vol.~\bibinfo{volume}{313})}.
  \bibinfo{publisher}{Schloss Dagstuhl - Leibniz-Zentrum f{\"{u}}r Informatik},
  \bibinfo{pages}{11:1--11:29}.
\newblock
\href{https://doi.org/10.4230/LIPICS.ECOOP.2024.11}{doi:\nolinkurl{10.4230/LIPICS.ECOOP.2024.11}}


\bibitem[DeYoung et~al\mbox{.}(2020)]%
        {DeYoungFSCD2020}
\bibfield{author}{\bibinfo{person}{Henry DeYoung}, \bibinfo{person}{Frank
  Pfenning}, {and} \bibinfo{person}{Klaas Pruiksma}.}
  \bibinfo{year}{2020}\natexlab{}.
\newblock \showarticletitle{Semi-Axiomatic Sequent Calculus}. In
  \bibinfo{booktitle}{\emph{5th International Conference on Formal Structures
  for Computation and Deduction ({FSCD})}} \emph{(\bibinfo{series}{LIPIcs},
  Vol.~\bibinfo{volume}{167})}. \bibinfo{publisher}{Schloss Dagstuhl -
  Leibniz-Zentrum f{\"{u}}r Informatik}, \bibinfo{pages}{29:1--29:22}.
\newblock
\href{https://doi.org/10.4230/LIPIcs.FSCD.2020.29}{doi:\nolinkurl{10.4230/LIPIcs.FSCD.2020.29}}


\bibitem[Dreyer et~al\mbox{.}(2009)]%
        {DreyerLICS2009}
\bibfield{author}{\bibinfo{person}{Derek Dreyer}, \bibinfo{person}{Amal Ahmed},
  {and} \bibinfo{person}{Lars Birkedal}.} \bibinfo{year}{2009}\natexlab{}.
\newblock \showarticletitle{Logical Step-Indexed Logical Relations}. In
  \bibinfo{booktitle}{\emph{24th Annual {IEEE} Symposium on Logic in Computer
  Science ({LICS})}}. \bibinfo{publisher}{{IEEE} Computer Society},
  \bibinfo{pages}{71--80}.
\newblock
\href{https://doi.org/10.1109/LICS.2009.34}{doi:\nolinkurl{10.1109/LICS.2009.34}}


\bibitem[Dreyer et~al\mbox{.}(2010)]%
        {DreyerICFP2010}
\bibfield{author}{\bibinfo{person}{Derek Dreyer}, \bibinfo{person}{Georg Neis},
  {and} \bibinfo{person}{Lars Birkedal}.} \bibinfo{year}{2010}\natexlab{}.
\newblock \showarticletitle{The Impact of Higher-Order State and Control
  Effects on Local Relational Reasoning}. In \bibinfo{booktitle}{\emph{15th
  {ACM} {SIGPLAN} International Conference on Functional Programming
  ({ICFP})}}. \bibinfo{publisher}{{ACM}}, \bibinfo{pages}{143--156}.
\newblock
\href{https://doi.org/10.1145/1863543.1863566}{doi:\nolinkurl{10.1145/1863543.1863566}}


\bibitem[Dreyer et~al\mbox{.}(2012)]%
        {DreyerJFP2012}
\bibfield{author}{\bibinfo{person}{Derek Dreyer}, \bibinfo{person}{Georg Neis},
  {and} \bibinfo{person}{Lars Birkedal}.} \bibinfo{year}{2012}\natexlab{}.
\newblock \showarticletitle{The impact of higher-order state and control
  effects on local relational reasoning}.
\newblock \bibinfo{journal}{\emph{Journal of Functional Programming}}
  \bibinfo{volume}{22}, \bibinfo{number}{4-5} (\bibinfo{year}{2012}),
  \bibinfo{pages}{477--528}.
\newblock
\href{https://doi.org/10.1017/S095679681200024X}{doi:\nolinkurl{10.1017/S095679681200024X}}


\bibitem[Gay and Hole(2005)]%
        {GayHoleARTICLE2005}
\bibfield{author}{\bibinfo{person}{Simon~J. Gay} {and} \bibinfo{person}{Malcolm
  Hole}.} \bibinfo{year}{2005}\natexlab{}.
\newblock \showarticletitle{Subtyping for Session Types in the Pi Calculus}.
\newblock \bibinfo{journal}{\emph{Acta Informatica}} \bibinfo{volume}{42},
  \bibinfo{number}{2-3} (\bibinfo{year}{2005}), \bibinfo{pages}{191--225}.
\newblock
\href{https://doi.org/10.1007/s00236-005-0177-z}{doi:\nolinkurl{10.1007/s00236-005-0177-z}}


\bibitem[Girard(1987)]%
        {GirardARTICLE1987}
\bibfield{author}{\bibinfo{person}{Jean{-}Yves Girard}.}
  \bibinfo{year}{1987}\natexlab{}.
\newblock \showarticletitle{Linear Logic}.
\newblock \bibinfo{journal}{\emph{Theoretical Computer Science}}
  \bibinfo{volume}{50} (\bibinfo{year}{1987}), \bibinfo{pages}{1--102}.
\newblock
\href{https://doi.org/10.1016/0304-3975(87)90045-4}{doi:\nolinkurl{10.1016/0304-3975(87)90045-4}}


\bibitem[Gregersen et~al\mbox{.}(2021)]%
        {OddershedePOPL2021}
\bibfield{author}{\bibinfo{person}{Simon~Oddershede Gregersen},
  \bibinfo{person}{Johan Bay}, \bibinfo{person}{Amin Timany}, {and}
  \bibinfo{person}{Lars Birkedal}.} \bibinfo{year}{2021}\natexlab{}.
\newblock \showarticletitle{Mechanized Logical Relations for
  Termination-Insensitive Noninterference}.
\newblock \bibinfo{journal}{\emph{Proceedings of the {ACM} on Programming
  Languages}} \bibinfo{volume}{5}, \bibinfo{number}{{POPL}}
  (\bibinfo{year}{2021}), \bibinfo{pages}{1--29}.
\newblock
\href{https://doi.org/10.1145/3434291}{doi:\nolinkurl{10.1145/3434291}}


\bibitem[Heidin and Sabelfeld(2011)]%
        {HeidinSabelfeldMartkoberdorf2011}
\bibfield{author}{\bibinfo{person}{Daniel Heidin} {and} \bibinfo{person}{Andrei
  Sabelfeld}.} \bibinfo{year}{2011}\natexlab{}.
\newblock \bibinfo{booktitle}{\emph{A Perspective on Information-Flow
  Control}}.
\newblock \bibinfo{type}{{T}echnical {R}eport}.
  \bibinfo{institution}{Marktoberdorf}.
\newblock


\bibitem[Hoare(1985)]%
        {HoareBOOK1985}
\bibfield{author}{\bibinfo{person}{C.~A.~R. Hoare}.}
  \bibinfo{year}{1985}\natexlab{}.
\newblock \bibinfo{booktitle}{\emph{Communicating Sequential Processes}}.
\newblock \bibinfo{publisher}{Prentice-Hall}.
\newblock
\showISBNx{0-13-153271-5}


\bibitem[Honda(1993)]%
        {HondaCONCUR1993}
\bibfield{author}{\bibinfo{person}{Kohei Honda}.}
  \bibinfo{year}{1993}\natexlab{}.
\newblock \showarticletitle{Types for Dyadic Interaction}. In
  \bibinfo{booktitle}{\emph{4th International Conference on Concurrency Theory
  ({CONCUR})}} \emph{(\bibinfo{series}{Lecture Notes in Computer Science},
  Vol.~\bibinfo{volume}{715})}. \bibinfo{publisher}{Springer},
  \bibinfo{pages}{509--523}.
\newblock
\href{https://doi.org/10.1007/3-540-57208-2\_35}{doi:\nolinkurl{10.1007/3-540-57208-2\_35}}


\bibitem[Honda et~al\mbox{.}(1998)]%
        {HondaESOP1998}
\bibfield{author}{\bibinfo{person}{Kohei Honda},
  \bibinfo{person}{Vasco~Thudichum Vasconcelos}, {and} \bibinfo{person}{Makoto
  Kubo}.} \bibinfo{year}{1998}\natexlab{}.
\newblock \showarticletitle{Language Primitives and Type Discipline for
  Structured Communication-Based Programming}. In \bibinfo{booktitle}{\emph{7th
  European Symposium on Programming ({ESOP})}} \emph{(\bibinfo{series}{Lecture
  Notes in Computer Science}, Vol.~\bibinfo{volume}{1381})}.
  \bibinfo{publisher}{Springer}, \bibinfo{pages}{122--138}.
\newblock
\href{https://doi.org/10.1007/BFb0053567}{doi:\nolinkurl{10.1007/BFb0053567}}


\bibitem[Hur and Dreyer(2011)]%
        {HurDreyerPOPL2011}
\bibfield{author}{\bibinfo{person}{Chung{-}Kil Hur} {and}
  \bibinfo{person}{Derek Dreyer}.} \bibinfo{year}{2011}\natexlab{}.
\newblock \showarticletitle{A Kripke Logical Relation between {ML} and
  Assembly}. In \bibinfo{booktitle}{\emph{38th {ACM} {SIGPLAN-SIGACT} Symposium
  on Principles of Programming Languages ({POPL})}}.
  \bibinfo{publisher}{{ACM}}, \bibinfo{pages}{133--146}.
\newblock
\href{https://doi.org/10.1145/1926385.1926402}{doi:\nolinkurl{10.1145/1926385.1926402}}


\bibitem[Hur et~al\mbox{.}(2012)]%
        {HurPOPL2012}
\bibfield{author}{\bibinfo{person}{Chung{-}Kil Hur}, \bibinfo{person}{Derek
  Dreyer}, \bibinfo{person}{Georg Neis}, {and} \bibinfo{person}{Viktor
  Vafeiadis}.} \bibinfo{year}{2012}\natexlab{}.
\newblock \showarticletitle{The Marriage of Bisimulations and Kripke Logical
  Relations}. In \bibinfo{booktitle}{\emph{39th {ACM} {SIGPLAN-SIGACT}
  Symposium on Principles of Programming Languages (POPL)}}.
  \bibinfo{publisher}{{ACM}}, \bibinfo{pages}{59--72}.
\newblock
\href{https://doi.org/10.1145/2103656.2103666}{doi:\nolinkurl{10.1145/2103656.2103666}}


\bibitem[Igarashi and Kobayashi(2001)]%
        {IgarashiPOPL2001}
\bibfield{author}{\bibinfo{person}{Atsushi Igarashi} {and}
  \bibinfo{person}{Naoki Kobayashi}.} \bibinfo{year}{2001}\natexlab{}.
\newblock \showarticletitle{A Generic Type System for the Pi-calculus}. In
  \bibinfo{booktitle}{\emph{8th {ACM} {SIGPLAN-SIGACT} Symposium on Principles
  of Programming Languages ({POPL})}}. \bibinfo{pages}{128--141}.
\newblock
\href{https://doi.org/10.1145/360204.360215}{doi:\nolinkurl{10.1145/360204.360215}}


\bibitem[Igarashi and Kobayashi(2004)]%
        {IgarashiARTICLE2004}
\bibfield{author}{\bibinfo{person}{Atsushi Igarashi} {and}
  \bibinfo{person}{Naoki Kobayashi}.} \bibinfo{year}{2004}\natexlab{}.
\newblock \showarticletitle{A Generic Type System for the Pi-calculus}.
\newblock \bibinfo{journal}{\emph{Theoretical Computer Science}}
  \bibinfo{volume}{311}, \bibinfo{number}{1-3} (\bibinfo{year}{2004}),
  \bibinfo{pages}{121--163}.
\newblock
\href{https://doi.org/10.1016/S0304-3975(03)00325-6}{doi:\nolinkurl{10.1016/S0304-3975(03)00325-6}}


\bibitem[Jung et~al\mbox{.}(2018)]%
        {JungJFP2018}
\bibfield{author}{\bibinfo{person}{Ralf Jung}, \bibinfo{person}{Robbert
  Krebbers}, \bibinfo{person}{Jacques{-}Henri Jourdan}, \bibinfo{person}{Ales
  Bizjak}, \bibinfo{person}{Lars Birkedal}, {and} \bibinfo{person}{Derek
  Dreyer}.} \bibinfo{year}{2018}\natexlab{}.
\newblock \showarticletitle{Iris from the ground up: {A} modular foundation for
  higher-order concurrent separation logic}.
\newblock \bibinfo{journal}{\emph{Journal of Functional Programming}}
  \bibinfo{volume}{28} (\bibinfo{year}{2018}), \bibinfo{pages}{e20}.
\newblock
\href{https://doi.org/10.1017/S0956796818000151}{doi:\nolinkurl{10.1017/S0956796818000151}}


\bibitem[Kobayashi(1997)]%
        {KobayashiLICS1997}
\bibfield{author}{\bibinfo{person}{Naoki Kobayashi}.}
  \bibinfo{year}{1997}\natexlab{}.
\newblock \showarticletitle{A Partially Deadlock-Free Typed Process Calculus}.
  In \bibinfo{booktitle}{\emph{12th Annual {IEEE} Symposium on Logic in
  Computer Science ({LICS})}}. \bibinfo{publisher}{{IEEE} Computer Society},
  \bibinfo{pages}{128--139}.
\newblock
\href{https://doi.org/10.1109/LICS.1997.614941}{doi:\nolinkurl{10.1109/LICS.1997.614941}}


\bibitem[Kobayashi(2006)]%
        {KobayashiCONCUR2006}
\bibfield{author}{\bibinfo{person}{Naoki Kobayashi}.}
  \bibinfo{year}{2006}\natexlab{}.
\newblock \showarticletitle{A New Type System for Deadlock-Free Processes}. In
  \bibinfo{booktitle}{\emph{17th International Conference on Concurrency Theory
  ({CONCUR})}} \emph{(\bibinfo{series}{LNCS}, Vol.~\bibinfo{volume}{4137})}.
  \bibinfo{pages}{233--247}.
\newblock
\href{https://doi.org/10.1007/11817949\_16}{doi:\nolinkurl{10.1007/11817949\_16}}


\bibitem[Kokke et~al\mbox{.}(2019)]%
        {KokkePOPL2019}
\bibfield{author}{\bibinfo{person}{Wen Kokke}, \bibinfo{person}{Fabrizio
  Montesi}, {and} \bibinfo{person}{Marco Peressotti}.}
  \bibinfo{year}{2019}\natexlab{}.
\newblock \showarticletitle{Better Late Than Never: A Fully-Abstract Semantics
  for Classical Processes}.
\newblock \bibinfo{journal}{\emph{Proceedings of the {ACM} on Programming
  Languages}} \bibinfo{volume}{3}, \bibinfo{number}{{POPL}}
  (\bibinfo{year}{2019}), \bibinfo{pages}{24:1--24:29}.
\newblock
\href{https://doi.org/10.1145/3290337}{doi:\nolinkurl{10.1145/3290337}}


\bibitem[Koutavas and Wand(2006)]%
        {KoutavasWandPOPL2006}
\bibfield{author}{\bibinfo{person}{Vasileios Koutavas} {and}
  \bibinfo{person}{Mitchell Wand}.} \bibinfo{year}{2006}\natexlab{}.
\newblock \showarticletitle{Small Bisimulations for Reasoning about
  Higher-Order Imperative Programs}. In \bibinfo{booktitle}{\emph{33rd {ACM}
  {SIGPLAN-SIGACT} Symposium on Principles of Programming Languages ({POPL})}}.
  \bibinfo{publisher}{{ACM}}, \bibinfo{pages}{141--152}.
\newblock
\href{https://doi.org/10.1145/1111037.1111050}{doi:\nolinkurl{10.1145/1111037.1111050}}


\bibitem[Lindley and Morris(2015)]%
        {LindleyMorrisESOP2015}
\bibfield{author}{\bibinfo{person}{Sam Lindley} {and}
  \bibinfo{person}{J.~Garrett Morris}.} \bibinfo{year}{2015}\natexlab{}.
\newblock \showarticletitle{A Semantics for Propositions as Sessions}. In
  \bibinfo{booktitle}{\emph{24th European Symposium on Programming ({ESOP})}}
  \emph{(\bibinfo{series}{Lecture Notes in Computer Science},
  Vol.~\bibinfo{volume}{9032})}. \bibinfo{publisher}{Springer},
  \bibinfo{pages}{560--584}.
\newblock
\href{https://doi.org/10.1007/978-3-662-46669-8\_23}{doi:\nolinkurl{10.1007/978-3-662-46669-8\_23}}


\bibitem[Lindley and Morris(2016)]%
        {LindleyMorrisICFP2016}
\bibfield{author}{\bibinfo{person}{Sam Lindley} {and}
  \bibinfo{person}{J.~Garrett Morris}.} \bibinfo{year}{2016}\natexlab{}.
\newblock \showarticletitle{Talking Bananas: Structural Recursion for Session
  Types}. In \bibinfo{booktitle}{\emph{21st {ACM} {SIGPLAN} International
  Conference on Functional Programming ({ICFP})}}. \bibinfo{publisher}{{ACM}},
  \bibinfo{pages}{434--447}.
\newblock
\href{https://doi.org/10.1145/2951913.2951921}{doi:\nolinkurl{10.1145/2951913.2951921}}


\bibitem[Martin{-}L{\"{o}}f(1982)]%
        {LoefARTICLE1982}
\bibfield{author}{\bibinfo{person}{Per Martin{-}L{\"{o}}f}.}
  \bibinfo{year}{1982}\natexlab{}.
\newblock \showarticletitle{Constructive Mathematics and Computer Programming}.
\newblock In \bibinfo{booktitle}{\emph{Logic, Methodology and Philosophy of
  Science VI}}. \bibinfo{series}{Studies in Logic and the Foundations of
  Mathematics}, Vol.~\bibinfo{volume}{104}. \bibinfo{publisher}{Elsevier},
  \bibinfo{pages}{153--175}.
\newblock
\showISSN{0049-237X}
\href{https://doi.org/10.1016/S0049-237X(09)70189-2}{doi:\nolinkurl{10.1016/S0049-237X(09)70189-2}}


\bibitem[Melli{\`{e}}s and Vouillon(2005)]%
        {MelliesVouillonLICS2005}
\bibfield{author}{\bibinfo{person}{Paul{-}Andr{\'{e}} Melli{\`{e}}s} {and}
  \bibinfo{person}{Jerome Vouillon}.} \bibinfo{year}{2005}\natexlab{}.
\newblock \showarticletitle{Recursive Polymorphic Types and Parametricity in an
  Operational Framework}. In \bibinfo{booktitle}{\emph{20th Annual {IEEE}
  Symposium on Logic in Computer Science ({LICS})}}. \bibinfo{publisher}{{IEEE}
  Computer Society}, \bibinfo{pages}{82--91}.
\newblock
\href{https://doi.org/10.1109/LICS.2005.42}{doi:\nolinkurl{10.1109/LICS.2005.42}}


\bibitem[Milner(1980)]%
        {MilnerBook1980}
\bibfield{author}{\bibinfo{person}{Robin Milner}.}
  \bibinfo{year}{1980}\natexlab{}.
\newblock \bibinfo{booktitle}{\emph{A Calculus of Communicating Systems}}.
  \bibinfo{series}{Lecture Notes in Computer Science},
  Vol.~\bibinfo{volume}{92}.
\newblock \bibinfo{publisher}{Springer}.
\newblock
\showISBNx{3-540-10235-3}
\href{https://doi.org/10.1007/3-540-10235-3}{doi:\nolinkurl{10.1007/3-540-10235-3}}


\bibitem[Milner(1989)]%
        {MilnerBook1989}
\bibfield{author}{\bibinfo{person}{Robin Milner}.}
  \bibinfo{year}{1989}\natexlab{}.
\newblock \bibinfo{booktitle}{\emph{Communication and Concurrency}}.
\newblock \bibinfo{publisher}{Prentice Hall}.
\newblock
\showISBNx{978-0-13-115007-2}


\bibitem[Milner(1999)]%
        {MilnerBook1999}
\bibfield{author}{\bibinfo{person}{Robin Milner}.}
  \bibinfo{year}{1999}\natexlab{}.
\newblock \bibinfo{booktitle}{\emph{Communicating and Mobile Systems: the
  $\pi$-calculus}}.
\newblock \bibinfo{publisher}{Cambridge University Press}.
\newblock
\showISBNx{978-0-521-65869-0}


\bibitem[Morris(1968)]%
        {MorrisPhD1968}
\bibfield{author}{\bibinfo{person}{James~H. Morris}.}
  \bibinfo{year}{1968}\natexlab{}.
\newblock \emph{\bibinfo{title}{Lambda-Calculus Models of Programming
  Languages}}.
\newblock \bibinfo{thesistype}{Ph.\,D. Dissertation}.
  \bibinfo{school}{Massachusetts Institute of Technology}.
\newblock


\bibitem[Neis et~al\mbox{.}(2011)]%
        {NeisJFP2011}
\bibfield{author}{\bibinfo{person}{Georg Neis}, \bibinfo{person}{Derek Dreyer},
  {and} \bibinfo{person}{Andreas Rossberg}.} \bibinfo{year}{2011}\natexlab{}.
\newblock \showarticletitle{Non-parametric parametricity}.
\newblock \bibinfo{journal}{\emph{Journal of Functional Programming}}
  \bibinfo{volume}{21}, \bibinfo{number}{4-5} (\bibinfo{year}{2011}),
  \bibinfo{pages}{497--562}.
\newblock
\href{https://doi.org/10.1017/S0956796811000165}{doi:\nolinkurl{10.1017/S0956796811000165}}


\bibitem[P{\'{e}}rez et~al\mbox{.}(2012)]%
        {PerezESOP2012}
\bibfield{author}{\bibinfo{person}{Jorge~A. P{\'{e}}rez},
  \bibinfo{person}{Lu{\'{\i}}s Caires}, \bibinfo{person}{Frank Pfenning}, {and}
  \bibinfo{person}{Bernardo Toninho}.} \bibinfo{year}{2012}\natexlab{}.
\newblock \showarticletitle{Linear Logical Relations for Session-Based
  Concurrency}. In \bibinfo{booktitle}{\emph{21st European Symposium on
  Programming ({ESOP})}} \emph{(\bibinfo{series}{Lecture Notes in Computer
  Science}, Vol.~\bibinfo{volume}{7211})}. \bibinfo{publisher}{Springer},
  \bibinfo{pages}{539--558}.
\newblock
\href{https://doi.org/10.1007/978-3-642-28869-2\_27}{doi:\nolinkurl{10.1007/978-3-642-28869-2\_27}}


\bibitem[P{\'{e}}rez et~al\mbox{.}(2014)]%
        {PerezARTICLE2014}
\bibfield{author}{\bibinfo{person}{Jorge~A. P{\'{e}}rez},
  \bibinfo{person}{Lu{\'{\i}}s Caires}, \bibinfo{person}{Frank Pfenning}, {and}
  \bibinfo{person}{Bernardo Toninho}.} \bibinfo{year}{2014}\natexlab{}.
\newblock \showarticletitle{Linear Logical Relations and Observational
  Equivalences for Session-Based Concurrency}.
\newblock \bibinfo{journal}{\emph{Information and Computation}}
  \bibinfo{volume}{239} (\bibinfo{year}{2014}), \bibinfo{pages}{254--302}.
\newblock
\href{https://doi.org/10.1016/j.ic.2014.08.001}{doi:\nolinkurl{10.1016/j.ic.2014.08.001}}


\bibitem[Pitts(2000)]%
        {PittsMSCS2000}
\bibfield{author}{\bibinfo{person}{Andrew~M. Pitts}.}
  \bibinfo{year}{2000}\natexlab{}.
\newblock \showarticletitle{Parametric Polymorphism and Operational
  Equivalence}.
\newblock \bibinfo{journal}{\emph{Mathematical Structures in Computer Science}}
  \bibinfo{volume}{10}, \bibinfo{number}{3} (\bibinfo{year}{2000}),
  \bibinfo{pages}{321--359}.
\newblock
\urldef\tempurl%
\url{http://journals.cambridge.org/action/displayAbstract?aid=44651}
\showURL{%
\tempurl}


\bibitem[Pitts and Stark(1998)]%
        {PittsStarkHOOTS1998}
\bibfield{author}{\bibinfo{person}{Andrew~M. Pitts} {and} \bibinfo{person}{Ian
  Stark}.} \bibinfo{year}{1998}\natexlab{}.
\newblock \showarticletitle{Operational Reasoning for Functions with Local
  State}.
\newblock \bibinfo{journal}{\emph{Higher Order Operational Techniques in
  Semantics (HOOTS)}} (\bibinfo{year}{1998}), \bibinfo{pages}{227--273}.
\newblock


\bibitem[Rocha(2022)]%
        {RochaPhD2022}
\bibfield{author}{\bibinfo{person}{Pedro Rocha}.}
  \bibinfo{year}{2022}\natexlab{}.
\newblock \emph{\bibinfo{title}{{CLASS}: A Logical Foundation for Typeful
  Programming with Shared State}}.
\newblock \bibinfo{thesistype}{Ph.\,D. Dissertation}. \bibinfo{school}{NOVA
  University Lisbon}.
\newblock


\bibitem[Rocha and Caires(2023)]%
        {RochaCairesESOP2023}
\bibfield{author}{\bibinfo{person}{Pedro Rocha} {and}
  \bibinfo{person}{Lu{\'{\i}}s Caires}.} \bibinfo{year}{2023}\natexlab{}.
\newblock \showarticletitle{Safe Session-Based Concurrency with Shared Linear
  State}. In \bibinfo{booktitle}{\emph{32nd European Symposium on Programming
  ({ESOP})}} \emph{(\bibinfo{series}{Lecture Notes in Computer Science},
  Vol.~\bibinfo{volume}{13990})}. \bibinfo{publisher}{Springer},
  \bibinfo{pages}{421--450}.
\newblock
\href{https://doi.org/10.1007/978-3-031-30044-8\_16}{doi:\nolinkurl{10.1007/978-3-031-30044-8\_16}}


\bibitem[Sabelfeld and Myers(2003)]%
        {SabelfeldIEE2003}
\bibfield{author}{\bibinfo{person}{Andrei Sabelfeld} {and}
  \bibinfo{person}{Andrew~C. Myers}.} \bibinfo{year}{2003}\natexlab{}.
\newblock \showarticletitle{Language-Based Information-Flow Security}.
\newblock \bibinfo{journal}{\emph{{IEEE} Journal of Selected Areas in
  Communications}} \bibinfo{volume}{21}, \bibinfo{number}{1}
  (\bibinfo{year}{2003}), \bibinfo{pages}{5--19}.
\newblock
\href{https://doi.org/10.1109/JSAC.2002.806121}{doi:\nolinkurl{10.1109/JSAC.2002.806121}}


\bibitem[Sangiorgi et~al\mbox{.}(2007)]%
        {SangiorgiLICS2007}
\bibfield{author}{\bibinfo{person}{Davide Sangiorgi}, \bibinfo{person}{Naoki
  Kobayashi}, {and} \bibinfo{person}{Eijiro Sumii}.}
  \bibinfo{year}{2007}\natexlab{}.
\newblock \showarticletitle{Environmental Bisimulations for Higher-Order
  Languages}. In \bibinfo{booktitle}{\emph{22nd Annual {IEEE} Symposium on
  Logic in Computer Science ({LICS})}}. \bibinfo{publisher}{{IEEE} Computer
  Society}, \bibinfo{pages}{293--302}.
\newblock
\href{https://doi.org/10.1109/LICS.2007.17}{doi:\nolinkurl{10.1109/LICS.2007.17}}


\bibitem[Sangiorgi and Walker(2001)]%
        {SangiorgiWalkerBook2001}
\bibfield{author}{\bibinfo{person}{Davide Sangiorgi} {and}
  \bibinfo{person}{David Walker}.} \bibinfo{year}{2001}\natexlab{}.
\newblock \bibinfo{booktitle}{\emph{The $\pi$-calculus: a Theory of Mobile
  Processes}}.
\newblock \bibinfo{publisher}{Cambridge University Press}.
\newblock


\bibitem[Smith and Volpano(1998)]%
        {SmithVolpanoPOPL1998}
\bibfield{author}{\bibinfo{person}{Geoffrey Smith} {and}
  \bibinfo{person}{Dennis~M. Volpano}.} \bibinfo{year}{1998}\natexlab{}.
\newblock \showarticletitle{Secure Information Flow in a Multi-Threaded
  Imperative Language}. In \bibinfo{booktitle}{\emph{25th {ACM}
  {SIGPLAN-SIGACT} Symposium on Principles of Programming Languages ({POPL})}}.
  \bibinfo{publisher}{{ACM}}, \bibinfo{pages}{355--364}.
\newblock
\href{https://doi.org/10.1145/268946.268975}{doi:\nolinkurl{10.1145/268946.268975}}


\bibitem[St{\o}vring and Lassen(2007)]%
        {StovringLassenPOPL2007}
\bibfield{author}{\bibinfo{person}{Kristian St{\o}vring} {and}
  \bibinfo{person}{S{\o}ren~B. Lassen}.} \bibinfo{year}{2007}\natexlab{}.
\newblock \showarticletitle{A Complete, Co-inductive Syntactic Theory of
  Sequential Control and State}. In \bibinfo{booktitle}{\emph{34th {ACM}
  {SIGPLAN-SIGACT} Symposium on Principles of Programming Languages ({POPL})}}.
  \bibinfo{publisher}{{ACM}}, \bibinfo{pages}{161--172}.
\newblock
\href{https://doi.org/10.1145/1190216.1190244}{doi:\nolinkurl{10.1145/1190216.1190244}}


\bibitem[Sumii and Pierce(2007)]%
        {SumiiPierceARTICLE2007}
\bibfield{author}{\bibinfo{person}{Eijiro Sumii} {and}
  \bibinfo{person}{Benjamin~C. Pierce}.} \bibinfo{year}{2007}\natexlab{}.
\newblock \showarticletitle{A bisimulation for type abstraction and recursion}.
\newblock \bibinfo{journal}{\emph{Journal of the {ACM}}} \bibinfo{volume}{54},
  \bibinfo{number}{5} (\bibinfo{year}{2007}), \bibinfo{pages}{26}.
\newblock
\href{https://doi.org/10.1145/1284320.1284325}{doi:\nolinkurl{10.1145/1284320.1284325}}


\bibitem[Thamsborg and Birkedal(2011)]%
        {ThamsborgBirkedalICFP2011}
\bibfield{author}{\bibinfo{person}{Jacob Thamsborg} {and} \bibinfo{person}{Lars
  Birkedal}.} \bibinfo{year}{2011}\natexlab{}.
\newblock \showarticletitle{A Kripke Logical Relation for Effect-based Program
  Transformations}. In \bibinfo{booktitle}{\emph{16th {ACM} {SIGPLAN}
  International Conference on Functional Programming ({ICFP})}}.
  \bibinfo{publisher}{{ACM}}, \bibinfo{pages}{445--456}.
\newblock
\href{https://doi.org/10.1145/2034773.2034831}{doi:\nolinkurl{10.1145/2034773.2034831}}


\bibitem[Timany et~al\mbox{.}(2024)]%
        {TimanyJACM2024}
\bibfield{author}{\bibinfo{person}{Amin Timany}, \bibinfo{person}{Robbert
  Krebbers}, \bibinfo{person}{Derek Dreyer}, {and} \bibinfo{person}{Lars
  Birkedal}.} \bibinfo{year}{2024}\natexlab{}.
\newblock \showarticletitle{A Logical Approach to Type Soundness}.
\newblock \bibinfo{journal}{\emph{Journal of the ACM (JACM)}}
  \bibinfo{volume}{71}, \bibinfo{number}{6} (\bibinfo{year}{2024}),
  \bibinfo{pages}{40:1--40:75}.
\newblock
\href{https://doi.org/10.1145/3676954}{doi:\nolinkurl{10.1145/3676954}}


\bibitem[Toninho(2015)]%
        {ToninhoPhD2015}
\bibfield{author}{\bibinfo{person}{Bernardo Toninho}.}
  \bibinfo{year}{2015}\natexlab{}.
\newblock \emph{\bibinfo{title}{A Logical Foundation for Session-Based
  Concurrent Computation}}.
\newblock \bibinfo{thesistype}{Ph.\,D. Dissertation}. \bibinfo{school}{Carnegie
  Mellon University and New University of Lisbon}.
\newblock


\bibitem[Toninho et~al\mbox{.}(2013)]%
        {ToninhoESOP2013}
\bibfield{author}{\bibinfo{person}{Bernardo Toninho},
  \bibinfo{person}{Lu{\'{\i}}s Caires}, {and} \bibinfo{person}{Frank
  Pfenning}.} \bibinfo{year}{2013}\natexlab{}.
\newblock \showarticletitle{Higher-Order Processes, Functions, and Sessions: A
  Monadic Integration}. In \bibinfo{booktitle}{\emph{22nd European Symposium on
  Programming ({ESOP})}} \emph{(\bibinfo{series}{Lecture Notes in Computer
  Science}, Vol.~\bibinfo{volume}{7792})}. \bibinfo{publisher}{Springer},
  \bibinfo{pages}{350--369}.
\newblock
\href{https://doi.org/10.1007/978-3-642-37036-6\_20}{doi:\nolinkurl{10.1007/978-3-642-37036-6\_20}}


\bibitem[van~den Heuvel et~al\mbox{.}(2024)]%
        {VanDenHeuvelECOOP2024}
\bibfield{author}{\bibinfo{person}{Bas van~den Heuvel},
  \bibinfo{person}{Farzaneh Derakhshan}, {and} \bibinfo{person}{Stephanie
  Balzer}.} \bibinfo{year}{2024}\natexlab{}.
\newblock \showarticletitle{Information Flow Control in Cyclic Process
  Networks}. In \bibinfo{booktitle}{\emph{38th European Conference on
  Object-Oriented Programming (ECOOP)}} \emph{(\bibinfo{series}{LIPIcs},
  Vol.~\bibinfo{volume}{313})}. \bibinfo{publisher}{Schloss Dagstuhl -
  Leibniz-Zentrum f{\"{u}}r Informatik}, \bibinfo{pages}{40:1--40:30}.
\newblock
\href{https://doi.org/10.4230/LIPICS.ECOOP.2024.40}{doi:\nolinkurl{10.4230/LIPICS.ECOOP.2024.40}}


\bibitem[Volpano et~al\mbox{.}(1996)]%
        {VolpanoARTICLE1996}
\bibfield{author}{\bibinfo{person}{Dennis~M. Volpano},
  \bibinfo{person}{Cynthia~E. Irvine}, {and} \bibinfo{person}{Geoffrey Smith}.}
  \bibinfo{year}{1996}\natexlab{}.
\newblock \showarticletitle{A Sound Type System for Secure Flow Analysis}.
\newblock \bibinfo{journal}{\emph{Journal of Computer Security}}
  \bibinfo{volume}{4}, \bibinfo{number}{2/3} (\bibinfo{year}{1996}),
  \bibinfo{pages}{167--188}.
\newblock
\href{https://doi.org/10.3233/JCS-1996-42-304}{doi:\nolinkurl{10.3233/JCS-1996-42-304}}


\bibitem[Wadler(2012)]%
        {WadlerICFP2012}
\bibfield{author}{\bibinfo{person}{Philip Wadler}.}
  \bibinfo{year}{2012}\natexlab{}.
\newblock \showarticletitle{Propositions as Sessions}. In
  \bibinfo{booktitle}{\emph{{ACM} {SIGPLAN} International Conference on
  Functional Programming ({ICFP})}}. \bibinfo{publisher}{{ACM}},
  \bibinfo{pages}{273--286}.
\newblock
\href{https://doi.org/10.1145/2364527.2364568}{doi:\nolinkurl{10.1145/2364527.2364568}}


\end{thebibliography}

\clearpage

\appendix

\section*{Appendices}

In the appendix, we also present an IFC refinement type system, along with proofs of progress, preservation, and noninterference.
The results related to the IFC type system can be found in \Cref{apx:sec:lattice}, \Cref{apx:sec:ifc}, \Cref{apx:sec:progress-preservation:ifc}, \Cref{apx:sec:noninterference:quasi-and-relevant}, and \Cref{apx:sec:noninterference:ftlr}.

\section{Abstract syntax}
\label{apx:abstract-syntax}
The abstract syntax of $\lang$ is given below.
Lines without a left-hand side are separated by $\alt$ from their preceding line.

\begin{small}
\renewcommand{\tabcolsep}{2mm}
\begin{longtable}{@{}llll@{}}
\\
\textbf{Sort} &
\phantom{$\defined$} &
\textbf{Abstract Form} &
\textbf{Remarks} \\
Metavariable & $\defined$
& $\Psi_0$ & concrete security lattice $\langle \mathcal{L}, \mc{E}_0, \sqcup, \sqcap \rangle$ \\
& & $\iota, \eta \in \mathcal{L}$ & set of concrete security levels \\
& & $\xi \in \mathcal{L}$ & concrete security level of observer \\
& & $E_0 \in \mathcal{E}_0$ & set of relations $E_0$ of form $\iota \sqsubseteq \iota'$ \\
& & $\Psi$ & security theory $\langle \mathcal{V}, \mathcal{E}, \sqcup, \sqcap \rangle$ \\
& & $\psi, \omega \in \mathcal{V}$ & set of security variables \\ 
& & $c, d, e, f \in \mathcal{S}$ & security terms of $\Psi$ \\
& & $p \in \mathcal{S} \times \mathcal{S}$ & pair of security terms of $\Psi$ \\
& & $E \in \mathcal{E}$ & set of relations $E$ of form $c \sqsubseteq d$ \\
& & $x_\alpha, x_\beta, x_\gamma, x_\delta$ &  channel \\
& & $x, y, z, u, v, w$ & channel variable \\
& & $j, k, \ell \in I, L$ & set of labels \\
& & $\Delta, \Lambda$ & linear typing contexts(channels) \\
& & $\Omega$ & linear typing contexts (variables)\\
& & $\Gamma$ & linear security typing contexts (channels)\\
& & $\Xi$ & linear security typing contexts(variables) \\
& & $K$ & linear typing context singleton(channel) \\
& & $K^s$ & linear security typing context singleton(channel) \\
& & $\gamma_{\m{sec}}, \hat{\gamma_{\m{sec}}}, \delta_{\m{sec}}, \hat{\delta_{\m{sec}}}$ & order-preserving substitution of security elements \\
& & $\gamma, \hat{\gamma}, \delta, \hat{\delta}$ & channel variable substitution \\
& & $ \mc{A}, \mc{B}, \mc{C}, \mc{D}, \mc{T}$ &  process configuration in $\lang$\\
& & $ \mathbb{A}, \mathbb{B}, \mathbb{C}, \mathbb{D},  \mathbb{T}$ &  process configuration in $\reflang$\\
Type $A, B, C, T$ & $\defined$
& $\intchoice{A}$ & internal choice, at least one label \\
& & $\extchoice{A}$ & external choice, at least one label \\
& & $\chanout{A}{B}$ & channel output \\
& & $\chanin{A}{B}$ & channel input \\
& & $\one$ & termination \\
& & $Y$ & type variable \\
Definition $X, Y$ & $\defined$
& $\procdefb{\Delta}{X}{P}{\psi_0}{\omega_0}{x}{A}{\psi}{\omega}$ & process definition \\
& & $Y = A$ & type definition \\
Process $P, Q$ & $\defined$
& $\labsnd{x}{k}{P}$ & label output \\
& & $\labrcv{x}{P}$ & label input \\
& & $\chnsnd{y}{x}{P}$ & channel output \\
& & $\chnrcv{y}{x}{P}$ & channel input \\
& & $\cls{x}$ & terminate process \\
& & $\wait{x}{;Q}$ & wait for process to terminate \\
& & $\spwn{x}{X}{\Delta}{Q}$ & spawn \\
& & $\fwd{x}{y}$ & forward $x$ to $y$ \\
& & $F_Y$ & forwarder process variable for $Y$ \\
Messages $M,N$ & $\defined$
& $x.k$ & label output \\
& & $\mb{send}\, y\, x$ & channel output \\
& & $\cls{x}$ & terminate process 
\end{longtable}
\end{small}

\section{\illst type system and asynchronous dynamics}
\label{apx:sec:type-system}
This section defines the rules to type \lang programs, both for source code as well as for run-time configurations.

\subsection{Process term typing}

\Cref{apx:fig:structural-type-system} summarizes the typing rules for process terms in \lang.

\begin{figure}
\centering
\begin{small}
\input{figs/structural-type-system}
\end{small}
\caption{Process term typing rules and signature checking rules of \lang.}
\label{apx:fig:structural-type-system}
\end{figure}

Rule $\msc{d:fwd}$ in \Cref{apx:fig:structural-type-system} relies on a forwarder for each user-defined type definition,
amounting to an identity expansion automatically generated as defined in \Cref{apx:def:fwder}.

\begin{definition}\label{apx:def:fwder}
    For all $Y = A \in \Sigma$, we extend $\Sigma$ by adding the following definition to the signature $\Sigma$:
\[\begin{array}{ll}
    x':A \vdash F_Y = \fwder{A,y' \leftarrow x'}:: y': A  & \qquad
    Y = A \in \Sigma
\end{array}\]
Here $F_Y$ is a specific process variable assigned to the forwarder process for type variable $Y$, and $\fwder{A, y\leftarrow x}$ is defined by induction on the structure of $A$ as a function from type $A$ to process terms as follows:
\begin{mathpar}
\begin{array}{lll}
\fwder{\oplus\{\ell{:}A_\ell\}_{\ell \in L}, y \leftarrow x}&:=& \mathbf{case} \, x( \ell \Rightarrow y.\ell; \fwder{A_\ell, y \leftarrow x})_{\ell \in L}\\[6pt]
\fwder{ \&\{\ell{:}A_\ell\}_{\ell \in L},  y \leftarrow x}&:=&\mathbf{case} \, y( \ell \Rightarrow x.\ell; \fwder{A_\ell, y \leftarrow x})_{\ell \in L}\\[6pt]
\fwder{A \otimes B,y \leftarrow x}&:=&w \leftarrow \mathbf{recv}\, x; \mathbf{send} \, w\,y; \fwder{B,y \leftarrow x}\\[6pt]
\fwder{A \multimap B,y \leftarrow x}&:=& w \leftarrow \mathbf{recv}\, y; \mathbf{send} \, w\,x; \fwder{B,y \leftarrow x}\\[6pt]
\fwder{1,y \leftarrow x}&:=& \mathbf{wait}\,x; \mathbf{close}\, y\\[6pt]
\fwder{Y; y \leftarrow x}&:=&F_Y[x\mapsto x', y \mapsto y'] \qquad Y =A \in \Sigma\\[6pt]
\end{array}
\end{mathpar}
\end{definition}

Forwarders are well-typed by construction, as shown next:

\begin{lemma}
   Given the extended signature $\Sigma$, for all type $A$, there is a derivation for $ x:A \vdash_\Sigma  \fwder{A, y \leftarrow x}:: y: A$.
\end{lemma}
\begin{proof}
The proof is by induction on the structure of type $A$. In a base case, where $A$ is a type variable $Y$, i.e., $A=Y$ for $Y = C \in \Sigma$, the proof is straightforward by the $\msc{d:fwd}$ rule since  $x:C\vdash F_Y := \fwder{C,  y \leftarrow x}:: y: A \in \Sigma$.
The proof of other cases is straightforward.

\end{proof}

\subsection{Configuration typing}

\Cref{apx:fig:config-typing} summarizes the typing rules for process configurations and messages in \lang.
Since $\Sigma$ is fixed, we may drop it in a configuration typing judgment and a process typing judgment, respectively, for brevity.

We introduce the judgments
$(\mathcal{D}_1; \mathcal{D}_2) \in \m{Tree}(\Delta \Vdash K)$,
$\mathcal{D} \in \m{Tree}(\Delta \Vdash K)$, and
${\mathcal{B}}\in \m{Forest}(\Delta \Vdash \Delta')$
to denote well-typed configurations, defined as follows:

\begin{figure}
\centering
\begin{small}
\input{figs/config-typing}
\end{small}
\caption{Configuration typing rules of \lang.}
\label{apx:fig:config-typing}
\end{figure}

\begin{definition}[Well-typed configuration]\label{apx:def:well-typed-config}
    Well-typed configuration(s) are defined in terms of the judgments $(\mathcal{D}_1; \mathcal{D}_2) \in \m{Tree}(\Delta \Vdash K)$ and $\mathcal{D} \in \m{Tree}(\Delta \Vdash K)$,
    where $K$ is either of the form $x{:}A$ or $\_{:}1$
    and $\_$ stands for an arbitrary channel name along which no observations are made.
    \begin{itemize}

    \item  We define ${\mathcal{D}}\in \m{Tree}(\Delta \Vdash K)$ as \[  \Delta  \Vdash \mathcal{D} :: K \]
    \item  We define $(\mathcal{D}_1; \mathcal{D}_2) \in \m{Tree}(\Delta \Vdash K)$ as $\mathcal{D}_1\in \m{Tree}(\Delta \Vdash K)$ and $\mathcal{D}_2\in \m{Tree}(\Delta \Vdash K)$   
    \item We define ${\mathcal{B}}\in \m{Forest}(\Delta \Vdash \Delta')$ as \[  \Delta  \Vdash \mathcal{B} :: \Delta' \]
    \item  We define the notation $\mc{T} \in \mc{B}$ meaning $\mc{T}$ is a particular tree in a forest of trees $\mc{B}$: For $\mc{B}\in \m{Forest}(\Delta \Vdash \Delta')$, we write $\mc{T} \in \mc{B}$ iff $\mc{B}= \mc{B}' \mc{T}$, and $\mc{B}'\in \m{Forest}(\Delta_1 \Vdash \Delta'_1)$ and $\mc{T}'\in \m{Tree}(\Delta_2 \Vdash K)$ with $\Delta=\Delta_1, \Delta_2$ and $\Delta'= \Delta'_1, K$.
\end{itemize}
    
    \defend
\end{definition}

\subsection{Asynchronous dynamics}

\Cref{apx:fig:dynamics} gives the asynchronous dynamics of \lang as multiset rewriting rules \myCite{CervesatoARTICLE2009}.

\begin{figure}
\centering
\begin{small}
\input{figs/dynamics}
\end{small}
\caption{Asynchronous dynamics of \lang.}
\label{apx:fig:dynamics}
\end{figure}

\section{Concrete security lattice and security theories}
\label{apx:sec:lattice}

Process configurations and terms are typed relative to a concrete security lattice and a security theory.
A \emph{concrete} lattice $\Psi_0$ is defined globally for an application and consists of concrete security levels $\iota$.
Our running example lattice
\[\mb{guest} \sqsubseteq \mb{alice} \sqsubseteq \mb{bank} \qquad \mb{guest} \sqsubseteq \mb{bob} \sqsubseteq \mb{bank}\]
\noindent is an example of a concrete security lattice.
Polymorphic process definitions make use of a \emph{security theory} $\Psi$, ranging over security variables $\psi$ and
concrete security levels $\iota$ from the given concrete security lattice $\Psi_0$.
At run-time, all security variables occurring in polymorphic processes will be replaced with concrete security levels.

\begin{definition}[Concrete security lattice and security theory]\label{apx:def:lattice}

Let $\Psi_0 \defined \langle \mathcal{L}, \mc{E}_0, \sqcup \rangle$ be a concrete 01-bounded join semi-lattice with a partial order $\mc{E}_0$
over concrete security levels $\iota, \eta \in \mathcal{L}$ such that $E_0 \in \mc{E}_0$ is of the form $\iota \sqsubseteq \iota'$.
\noindent We define a security theory $\Psi \defined \langle \mathcal{V}, \mathcal{E}\rangle$
that augments $\Psi_0$ with security variables $\psi$ and relations $\mc{E}$ over them such that
\begin{itemize}
\item $\psi \in \mathcal{V}$ is a set of security variables

\item $c, d \in \mc{S}$ is a set of security terms of the security theory $\Psi$, defined by the grammar
\[c, d:= c \sqcup d \mid \iota \mid \psi \]
\item $E \in \mathcal{E}$ is a set of relations over the elements $c, d$ of the security theory where $E = c \sqsubseteq d$.

\end{itemize}
For convenience, we define the projection $\mc{E}(\Psi)$ to extract the relations $\mc{E}$ of $\Psi$.
We write
\[\latcstr{E}\]
if $E$ is consequence of the lattice theory based on the relations.
The judgment is defined in \Cref{apx:fig:latcstr-def}.
\end{definition}

\begin{figure}
\centering
\begin{small}
\input{figs/latcstr-def.tex}
\end{small}
\caption{Inductive definition of $\latcstr{E}$.}
\label{apx:fig:latcstr-def}
\end{figure}

Process definitions and process spawning rely on an order-preserving substitution, defined as follows:

\begin{definition}[Order-preserving substitution]\label{apx:def:subst}

Let $\gamma \in \mc{V} \rightarrow \mc{S}$ be a total function that maps security variables to security terms.
Its lifting $\hat{\gamma}$ to other syntactic objects, such as security terms $c$, process terms $P$, typing contexts $\Delta$, and security lattices $\Psi$,
is defined by structural induction over the syntactic object, replacing simultaneously all variable occurrences $\psi_i$ in the object with $\gamma(\psi_i)$.
\noindent The function $\gamma$ and its lifting $\hat{\gamma}$ must be order-preserving, ensuring that if $\Psi' \Vdash E$, then $\hat{\gamma}(\Psi') \Vdash \hat{\gamma}(E)$.
To link a spawner and a spawnee, we define an order-preserving substitution $\Psi \Vdash \gamma : \Psi'$
\[\Psi \Vdash \gamma : \Psi' \defined \text{if}\; \Psi' \Vdash E, \text{then}\; \Psi \Vdash \hat{\gamma}(E)\]
\noindent Composition $\gamma \circ \gamma'$ of two substitutions $\gamma$ and $\gamma'$ is defined as usual.
We observe that if both $\gamma$ and $\gamma'$ are order-preserving so is their composition.

\end{definition}

The type system requires every spawner to provide an order-preserving substitution $\Psi \Vdash \gamma : \Psi'$ for the security variables of the spawnee (rule \msc{Spawn}).
As a result, the security theory $\Psi'$ of the spawnee must be satisfied by the security theory $\Psi$ of the spawner,
\ie $\text{if}\; \Psi' \Vdash E, \text{then}\; \Psi \Vdash \hat{\gamma}(E)$.
Moreover, if the spawner provides a substitution $\delta$ for $\Psi_0$, \ie $\Psi_0 \Vdash \delta : \Psi$,
so does the spawnee, \ie $\Psi_0 \Vdash \gamma \circ \delta : \Psi'$.

\section{IFC refinement type system}
\label{apx:sec:ifc}
This section develops an Information Flow Control (IFC) refinement type system for \illst, yielding the language \reflang.
Deviating from the main text, we use here the metavariable $\Xi$ instead of $\G$ to denote the security typing context.

\subsection{Process term typing}

\Cref{apx:fig:ifc-type-system} summarizes the typing rules for process terms in \reflang.

\begin{figure}
\centering
\begin{small}
\input{figs/ifc_type_system}
\end{small}
\caption{Process term typing rules and signature checking rules of \reflang.}
\label{apx:fig:ifc-type-system}
\end{figure}

Rule $\msc{Spawn}$ relies on two substitutions, $\gamma_{\m{var}}$ that provides a substitution for channel variables to match them up with the definitions in the signature, and $\gamma_{\m{sec}}$ which is an order-preserving substitution $\substmap{\Psi}{\gamma_{\m{sec}}}{\Psi'}$,
guaranteeing that the security terms provided by the spawner comply with the order expected among those terms by the spawnee.
Rule $\msc{Spawn}$ moreover establishes the invariants necessary to prevent information leakage for the newly spawned process,
by the premise $\latcstr{\substmapap{\gamma}{\psi}\sqsubseteq d}$, and
allows the newly spawned process to start at least at the spawner's running secrecy $d_0$, by the premise $d_0 \sqsubseteq \latcstr{\substmapap{\gamma}{\psi}}$.

Rule $\msc{d:fwd}$ in \Cref{apx:fig:ifc-type-system} relies on a forwarder for each user-defined type definition,
amounting to an identity expansion automatically generated as defined in \Cref{apx:def:ifc-fwder}.

\begin{definition}\label{apx:def:ifc-fwder}
    For all $Y = A \in \Sigma$, we extend $\Sigma$ by adding the following IFC-typed forwarder definition to the signature $\Sigma$:
\[\begin{array}{ll}
    \psi = \psi; x:A[\psi] \vdash F_A = \fwder{A,y^\psi \leftarrow x^\psi} @ \psi :: y: A[\psi]  & \qquad
    Y = A \in \Sigma
\end{array}\]
Here $F_Y$ is a specific process variable assigned to the forwarder process for type variable $Y$, and $\fwder{A, y \leftarrow x}$ is defined similar to \Cref{apx:def:fwder} as
\[\begin{array}{lll}
    \fwder{\oplus\{\ell{:}A_\ell\}_{\ell \in L}, y \leftarrow x}&:=& \mathbf{case} \, x( \ell \Rightarrow y.\ell; \fwder{A_\ell, y^c \leftarrow x^c})_{\ell \in L}\\[6pt]
    \fwder{ \&\{\ell{:}A_\ell\}_{\ell \in L},  y^c \leftarrow x^c}&:=&\mathbf{case} \, y( \ell \Rightarrow x.\ell; \fwder{A_\ell, y^c \leftarrow x^c})_{\ell \in L}\\[6pt]
    \fwder{A \otimes B,y^c \leftarrow x^c}&:=&w \leftarrow \mathbf{recv}\, x; \mathbf{send} \, w\,y; \fwder{B,y^c \leftarrow x^c}\\[6pt]
    \fwder{A \multimap B,y^c \leftarrow x^c}&:=& w \leftarrow \mathbf{recv}\, y; \mathbf{send} \, w\,x; \fwder{B,y^c \leftarrow x^c}\\[6pt]
    \fwder{1,y^c \leftarrow x^c}&:=& \mathbf{wait}\,x; \mathbf{close}\, y\\[6pt]
    \fwder{Y; y^c \leftarrow x^c}&:=&F_Y[(x'\mapsto x, y' \mapsto y), (\psi \mapsto c)] \qquad Y =A \in \Sigma\\[6pt]
    \end{array}\]
\end{definition}

Forwarders are well-typed by construction, as shown next:

\begin{lemma}
   Given the extended signature $\Sigma$, for all type $A$,  there are derivations for
   \begin{itemize}
   \item[(i)] $\Psi; x:A[\psi] \vdash_\Sigma  \fwder{A, y^\psi \leftarrow x^\psi} @\psi:: y: A[\psi]$, and
   \item[(ii)]  $\Psi; x:A[\psi] \vdash_\Sigma  \fwder{A, y^\psi \leftarrow x^\psi} @\psi':: y: A[\psi]$, when $\Psi \Vdash \psi' \sqsubseteq \psi$ and $A \neq Y$ for a type variable $Y$.
   \end{itemize}
\end{lemma}
\begin{proof}
    \begin{itemize}
        \item[(i)] The proof is by induction on the structure of type $A$. In a base case, where $A$ is a type variable $Y$, i.e., $A=Y$ for $Y = C \in \Sigma$, the proof is straightforward by the $\msc{d:fwd}$ rule since  $\psi=\psi;x:C[\psi]\vdash F_Y := \fwder{C,  y^\psi \leftarrow x^\psi} @ \psi:: y: A[\psi] \in \Sigma$, and the substitutions $\gamma_{\m{sec}}$ and $\gamma_{\m{var}}$ enforce the premises of the rule.
        The proof of other cases is straightforward.
        \item[(ii)] The proof is by case analysis on the structure of type $A$. In all cases, by the way we defined the process terms, after the very first application of a rule (the first communication which is always a receive), the running secrecy increases to $\psi$, and we can apply the previous item. Note that by the tree invariant every judgment in our typing derivation satisfies the condition $\Psi \Vdash \psi' \sqsubseteq \psi$.
    \end{itemize}
\end{proof}

\subsection{Configuration typing}

\Cref{apx:fig:ifc-config-typing} summarizes the typing rules for process configurations and messages in \reflang.
Since $\Psi_0$ and $\Sigma$ are fixed, we may drop $\Psi_0$ and $\Sigma$ in a configuration typing judgment and a process typing judgment, respectively, for brevity.

\begin{figure}
\centering
\begin{small}
\input{figs/ifc-config-typing}
\end{small}
\caption{Configuration typing rules of \reflang.}
\label{apx:fig:ifc-config-typing}
\end{figure}

\subsection{Security erasure}

We define erasure of typing contexts, signatures, and configurations to state noninterference as a logical equivalence between \lang configurations.
To this end, \Cref{apx:thm:ifc-types-is-session-typed} proves that well-typed \reflang configurations are well-typed \lang configurations.

\begin{definition}
    Define security-erasure $\erasure{\Sig}$ of the signature $\Sig$ as
    \[
    \begin{array}{lcl}
        \erasure{\cdot}&= & \cdot\\
        \erasure{Y=A, \Sigma'} &= & Y=A, \erasure{\Sigma'}\\
        \erasure{ \Psi; \Xi \vdash X=P @ \psi_0 :: x{:}A[\psi],\, \Sigma'} & = &  \erasure{\Xi} \vdash X= P :: x{:}A\, ,\erasure{\Sigma'}
    \end{array}
    \]
\end{definition}

\begin{definition}
Define security-erasures $\erasure{\Xi}$,  $\erasure{x{:}A[c]}$, $\erasure{\Gamma}$, and $\erasure{K^s}$ of the security linear variable context $\Xi$, security channel variable, security linear channel context $\Gamma$, and security channel singleton $K^s$, respectively, as
\[\begin{array}{lclc}
    \erasure{\Xi, x{:}A[c]} & \defeq& \erasure{\Xi},  x{:}A \\
    \erasure{x{:}A[c]} & \defeq&   x{:}A   \\[4pt]
    \erasure{\Gamma, x_\alpha{:}A[c]} & \defeq& \erasure{\Gamma},  x_\alpha{:}A \\
         \erasure{\cdot}& \defeq& \cdot &  \\[4pt]
         \erasure{x_\alpha{:}A[c]} & \defeq&   x_\alpha{:}A   \\
        \erasure{\_{:}1[\top]}& \defeq& \_{:}1\\
    \end{array}\]
\end{definition}

\begin{definition}
    Define a security-erasure $\erasure{\mathbb{C}}$ of the configuration $\mathbb{C}$ as
    \[
    \begin{array}{lcl}
        \erasure{\cdot}&= & \cdot\\
        \erasure{\mathbb{C}, \mathbf{proc}(x[d], P@d_1)} &= &  \erasure{\mathbb{C}}, \mathbf{proc}(x, P)\\
        \erasure{\mathbb{C}, \mathbf{msg}(M)} & = & \erasure{\mathbb{C}}, \mathbf{msg}(M)
    \end{array}
    \]
    \end{definition}

\begin{theorem}
\label{apx:thm:ifc-types-is-session-typed}
    Every IFC well-typed configuration $\Psi_0; \Gamma \Vdash \mathbb{C} :: K^s$  is session-typed $\erasure{\mathbb{C}} \in \m{Tree}(\erasure{\Gamma} \Vdash \erasure{K^s})$.
\end{theorem}
\begin{proof}
    By induction on the derivation of $\Psi_0; \Gamma \Vdash \mathbb{C} :: K^s$.
\end{proof}

\section{Progress and preservation}
\label{apx:sec:progress-preservation}
This section proves type safety, first for \lang and then for \reflang.

\subsection{Session-typed processes}
This section proves type safety of session-typed processes in \lang.
We first start with defining the notion of a poised configuration and proofs of necessary lemmas.
\subsubsection{Poised configuration and configuration permutation}\label{apx:sec:progress-preservation:defs}

Progress relies on the notion of a \emph{poised} configuration.
A configuration is poised if it is empty or cannot take any internal steps but wants to engage in a message exchange along any of its free channels.

\begin{definition}[Poised Configuration]\label{apx:def:poised}
A configuration $\Delta_1, \Delta_2 \Vdash \mc{C}_1, \mc{C}_2:: \Lambda, w{:}A'$ is poised iff either $\mc{C}_1, \mc{C}_2$ is empty or $\Delta_1 \Vdash \mc{C}_1:: \Lambda$ is poised and $ \Delta_2 \Vdash \mc{C}_2::  w{:}A'$ is poised. The configuration $\Delta_2 \Vdash \mc{C}_2 :: w{:}A'$ is poised iff it cannot take any steps and at least one of the following conditions hold:
\begin{enumerate}
\item\label{apx:def:poised:emp} $\mc{C}_2$ is an empty configuration.
\item\label{apx:def:poised:left-snd} $\mathcal{C}_2= \mc{C}'_2\, \mb{msg}(M)\, \mc{C}''_2$ such that $\mb{msg}(M)$ is a negative message along $y_\alpha \in \Delta_2$, i.e. $y_\alpha{:}\&\{\ell{:}A_\ell\}_{\ell \in L}\Vdash \mb{msg}(M):: y_{\alpha+1}{:}A_k$ or $y_\alpha{:}A\multimap B, z_\beta{:}A \Vdash \mb{msg}(M):: y_{\alpha+1}{:}B$, and both sub-configurations $\mc{C}_2'$ and $\mc{C}_2''$ are poised. 
\item\label{apx:def:poised:left-rcv}  $\mathcal{C}_2= \mc{C}'_2\, \mb{proc}(x,P)\, \mc{C}_2''$ such that $\mb{proc}(x, P)$ attempts to receive along a channel $y_\alpha {\in} \Delta_2$, and both sub-configurations $\mc{C}_2'$ and $\mc{C}_2''$ are poised.
\item\label{apx:def:poised:right-snd} $\mathcal{C}_2= \mc{C}_2'\, \mb{msg}(P)$ such that $\mb{msg}(M)$ is a positive message sent along $w_\beta{:}A'$, i.e. $ w_{\beta+1}{:}A_k \Vdash \mb{msg}(M)::w_{\beta}{:}\oplus\{\ell{:}A_\ell\}_{\ell \in L}$ or $w_{\beta+1}{:}B, z_{\gamma}{:}A \Vdash \mb{msg}(M):: w_{\beta}{:}A\otimes B$,or $\cdot \Vdash \mb{msg}(M):: w{:}1$,
and sub-configuration $\mc{C}'_2$ is poised.
\item\label{apx:def:poised:right-rcv} $\mathcal{C}_2= \mc{C}_2'\, \mb{proc}(w, P)$ such that $\mb{proc}(w, P)$ attempts to receive along $w{:}A'$,
and sub-configuration $\mc{C}'_2$ is poised.
\end{enumerate}

\defend
\end{definition}

The dynamics (see \Cref{apx:fig:dynamics}) is expressed in terms of multiset rewriting rules,
which update a configuration locally, without regard for the remaining configuration.
As a result, the updated configuration may not necessarily be well-typed, according to the rules in the configuration typing.
For example, stepping the configuration
\[\mc{C}_1 \mc{T}\; \mb{proc}(y_\alpha,(\mathbf{send}\,x_\beta\,y_\alpha; P))\; \mc{C}_2 \ostep\]
using rule $\otimes_{\m{snd}}$ (see \Cref{apx:fig:dynamics}), yields the configuration
\[\mc{C}_1 \mc{T}\; \mb{proc}(y_{\alpha+1}, ([y_{\alpha+1}/y_{\alpha}]P))\; \mb{msg}( \mathbf{send}\,x_\beta\,y_\alpha)\; \mc{C}_2\]
For well-typedness, the subtree $\mc{T}$ rooted at the message $\mb{msg}( \mathbf{send}\,x_\beta\,y_\alpha)$
would have to be moved left to the message.
Our proofs account for this possibility and only require that the dynamics yield a valid permutation of a well-typed configuration,
assuming that the pre-state is a valid permutation of well-typed configuration as well.
We define the notion of a valid permutation next.
A valid permutation may rearrange the order of processes and messages in a configuration as long as parent-child relationships are preserved.

\begin{definition}[Valid configuration permutation]\label{apx:def:permutation}

For a well-typed configuration $\Psi_0; \Delta \Vdash \mc{C} :: \Delta'$,
a valid permutation $\mc{P}(\mc{C})$ can be derived by simultaneously changing the position of a process or message in $\mc{C}$,
yielding the permutation $\mc{C}'$, \ie $\mc{P}(\mc{C}) = \mc{C}'$,
as long as the following conditions are met:
\begin{enumerate}
\item For a process $\mathbf{proc}(z_\alpha, P)$ in $\mc{C}'$ such that $\mc{C}' = \mc{C}_1 \mathbf{proc}(z_\alpha, P) \mc{C}_2$
and for all $y_\beta \notin \m{dom}(\Delta)$ that $P$ is using,
there must exist either a process $\mathbf{proc}(y_\beta, \_)$ or a message $\mathbf{msg}(\_\langle y_\beta \rangle)$ in $\mc{C}_1$.

\item For a positive message $\mathbf{msg}(M\langle z_\alpha \rangle)$ in $\mc{C}'$
such that $\mc{C}' = \mc{C}_1 \mathbf{msg}(M\langle z_\alpha \rangle) \mc{C}_2$,
if $z_\alpha \notin \m{dom}(\Delta)$,
there must exist either a process $\mathbf{proc}(z_\alpha, \_)$
or a message $\mathbf{msg}(\_\langle z_\alpha \rangle)$ in $\mc{C}_1$.
Moreover, for all $y \notin \m{dom}(\Delta)$ that $P$ is using,
there must exist either a process $\mathbf{proc}(y, \_)$ or a message $\mathbf{msg}(y.\_)$ in $\mc{C}_1$.

\item For a negative message $\mathbf{msg}(P\langle v_\beta \rangle)$ in $\mc{C}'$
such that $\mc{C}' = \mc{C}_1 \mathbf{msg}(P\langle v_\beta \rangle) \mc{C}_2$,
if $v_\beta \notin \m{dom}(\Delta)$,
there must exist either a process $\mathbf{proc}(v_\beta, \_)$
or a message $\mathbf{msg}(\_\langle w \rangle)$ in $\mc{C}_1$.
Moreover, for all $y_\alpha \notin \m{dom}(\Delta)$ that $P$ is using,
there must exist either a process $\mathbf{proc}(y_\alpha, \_)$ or a message $\mathbf{msg}(y_\alpha.\_)$ in $\mc{C}_1$.
\end{enumerate}

\defend
\end{definition}

\subsubsection{Lemmas}\label{apx:sec:progress-preservation:lemmas}

\begin{lemma}[Term variable substitution]\label{apx:lem:term-var-subst}
The following substitutions are type-preserving and thus admissible:
\begin{enumerate}
\item\label{apx:lem:term-var-subst:right}
If $ \Delta \vdash_{\Sigma} P :: x:A$
then, for any fresh $y:A$, we have $ \Delta \vdash_{\Sigma} \subst{y}{x}{P} :: y:A$.
\item \label{apx:lem:term-var-subst:left}
If $\Delta, y : B \vdash_{\Sigma} P :: x:A$
then, for any fresh $z:B$, we have $\Delta, z : B  \vdash_{\Sigma} \subst{z}{y}{P} :: x:A$.
\end{enumerate}
\end{lemma}

\begin{proof}
The proof is by induction on the process term typing rules.
\end{proof}

The next lemma allows us to break up an open forest into two sub-forests, as illustrated in \Cref{apx:fig:lemma_d3}.

\begin{figure}
\centering
\includegraphics[scale=0.4]{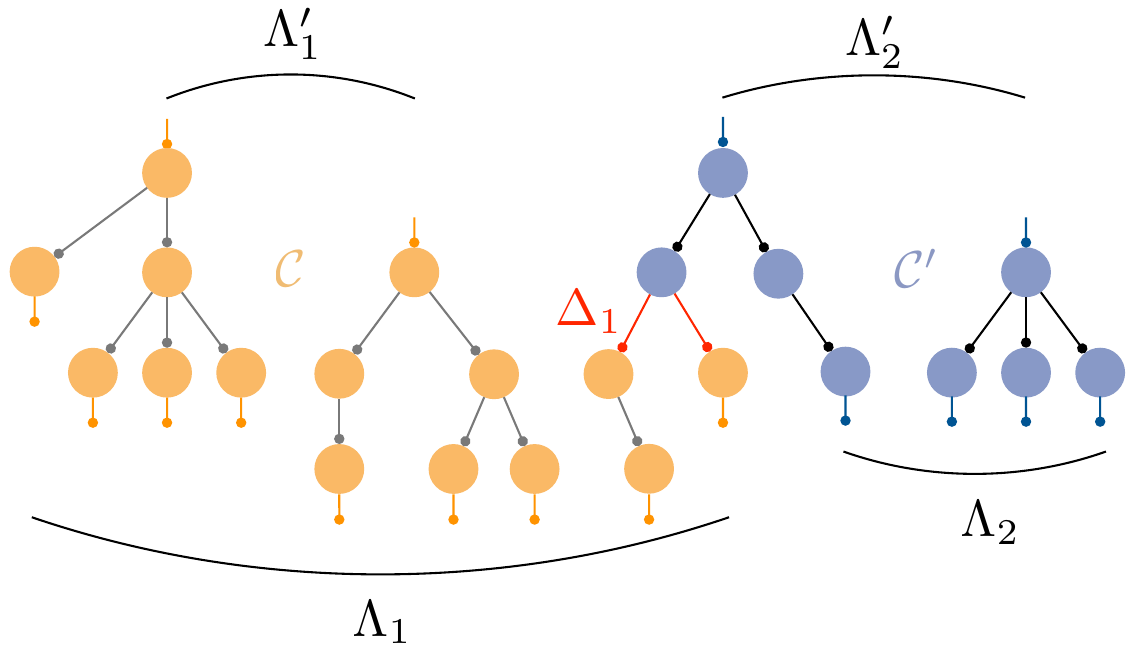}
\caption{Schematic illustration of \Cref{apx:lem:inv}, allowing us to break up an open forest into two open sub-forests.}
\label{apx:fig:lemma_d3}
\end{figure}

\begin{lemma}[Making two forests out of one]\label{apx:lem:inv}
 If ${\Delta} \Vdash {\mc{C}\,\mc{C}'}::{\Delta'}$, then for some $\Delta_1$ we have ${\Lambda_1} \Vdash {\mc{C}}::{\Lambda'_1, \Delta_1}$ and
 ${\Lambda_2,\Delta_1}\Vdash{\mc{C}'}::{\Lambda'_2}$, where $\Delta=\Lambda_1, \Lambda_2$ and $\Delta'=\Lambda'_1, \Lambda'_2$.
\end{lemma}
\begin{proof}
The proof is by a straightforward induction on the configuration typing rules.
\end{proof}

\subsubsection{Progress}
\begin{theorem}[Progress]

For any configuration $\mc{C}$, if $\mc{C}$ is a valid permutation of a configuration $\mc{C}''$ such that $ \Delta \Vdash \mc{C}'' :: \Delta'$,
then either $\mc{C} \mapsto \mc{C}'$ or $\mc{C}$ is poised. 
\end{theorem}
\begin{proof}
  The proof is standard and can be found in the literature of intuitionistic linear session types.
\end{proof}

\subsubsection{Preservation}
\label{apx:sec:progress-preservation:preservation}

\begin{theorem}[Preservation]
For any configuration $\mc{C}$ that is a valid permutation of a configuration $\mc{C}''$ such that $\Delta \Vdash \mc{C}'' :: \Delta'$,
if $\mc{C}\ostep \mc{C}'$, then $\mc{C}'$ is a valid permutation of a configuration $\mc{C}'''$ such that  $\Delta \Vdash \mc{C}''' :: \Delta'$.
\end{theorem}
\begin{proof}
  The proof is standard and can be found in the literature for intuitionistic linear session types.
\end{proof}

\subsection{IFC-typed processes}
\label{apx:sec:progress-preservation:ifc}

This section proves type safety of $\reflang$.

\begin{lemma}[Term variable substitution]\label{apx:lem:term-var-subst-ifc}
  The following substitutions are type-preserving and thus admissible:
  \begin{enumerate}
  \item\label{apx:lem:term-var-subst-ifc:right}
  If $\Psi; \Xi \vdash_{\Sigma} P@c :: x:A[c']$ with the tree invariant satisfied,
  then, for any fresh $y:A$, we have $\Psi; \Xi \vdash_{\Sigma} \subst{y}{x}{P}@c :: y:A[c']$
  with the tree invariant still satisfied.
  \item \label{apx:lem:term-var-subst-ifc:left}
  If $\Psi; \Xi, y : B[c''] \vdash_{\Sigma} P@c :: x:A[c']$ with the tree invariant satisfied,
  then, for any fresh $z:B$, we have $\Psi; \Xi, z : B[c'']  \vdash_{\Sigma} \subst{z}{y}{P}@c :: x:A[c']$
  with the tree invariant still satisfied.
  \end{enumerate}
  \end{lemma}

  \begin{lemma}[Making two forests out of one]\label{apx:lem:inv-ifc}
   If $\ctype{\Psi_0}{\Gamma}{\mathbb{C}\,\mathbb{C}'}{\Gamma'}$, then for some $\Gamma_1$ we have $\ctype{\Psi_0}{\Gamma''_1}{\mathbb{C}}{\Gamma'''_1, \Gamma_1}$ and
   $\ctype{\Psi_0}{\Gamma''_2,\Gamma_1}{\mathbb{C}'}{\Gamma'''_2}$, where $\Gamma=\Gamma''_1, \Gamma''_2$ and $\Gamma'=\Gamma'''_1, \Gamma'''_2$.
  \end{lemma}
\begin{proof}
The proof is a straightforward induction on the configuration typing rules.
\end{proof}

\begin{lemma}[Security variable substitution]\label{apx:lem:sec-var-subst}
If $\Psi; \Xi \vdash_{\Sigma} P@c :: x:B[d]$ with the tree invariant satisfied, then for every substitution $\Psi' \Vdash \gamma: \Psi$, we have
\[\hat{\gamma}(\Psi); \hat{\gamma}(\Xi) \vdash_{\Sigma} \hat{\gamma}(P)@\hat{\gamma}(c) :: x{:}B[\hat{\gamma}(d)],\]
with the tree invariant satisfied as well.
\end{lemma}

\begin{proof}
The proof is by induction on process term typing derivations $\Psi; \Xi \vdash_{\Sigma} P@c :: x:B[d]$.
We consider cases for the last step in the derivation. 
\begin{description}
 \item[Case 1.] \[\infer[\& R]{\Psi;\Xi \vdash_{\Sigma} (\mathbf{case}\, y^c(\ell \Rightarrow Q_\ell)_{\ell \in L})@d_1::  y:\&\{\ell:A_{\ell}\}_{\ell \in L}[c] }{\Psi; \Xi \vdash_{\Sigma} Q_k@c::  y:A_{k}[c] & \forall k \in L}\]
 By the induction hypothesis, for all $k \in L$, we have \[\hat{\gamma}(\Psi); \hat{\gamma}(\Xi) \vdash_{\Sigma} \hat{\gamma}(Q_k)@\hat{\gamma}(c)::  y{:}A_{k}[\hat{\gamma}(c)],\] which preserves the invariant. 
 
 By the $\&R$ rule, we get 
 \[\hat{\gamma}(\Psi);\hat{\gamma}(\Xi) \vdash_{\Sigma} \mathbf{case}\, y^{\hat{\gamma}(c)}(\ell \Rightarrow \hat{\gamma}(Q_\ell))_{\ell \in L})@\hat{\gamma}(d_1)::  y:\&\{\ell:A_{\ell}\}_{\ell \in L}[\hat{\gamma}(c)].\]
By the first part of the tree invariant for the original sequent, we get $\Psi \Vdash d_1 \sqsubseteq c $, and thus $\hat{\gamma}(\Psi) \Vdash \hat{\gamma}(d_1) \sqsubseteq \hat{\gamma}(c)$. The second part of the tree invariant is guaranteed by the induction hypothesis on the premise.
 
 By a simple rewrite according to how the lifting of $\gamma$ is defined we get
  \[\hat{\gamma}(\Psi);\hat{\gamma}(\Xi) \vdash_{\Sigma} \hat{\gamma}(\mathbf{case}\, y^{c}(\ell \Rightarrow Q_\ell)_{\ell \in L})@ \hat{\gamma}(d_1)::  y{:}\&\{\ell:A_{\ell}\}_{\ell \in L}[\hat{\gamma}(c)].\]
  
\item[Case 2.] \[\infer[\& L]{\Psi; \Xi, x:\&\{\ell:A_{\ell}\}_{\ell \in L}[c]\vdash_{\Sigma} (x^c.k; P)@d_1::  y:T[d] }{\deduce{\Psi;  \Xi, x:A_{k}[c] \vdash_{\Sigma}  P@d_1::  y:T[d]}{\Psi \Vdash d_1 \sqsubseteq c} & k \in L }\]

By the induction hypothesis on the first premise and definition of the lifting of $\gamma$ we have \[\hat{\gamma}(\Psi); \hat{\gamma}(\Xi), x{:}A_{k}[\hat{\gamma}(c)] \vdash_{\Sigma} \hat{\gamma}(P)@\hat{\gamma}(d_1)::  y{:}T[\hat{\gamma}(d)],\] which preserves the tree invariant.
By applying substitution on the second premise ($\Psi \Vdash d_1 \sqsubseteq c$) we get $\hat{\gamma}(\Psi) \Vdash \hat{\gamma}(d_1) \sqsubseteq \hat{\gamma}(c)$.
 
 By the $\&L$ rule, we get 
 \[\hat{\gamma}(\Psi);\hat{\gamma}(\Xi), x{:}\& \{A_{\ell}\}_{\ell \in L}[\hat{\gamma}(c)]\vdash_{\Sigma} x^{\hat{\gamma}(c)}.k;\hat{\gamma}(P) @\hat{\gamma}(d_1)::  y: T[\hat{\gamma}(d)],\]
 which satisfies the tree invariant since the premise satisfies it. 
Again by a simple rewrite according to how the lifting of $\gamma$ is defined we get
\[\hat{\gamma}(\Psi);\hat{\gamma}(\Xi, x{:}\& \{A_{\ell}\}_{\ell \in L}[c])\vdash_{\Sigma} \hat{\gamma}(x^{c}.k;P) @\hat{\gamma}(d_1)::  y: T[\hat{\gamma}(d)].\]

\item[Case 3.]  \[\infer[\otimes R]{\Psi;\Xi, z:A[d]\vdash_{\Sigma} (\mathbf{send}\,z^{d}\,y; P)@c::  y:(A[d] \otimes B)\,[d] }{\Psi;\Xi \vdash_{\Sigma}  P@c:: y: B[d]}\]

 By the induction hypothesis on the premise we have \[\hat{\gamma}(\Psi); \hat{\gamma}(\Xi) \vdash_{\Sigma} \hat{\gamma}(P)@\hat{\gamma}(c)::  y{:}B[\hat{\gamma}(d)],\]
which preserves the tree invariant and thus $\hat{\gamma}(\Psi) \Vdash \hat{\gamma}(d) \sqsubseteq \hat{\gamma}(d)$. 
 
 By the $\otimes R$ rule, we get 
 \[\hat{\gamma}(\Psi);\hat{\gamma}(\Xi), z{:} A[\hat{\gamma}(d)]\vdash_{\Sigma} \mathbf{send}\, z^{\hat{\gamma}(d)}\, y; \hat{\gamma}(P) @\hat{\gamma}(c)::  y: (A \otimes B)[\hat{\gamma}(d)],\]
 
the resulting sequent satisfies the tree invariant. 
Again by a simple rewrite according to how the lifting of $\gamma$ is defined we get
\[\hat{\gamma}(\Psi);\hat{\gamma}(\Xi, z{:} A[d])\vdash_{\Sigma} \hat{\gamma}(\mathbf{send}\, z^{d}\, y; P) @\hat{\gamma}(c)::  y: (A \otimes B)[\hat{\gamma}(d)].\]

\item[Case 4.]\[ \infer[\otimes L]{\Psi;\Xi, x:(A \otimes B)\,[d] \vdash_{\Sigma} (z^{d}\leftarrow \mathbf{recv}\, x; P)@c::  y:T[d_1]  }{\deduce{\Psi;\Xi, z:A[d], x:B[d] \vdash_{\Sigma}  P@c':: y: T[d_1]}{\Psi \Vdash c'= c\sqcup d }} \]

 By the induction hypothesis on the second premise and definition of the lifting of $\gamma$ we have
\[\hat{\gamma}(\Psi); \hat{\gamma}(\Xi), z{:}A[\hat{\gamma}(d)], x{:}B[\hat{\gamma}(d)] \vdash_{\Sigma} \hat{\gamma}(P)@\hat{\gamma}(c')::  y{:}T[\hat{\gamma}(d_1)],\]
which preserves the tree invariant.
Moreover, by applying the substitution on the first premise, we get
$\hat{\gamma}(\Psi)\Vdash \hat{\gamma}(c')= \hat{\gamma}(c) \sqcup \hat{\gamma}(d)$.
 
 By the $\otimes L$ rule, we get 
\[\hat{\gamma}(\Psi); \hat{\gamma}(\Xi), x{:}(A\otimes B)[\hat{\gamma}(d)] \vdash_{\Sigma} z^{\hat{\gamma}(d)} \leftarrow \mathbf{recv}\,x; \hat{\gamma}(P)@\hat{\gamma}(c)::  y{:}T[\hat{\gamma}(d_1)],\]
 
It is straightforward to observe that the resulting sequent satisfies the tree invariant and can be rewritten as before according to the definition of the lifting of $\gamma$. 

\item[Case 5.] Here is the interesting case!
\begin{mathpar}

\inferrule*[right=$\msc{Spawn}$]
  {\procdef{\Xi_1'}{X}{P}{\psi_0}{\omega_0}{x'}{A}{\psi}{\omega} \in \Sig \\
  \substmap{\Psi}{\delta_{\m{sec}}}{\Psi'} \\
  \latcstr{\substmapap{\delta_{\m{sec}}}{\psi}\sqsubseteq d} \\
  \latcstr{d_0 \sqsubseteq \substmapap{\delta_{\m{sec}}}{\psi_0}} \\
  \Xi_1, x{:}A[\substmapap{\delta_{\m{sec}}}{\psi_0}]\Vdash (\delta_{\m{sec}}, \delta_{\m{var}}) :: \Xi'_1, x'{:}A[\psi] \\
  \ptypb{\Xi_2, x{:}A\dlabel{\substmapap{\delta_{\m{sec}}}{\psi}}}{Q}{d_0}{f_0}{z}{C}{d}{f}}
  {\ptyp{\Xi_1, \Xi_2}{\refspwn{x}{\substmapap{\delta_{\m{sec}}}{\psi}}{}{X}{\Xi_1}{\substmapap{\delta_{\m{sec}}}{\psi_0}}{}{Q}}{d_0}{f_0}{z}{C}{d}{f}}
\end{mathpar}

By the induction hypothesis we have
\[\star\,\hat{\gamma}(\Psi); \hat{\gamma}(\Xi_2), x{:}A\dlabel{\substmapap{\gamma}{\substmapap{\delta_{\m{sec}}}{\psi}}} \vdash_{\Sigma} \hat{\gamma}(Q)@\hat{\gamma}(d_0)::  y{:}\hat{\gamma}(T)[\hat{\gamma}(d)],\] which preserves the tree invariants. By applying $\gamma$ on the 3rd and 4th premises, we get \[\hat{\gamma}(\Psi) \Vdash \hat{\gamma}(\hat{\delta_{\m{sec}}}(\psi)) \sqsubseteq \hat{\gamma}(d),\,  \hat{\gamma}(d_0) \sqsubseteq \hat{\gamma}(\hat{\delta_{\m{sec}}}(\psi_0)).\]
 
 By applying the substitution $\gamma$ on the 2nd premise, we get $\hat{\gamma}(\Psi) \Vdash \hat{\gamma}(\hat{\delta_{\m{sec}}}):: \Psi'$.
And from the 5th premise, we get 
\[\hat{\gamma}(Xi_1), x{:}A[\hat{\gamma}(\substmapap{\delta_{\m{sec}}}{\psi_0})]\Vdash (\hat{\gamma}(\delta_{\m{sec}}), \delta_{\m{var}}) :: \Xi'_1, x'{:}A[\psi]\]

 By the $\msc{Spawn}$ rule for the substitution $\gamma'=(\delta_{\m{var}},\gamma \circ \delta_{\m{sec}})$, we get 
 {\small\[\hat{\gamma}(\Psi);\hat{\gamma}(\Xi_1), \hat{\gamma}(\Xi_2)\vdash_{\Sigma} ((x^{[\hat{\gamma}(\hat{\delta_{\m{sec}}}(\psi))]} \leftarrow X[\gamma'] \leftarrow \hat{\gamma}(\Xi_1)@\hat{\gamma}(\hat{\delta_{\m{sec}}}(\psi_0)));\hat{\gamma}(Q)) @\hat{\gamma}(d_0)::  y: \hat{\gamma}(T)[\hat{\gamma}(d)].\]}
To show that this sequent satisfies the tree invariant, we use (a) the fact that the sequent $\star$ satisfies the tree invariant and (b) the condition on process definitions in the signature. 
The condition we imposed on process definitions ensures that $\Psi' \Vdash \psi_0 \sqsubseteq \psi$ and $\forall y{:}A[\psi_i] \in \Xi_1'. \Psi' \Vdash \psi_i \sqsubseteq \psi$.

By $\hat{\gamma}(\Psi) \Vdash \hat{\gamma}(\hat{\delta_{\m{sec}}}):: \Psi'$, we can rewrite the above judgment as

\[\forall y{:}A[\psi_i] \in \Xi_1'.\; \hat{\gamma}(\Psi) \Vdash \hat{\gamma}(\hat{\delta_{\m{sec}}}(\psi_i)) \sqsubseteq \hat{\gamma}(\hat{\delta_{\m{sec}}}(\psi)).\]
 By the tree invariant of $\star$ (i.e., $\hat{\gamma}(\Psi) \Vdash \hat{\gamma}(\hat{\delta_{\m{sec}}}(\psi)) \sqsubseteq \hat{\gamma}(d)$),

 \[\forall y{:}A[\psi_i] \in \Xi_1'.\; \hat{\gamma}(\Psi) \Vdash \hat{\gamma}(\hat{\delta_{\m{sec}}}(\psi_i)) \sqsubseteq \hat{\gamma}(\hat{\delta_{\m{sec}}}(\psi)) \sqsubseteq \hat{\gamma}(d).\]

 By $(\delta_{\m{var}}, \hat \gamma (\hat{\delta_{\m{sec}}})(\Xi'_1))=\hat{\gamma}(\Xi_1)$, 

 \[\forall y{:}\hat{\gamma}(A)[\hat{\gamma}(\psi_i)] \in \hat{\gamma }(\Xi_1).\; \hat{\gamma}(\Psi) \Vdash \hat{\gamma}((\psi_i)) \sqsubseteq \hat{\gamma}(\hat{\delta_{\m{sec}}}(\psi)) \sqsubseteq \hat{\gamma}(d).\]

 \item[Case 6.] The case for $\msc{d:fwd}$ is similar to the previous case.
\end{description}
\end{proof}

\subsubsection{Progress}\label{apx:sec:progress-preservation:progress}

Progress for IFC-typed processes in \reflang follows from the progress of \lang, as IFC typing can be viewed as a refinement typing using the same dynamics as the session typed processes.

\subsubsection{Preservation}\label{apx:sec:progress-preservation-ifc:preservation}

\begin{theorem}[Preservation]
For any configuration $\mathbb{C}$ that is a valid permutation of a configuration $\mathbb{C}''$ such that $\Psi_0; \Gamma \Vdash \mathbb{C}'' :: \Gamma'$,
if $\erasure{\mathbb{C}}\mapsto  \mc{C}'$, then there exists a security annotated program $\mathbb{C}'$ such that $\erasure{\mathbb{C}'}= \mathcal{C}'$ and is a valid permutation of a configuration $\mathbb{C}'''$ such that  $\Psi_0; \Gamma \Vdash \mathbb{C}''' :: \Gamma'$.
Moreover, the stepping preserves the tree invariant.
\end{theorem}

\begin{proof}
The proof is by considering different cases of $\erasure{\mathbb{C}}\mapsto \mathcal{C}'$. For each step we provide a security annotated configuration $\mathbb{C}'$ and then by inversion on the typing derivations show that it is IFC-typed. For the purpose of presentation, we put  security annotations of the post-steps for all possible steps in \Cref{apx:fig:dynamics-sec}. We provide the detailed proof by inversion for a couple of cases. The rest of the cases are similar.

\begin{figure*}
  \begin{center}
  \begin{small}
  \begin{tabbing}
  
  $\msc{Fwd} \qquad$ \= $\mb{proc}(y_\alpha[c], (y_\alpha \leftarrow x_\beta)@d_1) \ostep [x_\beta/y_\alpha]$  \` $(y_\alpha \not \in \Delta')$ \\[4pt]

  $\msc{Spawn}$ \= $\mb{proc}(y_\alpha[c],  (x^d \leftarrow X[\gamma] \leftarrow \Gamma_1)@d_2 ; Q@d_1)  \ostep$  \` $(\Psi'; \Gamma'_1 \vdash X = P@\psi' ::x': B[\psi] \in \Sigma)$ \\
\> $\mb{proc}(x_0[d], \hat{\gamma}(P)) \; \mb{proc}(y_\alpha[c],  \subst{x_0}{x}{Q} @d_1)$ \` $(\Gamma'_1, x'[\psi], \psi' \Vdash \gamma : \Gamma_1, x_0[d], d_2\, \, x_0\, \mi{fresh})$ \\[4pt]

$\msc{d:fwd}$ \= $\mb{proc}(y_\alpha[c],  F_Y[\gamma])  \ostep$  \` $(\psi=\psi; x'[\psi]:{C} \vdash F_y = \fwder{C,y'^\psi \leftarrow x'^\psi} @\psi  ::y': C[\psi] \in \Sigma)$ \\
\> $\mb{proc}(y_\alpha[c], \fwder{C, y^c_\alpha \leftarrow x^c_\beta} @ c)$ \` $( x'[\psi], y'[\psi] \Vdash \gamma : x_\beta[c],y_\alpha[c])$\\[4pt]

  $1_{\m{snd}}$ \> $\mb{proc}(y_\alpha[c],(\mathbf{close}\, y_\alpha)@d_1) \ostep \mb{msg}(\mathbf{close}\, y_\alpha)$ \\[4pt]
  
  $1_{\m{rcv}}$ \> $\mb{msg}(\mathbf{close}\, y_\alpha)\; \mb{proc}(x_\beta[{c'}], (\mathbf{wait}\,y_\alpha;Q)@d_1) \ostep \mb{proc}(x_\beta[{c'}], Q@(d_1 \sqcup c))$ \\[4pt]
  
  $\oplus_{\m{snd}}$ \>$\mb{proc}(y_\alpha[c], y_\alpha.k; P@d_1) \ostep \mb{proc}(y_{\alpha+1}[c], ([y_{\alpha+1}/y_{\alpha}]P)@d_1)\; \mb{msg}( y_\alpha.k)$ \\
  
  $\oplus_{\m{rcv}}$ \> $\mb{msg}(y_\alpha[{c}].k) \; \mb{proc}(u_\gamma[{c'}],\mb{case}\, y_\alpha (( \ell \Rightarrow P_\ell)_{\ell \in L})@d_1) \ostep \mb{proc}(u_{\gamma}[{c'}], ([y_{\alpha+1}/y_{\alpha}] P_k)@(d_1\sqcup c))$ \\[4pt]
  
  $\&_{\m{snd}}$ \> $\mb{proc}(y_\alpha[c], (x_\beta.k; P)@d_1) \ostep \mb{msg}( x_\beta.k)\; \mb{proc}(y_{\alpha}[c], ([x_{\beta+1}/x_{\beta}]P)@d_1)$ \\[4pt]
  
  $\&_{\m{rcv}}$ \> $\mb{proc}(y_\alpha[c],(\mb{case}\, y_\alpha ( \ell \Rightarrow P_\ell)_{\ell \in L})@d_1) \; \mb{msg}(y_\alpha.k) \ostep \mb{proc}(v_{\delta}[c],([y_{\alpha+1}/y_{\alpha}] P_k)@c)$ \\[4pt]
  
  $\otimes_{\m{snd}}$ \> $\mb{proc}(y_\alpha[c],(\mathbf{send}\,x_\beta\,y_\alpha; P)@d_1) \ostep \mb{proc}(y_{\alpha+1}[c], ([y_{\alpha+1}/y_{\alpha}]P)@d_1)\; \mb{msg}( \mathbf{send}\,x_\beta\,y_\alpha)$ \\[4pt]
  
  $\otimes_{\m{rcv}}$ \> $\mb{msg}(\mathbf{send}\,x_\beta\,y_\alpha) \; \mb{proc}(u_\gamma[{c'}],(w\leftarrow \mathbf{recv}\,y_\alpha; P)@d_1) \ostep \mb{proc}(u_{\gamma}[{c'}], ([x_\beta/w][y_{\alpha+1}/y_{\alpha}] P)@(d_1\sqcup c))$ \\[4pt]
  
  $\multimap_{\m{snd}}$ \> $\mb{proc}(y_\alpha[c],(\mathbf{send}\,x_\beta\,u_\gamma; P)@d_1) \ostep \mb{msg}( \mathbf{send}\,x_\beta\,u_\gamma)\; \mb{proc}(y_{\alpha}[c], ([u_{\gamma+1}/u_{\gamma}]P)@d_1)$ \\[4pt]
  
  $\multimap_{\m{rcv}}$ \> $\mb{proc}(y_\alpha[c],(w\leftarrow \mathbf{recv}\,y_\alpha; P)@d_1) \; \mb{msg}(\mathbf{send}\,x_\beta\,y_\alpha) \ostep \mb{proc}(v_\delta[c], ([x_\beta/w][y_{\alpha+1}/y_{\alpha}] P)@ c)$
  \end{tabbing}
  \caption{Annotated asynchronous dynamics--proof of preservation.}
  \label{apx:fig:dynamics-sec}
  \end{small}
  \end{center}
  \end{figure*}

\begin{description}

\item{\bf Case 1.} ($\otimes$)
\begin{description}
 \item{\bf Subcase 1.} (send)
   \[ \mathbb{C}_1 \mb{proc}(y_\alpha[c],(\mathbf{send}\,x_\beta^{c}\,y_\alpha; P)@d_1)\mathbb{C}_2 \ostep \mathbb{C}_1 \mb{proc}(y_{\alpha+1}[c], ([y_{\alpha+1}/y_{\alpha}]P)@d_1) \mb{msg}( \mathbf{send}\,x_\beta^{c}\,y_\alpha) \mathbb{C}_2 \]

{\bf By assumption of the theorem:} $\ctype{\Psi_0}{\Gamma}{\mathbb{C}_1 \mb{proc}(y_\alpha[c],(\mathbf{send}\,x_\beta^{c}\,y_\alpha; P)@d_1)\mathbb{C}_2}{\Gamma'}$.

{\bf By \Cref{apx:lem:inv-ifc}:} $\ctype{\Psi_0}{\Gamma^1_1}{\mathbb{C}_1}{\Gamma_1, \Gamma_2, \Gamma^1{'}_1}$ and $\ctype{\Psi_0}{\Gamma^{1''}_2,\Gamma'_1, x_\beta{:}A[c] }{\mb{proc}(y_\alpha[c],(\mathbf{send}\,x_\beta^{c}\,y_\alpha; P)@d_1)}{y_\alpha{:}(A\otimes B)[c]}$ and $\ctype{\Psi_0}{\Gamma^1_3, \Gamma_2}{\mathbb{C}_2}{\Gamma^{1'}_2}$.

Where $\Gamma=\Gamma^1_1,\Gamma^1_2,\Gamma^1_3$ and $\Gamma'=\Gamma^{1'}_1, \Gamma^{1'}_2$. If $x_\beta{:}A[c] \in \Gamma$ we have $\Gamma'_1= \Gamma_1$ and $\Gamma^{1''}_2, x_\beta{:}A[c] =\Gamma^1_2$,  and otherwise $\Gamma'_1, x_\beta{:}A[c] = \Gamma_1$ and $\Gamma^{1''}_2=\Gamma^1_2$. 

{\bf By inversion on $\mb{proc}$ rule}
$\ptype{\Psi_0}{\Gamma^{1''}_2,\Gamma'_1, x_\beta{:}A[c]}{(\mathbf{send}\,x_\beta^{c}\,y_\alpha^{c};P)@d_1}{y}{c}{\alpha}{(A\otimes B)}$. Moreover, $(\star)\; \; \forall u_\gamma{:}T[d_2] \in \Gamma^{1''}_2, \Gamma'_1, x_\beta{:}A[c]. \, \Psi_0 \Vdash d_2 \sqsubseteq c$. 

{\bf By inversion on $\otimes\, R$  rule}
$\ptype{\Psi_0}{\Gamma^{1''}_2,\Gamma'_1}{P@d_1}{y}{c}{\alpha}{B}$.

{\bf By substitution of $y_{\alpha+1}$ for $y_{\alpha}$:}
\[\ptype{\Psi_0}{\Gamma^{1''}_2,\Gamma'_1}{[y^{\alpha+1}/y^{\alpha}]P@d_1}{y}{c}{\alpha+1}{B}.\]

{\bf By $\m{proc}$ rule and $(\star)$:} 
$\dagger\; \ctype{\Psi_0}{\Gamma^{1''}_2, \Gamma'_1}{\mb{proc}(y_{\alpha+1}[c], [y^{\alpha+1}/y^{\alpha}]P@d_1)}{y_{\alpha+1}{:}B[c]}.$

{\bf By $\otimes R$ and $\m{fwd}$ rule:}
$\ptype{\Psi_0}{x_\beta{:}A[c], y_{\alpha+1}{:}B[c]}{(\mathbf{send}\,x_\beta^{c}\,y_\alpha^{c};y_{\alpha}^{c}\leftarrow y_{\alpha+1}^{c})@c}{y}{c}{\alpha}{(A\otimes B)}$.

{\bf By $\m{msg}$, $\star$, and $\dagger$:}
\[\ptype{\Psi_0}{\Gamma^{1''}_2, \Gamma_1', x_\beta{:}A[c]}{ \mb{proc}(y_{\alpha+1}[c], [y^{\alpha+1}/y^{\alpha}]P@d_1) \mb{msg}(\mathbf{send}\,x_\beta^{c}\,y_\alpha^{c};y_{\alpha}^{c}\leftarrow y_{\alpha+1}^{c})}{y}{c}{\alpha}{(A\otimes B)}.\]

{\bf By configuration typing rules:}
$\ctype{\Psi_0}{\Gamma}{\mc{C}_1\mb{proc}(y_{\alpha+1}[c], [y^{\alpha+1}/y^{\alpha}]P@d_1) \mb{msg}(\mathbf{send}\,x_\beta^{c}\,y_\alpha^{c};y_{\alpha}^{c}\leftarrow y_{\alpha+1}^{c}) \mc{C}_2}{\Gamma'}$  
   
\end{description}
\end{description}
\end{proof}

{\bf Remark}: To keep the proofs concise we may use $\Psi$ instead of the term secrecy lattice $\Psi_0$, whenever we are clearly working with configurations defined in the run-time. Moreover, from now on, we write  $\Psi_0; \Delta \Vdash \mc{C} :: \Delta'$ also when $\mc{C}$ is a valid permutation of a configuration $\mc{C}'$ such that $\Psi_0; \Delta \Vdash \mc{C}' :: \Delta'$.

\section{Session logical relation}
\label{apx:sec:logical-relation}
This section introduces the session logical relation and supporting definitions.

\subsection{Open configuration transition}\label{apx:sec:logical-relation:defs}

\begin{definition}
    The configuration $\Delta \Vdash \mc{C}:: \Delta'$ is ready to send along $\Lambda \subseteq \Delta, \Delta'$ iff for all $y_\alpha{:}C \in \Lambda$, there is a message $\mathbf{msg}(M)$ in $\mc{C}$ sending along $y_\alpha$, i.e. $M$ is of the form $y_\alpha.k$, $\mathbf{send}\,x_\beta\, y_\alpha$, or $\mathbf{close}\, y_\alpha$.

    \noindent
    Similarly, the configuration $\Delta \Vdash \mc{C}:: \Delta'$ is ready to receive along $\Lambda \subseteq \Delta, \Delta'$ iff for all $y_\alpha{:}C \in \Lambda$, there is a process $\mathbf{proc}(z_\delta, P)$ in $\mc{C}$ waiting to receive along $y_\alpha$, i.e. $P$ is of the form $\mathbf{case}\,y_\alpha(\ell \Rightarrow Q_\ell)_{\ell \in I}$, $w \leftarrow \mathbf{recv} y_\alpha;Q$, or $\mathbf{wait}\,y_\alpha; Q$.

    \end{definition}
\begin{definition}
    The set $\mathbf{Out}(\Delta \Vdash K)$, is defined as all channels with the sending semantics in $\Delta, K$, i.e., $y_\alpha \in \mathbf{Out}(\Delta \Vdash K)$ iff for some positive type $T$, we have $y_\alpha{:}T \in \Delta$ or for some negative type $T$, we have $y_\alpha{:}T \in K$.

\noindent
    Similarly, the set $\mathbf{In}(\Delta \Vdash K)$, is defined as all channels with the receiving semantics in $\Delta, K$, i.e., $y_\alpha \in \mathbf{Out}(\Delta \Vdash K)$ iff for some negative type $T$, we have $y_\alpha{:}T \in \Delta$ or for some positive type $T$, we have $y_\alpha{:}T \in K$.
\end{definition} 

\begin{definition}
    $\mathit{dom}(\Delta)$ is a set defined inductively as:
    \[\begin{array}{lll}
        \mathit{dom}(\Delta, y_{\alpha}:A) &=& \mathit{dom}(\Delta) \cup \{y\}\\
        \mathit{dom}(\cdot) &= & \emptyset
    \end{array}  
    \]
\end{definition}
    \begin{definition}[Open configuration transitions]\label{apx:def:config-transitions}
    We provide some notations used in the logical relation and proofs.
    
    \begin{itemize}
    \item The notation $\mapsto^*$ refers to taking none or many steps with $\mapsto$. The notation $\mapsto^j$ refers to taking $j$ steps with $\mapsto$. 
    
    \item We write $\mc{D}\mapsto^{{*}_{\Upsilon}}\mc{D}'$  stating that $\mc{D} \mapsto^*\mc{D}'$ and $\mc{D}'$ is ready to send along $\Upsilon$.
        
    \item We write $\mc{D}\mapsto^{{*}_{\Upsilon; \Theta}}\mc{D}'$  stating that $\mc{D} \mapsto^*\mc{D}'$ and $\mc{D}'$ is ready to send along $\Upsilon$ and ready to receive along $\Theta$.
   
\end{itemize}
    \end{definition}

\subsection{Session Logical Relation}\label{apx:sec:logical-relation:relation}

\Cref{apx:fig:logical-relation} defines the logical relation for intuitionistic linear logic session types with general recursive types.

\begin{figure}
\input{figs/logical-relation.tex}
\caption{Session logical relation}
\label{apx:fig:logical-relation}
\end{figure}

\section{Noninterference}
\label{apx:sec:noninterference}

This section introduces the fundamental theorem for progress-sensitive noninterference for \reflang.
We first start with defining supporting notions and proofs of necessary lemmas.

\subsection{Up-to equivalence, projections, and splitting up closed configuration}\label{apx:sec:noninterference:equiv-proj}

The fundamental theorem \Cref{apx:thm:ftlr} is stated in terms of the logical equivalence
{\small$(\Delta_1 \Vdash \mc{D}_1:: x_\alpha {:}A_1[{c_1}]) \equiv^{\Psi_0}_{\xi} (\Delta_2 \Vdash \mc{D}_2:: y_\beta {:}A_2[{c_2}])$},
expressing that two open configurations $\mc{D}_1$ and $\mc{D}_2$ are being related by the term interpretation
when plugged into arbitrary closing contexts and observed for any number of message exchanges {\small$m$}.
Next we define this equivalence as well as typing context projections, on which the former relies.

\begin{definition}[Typing context projections]\label{apx:def:typing-proj}
Downward projection on security linear contexts and $K^s$ is defined as follows:
\[\begin{array}{lclc}
\Gamma, x_\alpha{:}T[c] \Downarrow \xi& \defeq& \Gamma \Downarrow \xi,  x_\alpha{:}T[c] & \m{if}\; c\sqsubseteq \xi  \\
     \Gamma, x_\alpha{:}T[c] \Downarrow \xi& \defeq& \Gamma \Downarrow \xi & \m{if}\; c \not \sqsubseteq \xi  \\
    \cdot \Downarrow \xi& \defeq& \cdot &  \\
     x_\alpha{:}T[c] \Downarrow \xi& \defeq&   x_\alpha{:}T[c] & \m{if}\; c\sqsubseteq \xi  \\
    x_\alpha{:}T[c] \Downarrow \xi& \defeq& \_{:}1[\top] & \m{if}\; c\not\sqsubseteq \xi \\
\end{array}\]

\end{definition}\label{apx:def:highprovider}

\begin{definition}[High provider and High client]
   \[\begin{array}{lll}
     \cdot \,\,\in \mb{H\text{-}Provider}^\xi(\cdot) \\[5pt]
     \mc{B} \in \mb{H\text{-}Provider}^{\xi}(\Gamma, x_\alpha{:}A[c])\,  & \m{iff} & c \not \sqsubseteq \xi  \, \m{and}\,
     \mc{B} = \mc{B}' \mc{T} \,\, \m{and}\,\, \mc{B}' \in \mb{H\text{-}Provider}^\xi(\Gamma)\,\,\m{and}\\
    &&\mc{T} \in \m{Tree}(\cdot \Vdash x_\alpha{:}A), \mb{or}\\
     &&c \sqsubseteq \xi  \, \m{and}\,
    \mc{B} \in \mb{H\text{-}Provider}^\xi(\Gamma)\\[10pt]
 
     \mc{T} \in \mb{H\text{-}Client}^\xi( x_\alpha{:}A[c])\,  & \m{iff} & c \not \sqsubseteq \xi  \, \m{and}\, \mc{T} \in \m{Tree}(x_\alpha{:}A \Vdash \_:1), \mb{or}\\
     &&c \sqsubseteq \xi  \, \m{and}\, \mc{B}= \cdot
   \end{array}\]
 \end{definition}

\begin{definition}[Equivalence of trees by the logical relation upto the observer level]\label{apx:def:noninterference}
We define the relation \[(\Gamma_1 \Vdash \mc{D}_1:: x_\alpha {:}A_1[{c_1}]) \equiv^{\Psi_0}_{\xi} (\Gamma_2 \Vdash \mc{D}_2:: y_\beta {:}A_2[{c_2}])\,\,\m{as}\]
{\small\[
\begin{array}{lll}
  {\mc{D}_1} \in \m{Tree}(\erasure{\Gamma_1} \Vdash x_\alpha {:}A_1)\,\, \m{and}\,\,{\mc{D}_2} \in \m{Tree}(\erasure{\Gamma_2} \Vdash y_\beta {:}A_2)\,\, \m{and}\\
  \Gamma_1 {\Downarrow} \xi= \Gamma_2 {\Downarrow} \xi=\Gamma\,\,\m{and}\,\, x_\alpha {:}A_1[{c_1}]{\Downarrow} \xi= y_\beta {:}A_2[{c_2}] {\Downarrow} \xi= K^s\,\,\m{and}\\
  \forall \, \mc{B}_1 \in \mb{H\text{-}Provider}^\xi(\Gamma_1).\, \forall \, \mc{B}_2 \in \mb{H\text{-}Provider}^\xi(\Gamma_2). \,\forall \mc{T}_1 \in \mb{H\text{-}Client}^\xi(x_\alpha {:}A_1[{c_1}]). \,\forall \mc{T}_2 \in \mb{H\text{-}Client}^\xi(y_\beta {:}A_2[{c_2}]).
\end{array}  
\]
}
\[\begin{array}{l}  \forall\,m.\,(\mc{B}_1\mc{D}_1\mc{T}_1, \mc{B}_2\mc{D}_2\mc{T}_2) \in \mc{E}\llbracket \erasure{\Gamma} \Vdash \erasure{K^s} \rrbracket^{m},\,
\m{and} \\[1pt]
\forall\,m.\,(\mc{B}_2\mc{D}_2\mc{T}_2, \mc{B}_1\mc{D}_1\mc{T}_1) \in \mc{E}\llbracket \erasure{\Gamma} \Vdash \erasure{K^s} \rrbracket^{m}.
\end{array}
\]
\end{definition}

\subsection{Quasi-running secrecy and relevant nodes}\label{apx:sec:noninterference:quasi-and-relevant}

The proof of the fundamental theorem \Cref{apx:thm:ftlr} relies on the notion of a relevant node, maintaining the invariant that the relevant nodes of the two program runs execute the same code.
The latter is guaranteed by \Cref{apx:lem:invariant}.
Next we define the notion of a relevant node, expressed locally for an asynchronous semantics.
The notion of a relevant node relies on the notion of quasi-running secrecy, also defined below.

\begin{definition}[Quasi-running secrecy]\label{apx:def:quasi}
In an open configuration $\Psi_0; \Gamma \Vdash \mathbb{C} :: \Gamma'$, the quasi-running secrecy of a message or process is determined
by its running secrecy, its process term, and the running secrecy of its parent. 
\begin{itemize}
    \item If the node is a process with a process term other than $\mb{recv}$ or $\mb{case}$, then its quasi-running secrecy is equal to its running secrecy.
    \item If the process term is of the form $\mb{case}\,y^c_{\alpha}(\ell \Rightarrow P_\ell)_{\ell \in L} @d_1$ or $x \leftarrow \mb{recv}\,y^c_{\alpha};P_{x} @d_1$, then its quasi-running secrecy is $d_1 \sqcup c$.
    \item If the node is a message of a negative type along channel $y^c_\alpha$, its quasi-running secrecy is $c$.
    \item  If the node is a message of a positive type along channel $y^c_\alpha$ and it has a parent with quasi-running secrecy $d_1$, its quasi-running secrecy is $d_1 \sqcup c$, otherwise its quasi-running secrecy is $c$.
\end{itemize}
The quasi-running secrecy can be determined by traversing the tree top to bottom.

\end{definition}

\begin{definition}[Relevant node]\label{apx:def:relevant-node}
Consider a configuration $\Gamma \Vdash \mathbb{D}:: K^s$ and observer level $\xi$.
A channel is relevant in $\mathbb{D}$ if
(1) it has a maximal secrecy level less than or equal to $\xi$, and
(2) it is either an observable channel or it shares a process or message with quasi-running secrecy less than or equal to $\xi$ with a relevant channel.
(A channel shares a process with another channel if they are siblings or one is the parent of another.)
A process or message is relevant if
(1) it has quasi-running secrecy less than or equal to $\xi$, and
(2) it has at least one relevant channel.
We denote the relevant processes and messages (\ie nodes) in $\mathbb{D}$ by $\cproj{\mathbb{D}}{\xi}$.
We write $\cproj{\mathbb{D}_1}{\xi}=_{\xi}\cproj{\mathbb{D}_2}{\xi}$ if the relevant nodes in $\mathbb{D}_1$ are identical to those in $\mathbb{D}_2$ up to renaming of channels with higher or incomparable secrecy than the observer.

\end{definition}

We can build the set of all relevant nodes in a configuration by traversing the tree bottom-up as explained in~\Cref{apx:sec:noninterference:quasi-and-relevant}.
\Cref{apx:def:relevant-node} provides us with a local guide to identify whether a process is relevant or not.
We note that if $K^s$ is observable, then by the tree invariant, every channel in $\mathbb{D}$ is relevant.

Now we can state and prove the invariant that the relevant nodes of the two program runs execute the same code.
The proof of the fundamental theorem \Cref{apx:thm:ftlr} crucially relies on this lemma.

\begin{lemma}[Keeping relevant nodes in sync]\label{apx:lem:invariant}
Consider $\ctype{\Psi_0}{\Gamma}{\mathbb{D}_i}{K^s}$ for $i \in \{1,2\}$ with the relevant nodes in $\mathbb{D}_1$ and $\mathbb{D}_2$ are identical, i.e.,
$\cproj{\mathbb{D}_1}{\xi}=\cproj{\mathbb{D}_2}{\xi}$, with $\Gamma \Downarrow \xi = \Gamma$ and $K^s \Downarrow \xi = K^s$.
If $\erasure{\mathbb{D}_1} \mapsto \erasure{\mathbb{D}'_1}$, then there exists a $\mathbb{D}'_2$ such that $\erasure{\mathbb{D}_2} \mapsto^{0,1} \erasure{\mathbb{D}'_2}$, i.e., $\erasure{\mathbb{D}_2}$ steps to $\erasure{\mathbb{D}'_2}$ in zero or one step, and $\cproj{\mathbb{D}_1}{\xi}=\cproj{\mathbb{D}_2}{\xi}$.
\end{lemma}

\begin{proof}
The proof is by cases on the possible  steps of ${\erasure{\mathbb{D}_1}}  \mapsto {\erasure{\mathbb{D}^1_1}}$. 
 In each case we prove that either the step does not change relevancy of any process in $\mathbb{D}_1$ or we can step $\erasure{\mathbb{D}_2}$ such that the same change of relevancy occurs in $\mathbb{D}_2$ too. Note that  in all cases, we get $({\mathbb{D}^1_1}; {\mathbb{D}^1_2}) \in \m{Tree}(\erasure{\Gamma} \Vdash \erasure{K^s})$ by the preservation of session-typed processes.\\ In the following cases, we annotate the post-step of the configuration to reflect the annotations required by the preservation of IFC-typed configurations.

 {\color{ForestGreen} \bf Case 1. $\mathbb{D}_1=\mathbb{D}'_1\mb{proc}(y_\alpha[c],  y^c_\alpha.k ;P @d_1) \mathbb{D}''_1$ and
\[ \erasure{\mathbb{D}'_1\mb{proc}(y_\alpha[c],  y^c_\alpha.k ;P @d_1) \mathbb{D}''_1}\mapsto \erasure{\mathbb{D}'_1\mb{proc}(y_{\alpha+1}[c], [y^c_{\alpha+1}/y^c_{\alpha}] P@d_1) \mb{msg}(y^c_\alpha.k) \mathbb{D}''_1}\]}
where $\mathbb{D}^1_1= \mathbb{D}'_1\mb{proc}(y_{\alpha+1}[c], [y^c_{\alpha+1}/y^c_{\alpha}] P @d_1) \mb{msg}(y^c_\alpha.k) \mathbb{D}''_1$.
We consider subcases based on relevancy of process offering along $y_\alpha[c]$: 
\begin{description}
\item {\bf Subcase 1. $\mb{proc}(y_\alpha[c],  y^c_\alpha.k ;P @d_1)$ is not relevant.} By inversion on the typing rules $d_1 \sqsubseteq c$. By definition either $d_1 \not\sqsubseteq \xi $ or none of the channels connected to $P$ including its offering channel $y_\alpha^c$ are relevant. In both cases neither $\mb{proc}(y_{\alpha+1}[c], [y^c_{\alpha+1}/y^c_{\alpha}] P@d_1)$, nor  $\mb{msg}(y^c_\alpha.k)$ are relevant in the post step. Note that from $d_1 \sqsubseteq c$ and $d_1 \not\sqsubseteq \xi$, we get $c \not\sqsubseteq \xi $. Channel $y^c_\alpha$ is not relevant in the pre-step, and both $y^c_\alpha$ and $y^c_{\alpha+1}$ are not relevant in pre-step and post-step configurations. Every not relevant resource of $\mb{proc}(y^c_\alpha,  y^c_\alpha.k ;P @d_1)$ will remain irrelevant in the post-step too.

In this subcase, it is enough to show 
\[\cproj{\mathbb{D}'_1\mb{proc}(y_{\alpha+1}[c], [y^c_{\alpha+1}/y^c_{\alpha}] P @d_1) \mb{msg}(y^c_\alpha.k) \mathbb{D}''_1}{\xi} \meq \cproj{\mathbb{D}'_1 \mathbb{D}''_1}{\xi}\meq \cproj{\mathbb{D}_1}{\xi}\meq \cproj{\mathbb{D}_2}{\xi}.\]
To prove this we need two observations:
\begin{itemize}
    \item Neither $\mb{proc}(y_{\alpha+1}[c], [y^c_{\alpha+1}/y^c_{\alpha}] P@d_1)$ nor  $\mb{msg}(y^c_\alpha.k)$ are relevant and they will be dismissed by the projection. (As explained above.)
    \item Replacing $\mb{proc}(y_\alpha[c],  y^c_\alpha.k ;P @d_1)$ with these two nodes, does not affect relevancy of the rest of processes in $\mathbb{D}'_1\mathbb{D}''_1$. Relevancy of processes in $\mathbb{D}''_1$ remains intact since $y^c_\alpha$ and $y^c_{\alpha+1}$ are irrelevant. 
    
    The relevancy of processes in $\mathbb{D}'_1$ remains intact too as we replace their irrelevant root with another irrelevant process. However, we need to be careful about the changes in the quasi-running secrecy of a process and their effect on its (grand)children. The quasi-running secrecy of the process offering along $y^c_{\alpha+1}$ may be higher or incomparable to $d_1$ based on the code of $P$ (if it starts with a $\mb{recv}$ or $\mb{case}$). This is of significance only if $d_1 \sqsubseteq \xi$, and in the pre-step the process has a relevant channel $x:\_[d]$ as its resource  where $d\sqsubseteq \xi$.
    But by the assumption of the subcase, either $d_1 \not \sqsubseteq \xi$ or  $x:\_[d]$ is not a relevant channel in the pre-step. 
\end{itemize}

This completes the proof.

\item {\bf Subcase 2. $\mb{proc}(y_\alpha[c],  y^c_\alpha.k ;P @d_1)$ is relevant.}
By assumption ($\cproj{\mathbb{D}_1}{\xi}\meq\cproj{\mathbb{D}_2}{\xi}$) we have :
\[\mathbb{D}_2=\mathbb{D}'_2\mb{proc}(y_\alpha[c],  y^c_\alpha.k ;P@ d_1) \mathbb{D}''_2.\]
and
{\color{red} \[\mathbb{D}_2\mapsto \mathbb{D}'_2\mb{proc}(y_{\alpha+1}[c], [y^c_{\alpha+1}/y^c_\alpha]P @d_1) \mb{msg}(y^c_\alpha.k)\mathbb{D}''_2\]}
and $\mathbb{D}^1_2= \mathbb{D}'_2\mb{proc}(y_{\alpha+1}[c], [y^c_{\alpha+1}/y^c_\alpha]P @d_1) \mb{msg}(y^c_\alpha.k; )\mathbb{D}''_2$.
 
We need to show: 
{\small\[\cproj{\mathbb{D}'_1\mb{proc}(y_{\alpha+1}[c], [y^c_{\alpha+1}/y^c_{\alpha}] P @d_1) \mb{msg}(y^c_\alpha.k) \mathbb{D}''_1}{\xi} \meq \cproj{\mathbb{D}'_2\mb{proc}(y_{\alpha+1}[c], [y^c_{\alpha+1}/y^c_\alpha]P @d_1) \mb{msg}(y^c_\alpha.k)\mathbb{D}''_2}{\xi}.\]}

\begin{itemize}
\item If $c\not\sqsubseteq \xi$, then $\mb{msg}(y^c_\alpha.k)$  is not relevant in both runs, and will be dismissed by the projections. Moreover, in this case $y^c_\alpha$ is not relevant in the pre-step and post-step configurations. Thus the relevancy of processes in $\mathbb{D}''_1$ and $\mathbb{D}''_2$ will remain intact.

\item If $c \sqsubseteq \xi$, then $y^c_\alpha$ is relevant in the pre-step in both runs. We need to consider the possibility of change in quasi-running secrecy in the post-step. The quasi-running secrecy of the processes offering along $y^c_{\alpha+1}$ may increase based on their code (if the code of $P$ starts with a $\mb{recv}$ or $\mb{case}$). But in this case, it cannot become irrelevant since by the tree invariant it is always bounded by the max secrecy of the offering channel, $c \sqsubseteq \xi$. 
Thus, in the post-step of both runs, $y^c_{\alpha+1}$ is relevant. 
Relevancy of message $\mb{msg}(y^c_\alpha.k)$ in the post-steps of the first and second run is determined by the quasi-running secrecies ($d$ and $d'$) of their parents ($X$ and $X'$) in $\mathbb{D}''_1$ and $\mathbb{D}''_2$. If $d\sqsubseteq \xi$, then the parent ($X$) is relevant in the first run and by assumption is identical to a relevant $X'$ in the second run. Thus messages $\mb{msg}(y^c_\alpha.k)$ are relevant in both runs and $y^c_\alpha$ is relevant in the post-step too. The same holds when $d'\sqsubseteq \xi$. Otherwise, in both runs the quasi-running secrecy of the parent is higher than or incomparable to the observer (the parents are both irrelevant). Thus messages $\mb{msg}(y^c_\alpha.k)$ are not relevant in the post step of both runs, and will be dismissed by the projections. The channel $y^c_\alpha$ will be irrelevant in the post-step of both runs too. However, this does not affect the processes in $\mathbb{D}''_1$ and $\mathbb{D}''_2$ as the parents of messages ($X$ and $X'$) are already irrelevant in the pre-step.
\end{itemize}

Finally, we need to show that projections of $\mathbb{D}'_1$ and $\mathbb{D}'_2$ are equal in the post-step too. 

Consider the resources of the process that offeres along $y_\alpha[c]$. By our writing convention, they are all in $\mathbb{D}'_1$ and $\mathbb{D}'_2$: (i) Those resources offered by $\mathbb{D}'_1$ and $\mathbb{D}'_2$ with max secrecies higher than or incomparable to the observer $\xi$  in the pre-step will remain higher than or incomparable to the observer and thus irrelevant in the post-step too. (ii) Those resources with max secrecies lower than the observer $\xi$ in the pre-step, i.e., a sub-tree $\mathbb{T}_i$ in $\mathbb{D}'_i$ offering along a channel $w[c']$ where $c' \sqsubseteq \xi$, are relevant in the pre-step and will remain  relevant in the post-step (again, in this case, by the tree invariant, the quasi-running secrecy of the processes offering along $y^c_{\alpha+1}$ cannot increase in the post-step).

\end{description}

 {\color{ForestGreen}\bf Case 2. $\mathbb{D}_1=\mathbb{D}'_1\mb{proc}(x_\beta[d],  y^c_\alpha.k ;P @d_1) \mathbb{D}''_1$ and
 \[ \mathbb{D}'_1\mb{proc}(x_\beta[d],  y^c_\alpha.k ;P @d_1) \mathbb{D}''_1\mapsto \mathbb{D}'_1\mb{msg}(y^c_\alpha.k)\mb{proc}(x_\beta[d], [y^c_{\alpha+1}/y^c_{\alpha}]P @d_1)  \mathbb{D}''_1\]}
We consider subcases based on relevancy of the process offering along $x_d^\beta$: 
\begin{description}
\item {\bf Subcase 1. $\mb{proc}(x_\beta[d],  y^c_\alpha.k ;P @d_1)$ is irrelevant.} By inversion on the typing rules, $d_1 \sqsubseteq c \sqsubseteq d$. By definition either $d_1 \not \sqsubseteq \xi$  or none of the channels connected to $P$ including $y^c_\alpha$ and $x^d_\beta$ are relevant. In both cases, neither $\mb{msg}(y^c_\alpha.k)$ nor $\mb{proc}(x_\beta[d], [y^c_{\alpha+1}/y^c_{\alpha}]P @d_1)$ are relevant. Channel $x^d_\beta$  is irrelevant in the pre-step and post-step configurations.

Channel $y^c_\alpha$  is irrelevant in the pre-step, and both $y^c_\alpha$ and $y^c_{\alpha+1}$ are irrelevant in pre-step and post-step configurations. Every other irrelevant resource of $\mb{proc}(x_\beta[d],  y^c_\alpha.k ;P @d_1)$ will remain irrelevant in the post-step too.

In this subcase, our goal is to show 
\[\cproj{\mathbb{D}'_1\mb{msg}(y^c_\alpha.k)\mb{proc}(x_\beta[d], [y^c_{\alpha+1}/y^c_{\alpha}]P @d_1) \mathbb{D}''_1}{\xi} \meq \cproj{\mathbb{D}'_1 \mathbb{D}''_1}{\xi}\meq \cproj{\mathbb{D}_1}{\xi}\meq \cproj{\mathbb{D}_2}{\xi}\]

With a same argument as in {\bf \color{ForestGreen}Case 1.} {\bf Subcase 1.}, we can prove that the relevancy of processes in $\mathbb{D}'_1$ and $\mathbb{D}''_1$ remain intact.

\item {\bf Subcase 2. $\mb{proc}(x_\beta[d],  y^c_\alpha.k ;P @d_1)$ is relevant.} 
By assumption that $\cproj{\mathbb{D}_1}{\xi}\meq\cproj{\mathbb{D}_2}{\xi}$, and definition of $\meq$: 
\[\mathbb{D}_2=\mathbb{D}'_2\mb{proc}(x_\beta[d],  y^c_\alpha.k ;P @d_1) \mathbb{D}''_2,\] and we have {\color{red}\[\mathbb{D}_2 \mapsto \mathbb{D}'_2\mb{msg}(y^c_\alpha.k)\mb{proc}(x_{\beta}[d], [y^c_{\alpha+1}/y^c_\alpha]P @d_1) \mathbb{D}''_2\]}

It remains to show 
{\small\[\cproj{\mathbb{D}'_1\mb{msg}(y^c_\alpha.k)\mb{proc}(x_{\beta}[d], [y^c_{\alpha+1}/y^c_\alpha]P @d_1) \mathbb{D}''_1}{\xi} \meq \cproj{\mathbb{D}'_2\mb{msg}(y^c_\alpha.k)\mb{proc}(x_{\beta}[d], [y^c_{\alpha+1}/y^c_\alpha]P @d_1) \mathbb{D}''_2}{\xi}.\]}

\begin{itemize}
  \item If $d \sqsubseteq \xi$, then $x_\beta$ is relevant in the pre-steps of both runs and  remains relevant in the post-steps. Even if the quasi-running secrecy increases based on the code of $P$, by the tree invaiant it will be less than or equal to $d \sqsubseteq \xi$, and thus remains observable. As a result, the relevancy of processes in $\mathbb{D}''_i$ remain intact. Moreover, every resource of the processes in $\mathbb{D}'_i$ is relevant in the pre-steps and post-steps of both runs.
  \item If $d \not \sqsubseteq \xi$, then $x_\beta$ is irrelevant in the pre-step and remains irrelevant in the post-steps of both runs, and thus the relevancy of processes in $\mc{D}''_i$ remain intact. It remains to show that the projections of $\mc{D}'_1$ and $\mc{D}'_2$ in post-steps are still equal.

  We first condider the trees offered along $y^c_\alpha$ in both runs. The quasi running secrecy of the negative message $\mb{msg}(y^c_\alpha.k)$ is $c$ in the post-steps. 
\begin{itemize}
\item If $c\sqsubseteq \xi$ then the same message exists in both runs, and the tree offered along $y^c_\alpha$ is relevant in the pre-steps and post-steps.

\item If $c \not \sqsubseteq \xi$ then the message is irrelevant in both runs. $y^c_\alpha$ is irrelevant in the pre-steps of both runs and remain irrelevant in the post-steps too. 
\end{itemize}

Finally, we need to show that projections of the rest of the trees in $\mathbb{D}'_1$ and $\mathbb{D}'_2$ are equal in the post-step too. 
Consider the resources of the process that offeres along $x_\beta[d]$. By our writing convention, they are all in $\mathbb{D}'_1$ and $\mathbb{D}'_2$. (i) Those resources offered by $\mathbb{D}'_1$ and $\mathbb{D}'_2$ with max secrecies higher than or incomparable to the observer $\xi$  in the pre-step will remain higher than or incomparable to the observer and thus irrelevant in the post-step too. (ii) Those resources with max secrecies lower than the observer $\xi$ in the pre-step, i.e., a sub-tree $\mathbb{T}_i$ in $\mathbb{D}'_i$ offering along a channel $w[c']$ where $c' \sqsubseteq \xi$, are relevant in the pre-step and thus $\mathbb{T}_1= \mathbb{T}_2$. The quasi-running secrecy of the process offering along $x_\beta[d]$ after stepping is the same in both runs. If it is still observable, the sub-tree $\mathbb{T}_1= \mathbb{T}_2$ remains relevant in the post-step of both runs. Otherwise, if the quasi-running secrecies become nonobservable, $\mathbb{T}_1= \mathbb{T}_2$ may become irrelevant. However, since $\mathbb{T}_1= \mathbb{T}_2$ and by the tree invariant it only consists of low-secrecy channels, $\mathbb{T}_1$ becomes irrelevant in the post-step of the first run iff $\mathbb{T}_2$ becomes irrelevant in the post-step of the second run. Which completes the proof of this case.
\end{itemize}

\end{description}
{\color{ForestGreen} \bf Case 3. $\mathbb{D}_1=\mathbb{D}'_1\mathbb{T}_1\mb{proc}(y_\alpha[c], \mb{send} x^d_\beta\, y^c_\alpha @d_1) \mathbb{D}''_1$ and
{\small\[\erasure{\mathbb{D}'_1\mathbb{T}_1\mb{proc}(y_\alpha[c], \mb{send} x^c_\beta\, y^c_\alpha ;P @d_1) \mathbb{D}''_1}\mapsto \erasure{\mathbb{D}'_1 \mathbb{T}_1\mb{proc}(y_{\alpha+1}[c], [y^c_{\alpha+1}/y^c_{\alpha}]P @d_1)\mb{msg}(\mb{send} x^c_\beta\, y^c_\alpha)  \mathbb{D}''_1}\]}}
such that $\Gamma=\Gamma'\Gamma_t$ and $\Psi_0;\Gamma' \Vdash \mathbb{D}'_1:: \Gamma''$ and  $\Psi_0;\Gamma_t \Vdash \mathbb{T}_1:: (x_{\beta}{:}A[c])$ and $\Gamma'', x_{\beta}{:}A[c] \Vdash  \mb{proc}(y_\alpha[c], \mb{send} x^c_\beta\, y^c_\alpha ;P @d_1):: (y_\alpha{:}A\otimes B[c])$. (In the case that $\mathbb{T}_1$ is empty, we have $\Gamma_t= x_\beta{:}A[c]$.)

We consider subcases based on relevancy of process offering along $y^c_\alpha$: 
\begin{description}
\item {\bf Subcase 1. $\mb{proc}(y_c^\alpha, \mb{send} x^d_\beta\, y^c_\alpha ;P @d_1)$ is not relevant.} 

By inversion on the typing rules $d_1 \sqsubseteq c$. By definition either $d_1 \not \sqsubseteq \xi $ or none of the channels connected to $P$ including $y^c_\alpha$ and $x^c_\beta$ are relevant. In both cases, neither $\mb{proc}(y_{\alpha+1}[c], [y^c_{\alpha+1}/y^c_{\alpha}]P @d_1)$ nor $\mb{msg}(\mb{send} x^c_\beta\, y^c_\alpha)$ are relevant.

Channel $y^c_\alpha$  is irrelevant in the pre-step, and both $y^c_\alpha$ and $y^c_{\alpha+1}$ are irrelevant in pre-step and post-step configurations. In this subcase, our goal is to show

\[\cproj{\mathbb{D}'_1\mathbb{T}_1\mb{proc}(y_{\alpha+1}[c], [y^c_{\alpha+1}/y^c_{\alpha}]P @d_1) \mb{msg}(\mb{send} x^c_\beta\, y^c_\alpha) \mathbb{D}''_1}{\xi} \meq \cproj{\mathbb{D}'_1 \mathbb{T}_1 \mathbb{D}''_1}{\xi}\meq \cproj{\mathbb{D}_1}{\xi}\meq \cproj{\mathbb{D}_2}{\xi}.\]

We first prove that the relevancy status of  $\mathbb{T}_1$ remain intact. Note that all channels in $\mathbb{T}_1$, except $x^c_\beta$ have the same connections in the pre-step and post-step. So it is enough to consider the changes made to the tree rooted at $x^c_\beta$.  We show that the relevancy of $x^c_\beta$ remains intact after the step: 

\begin{itemize}
\item If $c \not \sqsubseteq \xi$, then $x_\beta^c$ is irrelevant in the pre-step, and remains irrelevant in the post-step. 
Moreover, the message is irrelevant in the post-step since its quasi-running secrecy of is higher than or incomparable to the observer.
\item If $c \sqsubseteq \xi$, then by $d_1\sqsubseteq c$, we have $d_1 \sqsubseteq \xi$, and thus both $x_c^\beta$ and $y^c_\alpha$ must be irrelevant in the pre-step. In the post-step, the parent of the message has to be irrelevant, otherwise $y^c_\alpha$ would be relevant in the pre-step. Since the parent of the message is irrelevant, we know that $x^c_\beta$ remains irrelevant in the post-step.  As a result, the message with three irrelebant channels connected to it is irrelevant in the post-step. 
\end{itemize}
In both cases, the relevancy status of $\mathbb{T}_1$ remains intact: the tree $\mathbb{T}_1$ rooted at $x^c_\beta$ offers to an irrelevant node before and after the step. With a same argument as in {\bf \color{ForestGreen}Case 1.} {\bf Subcase 1.} and the one given for $\mathbb{T}_1$, we can prove that the relevancy status of processes in $\mathbb{D}'_1$ and $\mathbb{D}''_1$ remain intact.

\item {\bf Subcase 2. $\mb{proc}(y^c_\alpha, \mb{send} x^c_\beta\, y^c_\alpha ;P @d_1)$ is relevant.}
By assumption that $\cproj{\mathbb{D}_1}{\xi}\meq\cproj{\mathbb{D}_2}{\xi}$, and definition of $\meq$:
\[\mathbb{D}_2=\mathbb{D}'_2\mathbb{T}''_2\mb{proc}(y_\alpha[c], \mb{send} x^c_\beta\, y^c_\alpha ;P @d_1)\mathbb{D}''_2\]
 
 We have {\color{red}\[\erasure{\mathbb{D}_2} \mapsto \erasure {\mathbb{D}'_2 \mathbb{T}''_2\mb{proc}(y_{\alpha+1}[c], [y^c_{\alpha+1}/y^c_\alpha]P @d_1) \mb{msg}(\mb{send} x^c_\beta\, y^c_\alpha)\mathbb{D}''_2}\]}
 If $c \not\sqsubseteq \xi$, then $\mb{msg}(\mb{send} x^c_\beta\, y^c_\alpha)$ is not relevant in both runs, and will be dismissed by the projections. Moreover,  $y^c_\alpha$ is not relevant in the pre-step and  post-step configurations. Thus the relevancy of processes in $\mathbb{D}''_1$ and $\mathbb{D}''_2$ will remain intact. Moreover, in this case $x^c_\beta$  is irrelevant in both pre-steps and post-steps. Which means that relevancy status of $\mathbb{T}''_1$ and $\mathbb{T}''_2$ remains intact.

If $ c\sqsubseteq \xi$, then $y^c_\alpha$ is relevant in the pre-step in both runs. We also know that $x_\beta$ is relevant in the pre-step and $\mathbb{T}''_1=\mathbb{T}''_2$ are relevant in the pre-step. In the post-step, $y^c_{\alpha+1}$ is relevant in both runs. Relevancy of messages $\mb{msg}(\mb{send} x^c_\beta\, y^c_\alpha)$  in the post-steps are determined by the quasi-running secrecy ($d'$ and $d''$) of their parents ($X$ and $X'$) in $\mathbb{D}''_1$ and $\mathbb{D}''_2$. If $d'\sqsubseteq \xi$, then the parent ($X$) is relevant in the first run and by assumption is equal to a relevant $X'$ in the second run. Thus messages $\mb{msg}(\mb{send} x^c_\beta\, y^c_\alpha)$ are relevant in both runs, $y^c_\alpha$ are relevant in the post-step, and trees $\mathbb{T}''_1=\mathbb{T}''_2$ and their offering channels $x^c_\beta$ are relevant in the post-steps too.  The same holds when $d'\sqsubseteq \xi$.

Otherwise, in both runs the quasi-running secrecy of the parent is higher than or incomparable to the observer level (the parents are both irrelevant). Thus messages $\mb{msg}(\mb{send} x^c_\beta\, y^c_\alpha)$ are not relevant in the post step of both runs, and will be dismissed by the projections. The channels $y^c_\alpha$ will be irrelevant in the post-step too. However, this does not affect the processes in $\mathbb{D}''_1$ and $\mathbb{D}''_2$ as the parents of messages ($X$ and $X'$) are already irrelevant in the pre-step. The channels $x^c_\beta$ both become irrelevant in the post-steps. However, we still have $\cproj{\mathbb{T}''_1}{\xi} =\cproj{\mathbb{T}''_2}{\xi}$ as they have the same type and their parents have the same quasi-running secrecy.

We show that projections of $\mathbb{D}'_1$ and $\mathbb{D}'_2$ are equal in the post-step too. The  resources with secrecy level higher than or incomparable to the observer level offered along $\mathbb{D}'_1$ and $\mathbb{D}'_2$ in the pre-step will remain higher than or incomparable to and thus irrelevant in the post-step too. For a relevant resources ($w$) offered along $\mathbb{D}_1$ and $\mathbb{D}_2$, we need to consider the change in quasi-running secrecy as in {\bf \color{ForestGreen}Case 1.} {\bf Subcase 2.} Moreover, we need to consider the scenario that a relevant resource ($w^{c'}$) in the pre-step loses its relevancy in the post-step because the channel offered along $x_\beta^c$ is transferred to the message. This case only happens if $c', c \sqsubseteq \xi$ and thus the trees $\mathbb{T}_1$ and $\mathbb{T}_2$ offered along $w^{c'}$ is present in both runs and $\mathbb{T}_1=\mathbb{T}_2$. We know that $w^{c'}$ is irrelevant in the post-step of both runs, and the quasi-running secrecy of the processes using the resource $w^{c'}$ in both runs are the same.

The relevant and irrelevant processes in $\mathbb{D}''_i$ remain intact.

\end{description}

{\color{ForestGreen} 
\bf Case 4. $\mathbb{D}_1=\mathbb{D}'_1\mathbb{T}''_1\mb{proc}(w_\eta[c'], \mb{send} x^c_\beta\, y^c_\alpha @d_1) \mathbb{D}''_1$ and
{\small\[\erasure{\mathbb{D}'_1\mathbb{T}''_1\mb{proc}(w_\eta[c'], \mb{send} x^c_\beta\, y^c_\alpha ;P @d_1) \mathbb{D}''_1}\mapsto \erasure{\mathbb{D}'_1\mathbb{T}''_1\mb{msg}(\mb{send} x^c_\beta\, y^c_\alpha)\mb{proc}(w_\eta[c'], [y^c_{\alpha+1}/y^c_{\alpha}]P @d_1)  \mathbb{D}''_1}\]}}

such that $\Gamma=\Gamma_1\Gamma_2$ and $\Gamma_1 \Vdash \mathbb{D}'_1:: \Gamma', y_\alpha{:}(A\multimap B)[c]$ and  $\Gamma_2 \Vdash \mathbb{T}''_1:: (x_{\beta}{:}A[c])$ and \[\Gamma', y_\alpha{:}(A\multimap B)[c], x_{\beta}{:}A[c] \Vdash  \mb{proc}(w_\eta[c'], \mb{send} x^c_\beta\, y^c_\alpha ;P @d_1):: (w_\eta{:}C[c']).\] In the case where $\mathbb{T}''_1$ is empty we have  $\Gamma_2=x_{\beta}{:}A[c]$. We proceed by considering subcases based on relevancy of the process offering along $w_\eta[c']$:

\item {\bf Subcase 1. $ \mb{proc}(w_\eta[c'], \mb{send} x^c_\beta\, y^c_\alpha ;P @d_1)$ is not relevant.} By inversion on the typing rules $ d_1 \sqsubseteq c \sqsubseteq c'$. By definition either $d_1 \not \sqsubseteq \xi$  or none of the channels connected to $P$ including $y^c_\alpha$, and $x^c_\beta$ are relevant. In both cases, neither $\mb{msg}(\mb{send} x^c_\beta\, y^c_\alpha)$ nor $\mb{proc}(w_\eta[c'], [y^c_{\alpha+1}/y^c_{\alpha}]P @d_1)$  are relevant. Channel $w^{c'}_\eta$  is irrelevant in the pre-step and post-step configurations.

Channel $y^c_\alpha$  is irrelevant in the pre-step, and both $y^c_\alpha$ and $y^c_{\alpha+1}$ are irrelevant in pre-step and post-step configurations. Every other irrelevant resource of the process in the pre-step will remain irrelevant in the post-step too. See {\bf \color{ForestGreen}Case 3.} {\bf Subcase 1.} for the discussion on the relevancy of $\mathbb{T}''_1$.

\[\cproj{\mathbb{D}'_1\mathbb{T}''_1 \mb{msg}(\mb{send} x^c_\beta\, y^c_\alpha) \mb{proc}(w_\eta[c'], [y^c_{\alpha+1}/y^c_{\alpha}]P @d_1) \mathbb{D}''_1}{\xi} \meq \cproj{\mathbb{D}'_1 \mathbb{D}''_1}{\xi}\meq \cproj{\mathbb{D}_1}{\xi}\meq \cproj{\mathbb{D}_2}{\xi}\]

\item {\bf Subcase 2. $\mb{proc}(w^{c'}_\eta, \mb{send} x^c_\beta\, y^c_\alpha ;P @d_1)$ is relevant.}
By assumption that $\cproj{\mathbb{D}_1}{\xi}\meq\cproj{\mathbb{D}_2}{\xi}$, and definition of $\meq$:

\[\mathbb{D}_2=\mathbb{D}'_2\mathbb{T}''_2\mb{proc}(w_\eta[c'], \mb{send} x^c_\beta\, y^c_\alpha ;P @d_1)\mathbb{D}''_2\]
 
 We have {\color{red}\[\erasure{{\mathbb{D}_2}} \mapsto\erasure{\mathbb{D}'_2\mathbb{T}''_2\mb{msg}(\mb{send} x^c_\beta\, y^c_\alpha) \mb{proc}(w_\eta[c'], [y^c_{\alpha+1}/y^c_\alpha]P @d_1)\mathbb{D}''_2}\]}
 
 With the same argument as in {\bf \color{ForestGreen}Case 2.} {\bf Subcase 2.} we can show that relevancy of $\mathbb{D}''_i$ remains intact.
 
 For $\mathbb{T}''_i$, we argue that if $c \sqsubset \xi$, then $\mathbb{T}''_1=\mathbb{T}''_2$ is relevant in the pre-step and remains relevant in the post-step too. If $c \not \sqsubseteq \xi$, then the relevancy of $\mathbb{T}''_i$ remain intact from pre-step to post-step. 
 See {\bf \color{ForestGreen}Case 3.} {\bf Subcase 2.} for a more detailed discussion on transferring a tree via message.
 
The discussion on relevancy of $\mathbb{D}'_i$ is similar to the previous cases.

{\color{ForestGreen} \bf Case 5. $\mathbb{D}_1=\mathbb{D}'_1\mb{msg}(y^c_\alpha.k) \mb{proc}(x_\beta[d], \mb{case} \,y^c_\alpha (\ell \Rightarrow P_\ell)_{\ell \in L} @d_1)  \mathbb{D}''_1$ and
{\small\[ \erasure{\mathbb{D}'_1\mb{msg}(y^c_\alpha.k)\mb{proc}(x_\beta[d], \mb{case} \,y^c_\alpha (\ell \Rightarrow P_\ell)_{\ell \in L} @d_1)  \mathbb{D}''_1} \mapsto \erasure{\mathbb{D}'_1\mb{proc}(w_{\beta}[d], [y^c_{\alpha+1}/y^c_{\alpha}] P_k@(d_1 \sqcup c)) \mathbb{D}''_1}\]}}
We consider sub-cases based on relevancy of process offering along $w^d_\beta$. Observe that  $y^c_\alpha$  is relevant if an only if $v^c$ is relevant, since they share a message of secrecy $c$.
\begin{description}
\item {\bf Subcase 1. $\mb{proc}(x_\beta[d], \mb{case} \,y^c_\alpha (\ell \Rightarrow P_\ell)_{\ell \in L} @d_1)$ is not relevant.} By definition either $d_1 \sqcup c \not \sqsubseteq \xi  $ or none of the channels connected to $P$ including its offering channel $y_\alpha^c$ are relevant.  In both cases the messages $\mb{msg}(y^c_\alpha.k)$ and the continuation process $\mb{proc}(w_{\beta}[d], [y^c_{\alpha+1}/y^c_{\alpha}] P_k@ (d_1 \sqcup c))$ are not relevant either. It is straightforward to see that   
\[\cproj{\mathbb{D}'_1\mb{proc}(w_{\beta}[d], [y^c_{\alpha+1}/y^c_{\alpha}] P_k@ (d_1\sqcup c))\mathbb{D}''_1}{\xi} \meq \cproj{\mathbb{D}'_1 \mathbb{D}''_1}{\xi}\meq \cproj{\mathbb{D}_1}{\xi}\meq \cproj{\mathbb{D}_2}{\xi}.\]

\item {\bf Subcase 2. $\mb{proc}(x_\beta[d], \mb{case} \,y^c_\alpha (\ell \Rightarrow P_\ell)_{\ell \in L} @d_1)$ is relevant.}
By definition of relevancy, we get that $c \sqcup d_1\sqsubseteq \xi$ and thus $y^c_\alpha$ is relevant. This means that $\mb{msg}(y^c_\alpha.k)$ is relevant too.
By assumption that $\cproj{\mathbb{D}_1}{\xi}\meq\cproj{\mathbb{D}_2}{\xi}$, and definition of $\meq$: \[\mathbb{D}_2=\mathbb{D}'_2\mb{msg}(y^c_\alpha.k)\mb{proc}(x_\beta[d],   \mb{case}\,y^c_\alpha( \ell \Rightarrow {P_\ell})_{\ell\in L} @d_1)  \mathbb{D}''_2,\] such that ${P_\ell}$ is equal to $P_\ell$ modulo renaming of some channels with secrecy level higher than or incomparable to the observer.

We have {\color{red}\[{\erasure{\mathbb{D}_2}} \mapsto \erasure{\mathbb{D}'_2\mb{proc}(x_{\beta}[d], [y^c_{\alpha+1}/y^c_\alpha]{P_k} @(d_1 \sqcup c)) \mathbb{D}''_2}\]}
This completes the proof of the subcase as we know that the relevancy of channels in $\mathbb{D}'_i$ and $\mathbb{D}''_i$ remain intact.

\end{description}

{\color{ForestGreen} \bf Case 6. $\mathbb{D}_1=\mathbb{D}'_1\mb{proc}(y_\alpha[c], \mb{case} \,y^c_\alpha (\ell \Rightarrow P_\ell)_{\ell \in L} @d_1) \mb{msg}(y^c_\alpha.k) \mathbb{D}''_1$ and
\[ \erasure{\mathbb{D}'_1\mb{proc}(y_\alpha[c], \mb{case} \,y^c_\alpha (\ell \Rightarrow P_\ell)_{\ell \in L} @d_1) \mb{msg}(y^c_\alpha.k) \mathbb{D}''_1} \mapsto \erasure{\mathbb{D}'_1\mb{proc}(x_1[c], [y^c_{\alpha+1}/y^c_{\alpha}] P_k@c) \mathbb{D}''_1}\]}
We consider sub-cases based on relevancy of process offering along $y^c_\alpha$. Observe that $y^c_\alpha$  is relevant if an only if $x_1^c$ is relevant, since they share a message of secrecy $c$.
\begin{description}
\item {\bf Subcase 1. $\mb{proc}(y_\alpha[c], \mb{case} \,y^c_\alpha (\ell \Rightarrow P_\ell)_{\ell \in L} @d_1)$ is not relevant.} By definition either $d_1 \sqcup c = c \not \sqsubseteq \xi$ or none of the channels connected to $P$ including its offering channel $y_\alpha^c$ are relevant.  In both cases, means that $y_{\alpha+1}^c$ is not relevant and $\mb{msg}(y^c_\alpha.k)$ is not relevant either.
Moreover the continuation process $\mb{proc}(y_{\alpha+1}[c], [y_{\alpha+1}^c/y^c_{\alpha}] P_k@c)$ won't be relevant. And   
\[\cproj{\mathbb{D}'_1\mb{proc}(y_{\alpha+1}[c], [y_{\alpha+1}^c/y^c_{\alpha}] P_k@c)\mathbb{D}''_1}{\xi} \meq \cproj{\mathbb{D}'_1 \mathbb{D}''_1}{\xi}\meq \cproj{\mathbb{D}_1}{\xi}\meq \cproj{\mathbb{D}_2}{\xi}\]
\item {\bf Subcase 2. $\mb{proc}(y_\alpha[c], \mb{case} \,y^c_\alpha (\ell \Rightarrow P_\ell)_{\ell \in L} @d_1)$ is relevant.}
By definition of relevancy, we get that $c \sqcup d_1=c \sqsubseteq \xi$ and thus $y^c_\alpha$ is relevant. This means that $\mb{msg}(y^c_\alpha.k)$ is relevant too.
By assumption that $\cproj{\mathbb{D}_1}{\xi}\meq\cproj{\mathbb{D}_2}{\xi}$, and definition of $\meq$: \[\mathbb{D}_2=\mathbb{D}'_2\mb{proc}(y_\alpha[c],   \mb{case}\,y^c_\alpha( \ell \Rightarrow {P_\ell})_{\ell\in L} @d_1) \mb{msg}(y^c_\alpha.k) \mathbb{D}''_2,\] such that ${P_\ell}$ is equal to $P_\ell$ modulo renaming of some channels with secrecy level higher than or incomparable to the observer.

We have {\color{red}\[\erasure{\mathbb{D}_2} \mapsto  \erasure{\mathbb{D}'_2\mb{proc}(y_{\alpha+1}[c], [y_{\alpha+1}^c/y^c_\alpha]{P_k} @c) \mathbb{D}''_2}\]}
This completes the proof of the subcase as we know that the relevancy of channels in $\mathbb{D}'_i$ and $\mathbb{D}''_i$ remain intact.

\end{description}

{\color{ForestGreen}
\bf Case 7. $\mathbb{D}_1=\mathbb{D}'_1\mb{msg}(\mb{send}x_\eta^c y_\alpha^c) \mb{proc}(v_\eta[c'],  w^c \leftarrow \mb{recv} y^c_\alpha; P @d_1)  \mathbb{D}''_1$ and
{\small\[ \mathbb{D}'_1\mb{msg}(\mb{send} x_\eta^c\, y_\alpha^c) \mb{proc}(v_\eta[c'],  w^c \leftarrow \mb{recv} y^c_\alpha; P @d_1)\mathbb{D}''_1 \mapsto \mathbb{D}'_1\mb{proc}(v_{\eta}[c'], [x_\eta/w][y_{\alpha+1}^c/y^c_{\alpha}] P@c\sqcup d_1) \mathbb{D}''_1\]}}
We consider sub-cases based on relevancy of process offering along $v^{c'}_\eta$. 
\begin{description}
\item {\bf Subcase 1. $ \mb{proc}(v_\eta[c'],  w^c \leftarrow \mb{recv} y^c_\alpha; P @d_1)$ is not relevant.} By definition either $d_1 \sqcup c \not \sqsubseteq \xi$ or none of the channels connected to $P$ including $y_\alpha^c$ are relevant.

In both cases by the definition of quasi-running secrecy we know that neither $\mb{msg}(\mb{send}x_\eta^c y_\alpha^c)$ nor the continuation process $\mb{proc}(v_{\eta}[c'], [x_\eta/w][y_{\alpha+a}^c/y^c_{\alpha}] P@c\sqcup d_1)$  are relevant. It is then straightforward to see that

\[\cproj{\mathbb{D}'_1\mb{proc}(v_{\eta}[c'], [x_\eta/w][y_{\alpha+1}^c/y^c_{\alpha}] P@c\sqcup d_1) \mathbb{D}''_1}{\xi} \meq \cproj{\mathbb{D}'_1 \mathbb{D}''_1}{\xi}\meq \cproj{\mathbb{D}_1}{\xi}\meq \cproj{\mathbb{D}_2}{\xi}.\]

\item {\bf Subcase 2. $\mb{proc}(v_\eta[c'],  w \leftarrow \mb{recv} y^c_\alpha; P @d_1)$ is relevant.}
By definition of relevancy, we get that $c \sqcup d_1\sqsubseteq \xi$. This implies that $y^c_\alpha$ is relevant in the pre-step. From relevancy of $y^c_\alpha$ and the quasi-running secrecy lower than or equal to the observer of the positive message $\mb{msg}(\mb{send} x_\eta^c\, y_\alpha^c;)$ we get that the message is relevant too.
By assumption:
\[\mathbb{D}_2=\mathbb{D}'_2\mb{msg}(\mb{send} x_\eta^c\, y_\alpha^c)\mb{proc}(u_\gamma[c'],  w \leftarrow \mb{recv} y^c_\alpha; P @d_1)  \mathbb{D}''_2.\]

We have {\color{red}\[\erasure{\mathbb{D}_2} \mapsto \erasure{\mathbb{D}'_2\mb{proc}(u_{\gamma}[c'], [x_\eta/w][y_{\alpha+1}^c/y^c_{\alpha}] {P} @d_1 \sqcup c) \mathbb{D}''_2}\]}
We need to consider  that the quasi-running secrecy of the process may increases in the post step based on the code of $P$. The argument for this case is similar to the previous cases of the proof. See {\bf \color{ForestGreen}Case 1.}{\bf  Subcase 2.}. One interesting situation is when  the relevancy of chain of positive and relevant messages in the pre-step of $\mathbb{D}'_i$ changes in the post-step. By relevancy in the pre-step we know that these chains exist in both runs, so the same chain of messages will become irrelevant in the post-step of both runs.  

\end{description}

{\color{ForestGreen}
\bf Case 8. $\mathbb{D}_1=\mathbb{D}'_1\mb{proc}(y_\alpha[c],  w \leftarrow \mb{recv} y^c_\alpha; P @d_1)\mb{msg}(\mb{send}x_\eta^c y_\alpha^c)   \mathbb{D}''_1$ and
{\small\[ \erasure{\mathbb{D}'_1 \mb{proc}(y_\alpha[c],  w \leftarrow \mb{recv} y^c_\alpha; P @d_1)\mb{msg}(\mb{send}x_\eta^c y_\alpha^c) \mathbb{D}''_1} \mapsto \erasure{\mathbb{D}'_1\mb{proc}(y_{\alpha+1}[c], [x_\eta/w][y_{\alpha+1}^c/y^c_{\alpha}] P@d_1) \mathbb{D}''_1}\]}
}

We consider sub-cases based on relevancy of process offering along $y^{c}_\alpha$.
\begin{description}
\item {\bf Subcase 1. $\mb{proc}(y_\alpha[c],  w \leftarrow \mb{recv} y^c_\alpha; P @d_1)$ is not relevant.} By definition either $d_1 \sqcup c =c \not \sqsubseteq \xi $ or none of the channels connected to $P$ including $y_\alpha^c$ are relevant. In both cases the negative message $\mb{msg}(\mb{send}x_\eta^c y_\alpha^c)$  and the continuation process $\mb{proc}(y_{\alpha+1}[c], [x_\eta/w][y_{\alpha+c}^c/y^c_{\alpha}] P@ d_1)$  are not relevant either. It is then straightforward to see that   
\[\cproj{\mathbb{D}'_1\mb{proc}(y_{\alpha+1}[c], [x_\eta/w][y_{\alpha+1}^c/y^c_{\alpha}] P@ d_1) \mathbb{D}''_1}{\xi} \meq \cproj{\mathbb{D}'_1 \mathbb{D}''_1}{\xi}\meq \cproj{\mathbb{D}_1}{\xi}\meq \cproj{\mathbb{D}_2}{\xi}.\]

\item {\bf Subcase 2. $\mb{proc}(y_\alpha[c],  w \leftarrow \mb{recv} y^c_\alpha; P @d_1)$ is relevant.}
By definition of relevancy, we get that $c \sqcup d_1=c\sqsubseteq \xi$ and $y^c_\alpha$ is relevant. This means that $\mb{msg}(\mb{send} x_\eta^c\, y_\alpha^c)$ and the channel $x^c_\eta$ are relevant. By assumption that $\cproj{\mathbb{D}_1}{\xi}\meq\cproj{\mathbb{D}_2}{\xi}$, and definition of $\meq$:
\[\mathbb{D}_2=\mathbb{D}'_2\mb{proc}(y_\alpha[c],    w \leftarrow \mb{recv} y^c_\alpha; P @d_1)\mb{msg}(\mb{send} x_\eta^c\, y_\alpha^c)  \mathbb{D}''_2.\] 

We have {\color{red}\[\erasure{\mathbb{D}_2} \mapsto \erasure{\mathbb{D}'_2\mb{proc}(y_{\alpha+1}[c], [x^c_\eta/w][y_{\alpha+1}^c/y^c_{\alpha}] {P} @d_1) \mathbb{D}''_2}{\mathbb{F}_2}\]}

The proof is similar to previous cases.

\end{description}

{\color{ForestGreen} \bf Case 9.  {\small$\mathbb{D}_1=\mathbb{D}'_1\mb{proc}(y_\alpha[c], (x^{[\substmapap{\gamma_{\m{sec}}}{\psi}]} \leftarrow X[\gamma] \leftarrow \Gamma_1) @\substmapap{\gamma_{\m{sec}}}{\psi'}; Q @d_1)\mathbb{D}''_1$}and

{\small$\mathbf{def}(\Psi'; \Xi' \vdash X = P @{\psi'} ::x: B'[\psi])\,\,$}
and
{\[
\begin{array}{c}
  \erasure{\mathbb{D}'_1\mb{proc}(y_\alpha[c], (x^{[\substmapap{\gamma_{\m{sec}}}{\psi}]} \leftarrow X[\gamma] \leftarrow \Gamma_1) @\substmapap{\gamma_{\m{sec}}}{\psi'}; Q @d_1)\mathbb{D}''_1}\\
   \mapsto \\
  \erasure{ \mathbb{D}'_1 \mb{proc}(x_0[d],  \hat{\gamma}(P) @d_2)\mb{proc}(y_\alpha[c], [x^d_0/x^d] Q @d_1) \mathbb{D}''_1}
\end{array}  
\]}}
where $\gamma=(\gamma_{\m{sec}}, \gamma_{\m{var}})$ $\substmapap{\gamma}{\psi}= d$,  $\substmapap{\gamma}{\psi'}=d_2$, and  $\substmapap{\gamma}{\Xi'}=\Gamma_1$.
We consider sub-cases based on relevancy of process offering along $y^c_\alpha$.
\begin{description}
\item {\bf Subcase 1. $\mb{proc}(y_\alpha[c], (x^{[\substmapap{\gamma_{\m{sec}}}{\psi}]} \leftarrow X[\gamma] \leftarrow \Gamma_1) @\substmapap{\gamma_{\m{sec}}}{\psi'}; Q @d_1)$ is not relevant.} By definition either $d_1 \not \sqsubseteq \xi $ or none of the channels of this process including $y_\alpha^c$ are relevant.  

In both cases, it means that $\mb{proc}(x_0[d], \hat{\gamma} (P) @d_2)$ and $\mb{proc}(y_\alpha[c], [x^d_0/x^d] Q @d_1)$ are not relevant either. Note that $d_1 \sqsubseteq d_2$ and thus $d_2 \not \sqsubseteq \xi$ if $d_1 \not \sqsubseteq \xi$.

\[\cproj{\mathbb{D}'_1\mb{proc}(y_\alpha[c], (x^{[\substmapap{\gamma_{\m{sec}}}{\psi}]} \leftarrow X[\gamma] \leftarrow \Gamma_1) @\substmapap{\gamma_{\m{sec}}}{\psi'}; Q @d_1) \mathbb{D}''_1}{\xi} \meq \cproj{\mathbb{D}'_1 \mathbb{D}''_1}{\xi}\meq \cproj{\mathbb{D}_1}{\xi}\meq \cproj{\mathbb{D}_2}{\xi}\]

\item {\bf Subcase 2. $\mb{proc}(y_\alpha[c], (x^{[\substmapap{\gamma_{\m{sec}}}{\psi}]} \leftarrow X[\gamma] \leftarrow \Gamma_1) @\substmapap{\gamma_{\m{sec}}}{\psi'}; Q @d_1)$ is relevant.}
By definition of relevancy, we get that $d_1 \sqsubseteq \xi$ and all  channels  of this process with secrecy levels lower than or equal to the observer level are relevant. By assumption that $\cproj{\mathbb{D}_1}{\xi}\meq\cproj{\mathbb{D}_2}{\xi}$, and definition of $\meq$: \[\mathbb{D}_2=\mathbb{D}'_2\mb{proc}(y_\alpha[c], (x^{[\substmapap{\gamma_{\m{sec}}}{\psi}]} \leftarrow X[\gamma] \leftarrow \Gamma_1) @\substmapap{\gamma_{\m{sec}}}{\psi'}; Q @d_1)\mathbb{D}''_2.\] 

We have {\color{red}\[\erasure{\mathbb{D}_2} \mapsto\erasure{\mathbb{D}'_2 \mb{proc}(x_0[d], \hat{\gamma} (P) @d_2)\mb{proc}(y_\alpha[c], [x^d_0/x^d] Q @d_1)\mathbb{D}''_2}\]}
Remark: we can assume that the fresh channel being spawned will be $x_0$ in both runs. Moreover, the (unique) substitution $\hat{\gamma}$ is the same when stepping both runs.

Note that if $x^d_0$ has secrecy level lower than or equal to the observer level, then taking this step won't change relevancy of any channels. Otherwise some resource of the process may become irrelevant after this step, e.g. $x^d_0$ may block their relevancy path if $d \not \sqsubseteq \xi$ or in the case where $d_2\not \sqsubseteq \xi$. But this happens to the processes in the both runs and can only change relevant processes/messages in the pre-step to irrelevant processes/messages in the post-step. (similar to the cases 3 and 4 for $\otimes$ and $\multimap$)
\end{description}

{\color{ForestGreen} \bf Case 10.  {\small$\mathbb{D}_1=\mathbb{D}'_1\mb{proc}(y_\alpha[c], F_Y[\gamma] @ c)\mathbb{D}''_1$} and

{\small$\mathbf{def}(\Psi'; u{:}C[\psi] \vdash F_Y = \fwder{C, w^{\psi}\leftarrow u^{\psi}} @{\psi} :: w: C[\psi])\,\,$}
and
{\[
\begin{array}{c}
  \erasure{\mathbb{D}'_1 \mb{proc}(y_\alpha[c], F_Y[\gamma] @ c) \mathbb{D}''_1}\,
   \mapsto \,
  \erasure{\mathbb{D}'_1 \mb{proc}(y_\alpha[c], \fwder{C, y_\alpha^c\leftarrow z_\beta^c} @d_1) \mathbb{D}''_1}
\end{array}  
\]}}
where $\gamma=(\gamma_{\m{sec}}, \gamma_{\m{var}})$ $\substmapap{\gamma}{\psi}= c$,  $\substmapap{\gamma}{w}=y_\alpha$, and $\substmapap{\gamma}{u}=z_\beta.$

The proof of this case is similar to the proof of Case 9(Spawn) by considering sub-cases based on relevancy of process offering along $y^c_\alpha$.
\end{proof}

\subsection{Fundamental theorem}\label{apx:sec:noninterference:ftlr}
\begin{theorem}[Fundamental Theorem]\label{apx:thm:ftlr}
For all security levels $\xi$, and configurations \[{\Psi; \Gamma_1 \Vdash \mathbb{D}_1:: u_\alpha {:}T_1[c_1]},\,\, \textit{and}\,\, {\Psi; \Gamma_2 \Vdash \mathbb{D}_2:: v_\beta {:}T_2[c_2]},\]  with $\cproj{\mathbb{D}_1}{\xi} \meq \cproj{\mathbb{D}_2}{\xi}$, $\Gamma_1 \Downarrow \xi = \Gamma_2 \Downarrow \xi$, and $u_\alpha {:}T_1[c_1]\Downarrow \xi = v_\beta {:}T_2[c_2] \Downarrow \xi$  we have \[(\Gamma_1 \Vdash \mathbb{D}_1:: u_\alpha {:}T_1[c_1])  \equiv^\Psi_{\xi} (\Gamma_2 \Vdash \mathbb{D}_2:: v_\beta {:}T_2[c_2]).\]  
\end{theorem}
\begin{proof}

Our goal is to show:
\begin{quote}
For all $\mathbb{D}_1$ and $\mathbb{D}_2$ that are IFC-typed, i.e.,${\Psi; \Gamma_1 \Vdash \mathbb{D}_1:: u_\alpha {:}T_1[c_1]}$ and ${\Psi; \Gamma_2 \Vdash \mathbb{D}_2:: v_\beta {:}T_2[c_2]}$,  with $\cproj{\mathbb{D}_1}{\xi} \meq \cproj{\mathbb{D}_2}{\xi}$, and $\Gamma_1 \Downarrow \xi = \Gamma_2 \Downarrow \xi= \Gamma$, and $u_\alpha {:}T_1[c_1]\Downarrow \xi = v_\beta {:}T_2[c_2] \Downarrow \xi= K^s$ we have 
{\small $ \forall \mc{B}_1 \in \mb{H\text{-}Provider}^\xi(\Gamma_1).\, \forall \, \mc{B}_2 \in \mb{H\text{-}Provider}^\xi(\Gamma_2). \,\forall \mc{T}_1 \in \mb{H\text{-}Client}^\xi(x_\alpha {:}A_1[{c_1}]). \,\forall \mc{T}_2 \in \mb{H\text{-}Client}^\xi(y_\beta {:}A_2[{c_2}]).$}
{ \[ \forall\,m.\,(\mc{B}_1\erasure{\mathbb{D}_1}\mc{T}_1, \mc{B}_2\erasure{\mathbb{D}}_2\mc{T}_2) \in \mc{E}\llbracket \erasure{\Gamma} \Vdash \erasure{K^s} \rrbracket^{m},\,
\m{and}\, \,
\forall\,m.\,(\mc{B}_2\erasure{\mathbb{D}}_2\mc{T}_2, \mc{B}_1\erasure{\mathbb{D}}_1\mc{T}_1) \in \mc{E}\llbracket \erasure{\Gamma} \Vdash \erasure{K^s} \rrbracket^{m}. .\]}
\end{quote}

Without loss of generality, we only prove one part of this symmetric relation, i.e.:
\begin{quote}
  {\small For all $\mathbb{D}_1$ and $\mathbb{D}_2$ that are IFC-typed, i.e.,${\Psi; \Gamma_1 \Vdash \mathbb{D}_1:: u_\alpha {:}T_1[c_1]}$ and ${\Psi; \Gamma_2 \Vdash \mathbb{D}_2:: v_\beta {:}T_2[c_2]}$,  with $\cproj{\mathbb{D}_1}{\xi} \meq \cproj{\mathbb{D}_2}{\xi}$, and $\Gamma_1 \Downarrow \xi = \Gamma_2 \Downarrow \xi= \Gamma$, and $u_\alpha {:}T_1[c_1]\Downarrow \xi = v_\beta {:}T_2[c_2] \Downarrow \xi= K^s$ we have }
   {\small$ \forall \mc{B}_1 \in \mb{H\text{-}Provider}^\xi(\Gamma_1).\, \forall \, \mc{B}_2 \in \mb{H\text{-}Provider}^\xi(\Gamma_2). \,\forall \mc{T}_1 \in \mb{H\text{-}Client}^\xi(x_\alpha {:}A_1[{c_1}]). \,\forall \mc{T}_2 \in \mb{H\text{-}Client}^\xi(y_\beta {:}A_2[{c_2}]).$}
  { \[ \forall\,m.\,(\mc{B}_1\erasure{\mathbb{D}}_1\mc{T}_1, \mc{B}_2\erasure{\mathbb{D}}_2\mc{T}_2) \in \mc{E}\llbracket \erasure{\Gamma} \Vdash \erasure{K^s} \rrbracket^{m}\]}
  \end{quote}

The goal is equivalent to:
\begin{quote}
  $(\star_1)\,\,$For all $m$, for all $\mathbb{D}_1$ and $\mathbb{D}_2$ that are IFC-typed, i.e.,${\Psi; \Gamma_1 \Vdash \mathbb{D}_1:: u_\alpha {:}T_1[c_1]}$ and ${\Psi; \Gamma_2 \Vdash \mathbb{D}_2:: v_\beta {:}T_2[c_2]}$,  with $\cproj{\mathbb{D}_1}{\xi} \meq \cproj{\mathbb{D}_2}{\xi}$, and $\Gamma_1 \Downarrow \xi = \Gamma_2 \Downarrow \xi= \Gamma$, and $u_\alpha {:}T_1[c_1]\Downarrow \xi = v_\beta {:}T_2[c_2] \Downarrow \xi= K^s$ we have 
  $ \forall \mc{B}_1 \in \mb{H\text{-}Provider}^\xi(\Gamma_1).\, \forall \, \mc{B}_2 \in \mb{H\text{-}Provider}^\xi(\Gamma_2). \,\forall \mc{T}_1 \in \mb{H\text{-}Client}^\xi(x_\alpha {:}A_1[{c_1}]). \,\forall \mc{T}_2 \in \mb{H\text{-}Client}^\xi(y_\beta {:}A_2[{c_2}]).$
 { \[ (\mc{B}_1\erasure{\mathbb{D}}_1\mc{T}_1, \mc{B}_2\erasure{\mathbb{D}}_2\mc{T}_2) \in \mc{E}\llbracket \erasure{\Gamma} \Vdash \erasure{K^s} \rrbracket^{m}\]}
  \end{quote}

 First, we show that it is enough to prove the following goal.

\begin{quote}
 {\small $(\star_2)\,\,$For all $m$, for all $\mathbb{D}'_1$ and $\mathbb{D}'_2$ that are IFC-typed, i.e., ${\Psi; \Gamma \Vdash \mathbb{D}'_1:: K^s}$ and ${\Psi; \Gamma \Vdash \mathbb{D}'_2:: K^s}$,  with $\cproj{\mathbb{D}'_1}{\xi} \meq \cproj{\mathbb{D}'_2}{\xi}$, and $\Gamma \Downarrow \xi = \Gamma$, and $K^s\Downarrow \xi = K^s$ we have }
 { \[ (\erasure{\mathbb{D}'_1}, \erasure{\mathbb{D}'_2}) \in \mc{E}\llbracket \erasure{\Gamma} \Vdash \erasure{K^s} \rrbracket^{m}\]}
  \end{quote}

  We need to show that from $\star_1$, we get $\star_2$:
  \begin{quote}
  Apply a for all introduction on $\star_2$ to get an arbitrary number $m$, configurations $\mathbb{D}_1$ and $\mathbb{D}_2$, and high providers $\mc{B}_1$ and $\mc{B}_2$ and high clients $\mc{T}_1$ and $\mc{T}_2$ satisfying the given assumptions. We build well-typed configurations stel $\mathbb{D}'_1$ and $\mathbb{D}'_2$ with the following steps:
\begin{enumerate}
\item  Let's assume $\mc{T}_1\neq \cdot$ and $\mc{T}_2\neq \cdot$, i.e., $\mc{T}_1 \in \m{Tree}(u_\alpha {:}T_1 \Vdash \_{:}1)$ and $c_1 \not \sqsubseteq \xi$
and $\mc{T}_2 \in \m{Tree}(v_\beta {:}T_2 \Vdash \_{:}1)$ and $c_1 \not \sqsubseteq \xi$. We annotate all channels in $\mc{T}_1$ with security level $c_1$ and all processes with running secrecy $c_1$. We add the same security variable $\psi$ correspondingly to all process variables defined in the signature. We also provide a mapping for the spawns appearing in the process terms such that they map all security variables $\psi$ in the process definition to the security variable $c_1$.  It is straightforward to see that this annotation of $\mc{T}_1$ which we call $\mathbb{T}_1$ is IFC-typed.
With a similar approach we can build the IFC-typed annotation of $\mc{T}_1$ which we call $\mathbb{T}_2$.

\item If $\mc{T}_1= \cdot$ and $\mc{T}_2= \cdot$, then define $\mathbb{T}_1=\mathbb{T}_2=\cdot$, and observe that they are trivially IFC-typed.

\item Consider every tree $\mc{A}_1 \in \mc{B}_1$, we know that $\mc{A}_1 \in \m{Tree}(x_\gamma{:}A)$ for some $x_\gamma{:}A[d]\in \Gamma_1$ such that $d \not \sqsubseteq \xi$, we build a security-annotated version of $\mc{A}_1$ as $\mathbb{A}_1$, similar to (1), by annotating all channels/running secrecy/substitutions with the level $d$. From all such annotated trees we build the annotated forest $\mathbb{B}_1$. Similarly, we can build annotated forest $\mathbb{B}_2$. Again it is straightforward to show that both $\mathbb{B}_1$ and $\mathbb{B}_2$ are IFC-typed. 
\end{enumerate}

We define $\mathbb{D}'_1= \mathbb{B}_1 \mathbb{D}_1 \mathbb{T}_1$ and $\mathbb{D}'_2= \mathbb{B}_2 \mathbb{D}_2 \mathbb{T}_2$. Note that these two configurations are both IFC-typed, and also we have 
$\Psi; \Gamma \Vdash \mathbb{D}'_1:: K^s$ and $\Psi; \Gamma \Vdash \mathbb{D}'_2:: K^s$. We also know that $\mathbb{D}'_1 \Downarrow \xi = \mathbb{D}'_2 \Downarrow \xi$. Observe that all relevant nodes occur in $\mathbb{D}_1$ and $\mathbb{D}_2$, for which by assumption we know $\mathbb{D}_1 \Downarrow \xi = \mathbb{D}_2 \Downarrow \xi$. Now we can apply $\star_2$ to get what we want.

\end{quote}
It remains to prove $(\star_2)$:

\begin{quote}
  $(\star_2)\,\,$For all $m$, for all $\mathbb{D}_1$ and $\mathbb{D}_2$ that are IFC-typed, i.e.,${\Psi; \Gamma \Vdash \mathbb{D}_1:: K^s}$ and ${\Psi; \Gamma \Vdash \mathbb{D}_2:: K^s}$,  with $\cproj{\mathbb{D}_1}{\xi} \meq \cproj{\mathbb{D}_2}{\xi}$, and $\Gamma \Downarrow \xi = \Gamma$, and $K^s\Downarrow \xi = K^s$ we have 
 { \[ (\erasure{\mathbb{D}_1}; \erasure{\mathbb{D}_2}) \in \mc{E}\llbracket \erasure{\Gamma} \Vdash \erasure{K^s} \rrbracket^{m}\]}
\end{quote}

The proof is by induction on the index $m$.

{\bf Base case. $m=0$.} We consider arbitrary configurations ($\forall I$ on the goal). By the configuration typing, we know that 
\[ (\erasure{\mathbb{D}_1} ; \erasure{\mathbb{D}_2} ) \in\m{Tree}(\erasure{\Gamma} \Vdash \erasure{K^s}),\]
which is enough to complete the proof.

{\bf Inductive case. $m=m'+1$.}

\[ (\mathcal{B}_1\erasure{\mathbb{D}_1} \mathcal{T}_1;\mathcal{B}_2 \erasure{\mathbb{D}_2} \mathcal{T}_2) \in\m{Tree}(\erasure{\Gamma} \Vdash \erasure{K^s}).\]

Consider an arbitrary $\mathbb{D}'_1$ such that $ \erasure{\mathbb{D}_1} \mapsto^{{*}_{\Upsilon; \Theta}} \erasure{\mathbb{D}'_1}$. 
By Preservation we know that $\mathbb{D}'_1$ is IFC-typed for the same interface ${\Psi; \Gamma \Vdash \mathbb{D}'_1:: K^s}$ 
By \Cref{apx:lem:invariant}, for some $\mathbb{D}'_2$, we get $\erasure{\mathbb{D}_2} \mapsto^{{*}_{\Upsilon}} \erasure{\mathbb{D}'_2} $, such that ${\Psi; \Gamma \Vdash \mathbb{D}'_2:: K^s}$ with $\mathbb{D}'_2\Downarrow \xi = \mathbb{D}'_2 \Downarrow \xi$.

We need to show that \noindent 
\[
\begin{array}{l}
\;(\dagger_1) \; \forall u'_\delta \in \mathbf{Out}(\erasure{\Gamma} \Vdash \erasure{K}).\, \mathbf{if}\, u'_\delta \in \Upsilon.\, \mathbf{then}\, (\erasure{\mathbb{D}'_1} ; \erasure{\mathbb{D}'_2} ) \in \mathcal{V}\llbracket \erasure{\Gamma} \Vdash \erasure{K^s} \rrbracket_{\cdot;u'_\delta}^{m}\; \mathbf{and}\\
\;(\dagger_2) \;\forall u'_\delta \in \mathbf{In}(\erasure{\Gamma} \Vdash \erasure{K}).\,\mathbf{if}\, u'_\delta \in \Theta.\, \m{then}\,  (\erasure{\mathbb{D}'_1} ; \erasure{\mathbb{D}'_2} ) \in \mathcal{V}\llbracket \erasure{\Gamma} \Vdash \erasure{K^s}\rrbracket_{u'_\delta;\cdot}^{m} 
\end{array}
\]

We prove both $\dagger_1$, and $\dagger_2$. We consider an arbitrary $u'_\delta$ and provide a case analysis based on its type. If there is no such $u'_\delta$ in the specified set, then the statements trivially holds. There might be a channel in both of these sets, if and only if $\mathbb{D}_1= \cdot$.

\begin{description}
\item{\bf Proof of $\dagger_1$.} Consider an arbitrary channel $u'_\delta{:}T \in \mathbf{Out}(\erasure{\Gamma} \Vdash \erasure{K}^s)$ and assume $u'_\delta{:}T \in \Upsilon$. We consider different cases based on the session type $T$.
\begin{description}
{{\bf Case 1.} \color{RedViolet} $u'_\delta{:}T[c']=K^s=u_\alpha{:}T_1[c_1]=u_\alpha{:}1[c_1]$}, and $\mathbb{D}'_1= \mathbb{D}''_1\msg{\mb{close}\,u^{c_1}_\alpha}$. 

By the typing rules and $\mathbb{D}'_1= \mathbb{D}''_1\msg{\mb{close}\,u^{c_1}_\alpha}$ we know that $\mathbb{D}''_1=\cdot$, and $\Gamma_1= \cdot$.

Note that by the definition of relevancy, all processes in $\mathbb{D}'_1$ and $\mathbb{D}'_2$ has to be relevant.
As a result, \[\mathbb{D}'_{2}=\mathbf{msg}(\mb{close}\,u^{c_1}_\alpha).\]
Thus, we satisfy the conditions required by Line 1 of the logical relation to establish
\[\dagger_1 (\erasure{\mathbb{D}'_1}; \erasure{\mathbb{D}'_2}) \in \mc{V}\llbracket \cdot \Vdash \erasure{u_\alpha{:}1[c_1]}\rrbracket_{\cdot;u'_\delta}^{m}\]
as needed.

{\bf Case 2.} {\color{RedViolet} $u'_\delta{:}T[c']= K^s=u_\alpha{:}T_1[c_1]=u_\alpha{:}\oplus\{\ell{:}A_{\ell}\}_{\ell \in I}[c_1],$}  and $\mathbb{D}'_{1}=\mathbb{D}''_1\mathbf{msg}(u_\alpha^{c_1}.k)$.

    By the assumption of the theorem, $\mathbb{D}'_1$ and $\mathbb{D}'_2$ are both relevant, and we have $\mathbb{D}'_{2}=\mathbb{D}''_2\mathbf{msg}(u_\alpha^{c_1}.k)$. Removing $\mb{msg}(u^{c_1}_\alpha.k)$ from $\mathbb{D}'_1$ and $\mathbb{D}'_2$ does not change relevancy of the remaining configuration when $c_1 \sqsubseteq \xi$: \[\cproj{\mathbb{D}''_1}{\xi}=\cproj{\mathbb{D}''_2}{\xi}.\]
 
We can apply the induction hypothesis on the index $m' < m$ to get 
  \[\begin{array}{l}
 (\erasure{\mathbb{D}''_1} ,\erasure{\mathbb{D}''_2}) \in \mathcal{E}\llbracket \erasure{\Gamma} \Vdash \erasure{u_{\alpha+1}{:}A_k[c_1]}\rrbracket^{m'}.
\end{array}\]

By line (2) in the definition of $\mathcal{V}$:
{\small   \[\begin{array}{l}
 \dagger_1  (\erasure{\mathbb{D}''_1 \msg{u^{c_1}_\alpha.k}} ,\erasure{\mathbb{D}''_2\msg{u^{c_1}_\alpha.k}})\in \mathcal{V}\llbracket \erasure{\Gamma} \Vdash \erasure{u_\alpha{:}\oplus\{\ell:A_{\ell}\}_{\ell \in I}[c_1]}\rrbracket_{\cdot; u'_\delta}^{m'+1}.
\end{array}
\]}\\
 {\bf Case 3.} {\color{RedViolet} $u'_\delta{:}T[c']=K^s=u_\alpha{:}T_1[c_1]=u_\alpha{:} (A \otimes B)[c_1]$}, and
    \[\mathbb{D}'_1=\mathbb{D}''_1\mathbb{A}_1\mathbf{msg}(\mb{send}x_\beta^{c_1}\,u_\alpha^{c_1})\]
     where $\Gamma=\Gamma', \Gamma''$, and $\Psi; \Gamma''\Vdash \mathbb{A}_1::(x_\beta{:}A[c_1])$, and $\Psi; \Gamma'\Vdash \mathbb{D}''_1::(u_{\alpha+1}{:}A[c_1])$.

    Since $\mathbb{D}'_1$ and $\mathbb{D}'_2$ are both IFC-typed, they enjoy the tree invariant. Thus both of them are relevant and since $\mathbb{D}'_1 \Downarrow \xi= \mathbb{D}'_2 \Downarrow \xi$, we have \[{\mathbb{D}'_{2}=\mathbb{D}''_2\mathbb{A}_2\mathbf{msg}(\mb{send}x_\beta^{c_1}\, u_\alpha^{c_1}),}\] such that $\Psi; \Gamma''\Vdash \mathbb{A}_2::(x_\beta{:}A[c_1])$, and $\Psi; \Gamma'\Vdash \mathbb{D}''_2::(u_{\alpha+1}{:}A[c_1])$. Moreover, by relevancy of $\mathbb{D}'_i$ (and relevancy of $x_\beta^{c_1}$) we get $\cproj{\mathbb{D}''_1}{\xi}=\cproj{\mathbb{D}''_2}{\xi}$ and 
    $\cproj{\mathbb{A}_1}{\xi}=\cproj{\mathbb{A}_2}{\xi}$. 

Note that in the particular case with $x_\beta{:}A[c_1] \in \Gamma$, we have $\mathbb{A}_i=\cdot$ and $\Gamma''=x_\beta{:}A[c_1]$.
     
We apply the induction hypothesis on the index $m'<m$ for $\Psi; \Gamma''\Vdash \mathbb{A}_i::(x_{\beta}{:}B[c_1])$ to get 

\[\begin{array}{l}
(\erasure{\mathbb{A}_1}, \erasure{\mathbb{A}_2}) \in \mathcal{E}\llbracket \erasure{\Gamma''} \Vdash \erasure{x_{\beta}{:}A[c_1]}\rrbracket^{m'}.
\end{array}\]

and apply the induction hypothesis on the index $m'<m$ for $\Psi; \Gamma'\Vdash \mathbb{D}''_i::(u_{\alpha+1}{:}A[c_1])$ 

\[\begin{array}{l}
  (\erasure{\mathbb{D}''_1}, \erasure{\mathbb{D}''_1})  \in \mathcal{E}\llbracket \erasure{\Lambda_1} \Vdash \erasure{w_{\gamma}{:}B[c_1]}\rrbracket^{m'}.
\end{array}\]

By line (4) in the definition of $\mathcal{V}$:
{\small  
\[\begin{array}{l}
    \dagger_1  (\erasure{\mathbb{A}_1 \mathbb{D}''_1\msg{\mb{send}x_\beta^{c_1}\,u_\alpha^{c_1}}}, \erasure{\mathbb{A}_2\mathbb{D}''_1\msg{\mb{send}x_\beta^{c_1}\,u_\alpha^{c_1}}}) \in \mathcal{V}\llbracket \erasure{\Delta} \Vdash \erasure{u_\alpha{:}A \otimes B[c_1]}\rrbracket_{\cdot;u'_\delta}^{m'+1}.
\end{array}\]}

{{\bf Case 4.} \color{RedViolet} $\Gamma=\Gamma', x_\gamma{:}\&\{\ell:A_\ell\}_{\ell \in L}[c]$}, and $u'_\delta{:}T[c']=x_\gamma{:}\&\{\ell:A_\ell\}_{\ell \in L}[c]$, and $\mathbb{D}'_1=\msg{x^{c}_\gamma.k}\mathbb{D}''_1$.

By the assumption of the theorem, $\mathbb{D}'_{2}=\msg{x^{c}_\gamma.k}\mathbb{D}''_2.$ Moreover,  $\cproj{\mathbb{D}''_1}{\xi}=\cproj{\mathbb{D}''_2}{\xi}$. The reason is $c\sqsubseteq \xi$ and thus $x^c_{\gamma+1}$ is relevant in $\mathbb{D}''_i$ and no relevancy changes in the configurations after removing the negative message. 
 
We can apply the induction hypothesis on the smaller index $m'<m$ to get 
 \[\begin{array}{l}
 (\erasure{\mathbb{D}''_1};\erasure{\mathbb{D}''_2}) \in \mathcal{E}\llbracket \erasure{\Gamma', x_{\gamma+1}{:}A_k[c]}\Vdash \erasure{K^s}\rrbracket^{m'}.
\end{array}\]

By line (8) in the definition of $\mathcal{V}$:
 \  \[\begin{array}{l}
 \dagger_1  (\erasure{\msg{x^{c}_\gamma.k} \,\mathbb{D}''_1};\erasure{\msg{x^{c}_\gamma.k}\,\mathbb{D}''_2})\in \mathcal{V}\llbracket \erasure{\Gamma', x_{\gamma}{:}\&\{\ell{:}A_\ell\}_{\ell\in L}[c]}\Vdash \erasure{K^s}\rrbracket_{\cdot; u'_\delta}^{m'+1}.
\end{array}
\]

 {\bf Case 5. 
 \color{RedViolet} $\Gamma=\Gamma', \Gamma'', x_\gamma{:}(A\multimap B)[c]$}, and $u'_\delta{:}T'[c']=x_\gamma{:}(A \multimap B)[c]$, and
 \[  \mathbb{D}'_1= \mathbb{A}_1\msg{\mb{send} y^{c}_\delta\,x^{c}_\gamma} \mathbb{D}''_1,\] such that $\ctype{\Psi_0}{\Gamma''}{\mathbb{A}_1}{y_\delta{:}A[c]}$ and $\ctype{\Psi_0}{\Gamma',x_{\gamma+1}{:}B[c]}{\mathbb{D}''_1}{K^s}$.

The message $\msg{\mb{send}y^{c}_\delta\,x^{c}_\gamma}$ is relevant in $\mathbb{D}'_1$. By assumption of the theorem,\\ {\small$\mathbb{D}'_{2}=\mathbb{A}_2\msg{\mb{send}y^{c}_\delta\,x^{c}_\gamma}\mathbb{D}''_2,$ } such that $\ctype{\Psi_0}{\Gamma''}{\mathbb{A}_2}{y_\delta{:}A[c]}$ and $\ctype{\Psi_0}{\Gamma^h_2,\Gamma',x_{\gamma+1}{:}B[c]}{\mathbb{D}''_2}{K^s}$ with $\Gamma^h_2$ being the context of all the channels in $\Gamma_2$ with security higher than or incomparable to the observable level $\xi$. Again because of the tree invariant every resource of $\mathbb{A}_2$ has a secrecy less than or equal to the observer. Also, by assumption we know that  $\cproj{\mathbb{D}'_1}{\xi}=\cproj{\mathbb{D}'_2}{\xi}$. By definition of relevancy, we know that $\msg{\mb{send}y^{c}_{\delta}\, x^c_\gamma}$ and the tree $\mathbb{A}_1$ connected to it are relevant in $\mathbb{D}_1$ and thus $\mathbb{A}_1$ is equal to $\mathbb{A}_2$, i.e, $\mathbb{A}_1 \Downarrow \xi = \mathbb{A}_2 \Downarrow \xi$.

Removing the tress $\mathbb{A}_1$ and $\mathbb{A}_2$ and the messages from both configurations does not change relevancy of the rest of the configuration since $x^c_{\gamma+1}$ will remain relevant, i.e., $\cproj{\mc{D}''_1}{\xi}=\cproj{\mc{D}''_2}{\xi}$.
 
We can apply the induction hypothesis on the smaller index $m'<m$ and $\mathbb{A}_1 \Downarrow \xi = \mathbb{A}_2 \Downarrow \xi$  and $\mathbb{D}''_1 \Downarrow \xi = \mathbb{D}''_2 \Downarrow \xi$ to get 
 \[\begin{array}{l}
 (\erasure{\mathbb{A}_1}, \erasure{\mathbb{A}_2})\in \mc{E}\llbracket \erasure{\Gamma''} \Vdash \erasure{y_\delta{:}A[c]}\rrbracket^{m'},\,\, \m{and}\\
  (\erasure{\mc{D}''_1}, \erasure{\mathcal{D}''_2})\in \mc{E}\llbracket \erasure{\Gamma', x_{\gamma+1}{:}B[c]}\Vdash \erasure{K^s}\rrbracket^{m'}.
\end{array}\]

By line (10) in the definition of $\mc{V}$:
  \[\begin{array}{l}
 \dagger_1  (\erasure{\mathbb{A}_1\msg{\mb{send} y^{c}_\delta\,x^{c}_\gamma} \mathbb{D}''_1}  ;\ \erasure{\mathbb{A}_2 \msg{\mb{send} y^{c}_\delta\,x^{c}_\gamma} \mathbb{D}''_2}) \in \mc{V}\llbracket \erasure{\Gamma'_1, x_{\gamma}{:}A \multimap B[c]}\Vdash \erasure{K^s} \rrbracket_{\cdot; u'_\delta}^{m'+1}.
\end{array}
\]

\end{description}

\item{\bf Proof of $\dagger_2$.} Consider an arbitrary channel $u'_\delta{:}T \in \mathbf{In}(\erasure{\Gamma} \Vdash \erasure{K})$  We consider different cases based on the session type $T$.
\begin{description}
{{\bf Case 1.} \color{RedViolet} $u'_\delta{:}T[c']= K^s=u_\alpha{:}T_1[c_1]=u_\alpha{:}\&\{\ell{:}A_{\ell}\}_{\ell \in I}[c_1]$}.

 We need to apply the induction hypothesis on the index $m' < m$, but first we need to show that the invariant of the induction holds. Consider an arbitrary label $k \in L$.From $\cproj{\mathbb{D}'_1}{\xi}=\cproj{\mathbb{D}'_2}{\xi}$ and  $c_1 \sqsubseteq \xi$, we get \[\cproj{\mathbb{D}'_1\msg{u^{c_1}_\alpha.k}}{\xi}=\cproj{\mathbb{D}'_2\msg{u^{c_1}_\alpha.k}}{\xi}.\]
 
By induction hypothesis
 \[\begin{array}{l}
        (\erasure{\mathbb{D}'_1},\erasure{\mathbb{D}'_2}) \in \mc{E}\llbracket \erasure{\Gamma} \Vdash \erasure{u_{\alpha+1}{:}A_k[c_1]}\rrbracket^{m'}.
\end{array}\]
By line (3) in the definition of $\mc{V}$:
{
\[\begin{array}{l}
\dagger_2  ( \erasure{\mathbb{D}'_1 \msg{u^{c_1}_\alpha.k}},\erasure{\mathbb{D}'_2 \msg{u^{c_1}_\alpha.k}}) \in \mc{V}\llbracket \erasure{\Gamma} \Vdash \erasure{u_\alpha{:}\&\{\ell:A_{\ell}\}_{\ell \in I}[c_1]}\rrbracket_{u'_\delta; \cdot}^{m'+1}.
\end{array}\]}

{\bf Case 2.}  {\color{RedViolet} $u'_\delta{:}T[c']=K^s=u_\alpha{:}T_1[c_1]=u_\alpha{:}A \multimap B[c_1]$.}  
By assumption and $c_1 \sqsubseteq \xi$, we know that $\mathbb{D}'_1$ and $\mathbb{D}'_2$ are relevant. Consider an arbitrary channel $x_\beta$, which is not a free name in $\Gamma \Vdash K^s$. We know $c_1 \sqsubseteq \xi$, and thus adding a negative message sending $x_\beta[c_1]$ along $u_\alpha[c_1]$ does not change the relevancy of the rest of processes: \[\cproj{\mathbb{D}'_1\msg{\mb{send}x^{c_1}_{\beta}\, u^{c_1}_{\alpha}}}{\xi}=\cproj{\mathbb{D}'_2\msg{\mb{send}x^{c_1}_{\beta}\, u^{c_1}_{\alpha}}}{\xi}\]

We can apply the induction hypothesis on $m' < m$ to get
\small\[\begin{array}{l}{ 
         (\erasure{\mathbb{D}'_1 \msg{\mb{send}x^{c_1}_{\beta}\, u^{c_1}_{\alpha}}}, \erasure{\mathbb{D}'_2 \msg{\mb{send}x^{c_1}_{\beta}\, u^{c_1}_{\alpha}}}) \in \mc{E}\llbracket \erasure{\Gamma, x_\beta{:}A[c_1]} \Vdash \erasure{u_{\alpha+1}{:}B[c_1]}\rrbracket^{m'}.}
\end{array}\]

By line (5) in the definition of $\mc{V}$:
{\small 
\[\begin{array}{l}
\dagger_2  (\erasure{\mathbb{D}'_1}, \erasure{\mathbb{D}'_2})\in \mc{V}\llbracket \erasure{\Gamma}\Vdash \erasure{u_\alpha{:}A\multimap B[c_1]}\rrbracket_{u'_\delta; \cdot}^{m'+1}.
\end{array}\]}

{{\bf Case 3.} \color{RedViolet} $\Gamma=\Gamma', x_\gamma{:}1[c]$}, and $u'_\delta{:}T'[c']=x_\gamma{:}1[c]$

 We first briefly explain why the invariant of induction holds after we bring the closing message inside $\mathbb{D}'_i$ and remove the channel $x_\gamma^c$ from $\Delta$, i.e.
\[\cproj{\msg{\mb{close}\,x^{c}_\gamma}\mathbb{D}'_1}{\xi}= \cproj{\msg{\mb{close}\,x^{c}_\gamma}\mathbb{D}'_2}{\xi}. \]
\begin{itemize}
  \item If the parent of $\msg{\mb{close}\,x^{c}_\gamma}$ in $\mathbb{D}'_1$ is relevant in  $\mathbb{D}'_1$ before bringing the message inside then by the assumption of the theorem, it is the same as the parent of $\msg{\mb{close}\,x^{c}_\gamma}$ in $\mathbb{D}'_2$. If after adding  the message $\mathbb{D}'_1$ the parent still remains relevant, it means that it has at least one other relevant channel other than $x^c_\gamma$ in  $\mathbb{D}'_1$ which also exists in $\mathbb{D}'_2$ and will be relevant after adding the message to $\mathbb{D}'_2$. If after adding the message, the message's parent in $\mathbb{D}'_1$ becomes irrelevant, it means that it does not have a relevant path to any other channel in $\Delta'$ and $K$. By the assumption of theorem the parent of the message in $\mathbb{D}'_2$ does not have such path either. The same argument holds for any other node that becomes irrelevant because of adding the closing message to $\mathbb{D}'_i$. As a result the same processes becomes irrelevant in both $\mathbb{D}'_1$ and $\mathbb{D}'_2$ after adding the message to them and the proof of this case is complete. The same argument holds for the case in which the parent of $\msg{\mb{close}\,x^{c}_\gamma}$ in $\mathbb{D}'_2$ before adding the message is relevant. 
  \item Otherwise, the parent of  $\msg{\mb{close}\,x^{c}_\gamma}$ is irrelevant in $\mathbb{D}'_1$ and $\mathbb{D}'_2$ before adding the message and remains irrelevant after that too. The proof in this case is straightforward.
\end{itemize}

Now that the invariant holds, we can apply the induction hypothesis on the smaller index $m'<m$: 
\[ (\erasure{\msg{\mb{close}\,x^{c}_\gamma}\mathbb{D}'_1} ; \erasure{\msg{\mb{close}\,x^{c}_\gamma}\mathbb{D}'_2}) \in \mc{E}\llbracket \erasure{\Gamma'} \Vdash \erasure{K^s}\rrbracket^{m'}.\]

By line (6) in the definition of $\mc{V}$,  
{ \[\dagger_2 \,(\erasure{\mathbb{D}'_1};\erasure{\mathbb{D}'_2}) \in \mc{V}\llbracket \erasure{\Gamma', x_\gamma{:}1[c]} \Vdash \erasure{K^s}\rrbracket_{u'_\delta; \cdot}^{m'+1}\]}

{{\bf Case 4.} \color{RedViolet} $\Gamma=\Gamma', x_\gamma{:}\oplus\{\ell:A_\ell\}_{\ell \in L}[c]$}, and $u'_\delta{:}T'[c']=x_\gamma{:}\oplus\{\ell:A_\ell\}_{\ell \in L}[c]$.

Consider an arbitrary label $k\in L$. We have $\cproj{\msg{x^{c}_\gamma.k}\mathbb{D}'_1}{\xi}=\cproj{\msg{x^{c}_\gamma.k}\mathbb{D}'_2}{\xi}$, since $x^c_{\gamma+1}$ is relevant and the positive messages $\msg{x^{c}_\gamma.k}$ in both configurations are relevant if and only if their parents are. Thus adding the message does not change relevancy of any other process.
 
We can apply the induction hypothesis on the smaller index $m'<m$ to get
 \[\begin{array}{l}
 (\erasure{\msg{x^{c}_\gamma.k}\mathbb{D}'_1} ; \erasure{\msg{x^{c}_\gamma.k}\mathbb{D}'_2} ) \in \mc{E}\llbracket \erasure{\Gamma', x_{\gamma+1}:A_k[c]}\Vdash \erasure{K^s}\rrbracket^{m'}.
\end{array}\]

By line (7) in the definition of $\mc{V}$:
{
\[\begin{array}{l}
\dagger_2  (\erasure{\mathbb{D}'_1} ; \mathbb{D}'_2 )\in \mc{V}\llbracket \erasure{\Gamma', x_\gamma{:}\oplus\{\ell:A_{\ell}\}_{\ell \in L}[c]}\Vdash \erasure{K^s}\rrbracket_{u'_\delta;\cdot}^{m'+1}.
\end{array}\]}

{\bf Case 5.
\color{RedViolet} $\Gamma=\Gamma', x_\gamma{:}(A \otimes B)[c]$}, and $u'{:}T'[c']= x_\gamma{:}(A \otimes B)[c]$.
 
Consider an arbitrary channel $y_\eta$ which is not a free name in $\Gamma \Vdash K^s$. We have \[\msg{\mb{send}y_\eta^{c}\,x_\gamma^{c}}\cproj{\mathbb{D}'_1}{\xi}\meq\cproj{\msg{\mb{send}y_\eta^{c}\,x_\gamma^{c}}\mathbb{D}'_2}{\xi}:\] The quasi-running secrecy of  $\msg{\mb{send}y_\eta^{c}\,x_\gamma^{c}}$ is lower than or equal to the observer level if the quasi-running secrecy of its parent is lower than or equal to the observer level. So the relevancy of the parent of the message and thus the rest of configurations do not change by adding the message to the configuration.

We can apply the induction hypothesis on  the smaller index $m'<m$ to get 
 \[\begin{array}{l}
        (\erasure{\msg{\mb{send}y_\eta^{c}\,x_\gamma^{c}}\mathbb{D}'_1} ; \erasure{\msg{\mb{send}y_\eta^{c}\,x_\gamma^{c}}}) \in  \mc{E}\llbracket \erasure{\Gamma',  y_\eta{:}A[c], x_{\gamma+1}{:}B[c]}\Vdash \erasure{K^s}\rrbracket^{m'}.
\end{array}\]

By line (9) in the definition of $\mc{V}$:

\[\begin{array}{l}
  (\erasure{\mathbb{D}'_1} ; \erasure{\mathcal{D}'_2} ) \in \mc{V}\llbracket \erasure{\Gamma', x_\gamma{:}A \otimes B[c]}\Vdash \erasure{K^s}\rrbracket_{u'_\delta; \cdot}^{m'+1}.
\end{array}\]

\end{description}
\end{description}
 
\end{proof}

\section{Diamond property, confluence, and backward/forward closures}
\label{apx:sec:confluence-closures}

This section introduces various supporting lemmas, asserting the diamond property and confluence, as well as forward and backward closure.
These lemmas are used in the proofs of \Cref{apx:lem:moving-existential} and \Cref{apx:lem:compositionality} introduced in \Cref{apx:sec:confluence-closures},
which are in turn instrumental in proving transitivity (see \Cref{apx:sec:equivalence}) and adequacy (see \Cref{apx:sec:adequacy}).

\subsection{Diamond property, confluence, and minimal sending configuration}\label{apx:sec:confluence-closures:diamond-confluence}

Subsequent lemmas rely on the notions of \emph{active} and \emph{produced} processes and messages, which we define next.

\begin{definition}[Produced processes and messages]\label{apx:def:produced-procs-msgs}
For each dynamic step in \Cref{apx:fig:dynamics}, we say that a process or message is \emph{produced} in the step if it occurs in the post-step (right side of $\mapsto$ notation).
For example, the transition step for $\m{Spawn}$, produces two processes \[\mb{proc}(x_0, (\subst{x_0, \Lambda}{x, \Lambda'} \hat{\gamma}\,{P}))\, \m{and}\,\mb{proc}(y_\alpha,  (\subst{x_0}{x}{Q})),\] and  the transition step for $\multimap_\m{snd}$ produces a message and a process \[ \mb{msg}( \mathbf{send}\,x_\beta\,u_\gamma)\,\m{and}\, \mb{proc}(y_{\alpha}, ([u_{\gamma+1}/u_{\gamma}]P)).\]

\defend
\end{definition}

\begin{definition}[Active processes and messages]\label{apx:def:active-procs-msgs}
For each dynamic step in \Cref{apx:fig:dynamics}, we define the \emph{active} configuration $\mc{A}$ as the set of processes and messages occurring in the pre-step (the left side of $\mapsto$ notation).
For example, in the transition step for $\m{Spawn}$, the active configuration is the single process
\[\mb{proc}(y_\alpha, (x \leftarrow X \leftarrow \Lambda) ; Q),\]
and in the transition step for $\multimap_\m{rcv}$, the active configuration is
\[\mb{proc}(y_\alpha,(w\leftarrow \mathbf{recv}\,y_\alpha; P))\,\mb{msg}(\mathbf{send}\,x_\beta\,y_\alpha).\]

We define the set of active configurations, i.e. $\m{active}$, for $\mc{D}_1 \mapsto^{*_{\Upsilon}}_{\Delta \Vdash K} \mc{D}'_1$ as the union of every step's active configuration.  A process is called active in $\mc{D}_1 \mapsto^{*_{\Upsilon}} \mc{D}'_1$, if it is in the set  $\m{active}$.

\defend
\end{definition}

\begin{lemma}[Uniqueness of process productions]\label{apx:lem:uniqueprod}
Consider $ \Delta \Vdash \mc{D}:: K$, and $\mc{D}\mapsto^* \mc{D}'\mapsto \mc{D}_1 \mb{proc}(x,P)\mc{D}_2$, such that $\mb{proc}(x,P)$ is produced in the last step. Then process $\mb{proc}(x,P)$ does not occur in any of the previous steps $\mc{D}\mapsto^* \mc{D}'$.

The same result hold for the production of a message.
\end{lemma}
\begin{proof}
Observe the following invariant in the dynamics
\begin{enumerate}
    \item In each configuration, there exists at most one process or message offering along a particular generation of $x$.
    \item The offering channel of processes that are not active or produced in the step does not change. 
    \item The offering channel of a negative message is a fresh generation of a channel (and no process offers along it before the message is received).
    \item In the production of $\mb{proc}(x_\alpha,P)$, either (a) $x_\alpha$ is freshly generated (in the case of $\m{Spawn}$ for the callee), or (b) $\mb{proc}(x_\alpha,P)$ replaces another process $\mb{proc}(x_\alpha,P')$ that offers along $x^\alpha$ and has a larger process term, i.e. $|P|< |P'|$ (in $1_\m{rcv}, \oplus_{\m{rcv}}, \otimes_{\m{rcv}}, \&_{\m{snd}}, \multimap_{\m{snd}}$, and $\m{spawn}$ for the continuation of the caller), or (c) $x_\alpha$ is a fresh generation of $x$ (in $ \oplus_{\m{snd}}, \otimes_{\m{snd}}, \&_{\m{rcv}}, \multimap_{\m{rcv}}$), or (d) $\mb{proc}(x_\alpha,P)$ is of the form $\mb{proc}(x_\alpha, \mathsf{Fwd}_{C,x_\alpha \leftarrow y_\beta})$ such that $x_\alpha$ has never occurred as the offering channel of another forwarder process before. 
\end{enumerate} 
Consider production of a process $\mb{proc}(x_\alpha,P)$ and the cases described in 4. If 4.(a) or 4.(c) hold, then by freshness of $x_\alpha$, such process has not been produced before.  It is enough to consider case 4.(b). Assume that there is another occurrence of process $\mb{proc}(x_\alpha,P)$ before this production; By the observations 1, 2 and 4 we get to a contradiction: there must be a chain of productions satisfying 4.(b) with decreasing sizes from the earlier $\mb{proc}(x_\alpha,P)$ to the later one $\mb{proc}(x_\alpha,P)$, which is contradictory.

With a similar reasoning we can prove that if $ \Delta \Vdash \mc{D}:: K$, and $\mc{D}\mapsto^* \mc{D}'\mapsto \mc{D}_1 \mb{msg}(M)\mc{D}_2$, such that $\mb{msg}(M)$ is produced in the last step and $\overline{u}$ is the name of the message resources, then $\mb{msg}(M)$ does not occur in any of $\mc{D}\mapsto^* \mc{D}'$ steps.

\end{proof}

\begin{lemma}[Diamond Property]\label{apx:lem:diamond}
If $\Delta \Vdash \mc{D}_1:: K$ and  $\dagger\; \mc{D}_1 \mapsto^{\Upsilon}_{\Delta \Vdash K}\mc{D}'_1$ and $\dagger'\;\mc{D}_1 \mapsto^{\Upsilon'}_{\Delta \Vdash K} \mc{D}''_1$, and $\mc{D}'_1 \neq \mc{D}''_1$ then there is a configuration $\mc{D}$ such that 
$\star\;\mc{D}'_1 \mapsto^{\Upsilon''}_{\Delta \Vdash K} \mc{D}$, and $\star'\;\mc{D}''_1 \mapsto^{\Upsilon''}_{\Delta \Vdash K} \mc{D}$, where {\small$\Upsilon \cup \Upsilon'= \Upsilon''$}. The messages produced along the channels $\Upsilon \cap \Upsilon'$  are identical in $\mc{D}'_1$ and $\mc{D}''_1$ and $\mc{D}$. 

Moreover, every process in  $\mc{D}'_1$ that is not an active process of $\mc{D}_1 \mapsto^{\Theta';\Upsilon'}_{\Delta \Vdash K} \mc{D}''_1$ is in $\mc{D}$. And every process in  $\mc{D}''_1$ that is not an active process of $\mc{D}_1 \mapsto^{\Theta;\Upsilon}_{\Delta \Vdash K} \mc{D}'_1$ is in $\mc{D}$.
\end{lemma}
\begin{proof}
The proof is straightforward by cases. The key is to build $\star$ (locally) identical to $\dagger'$, and $\star'$ (locally) identical to $\dagger$. 
\end{proof}

\begin{lemma}[Confluence]\label{apx:lem:confluence}
If $ \Delta \Vdash \mc{D}_1:: K$ and  $\dagger\,\mc{D}_1\mapsto^{m'_{\Upsilon}}_{\Delta \Vdash K}\mc{D}'_1$ and $\dagger'\,\mc{D}_1 \mapsto^{n'_{\Upsilon'}}_{\Delta \Vdash K} \mc{D}''_1$, then there is a configuration $\mc{D}$ such that 
$\star\,\mc{D}'_1 \mapsto^{j_{\Upsilon''}}_{\Delta \Vdash K}  \mc{D}$ for some $j \le n'$, and $\star'\,\mc{D}''_1 \mapsto^{k_{\Upsilon''}}_{\Delta \Vdash K} \mc{D}$ for some $k \le m'$, where {\small$\Upsilon \cup \Upsilon' =\Upsilon''$}.  The messages produced along the channels $\Upsilon \cap \Upsilon'$ are identical in $\mc{D}'_1$ and $\mc{D}''_1$ and $\mc{D}_1$. 
The steps in $\star$ are (locally) identical to the steps of $\dagger'$ that do not occur in $\dagger$. And the steps in $\star'$ are (locally) identical to the steps of $\dagger$ that do not occur in $\dagger'$.

Moreover, every process in  $\mc{D}'_1$ that is not an active process of  $\mc{D}_1 \mapsto^{*_{\Upsilon'}}_{\Delta \Vdash K} \mc{D}''_1$ is in $\mc{D}$. And every process in  $\mc{D}''_1$ that is not an active process of $\mc{D}_1\mapsto^{*_{\Upsilon}}_{\Delta \Vdash K}\mc{D}'_1$ is in $\mc{D}$.
\end{lemma}
\begin{proof}
It follows by standard inductions from the diamond property (\Cref{apx:lem:diamond}). The induction is on the pair $(n',m')$. If $n'=0$ and $m'=0$, the proof is straightforward. Similarly, if $n'=1$ and $m'=0$ or $n'=0$ and $m'=1$ the proof is straightforward. For $n'=1$ and $m'=1$, we apply the diamond property (\Cref{apx:lem:diamond}). Assume that $n'=n+1$ and $m'=m+1$. We form the following diagram to sketch the structure of the proof.

  \includegraphics[scale=0.31]{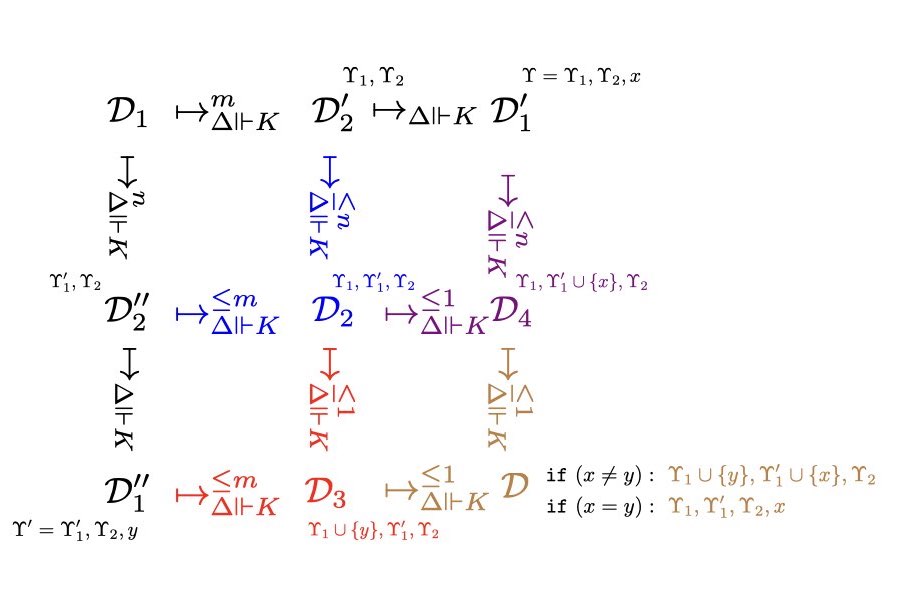}
 
 By induction hypothesis, from $\mc{D}''_2$ and $\mc{D}'_2$, we build $\mc{D}_2$ (with the blue steps) that satisfies the required properties. Then, again by induction we build $\mc{D}_3$ (with the red steps) and $\mc{D}_4$ (with the violet steps). And finally, with a last induction, we build $\mc{D}$ (with the brown steps). The diagram depicts how the required properties move along the steps. For each configuration, we put the set of channels that it sends along them near the configuration. In particular, for $\mc{D}'_1$, we put  $\Upsilon=\Upsilon_1,\Upsilon_2,x$ and for $\mc{D}''_1$, we put $\Upsilon'=\Upsilon'_1,\Upsilon_2,y$. The set $\Upsilon_2$  is in both $\Upsilon$ and $\Upsilon'$, and we have $\Upsilon_1 \cap \Upsilon'_1= \emptyset$, i.e. we put all the common channels (except possibly $x$ and $y$) in $\Upsilon_2$. We assume that the step $\mc{D}'_2 \mapsto \mc{D}'_1$ produces a message along channel $x$. In the case that this step does not produce any message we can simply ignore $x$. Similarly, we assume that the step $\mc{D}''_2 \mapsto \mc{D}''_1$ produces a message along channel $y$. In the case that this step does not produce any message we can simply ignore $y$. 
 At the end, we can build $\mc{D}$ with at most $m+1$ steps from $\mc{D}''_1$ and at most $n+1$ steps from $\mc{D}'_1$. The configuration $\mc{D}$ sends messages along the union of $\Upsilon$ and $\Upsilon'$, and by induction the messages along the intersection of $\Upsilon \cap \Upsilon'$ are identical in $\mc{D}'_1$ and $\mc{D}''_1$ and $\mc{D}$.
 The proof of the second part of the lemma is straightforward by stating the required property for each inductive step and passing them down to $\mc{D}$ in the diagram.
 
\end{proof}

As a straightforward corollary to the first part of the confluence lemma, we get that the messages produced along the channels $\Upsilon \cap \Upsilon'$ are identical in ${\mc{D}'_1}$ and ${\mc{D}''_1}$,
\ie identical messages will be produced along the same channels, independent of the non-deterministic path that we take to produce them. 

\begin{corollary}[Active set independent of non-determinism]\label{apx:cor:uniquepre}
If $\mc{D}_1 \mapsto^* \mc{D}'_1 \mc{A} \mc{D}''_1 \mapsto \mc{D}^1_1 \mb{proc}(x,P) \mc{D}^2_1$ and $\mc{D}_1 \mapsto^* \mc{D}'_2 \mc{A}'\mc{D}''_2 \mapsto \mc{D}^1_2 \mb{proc}(x,P)\mc{D}^2_2$, where $\mc{A}$ and $\mc{A}'$ are the active parts that produce $\mb{proc}(x,P)$, then $\mc{A}=\mc{A}'$. 
\end{corollary}
\begin{proof}
This is another corollary to the second part of the confluence lemma (\Cref{apx:lem:confluence}). First observe that $\mc{A}$ is not active in  $\mc{D}_1 \mapsto^* \mc{D}'_2 \mc{A}'\mc{D}''_2$: if it is active, we produce $\mb{proc}(x,P)$ twice in  $\mc{D}_1 \mapsto^* \mc{D}'_2 \mc{A}'\mc{D}''_2 \mapsto \mc{D}^1_2 \mb{proc}(x,P)\mc{D}^2_2$, which contradicts with uniqueness of process productions (\Cref{apx:lem:uniqueprod}). Similarly, $\mc{A}'$ is not active in  $\mc{D}_1 \mapsto^* \mc{D}'_1 \mc{A}\mc{D}''_1$. With the second part of the confluence lemma, for some $\mc{D}$, we have $\mc{D}'_1\mc{A}\mc{D}''_1\mapsto^*\mc{D}$ and $\mc{D}'_2\mc{A}'\mc{D}''_2\mapsto^*\mc{D}$ such that both $\mc{A}$ and $\mc{A}'$ occur in $\mc{D}$. If $\mc{A} \neq \mc{A}'$, we can produce $\mb{proc}(x,P)$ twice from $\mc{D}$ which by \Cref{apx:lem:uniqueprod} is contradictory. Thus, know that $\mc{A}=\mc{A}'$.

Note: here we rely on the fact that the steps in \Cref{apx:fig:dynamics} produce the post-steps uniquely from the pre-steps. In particular, in the spawn rule we assume that the fresh channel name is uniquely determined based on the offering channel and the process term of the caller.
\end{proof}

Similarly, we can prove that if $\mc{D}_1 \mapsto^* \mc{D}'_1 \mc{A}\mc{D}''_1 \mapsto \mc{D}^1_1 \mb{msg}(M)\mc{D}^2_1$ and $\mc{D}_1 \mapsto^* \mc{D}'_2 \mc{A}' \mc{D}''_2 \mapsto \mc{D}^1_2 \mb{msg}(M)\mc{D}^2_2$,
where $\mc{A}$ and $\mc{A}'$ are the active parts that produce $\mb{msg}(M)$, then $\mc{A}=\mc{A}'$.

In the proof of \Cref{apx:cor:uniquepre}, we used the fact that the pre-step for each substitution is unique.
This can be proved independently by a straightforward observation that if $\mc{D}_1 \mapsto^* \mc{D}'_1 \mc{A}\mc{D}''_1 \mapsto \mc{D}^1_1 [x_\beta/y_\alpha] \mc{D}^2_1$ and $\mc{D}_1 \mapsto^* \mc{D}'_2 \mc{A}' \mc{D}''_2 \mapsto \mc{D}^1_2 [x_\beta/y_\alpha] \mc{D}^2_2$, where $\mc{A}$ and $\mc{A}'$ are forwarding processes that step by renaming the resource $y_\alpha$ to $x_\beta$ ($[x_\beta/y_\alpha]$), then $\mc{A}=\mc{A}'$.

\begin{lemma}[Building minimal sending configuration]\label{apx:lem:least}

Consider {\small$\Delta \Vdash \mc{D}_2:: K$}, a set of channels {\small$\Upsilon_1 \subseteq \Delta, K$} and {\small$\mc{D}_2 \mapsto^{*_{\Upsilon_2}} \mc{D}'_2$} for some {\small$\Upsilon_2 \supseteq \Upsilon_1$}. There exists a set {\small$\Upsilon$} and a configuration {\small$\mc{D}''_2$} such that {\small$\Upsilon_1 \subseteq \Upsilon \subseteq \Upsilon_2$} and {\small$\mc{D}_2\mapsto^{*_{\Upsilon}} \mc{D}''_2$} and {\small$\forall \mc{D}^1_2, \Upsilon_3 \supseteq \Upsilon_1.\, \m{if} \,\mc{D}_2 \mapsto^{*_{\Upsilon_3}} \mc{D}^1_2\,\m{then}\,\mc{D}''_2 \mapsto^{*_{\Upsilon_3}} \mc{D}^1_2$}. We call {\small$\mc{D}''_2$} and {\small$\Upsilon$} the minimal sending configuration and the minimal sending set with respect to {\small$\Upsilon_1$} and {\small$\mc{D}_2$}, respectively. 
\end{lemma}
\begin{proof}
We first provide an algorithm to build {\small$\mc{D}''_2$} and {\small$\Upsilon$} based on the transition steps in {\small$\mc{D}_2 \mapsto^{*_{\Upsilon_2}} \mc{D}'_2$} and the set {\small$\Upsilon_1$} and we show that {\small$\mc{D}_2 \mapsto^{*_{\Upsilon}} \mc{D}''_2$} and {\small$\mc{D}'_2 \mapsto^{*_{\Upsilon_2}} \mc{D}''_2$} and {\small$\Upsilon_1 \subseteq \Upsilon \subseteq \Upsilon_2$}. Then, we prove that if we apply the algorithm on every {\small$\mc{D}^1_2$} with {\small$\mc{D}_2 \mapsto^{*_{\Upsilon_1}} \mc{D}^1_2$} and {\small$\Upsilon_1 \subseteq \Upsilon_3$}, we build the same {\small$\mc{D}''_2$} and {\small$\Upsilon$} that satisfies {\small$\mc{D}''_2 \mapsto^{*_{\Upsilon_3}} \mc{D}^1_2$}.

\begin{algorithm}
\caption{Building the minimal sending configuration}\label{apx:alg:leastsending}
\begin{algorithmic}
\Require A set $\Upsilon_1$ of channels, configuration $\mc{D}_2$, and a configuration $\mc{D}$ with $\mc{D}_2 \mapsto^{*_{\Upsilon_2}} \mc{D}$ and $\Upsilon_1 \subseteq \Upsilon_2$.
\Ensure A set $\Upsilon$, and a configuration $\mc{D}''_2$ with $\mc{D}_2 \mapsto^{*_{\Upsilon}} \mc{D}''_2$ and $\mc{D}''_2 \mapsto^{*_{\Upsilon_2}} \mc{D}$ and $\Upsilon_1 \subseteq \Upsilon \subseteq \Upsilon_2$.
\State $S:= \mathtt{the \; local\; transition \; steps \; in\;} \mc{D}_2 \mapsto^{*_{\Upsilon_2}} \mc{D}$
\State $i := 0$
\State $X_0:=\emptyset$
\State $M:= \mathtt{the\; messages\; in \;\mc{D} \;along\; } \Upsilon_1$
\State $A:=\emptyset$
\While{($M \neq \emptyset$)}
\For{($s\, \mathtt{in}\,  S$)}
    \If{( $\exists p\in post(s)$. $p \in M$)} \Comment{post(s) is the list on the right-hand side of s.}
        \State $A:=A \cup \{pre(s)\}$ \Comment{pre(s) is the set of processes on the left-hand side of s. }
        \State $X_i:=X_i \cup \{s\}$ 
    \EndIf
\EndFor
\State $i:=i+1$
\State $X_i=\emptyset$
\State $M:=A$
\State $A:= \emptyset$

\EndWhile\\ 
\State $\m{Cfg}:=\mc{D}_2$
\State $\m{Tns}:=\cdot$ 
\State $i=i-1$
\While{($i>=0$)}
\For{($s \; \m{in}\;  X_i$)}
    \State $\m{Cfgpost}:=\m{the\; global\; post\mbox{-}state\; when\; the\; local \; step\;} s \m{\; applies\; to\; the\; configuration\;} \m{Cfg}.$
    \State $\m{Tns}.\m{append}(\m{Cfg}\, \mapsto \m{Cfgpost})$
    \State $\m{Cfg}:=\m{Cfgpost}$
\EndFor
\State $i=i-1$
\EndWhile
\State $\mc{D}''_2:= \m{Cfg}$
\State $\Upsilon:= \m{Send}(\mc{D}''_2)$\Comment{$\Upsilon$ is a set of channels along which $\mc{D}''_2$ is ready to send.}\\
\Return{$\mc{D}''_2, \m{Tns}, \Upsilon$} 
\end{algorithmic}
\end{algorithm}
 For every given $\Upsilon_1$, $\mc{D}_2$, and $\mc{D}$ with $\mc{D}_2 \mapsto^{*_{\Upsilon_2}} \mc{D}$ and $\Upsilon_1 \subseteq \Upsilon_2$, 
 \Cref{apx:alg:leastsending} returns a configuration $\mc{D}''$ and set $\Upsilon$ and builds the dynamic transition $\mc{D}_2 \mapsto^{*_{\Upsilon}} \mc{D}''_2$  in $\m{Tns}$. Observe that the steps of $\mc{D}_2 \mapsto^{*_{\Upsilon}} \mc{D}''_2$ are all local transitions of $\mc{D}_2 \mapsto^{*_{\Upsilon_2}} \mc{D}$. Thus, by the confluence lemma (\Cref{apx:lem:confluence}), we know that $\mc{D}''_2 \mapsto^{*_{\Upsilon_2}} \mc{D}$.
 
It remains to be shown that the $\mc{D}''_2$ that \Cref{apx:alg:leastsending} builds is independent of the choice of $\mc{D}$ and $\Upsilon_2$, and is uniquely identified for each $\mc{D}_2$ and $\Upsilon_1$. By the confluence lemma, we know that for each $\mc{D}$ with $\mc{D}_2\mapsto^{*_{\Upsilon_2}}\mc{D}$ and $\Upsilon_1 \subseteq \Upsilon_2$, the algorithm initialized set $M$ with the same messages. In the first $\mathbf{for}$ loop, the algorithm collects the generators of the set $M$ in $A$ and the local steps that produces the set $M$ in $X_i$. By \Cref{apx:cor:uniquepre},  the set of generators of a set $M$ is the same for each $\mc{D}$ with $\mc{D}_2\mapsto^{*_{\Upsilon_2}}\mc{D}$. Similarly, the set of local steps that produces the set $M$ from its generators is the same despite the choice of the configuration $\mc{D}$ with $\mc{D}_2\mapsto^{*_{\Upsilon_2}}\mc{D}$. As a result, we collect the same local transition steps in all $X_i$s  for every $\mc{D}$ with $\mc{D}_2\mapsto^{*_{\Upsilon_2}}\mc{D}$.  The local transition steps in all $X_i$s are those with which we construct $\mc{D}''_2$ from $\mc{D}_2$ and from $\mc{D}''_2$ we can uniquely identify the set $\Upsilon$, and the proof is complete. 
\end{proof}

\subsection{Backward closure}\label{apx:sec:confluence-closures:backward-closure}

\begin{lemma}[Backward closure on the second run]\label{apx:lem:backsecondrun} The second run enjoys backward closure:
\begin{enumerate}
 \item If $(\mc{D}_1;\mc{D}_2) \in \mc{E}\llbracket \Delta\Vdash K \rrbracket^k$ 
 and for $\mc{D}''_2 \in \m{Tree}(\Delta \Vdash K)$, we have $\mc{D}''_2\mapsto^* \mc{D}_2$ then
 $(\mc{D}_1;\mc{D}''_2) \in \mc{E}\llbracket \Delta\Vdash K \rrbracket^k$. 

\item If $(\mc{D}_1;\mc{D}_2) \in \mc{V}\llbracket \Delta\Vdash K \rrbracket_{\cdot;y_\alpha}^{k+1}$ and for $\mc{D}''_2 \in \m{Tree}(\Delta \Vdash K)$, we have $\mc{D}''_2\mapsto^* \mc{D}_2$ with $\mc{D}''_2$ sending along channel $y_\alpha$, 
then
$(\mc{D}_1;\mc{D}''_2) \in \mc{V}\llbracket \Delta\Vdash K \rrbracket_{\cdot;y_\alpha}^{k+1}.$ 
\item If $(\mc{D}_1;\mc{D}_2) \in \mc{V}\llbracket \Delta\Vdash K \rrbracket_{y_\alpha;\cdot}^{k+1}$ and for $\mc{D}''_2 \in \m{Tree}(\Delta \Vdash K)$, we have $\mc{D}''_2\mapsto^* \mc{D}_2$, then
$(\mc{D}_1;\mc{D}''_2)\in \mc{V}\llbracket \Delta\Vdash K \rrbracket_{y_\alpha; \cdot}^{k+1}.$
\end{enumerate}

\end{lemma}
\begin{proof}
We prove the first statement separately and then use it to prove the second and third statements.

\begin{enumerate}
    \item If $k=0$, the proof is trivial. Consider $k=m+1$.
    By assumption, we have $(\mc{D}_1;\mc{D}_2)  \in \mc{E}\llbracket \Delta\Vdash K \rrbracket^{m+1}$.
    By line (12) of the logical relation we know

   \[\begin{array}{l}
(\star)\,\,\forall\, \Upsilon_1, \Theta_1, \mc{D}'_1. \,\m{if}\,  \mc{D}_1 \mapsto^{*_{ \Upsilon_1; \Theta_1}} \mc{D}'_1,\, \m{then}\, \exists \Upsilon_2, \mc{D}'_2 \,\m{such\, that} \, \,\mc{D}_2\mapsto^{{*}_{ \Upsilon_2}} \mc{D}'_2\, \m{and}\\ 
 \; \; \forall\, x_\alpha \in \mb{Out}(\Delta \Vdash K).\, \mb{if}\, x_\alpha \in \Upsilon_1.\, \mb{then}\,\, (\mc{D}'_1; \mc{D}'_2)\in \mc{V}^\xi_{\Psi}\llbracket  \Delta \Vdash K \rrbracket_{\cdot;x_\alpha}^{m+1}\,\m{and}\\ 
 \; \; \forall\, x_\alpha \in \mb{In}(\Delta \Vdash K).\, \mb{if}\, x_\alpha \in \Theta_1.\, \mb{then}\,\, (\mc{D}'_1; \mc{D}'_2)\in \mc{V}^\xi_{\Psi}\llbracket  \Delta \Vdash K \rrbracket_{x_\alpha;\cdot}^{m+1}.
\end{array}\]
Consider $\mc{D}''_2$, for which by the assumption we have $\mc{D}''_2\mapsto^* \mc{D}_2$. Our goal is to prove $( \mc{D}_1;\mc{D}''_2)  \in \mc{E}\llbracket \Delta\Vdash K \rrbracket^{m+1}$.
We need to show that 
\[\begin{array}{l}
    (\dagger)\,\,\forall\, \Upsilon_1, \Theta_1, \mc{D}'_1.\, \m{if}\,  \mc{D}_1 \mapsto^{*_{ \Upsilon_1; \Theta_1}} \mc{D}'_1,\, \m{then}\, \exists \Upsilon_2, \mc{D}'_2 \,\m{such\, that} \, \,\mc{D}''_2\mapsto^{{*}_{ \Upsilon_2}} \mc{D}'_2\, \m{and}\, \Upsilon_1 \subseteq \Upsilon_2\, \m{and}\\ 
     \; \; \forall\, x_\alpha \in \mb{Out}(\Delta \Vdash K).\, \mb{if}\, x_\alpha \in \Upsilon_1.\, \mb{then}\,\, (\mc{D}'_1; \mc{D}'_2)\in \mc{V}^\xi_{\Psi}\llbracket  \Delta \Vdash K \rrbracket_{\cdot;x_\alpha}^{m+1}\,\m{and}\\
      \; \; \forall\, x_\alpha \in \mb{In}(\Delta \Vdash K).\,\mb{if}\, x_\alpha \in \Theta_1.\, \mb{then}\,\,  (\mc{D}'_1; \mc{D}'_2)\in \mc{V}^\xi_{\Psi}\llbracket  \Delta \Vdash K \rrbracket_{x_\alpha;\cdot}^{m+1}.
    \end{array}\]

With a $\forall$-Introduction, an $\m{if}$-Introduction on the goal followed by a $\forall$-Elimination and an $\m{if}$-Elimination on the assumption, we get the assumption 

\[\begin{array}{l}
    (\star')\,\,\exists \Upsilon_2, \mc{D}'_2 \,\m{such\, that} \, \,\mc{D}_2\mapsto^{{*}_{ \Upsilon_2}} \mc{D}'_2\, \m{and}\\ 
     \; \; \forall\, x_\alpha \in \mb{Out}(\Delta \Vdash K).\, \mb{if}\, x_\alpha \in \Upsilon_1.\, \mb{then}\,\, (\mc{D}'_1; \mc{D}'_2)\in \mc{V}^\xi_{\Psi}\llbracket  \Delta \Vdash K \rrbracket_{\cdot;x_\alpha}^{m+1}\,\m{and}\\
      \; \; \forall\, x_\alpha \in \mb{In}(\Delta \Vdash K).\,\mb{if}\, x_\alpha \in \Theta_1.\, \mb{then}\,\,  (\mc{D}'_1; \mc{D}'_2)\in \mc{V}^\xi_{\Psi}\llbracket  \Delta \Vdash K \rrbracket_{x_\alpha;\cdot}^{m+1}.
    \end{array}\]

and the goal 
\[\begin{array}{l}
    (\dagger')\,\,\exists \Upsilon_2,\,\mc{D}'_2 \,\m{such\, that} \, \,\mc{D}''_2\mapsto^{{*}_{ \Upsilon_2}} \mc{D}'_2\, \m{and}\\ 
     \; \; \forall\, x_\alpha \in \mb{Out}(\Delta \Vdash K).\, \mb{if}\, x_\alpha \in \Upsilon_1.\, \mb{then}\,\, (\mc{D}'_1; \mc{D}'_2)\in \mc{V}^\xi_{\Psi}\llbracket  \Delta \Vdash K \rrbracket_{\cdot;x_\alpha}^{m+1}\,\m{and}\\
      \; \; \forall\, x_\alpha \in \mb{In}(\Delta \Vdash K).\,\mb{if}\, x_\alpha \in \Theta_1.\, \mb{then}\,\,  (\mc{D}'_1; \mc{D}'_2)\in \mc{V}^\xi_{\Psi}\llbracket  \Delta \Vdash K \rrbracket_{x_\alpha;\cdot}^{m+1}.
    \end{array}\]
We apply an $\exists$-Elimination on the assumption to get $\Upsilon_2$ and $\mc{D}'_2$ that satisfies the conditions and use the same $\Upsilon_2$ and $\mc{D}'_2$ to instantiate the existential quantifier in the goal. Since $\mc{D}''_2\mapsto^* \mc{D}_2$, we get $\mc{D}''_2\mapsto^{{*}_{\Upsilon_2}} \mc{D}'_2$, and the proof is complete.

\item The proof is by cases on the row of the logical relation that makes the assumption $(\mc{D}_1;\mc{D}_2) \in \mc{V}\llbracket \Delta\Vdash K \rrbracket_{\cdot;y_\alpha}^{k+1}$ true.
Here we only consider an interesting cases, the proof of other cases is similar.

\begin{description}
\item[\bf Row 4.]
By the conditions of this row, we know that $\Delta=\Delta',\Delta''$, and $K=y_\alpha{:}A\otimes B$. Moreover, we have \[\mathcal{D}_1=\mathcal{D}'_1\mathcal{T}_1\mathbf{msg}( \mb{send}x_\beta^c\,y_\alpha^c)\, \m{for}\, \mc{T}_1 \in \m{Tree}_\Psi(\Delta'' \Vdash x_\beta{:}A),\]
\[\mathcal{D}_2=\mathcal{D}'_2\mc{T}_2 \mathbf{msg}( \mb{send},x_\beta^c\,y_\alpha^c)\,\m{for}\, \mc{T}_2 \in \m{Tree}_\Psi(\Delta'' \Vdash x_\beta{:}A),\]
\[ (\dagger_1) \,\,({\mc{T}_1} ; {\mc{T}_2} )  \in  \mathcal{E}^\xi_{ \Psi}\llbracket \Delta'' \Vdash x_{\beta}{:}A[c]\rrbracket^{k}, \, \m{and}\]
\[(\dagger_2)\,\, ({\mc{D}'_1}; {\mc{D}'_2})  \in  \mathcal{E}^\xi_{ \Psi}\llbracket \Delta' \Vdash y_{\alpha+1}{:}B\rrbracket^{k}\]
These are also the statements that we need to prove when replacing $\mc{D}_2$ with $\mc{D}''_2$.
By the assumption that $\mc{D}''_2$ is sesstion-typed and sends along $y$,  uniqueness of channels, and $\mc{D}''_2 \mapsto^* \mc{D}_2$ we get 
\[\mathcal{D}''_2=\mathcal{D}^1_2\mc{T}^1_2 \mathbf{msg}( \mb{send},x_\beta^c\,y_\alpha^c)\,\m{for}\, \mc{T}^1_2 \in \m{Tree}_\Psi(\Delta'' \Vdash x_\beta{:}A[c])\]
Moreover, since $\mc{T}^1_2$ and $\mathcal{D}^1_2$  are disjoint sub-trees with the common parent $\mathbf{msg}( \mb{send},x_\beta^c\,y_\alpha^c)$ and cannot communicate with each other internally, we have
$\mc{T}^1_2 \mapsto^* \mc{T}_2$ and $\mc{D}^1_2 \mapsto^* \mc{D}'_2$.

Now we can apply the the result of part (1) of this lemma on $(\dagger_1)$ and $(\dagger_2)$ to get
\[ (\dagger'_1) \,\,({\mc{T}_1} ; {\mc{T}^1_2} )  \in  \mathcal{E}^\xi_{ \Psi}\llbracket \Delta'' \Vdash x_{\beta}{:}A[c]\rrbracket^{k}, \, \m{and}\]
\[(\dagger'_2)\,\, ({\mc{D}'_1}; {\mc{D}^1_2})  \in  \mathcal{E}^\xi_{ \Psi}\llbracket \Delta' \Vdash y_{\alpha+1}{:}B\rrbracket^{k}\]
and the proof of this subcase is complete.
\end{description}

\item The proof is by considering the row of the logical relation that ensures the assumption {\small$( \mc{D}_1;\mc{D}_2) \in \mc{V}\llbracket \Delta\Vdash K \rrbracket_{y_\alpha;\cdot}^{k+1}$}. We provide the detailed proof for an interesting case, the proof of other cases is similar.

\begin{description}
\item[\bf Row 5.] By the conditions of this row, we know that $K=y_\alpha{:}A\multimap B$. We have 
 $(\mc{D}_1;\mc{D}_2) \in \m{Tree}_\Psi(\Delta \Vdash y_\alpha{:}A\multimap B)$ and \[\;\;(\dagger)\,\forall\, x_\beta \not\in \mathit{dom}(\Delta,K). \, ({\mc{D}_1\mathbf{msg}( \mb{send}x_\beta^c\,y_\alpha^c)}; {\mc{D}_2\mathbf{msg}(\mb{send}x_\beta^c\,y_\alpha^c)})  \in  \mathcal{E}^\xi_{ \Psi}\llbracket \Delta, x_\beta{:}A\Vdash y_{\alpha+1}{:}B\rrbracket^{k}\]

 These are also the statements that we need to prove when replacing $\mc{D}_2$ with $\mc{D}''_2$. The first tree statement is straightforward by the assumption that $\mc{D}''_2$ is session-typed.
Using the local transition steps, we get $\mc{D}''_2\mathbf{msg}(\mb{send}x_\beta^c\,y_\alpha^c)\mapsto^*\mc{D}_2\mathbf{msg}(\mb{send}x_\beta^c\,y_\alpha^c)$. We can apply the result of part (1) of this lemma on $(\dagger)$ to get 
 \[(\dagger')\,\forall\, x_\beta \not \in  \mathit{dom}(\Delta,K).\,(\mc{D}_1\mathbf{msg}( \mb{send}x_\beta^c\,y_\alpha^c); \mc{D}''_2\mathbf{msg}(\mb{send}x_\beta^c\,y_\alpha^c))  \in  \mathcal{E}^\xi_{ \Psi}\llbracket \Delta, x_\beta{:}A\Vdash y_{\alpha+1}{:}B\rrbracket^{k}\]
 which completes the proof of this case.
\item[\bf Row 9.] By the conditions of this row, we know that $\Delta= \Delta', y_\alpha{:}A\otimes B$. We have 

$ \forall x_\beta  \not \in  \mathit{dom}(\Delta;, y_\alpha{:}A \otimes B, K).\,({\mc{D}_1};{\mc{D}_2}) \in \m{Tree}_\Psi(\Delta', y_\alpha{:}A\otimes B \Vdash K)$ and  
\[ (\dagger)\, ({\mathbf{msg}( \mb{send}x_\beta^c\,y_\alpha^c)\mc{D}_1}; {\mathbf{msg}( \mb{send}x_\beta^c\,y_\alpha^c)\mc{D}_2})  \in  \mathcal{E}^\xi_{ \Psi}\llbracket \Delta',x_\beta{:}A, y_{\alpha+1}{:}B \Vdash K \rrbracket^{k} \]
 These are also the statements that we need to prove when replacing $\mc{D}_2$ with $\mc{D}''_2$. The first tree statement is straightforward by the assumption that $\mc{D}''_2$ is session-typed. Using the local dynamic steps we get $\mathbf{msg}( \mb{send}x_\beta^c\,y_\alpha^c) \mc{D}''_2 \mapsto^* \mathbf{msg}( \mb{send}x_\beta^c\,y_\alpha^c)\mc{D}_2$.
  We apply the result of part (1) of this lemma on $\dagger$ to get
 {\small \[ (\dagger')\, \forall x_\beta  \not \in  \mathit{dom}(\Delta;, y_\alpha{:}A \otimes B, K).\,(\mathbf{msg}( \mb{send}x_\beta^c\,y_\alpha^c)\mc{D}_1; \mathbf{msg}( \mb{send}x_\beta^c\,y_\alpha^c) \mc{D}''_2)  \in  \mathcal{E}^\xi_{ \Psi}\llbracket \Delta',x_\beta{:}A, y_{\alpha+1}{:}B \Vdash K \rrbracket^{k} \]}

\end{description}
\end{enumerate}
\end{proof}

\subsection{Forward closure}\label{apx:sec:confluence-closures:forward-closue}

\begin{lemma}[Forward closure on the first run]\label{apx:lem:fwdfirst}
Consider $(\mc{D}_1;\mc{D}_2) \in \mc{E}\llbracket \Delta\Vdash K \rrbracket^k$ and   $\mc{D}_1\mapsto^{*} \mc{D}''_1$. We have
$( \mc{D}''_1;\mc{D}_2) \in \mc{E}\llbracket \Delta\Vdash K \rrbracket^k.$
\end{lemma}
\begin{proof}
If $k=0$ the proof is trivial. Consider $k=m+1$.
By assumption, we have $(\mc{D}_1;\mc{D}_2)  \in \mc{E}\llbracket \Delta\Vdash K \rrbracket^{m+1}$.

By line (12) of the logical relation we get  
 
\[\begin{array}{l}
    \star\,\, \forall\, \Upsilon_1, \Theta_1, \mc{D}'_1. \,\m{if}\,  \mc{D}_1 \mapsto^{*_{ \Upsilon_1; \Theta_1}} \mc{D}'_1,\, \m{then}\, \exists \Upsilon_2, \mc{D}'_2 \,\m{such\, that} \, \,\mc{D}_2\mapsto^{{*}_{ \Upsilon_2}} \mc{D}'_2\, \m{and}\, \Upsilon_1 \subseteq \Upsilon_2\, \m{and}\\ 
     \; \; \forall\, x_\alpha \in \mb{Out}(\Delta \Vdash K).\, \mb{if}\, x_\alpha \in \Upsilon_1.\, \mb{then}\,\, (\mc{D}'_1; \mc{D}'_2)\in \mc{V}^\xi_{\Psi}\llbracket  \Delta \Vdash K \rrbracket_{\cdot;x_\alpha}^{m+1}\,\m{and}\\ 
     \; \; \forall\, x_\alpha \in \mb{In}(\Delta \Vdash K). \,\mb{if}\, x_\alpha \in \Theta_1. \,\mb{then}\, (\mc{D}'_1; \mc{D}'_2)\in \mc{V}^\xi_{\Psi}\llbracket  \Delta \Vdash K \rrbracket_{x_\alpha;\cdot}^{m+1}.
    \end{array}\]

Consider $\mc{D}''_1$, for which by the assumption we have $\mc{D}_1\mapsto^* \mc{D}''_1$. Our goal is to prove $( \mc{D}''_1;\mc{D}_2)  \in \mc{E}\llbracket \Delta\Vdash K \rrbracket^{m+1}$.
We need to show that 
\[\begin{array}{l}
    \dagger,\,\forall\, \Upsilon_1, \Theta_1, \mc{D}'_1. \,\m{if}\,  \mc{D}''_1 \mapsto^{*_{ \Upsilon_1; \Theta_1}} \mc{D}'_1,\, \m{then}\, \exists \mc{D}'_2 \,\m{such\, that} \, \,\mc{D}_2\mapsto^{{*}_{ \Upsilon_2}} \mc{D}'_2\, \m{and}\, \Upsilon_1 \subseteq \Upsilon_2\, \m{and}\\ 
     \; \; \forall\, x_\alpha \in \mb{Out}(\Delta \Vdash K).\, \mb{if}\, x_\alpha \in \Upsilon_1.\, \mb{then}\,\, (\mc{D}'_1; \mc{D}'_2)\in \mc{V}^\xi_{\Psi}\llbracket  \Delta \Vdash K \rrbracket_{\cdot;x_\alpha}^{m+1}\,\m{and}\\ 
     \; \; \forall\, x_\alpha \in \mb{In}(\Delta \Vdash K).\, \,\mb{if}\, x_\alpha \in \Theta_1. \,\mb{then}\,  (\mc{D}'_1; \mc{D}'_2)\in \mc{V}^\xi_{\Psi}\llbracket  \Delta \Vdash K \rrbracket_{x_\alpha;\cdot}^{m+1}.
    \end{array}\]

With a $\forall$-Introduction and an $\m{if}$-Introduction on the goal, we assume $ \mc{D}''_1 \mapsto^{*_{ \Upsilon_1; \Theta_1}} \mc{D}'_1$. By assumption of $\mc{D}_1 \mapsto^* \mc{D}''_1$ we get $\mc{D}_1 \mapsto^{*_{ \Upsilon_1; \Theta_1}} \mc{D}'_1$. We use this to apply $\forall$-Elimination and $\m{if}$- Elimination on the assumption, and get 

\[\begin{array}{l}
    \star'\,\,\exists \Upsilon_2, \mc{D}'_2 \,\m{such\, that} \, \,\mc{D}_2\mapsto^{{*}_{ \Upsilon_2}} \mc{D}'_2\, \m{and} \Upsilon_1 \subseteq \Upsilon_2\, \m{and}\\ 
     \; \; \forall\, x_\alpha \in \mb{Out}(\Delta \Vdash K).\, \mb{if}\, x_\alpha \in \Upsilon_1.\, \mb{then}\,\, (\mc{D}'_1; \mc{D}'_2)\in \mc{V}^\xi_{\Psi}\llbracket  \Delta \Vdash K \rrbracket_{\cdot;x_\alpha}^{m+1}\,\m{and}\\ 
     \; \; \forall\, x_\alpha \in \mb{In}(\Delta \Vdash K).\,\mb{if}\, x_\alpha \in \Theta_1. \,\mb{then}\,  (\mc{D}'_1; \mc{D}'_2)\in \mc{V}^\xi_{\Psi}\llbracket  \Delta \Vdash K \rrbracket_{x_\alpha;\cdot}^{m+1}.
    \end{array}\]

Which exactly matches our goal and the proof is complete.  
\end{proof}

\begin{lemma}[Forward closure on the second run with some specific conditions]\label{apx:lem:fwdsecond}
Consider $(\mc{D}_1;\mc{D}_2) \in \mc{E}\llbracket \Delta\Vdash K \rrbracket^k$ and  $\mc{D}_2\mapsto^{*_{\Upsilon}} \mc{D}''_2$ such that if $\mc{D}_1$ sends along the set $\Upsilon_1$, we have $\Upsilon_1 \subseteq \Upsilon$ and also $\mc{D}''_2$ is the minimal configuration built by \Cref{apx:lem:least} given the set $\Upsilon_1$ and configuration $\mc{D}_2$. We have
$(\mc{D}_1;\mc{D}''_2) \in \mc{E}\llbracket \Delta\Vdash K \rrbracket^k.$
\end{lemma}
\begin{proof}
If $k=0$ the proof is trivial. Consider $k=m+1$.
By assumption, we have $(\mc{D}_1;\mc{D}_2)  \in \mc{E}\llbracket \Delta\Vdash K \rrbracket^{m+1}$.
    By line (12) of the logical relation we get  

\[\begin{array}{l}
\star\,\,\forall\, \Upsilon_1, \Theta_1, \mc{D}'_1. \, \m{if}\,  \mc{D}_1 \mapsto^{*_{ \Upsilon_1; \Theta_1}} \mc{D}'_1,\, \m{then}\, \exists \mc{D}'_2 \,\m{such\, that} \, \,\mc{D}_2\mapsto^{{*}_{ \Upsilon_2}} \mc{D}'_2\, \m{and}\, \Upsilon_1 \subseteq \Upsilon_2\, \m{and}\\ 
    \; \; \forall\, x_\alpha \in \mb{Out}(\Delta \Vdash K).\, \mb{if}\, x_\alpha \in \Upsilon_1.\, \mb{then}\,\, (\mc{D}'_1; \mc{D}'_2)\in \mc{V}^\xi_{\Psi}\llbracket  \Delta \Vdash K \rrbracket_{\cdot;x_\alpha}^{m+1}\,\m{and}\\ 
    \; \; \forall\, x_\alpha \in \mb{In}(\Delta \Vdash K).\,\mb{if}\, x_\alpha \in \Theta_1.\, (\mc{D}'_1; \mc{D}'_2)\in \mc{V}^\xi_{\Psi}\llbracket  \Delta \Vdash K \rrbracket_{x_\alpha;\cdot}^{m+1}.
\end{array}\]

Consider $\mc{D}''_2$, for which by the assumption we have $\mc{D}_2\mapsto^{*_\Upsilon} \mc{D}''_2$. Our goal is to prove $(\mc{D}_1;\mc{D}''_2)  \in \mc{E}\llbracket \Delta\Vdash K \rrbracket^{m+1}$.
We need to show that 
\[\begin{array}{l}
\dagger\,\forall\, \Upsilon_1, \Theta_1, \mc{D}'_1. \,\,\m{if}\,  \mc{D}_1 \mapsto^{*_{ \Upsilon_1; \Theta_1}} \mc{D}'_1,\, \m{then}\, \exists \mc{D}'_2 \,\m{such\, that} \, \,\mc{D}''_2\mapsto^{{*}_{ \Upsilon_2}} \mc{D}'_2\, \m{and} \Upsilon_1 \subseteq \Upsilon_2\, \m{and}\\ 
\; \; \forall\, x_\alpha \in \mb{Out}(\Delta \Vdash K).\, \mb{if}\, x_\alpha \in \Upsilon_1.\, \mb{then}\,\, (\mc{D}'_1; \mc{D}'_2)\in \mc{V}^\xi_{\Psi}\llbracket  \Delta \Vdash K \rrbracket_{\cdot;x_\alpha}^{m+1}\,\m{and}\\ 
\; \; \forall\, x_\alpha \in \mb{In}(\Delta \Vdash K).\,\mb{if}\, x_\alpha \in \Theta_1.\, (\mc{D}'_1; \mc{D}'_2)\in \mc{V}^\xi_{\Psi}\llbracket  \Delta \Vdash K \rrbracket_{x_\alpha;\cdot}^{m+1}.
\end{array}\]

With an $\m{if}$-Introduction on the goal followed by an $\m{if}$- Elimination on the assumption, we get the assumption 
\[\begin{array}{l}
    \star'\,\,\exists \Upsilon_2, \mc{D}'_2 \,\m{such\, that} \, \,\mc{D}_2\mapsto^{{*}_{ \Upsilon_1; \Theta_1}} \mc{D}'_2\, \m{and}\Upsilon_1 \subseteq \Upsilon_2\, \m{and}\\ 
        \; \; \forall\, x_\alpha \in \mb{Out}(\Delta \Vdash K).\, \mb{if}\, x_\alpha \in \Upsilon_1.\, \mb{then}\,\, (\mc{D}'_1; \mc{D}'_2)\in \mc{V}^\xi_{\Psi}\llbracket  \Delta \Vdash K \rrbracket_{\cdot;x_\alpha}^{m+1}\,\m{and}\\ 
        \; \; \forall\, x_\alpha \in \mb{In}(\Delta \Vdash K).\,\mb{if}\, x_\alpha \in \Theta_1.\, (\mc{D}'_1; \mc{D}'_2)\in \mc{V}^\xi_{\Psi}\llbracket  \Delta \Vdash K \rrbracket_{x_\alpha;\cdot}^{m+1}.
    \end{array}\]

and the goal 
\[\begin{array}{l}
    \dagger'\,\,\exists \Upsilon_2, \mc{D}'_2 \,\m{such\, that} \, \,\mc{D}''_2\mapsto^{{*}_{ \Upsilon_2}} \mc{D}'_2\, \m{and} \Upsilon_1 \subseteq \Upsilon_2\, \m{and}\\ 
    \; \; \forall\, x_\alpha \in \mb{Out}(\Delta \Vdash K).\, \mb{if}\, x_\alpha \in \Upsilon_1.\, \mb{then}\,\, (\mc{D}'_1; \mc{D}'_2)\in \mc{V}^\xi_{\Psi}\llbracket  \Delta \Vdash K \rrbracket_{\cdot;x_\alpha}^{m+1}\,\m{and}\\ 
    \; \; \forall\, x_\alpha \in \mb{In}(\Delta \Vdash K).\,\mb{if}\, x_\alpha \in \Theta_1.\,x (\mc{D}'_1; \mc{D}'_2)\in \mc{V}^\xi_{\Psi}\llbracket  \Delta \Vdash K \rrbracket_{x_\alpha;\cdot}^{m+1}.
    \end{array}\]
We apply an $\exists$-Elimination on the assumption to get $\mc{D}'_2$ that satisfies the conditions, i.e., $\mc{D}_2 \mapsto^{*_{\Upsilon_2}} \mathcal{D}'_2$ and $\Upsilon_1 \subseteq \Upsilon_2$. We use the same $\mc{D}'_2$ to instantiate the existential quantifier in the goal, we need to show that $\mc{D}''_2 \mapsto^{*_{\Upsilon_1}} \mathcal{D}'_2$. Since $\mc{D}''_2$ is the minimal configuration built for $\Upsilon_1$ and $\mc{D}_2$, and $\Upsilon_1 \subseteq \Upsilon_2$, by Lemma~\ref{apx:lem:least} we get $\mc{D}''_2\mapsto^{*_{\Upsilon_2}} \mc{D}'_2$,  and the proof is complete.  
\end{proof}

\section{Moving existential and compositionality}
\label{apx:sec:extsiscompose}
This section introduces two lemmas, \Cref{apx:lem:moving-existential} and \Cref{apx:lem:compositionality},
which are instrumental in proving transitivity (see \Cref{apx:sec:equivalence}) and adequacy (see \Cref{apx:sec:adequacy}).

\subsection{Moving existential over universal quantifier}\label{apx:sec:confluence-closures:moving-existential}
\begin{lemma} [Moving existential over universal quantifier]\label{apx:lem:moving-existential}
if we have 
 
\[\begin{array}{l}
    (\dagger)\,\,\forall m.\, \forall\, \Upsilon_1, \Theta_1, \mathcal{D}'_1.\, \, \m{if}\,  \mc{D}_1 \mapsto^{*_{ \Upsilon_1; \Theta_1}} \mc{D}'_1,\, \m{then}\, \exists \Upsilon_2, \mc{D}'_2 \,\m{such\, that} \, \,\mc{D}_2\mapsto^{{*}_{ \Upsilon_2}} \mc{D}'_2\, \m{and}\, \Upsilon_1 \subseteq \Upsilon_2 \, \m{and}\\ 
     \; \; \forall\, x_\alpha \in \mb{Out}(\Delta \Vdash K).\, \mb{if}\, x_\alpha \in \Upsilon_1.\, \mb{then}\,\, (\mc{D}'_1; \mc{D}'_2)\in \mc{V}\llbracket  \Delta \Vdash K \rrbracket_{\cdot;x_\alpha}^{m+1}\,\m{and}\\ 
     \; \; \forall\, x_\alpha \in \mb{In}(\Delta \Vdash K).\mb{if}\, x_\alpha \in \Theta_1.\,\mb{then}\, (\mc{D}'_1; \mc{D}'_2)\in \mc{V}\llbracket  \Delta \Vdash K \rrbracket_{x_\alpha;\cdot}^{m+1}.
    \end{array}\]

then 
\[\begin{array}{l}
    \forall\, \Upsilon_1, \Theta_1, \mathcal{D}'_1. \m{if}\,  \mc{D}_1 \mapsto^{*_{ \Upsilon_1; \Theta_1}} \mc{D}'_1,\, \m{then}\, \exists \Upsilon_2,\mc{D}'_2 \,\m{such\, that} \, \,\mc{D}_2\mapsto^{{*}_{ \Upsilon_2}} \mc{D}'_2\, \m{and}\, \Upsilon_1 \subseteq \Upsilon_2 \, \m{and}\\ 
    \; \; \forall\, x_\alpha \in \mb{Out}(\Delta \Vdash K).\, \mb{if}\, x_\alpha \in \Upsilon_1.\, \mb{then}\,\, \forall \,k.\, (\mc{D}'_1; \mc{D}'_2)\in \mc{V}\llbracket  \Delta \Vdash K \rrbracket_{\cdot;x_\alpha}^{k+1}\,\m{and}\\ 
    \; \; \forall\, x_\alpha \in \mb{In}(\Delta \Vdash K). \mb{if}\, x_\alpha \in \Theta_1.\,\mb{then}\,\,\forall k.\, (\mc{D}'_1; \mc{D}'_2)\in \mc{V}\llbracket  \Delta \Vdash K \rrbracket_{x_\alpha;\cdot}^{k+1}.
\end{array}\]
\end{lemma}

\begin{proof}
First put $m=1$ to apply $\forall\, \mathbf{E}.$ on the assumption (instantiating $\forall\, m$ only). We get as an assumption
\[\begin{array}{l}
 (\dagger')\,\, \forall\, \Upsilon_1, \Theta_1, \mathcal{D}'_1.\, \, \m{if}\,  \mc{D}_1 \mapsto^{*_{ \Upsilon_1; \Theta_1}} \mc{D}'_1,\, \m{then}\, \exists \Upsilon_2, \mc{D}'_2 \,\m{such\, that} \, \,\mc{D}_2\mapsto^{{*}_{ \Upsilon_2}} \mc{D}'_2\, \m{and}\, \Upsilon_1 \subseteq \Upsilon_2 \, \m{and}\\ 
     \; \; \forall\, x_\alpha \in \mb{Out}(\Delta \Vdash K).\, \mb{if}\, x_\alpha \in \Upsilon_1.\, \mb{then}\,\, (\mc{D}'_1; \mc{D}'_2)\in \mc{V}\llbracket  \Delta \Vdash K \rrbracket_{\cdot;x_\alpha}^{0+1}\,\m{and}\\ 
     \; \; \forall\, x_\alpha \in \mb{In}(\Delta \Vdash K).\mb{if}\, x_\alpha \in \Theta_1.\,\mb{then}\, (\mc{D}'_1; \mc{D}'_2)\in \mc{V}\llbracket  \Delta \Vdash K \rrbracket_{x_\alpha;\cdot}^{0+1}.
    \end{array}\]

Next, apply a $\forall \mathbf{I}.$ and $\mathbf{if}\, \mathbf{I}.$ on the goal followed by a corresponding $\forall \mathbf{E}.$ and $\mathbf{if}\, \mathbf{E}.$ on the assumption $(\dagger')$.  Now apply $\exists \mathbf{E}.$ on the assumption $(\dagger')$ to get $\mc{D}'_2$ such that $\mc{D}_2 \mapsto^{*_{\Upsilon_2}} \mc{D}'_2$ and $\Upsilon_1 \subseteq \Upsilon_2$. Given $\mc{D}'_2$, by \Cref{apx:lem:least}, we can build the minimal $\mc{D}''_2$ such that $\mc{D}_2 \mapsto^{*_\Upsilon} \mc{D}''_2$ and $\Upsilon_1 \subseteq \Upsilon$. Moreover, we know that for every $\mc{D}$ such that $\mc{D}_2 \mapsto^{*_{\Upsilon_3}} \mc{D}$ and $\Upsilon_1 \subseteq \Upsilon_3$, we get $\mc{D}''_2 \mapsto^{*_{\Upsilon_3}} \mc{D}$.

We use this minimal $\mc{D}''_2$ to instantiate the existential ($\exists\, \mathbf{I}.$) in the goal, and use $\forall \, \mathbf{I}.$ on the goal. In particular, we instantiate $k$ with an arbitrary natural number. 
The goals are:
{\[\begin{array}{l}
 (\mc{D}'_1; \mc{D}''_2)\in \mc{V}\llbracket  \Delta \Vdash K \rrbracket_{\cdot;x_\alpha}^{k+1}\,\,\m{and}\,\, ({\mc{D}'_1}; {\mc{D}''_2})\in \mc{V}\llbracket  \Delta \Vdash K \rrbracket_{x_\alpha;\cdot}^{k+1}.
\end{array}\]}

Next, we instantiate the $\forall$ quantifier in the original assumption ($\dagger$) once again, this time with $m=k$ followed by a $\forall \mathbf{E}$, and $\mathbf{if} \mathbf{E}$ instantiating the quantifiers with similar $\mc{D}_1'$, $\Upsilon_1$, and $\Theta_1$ as the first time. We get as an assumption:

\[\begin{array}{l}
    (\dagger'')\,\,\exists \Upsilon_2, \mc{D}'_2 \,\m{such\, that} \, \,\mc{D}_2\mapsto^{{*}_{ \Upsilon_2}} \mc{D}'_2\, \m{and}\, \Upsilon_1 \subseteq \Upsilon_2 \, \m{and}\\ 
     \; \; \forall\, x_\alpha \in \mb{Out}(\Delta \Vdash K).\, \mb{if}\, x_\alpha \in \Upsilon_1.\, \mb{then}\,\, (\mc{D}'_1; \mc{D}'_2)\in \mc{V}\llbracket  \Delta \Vdash K \rrbracket_{\cdot;x_\alpha}^{k+1}\,\m{and}\\ 
     \; \; \forall\, x_\alpha \in \mb{In}(\Delta \Vdash K).\mb{if}\, x_\alpha \in \Theta_1.\,\mb{then}\, (\mc{D}'_1; \mc{D}'_2)\in \mc{V}\llbracket  \Delta \Vdash K \rrbracket_{x_\alpha;\cdot}^{k+1}.
    \end{array}\]

Next, apply $\exists\, \mathbf{I}.$ to get a $\Upsilon'$ and $\mc{D}'$ that satisfies the conditions, i.e., $\mathcal{D} \mapsto^{*_{\Upsilon'}} \mathcal{D}'$ and $\Upsilon_1 \subseteq \Upsilon'$. Instantiate $x_\alpha$ as those chosen for the goal. We have as assumptions: 
{\[\begin{array}{l}
 (\dagger''')\,\,({\mc{D}'_1}; {\mc{D}'_2})\in \mc{V}\llbracket  \Delta \Vdash K \rrbracket_{\cdot;x_\alpha}^{k+1}\,\,\m{and}\,\, ({\mc{D}'_1}; {\mc{D}'_2})\in \mc{V}\llbracket  \Delta \Vdash K \rrbracket_{x_\alpha;\cdot}^{k+1}.
\end{array}\]}
Since $\mc{D}''_2$ is the minimal configuration built for $\Upsilon_1$ and $\mc{D}_2$, we know that $\mc{D}''_2 \mapsto^{*_{\Upsilon'}} \mc{D}'$. We can apply the backward closure results of \Cref{apx:lem:backsecondrun} to get the goal from the assumptions $ (\dagger''')$, and this completes the proof.

\end{proof}

\subsection{Compositionality}\label{apx:sec:confluence-closures:compositionality}

\begin{corollary}[Semantic Cut (Compositionality)]\label{apx:lem:compositionality}
    $\forall m.\, (\mc{D}_1;\mc{D}_2) \in \mc{E}\llbracket \Delta, u_\alpha{:}T \Vdash K \rrbracket^m$ iff for all $\mc{T}_2$ and $\mc{T}_2$ s.t. $\dagger_2\, \forall m.\, (\mc{T}_1;\mc{T}_2) \in \mc{E}\llbracket \Delta'\Vdash  u_\alpha{:}T \rrbracket^m$ we have $\dagger\,\forall k.\, (\mc{T}_1 \mc{D}_1;\mc{T}_2\mc{D}_2) \in \mc{E}\llbracket \Delta', \Delta\Vdash K \rrbracket^k$.
    \end{corollary}
\begin{proof}
    The left to right direction is a corollary of Lemma~\ref{apx:lem:generalized-compositionality} in which we compose multiple configurations instead of just two.
    For the right to left direction, we put $\mc{T}_i=\cdot$, and $\Delta'=u_\alpha{:}T$, and the rest of the proof is straightforward. 
\end{proof}
\begin{lemma}[Generalized compositionality]\label{apx:lem:generalized-compositionality}
    For $i\in \{1,2\}$, and index set $I$ consider session typed tree-shaped configurations $\Delta^n \Vdash \mc{B}^n_i:: K^n$
    such that $n \in I$ and their compositions form session typed tree-shaped configurations $\Delta \Vdash \mc{D}_i::K$, i.e.,  $\mc{D}_i= \{\mc{B}^n_i\}_{n\in I}$. If for all $n \in I$, we have 
 $\dagger_n\,\,\forall m.\, (\mc{B}^n_1;\mc{B}^n_2) \in \mc{E}\llbracket \Delta^n \Vdash K^n \rrbracket^m$ then $\dagger\,\forall k.\, (\mc{D}_1;\mc{D}_2) \in \mc{E}\llbracket \Delta\Vdash K \rrbracket^k$.

\end{lemma}
\begin{proof}
Our goal is to prove the following:
\begin{quote} For all index set $I$ and all session-typed configurations $\Delta^n \Vdash \mc{B}^n_i:: K^n$ with $n\in I$ such that
$\dagger_1\,\, \forall m.\, (\mc{B}^n_1;\mc{B}^n_2) \in \mc{E}\llbracket \Delta^n \Vdash K^n \rrbracket^m$ we have $\dagger \,\,\forall k.\, (\{\mc{B}^n_1\}_{n \in I};\{\mc{B}^n_2\}_{n \in I}) \in \mc{E}\llbracket \Delta\Vdash K \rrbracket^k$.
\end{quote}

This is equivalent to the following statement which we prove:
\begin{quote} For all natural numbers, $k$, and for all index set $I$, and for all session-typed configurations $\Delta^n \Vdash \mc{B}^n_i:: K^n$ such that
$\dagger_n\,\,\forall m.\, (\mc{B}^n_1;\mc{B}^n_2) \in \mc{E}\llbracket \Delta^n \Vdash K^n \rrbracket^m$ we have $\dagger'\,\, (\{\mc{B}^n_1\}_{n \in I};\{\mc{B}^n_2\}_{n \in I}) \in \mc{E}\llbracket \Delta \Vdash K \rrbracket^k$.
\end{quote}

We proceed the proof by an induction on k.

{\bf Base case ($k=0$). }The proof is straightforward, since by the definition of the logical relation for session-typed configurations $\Delta^n \Vdash \mc{B}^n_i:: K^n$ with $\Delta \Vdash \{\mc{B}^n_i\}_{n \in I}::K$, we have $(\{\mc{B}^n_1\}_{n \in I};\{\mc{B}^n_2\}_{n \in I}) \in \mc{E}\llbracket \Delta \Vdash K \rrbracket^0$.

{\bf Inductive case ($k=k'+1$). }
Our goal is to prove the following:
\begin{quote} 
     For all index set $I$, and for all session-typed configurations $\Delta^n \Vdash \mc{B}^n_i:: K^n$ where $n\in I$ such that
    $\dagger_n\,\,\forall m.\, (\mc{B}^n_1;\mc{B}^n_2) \in \mc{E}\llbracket \Delta^n \Vdash K^n \rrbracket^m$ we have $\dagger'\,\, (\{\mc{B}^n_1\}_{n \in I};\{\mc{B}^n_2\}_{n \in I}) \in \mc{E}\llbracket \Delta \Vdash K \rrbracket^{k'+1}$
\end{quote}

where $\dagger'$ is defined in line (12) of the logical relation as 
{\small\[\begin{array}{l}
    \forall \,{\Upsilon_1}, \Theta_1, \mc{D}'_1.\,  \forall \, j \in \mathbb{N}.\, \mathbf{if}\, {\{\mc{B}^n_1\}_{n \in I}}\mapsto^{{j}_{ \Upsilon_1; \Theta_1}} {\mc{D}'_1}\,\m{then}\, \exists \Upsilon_2, \mc{D}'_2 \,\m{such\, that} \, \,\{\mc{B}^n_2\}_{n \in I}\mapsto^{{*}_{ \Upsilon_2}} \mc{D}'_2\, \m{and} \, \Upsilon_1 \subseteq \Upsilon_2\, \m{and}\\ 
     \; \; \forall\, x_\gamma \in \mb{Out}( \Delta  \Vdash K).\, \mb{if}\, x_\gamma \in \Upsilon_1.\, \mb{then}\,\, (\mc{D}'_1; \mc{D}'_2)\in \mc{V}\llbracket   \Delta \Vdash K \rrbracket_{\cdot;x_\gamma}^{k'+1}\,\m{and}\\ 
     \; \; \forall\, x_\gamma \in \mb{In}( \Delta  \Vdash K).\,\mb{if}\, x_\gamma \in \Theta_1.\, \mb{then}\,\, (\mc{D}'_1; \mc{D}'_2)\in \mc{V}\llbracket   \Delta \Vdash K \rrbracket_{x_\gamma;\cdot}^{k'+1}.
 \end{array}\]}

Again, we rewrite the above goal as an  equivalent statement as follows:

\begin{quote} For all natural numbers $j$, for all $ I$, and for all session-typed configurations $\Delta^n \Vdash \mc{B}^n_i:: K^n$ with $n \in I$ such that
    $\dagger_n\,\,\forall m.\, (\mc{B}^n_1;\mc{B}^n_2) \in \mc{E}\llbracket \Delta^n \Vdash K^n \rrbracket^m$ we have 
\[\begin{array}{l}
    \forall \,\Upsilon_1, \Theta_1, \mc{D}'_1.\, \, \mathbf{if}\, {\{\mc{B}^n_1\}_{n \in I}}\mapsto^{{j}_{ \Upsilon_1; \Theta_1}} {\mc{D}'_1}\,\m{then}\, \exists \Upsilon_2, \mc{D}'_2 \,\m{such\, that} \, \,\{\mc{B}^n_2\}_{n \in I}\mapsto^{{*}_{ \Upsilon_2}} \mc{D}'_2\, \m{and}\, \Upsilon_1 \subseteq \Upsilon_2\, \m{and}\\ 
     \; \; \forall\, x_\gamma \in \mb{Out}(\Delta  \Vdash K).\, \mb{if}\, x_\gamma \in \Upsilon_1.\, \mb{then}\,\, (\mc{D}'_1; \mc{D}'_2)\in \mc{V}\llbracket   \Delta \Vdash K \rrbracket_{\cdot;x_\gamma}^{k'+1}\,\m{and}\\ 
     \; \; \forall\, x_\gamma \in \mb{In}( \Delta  \Vdash K).\mb{if}\, x_\gamma \in \Theta_1.\, \mb{then}\, (\mc{D}'_1; \mc{D}'_2)\in \mc{V}\llbracket  \Delta \Vdash K \rrbracket_{x_\gamma;\cdot}^{k'+1}.
 \end{array}\]
\end{quote}

We proceed the proof by a nested induction on $j$. 

\begin{description}
\item{\bf Base case ($j=0$)}. Consider an arbitrary index set $I$ and an arbitrary session-typed configurations $\Delta^n\Vdash \mc{B}^n_i:: K^n$ that satisfy the conditions $\dagger_n$. We need to show that

\[\begin{array}{l}
    \forall \Upsilon_1, \Theta_1, \,\mc{D}'_1.\, \mathbf{if}\, {\{\mc{B}^n_1\}_{n \in I}}\mapsto^{{0}_{ \Upsilon_1; \Theta_1}} {\mc{D}'_1}\,\m{then}\, \exists \Upsilon_2, \mc{D}'_2 \,\m{such\, that} \, \,\{\mc{B}^n_2\}_{n \in I}\mapsto^{{*}_{ \Upsilon_2}} \mc{D}'_2\, \m{and} \, \Upsilon_1 \subseteq \Upsilon_2\, \m{and}\\ 
     \; \; \forall\, x_\gamma \in \mb{Out}( \Delta  \Vdash K).\, \mb{if}\, x_\gamma \in \Upsilon_1.\, \mb{then}\,\, (\mc{D}'_1; \mc{D}'_2)\in \mc{V}\llbracket \Delta \Vdash K \rrbracket_{\cdot;x_\gamma}^{k'+1}\,\m{and}\\ 
     \; \; \forall\, x_\gamma \in \mb{In}( \Delta \Vdash K).\, \mb{if}\, x_\gamma \in \Theta_1. \,(\mc{D}'_1; \mc{D}'_2)\in \mc{V}\llbracket   \Delta \Vdash K \rrbracket_{x_\gamma;\cdot}^{k'+1}.
 \end{array}\]

Consider an arbitrary $\Upsilon_1$, $\Theta_1$, ans $\mc{D}'_1$, and apply $\mathbf{If}\, \mathbf{I}.$ on the goal. By the assumption $\{\mc{B}^n_1\}_{n \in I}\mapsto^{{0}_{ \Upsilon_1; \Theta_1}} {\mc{D}'_1}$  we know that $\mc{D}'_1=\{\mc{B}^n_1\}_{n \in I}$, and $\{\mc{B}^n_1\}_{n \in I}$ sends along $\Upsilon_1$ and receives along $\Theta_1$. Our goal is to show the following:

\[\begin{array}{l}
   \star\,\, \exists \Upsilon_2, \mc{D}'_2 \,\m{such\, that} \, \,\{\mc{B}^n_2\}_{n \in I}\mapsto^{{*}_{ \Upsilon_2}} \mc{D}'_2\, \m{and}  \, \Upsilon_1 \subseteq \Upsilon_2\, \m{and}\\ 
     \; \; \forall\, x_\gamma \in \mb{Out}(\Delta \Vdash K).\, \mb{if}\, x_\gamma \in \Upsilon_1.\, \mb{then}\,\, (\{\mc{B}^n_1\}_{n \in I}; \mc{D}'_2)\in \mc{V}\llbracket   \Delta \Vdash K \rrbracket_{\cdot;x_\gamma}^{k'+1}\,\m{and}\\ 
     \; \; \forall\, x_\gamma \in \mb{In}( \Delta \Vdash K). \mb{if}\, x_\gamma \in \Theta_1.\, \mb{then} \,(\{\mc{B}^n_1\}_{n \in I}; \mc{D}'_2)\in \mc{V}\llbracket  \Delta \Vdash K \rrbracket_{x_\gamma;\cdot}^{k'+1}.
 \end{array}\]

By the definition of the logical relation, and from assumptions $\dagger_n$ for $n \in I$ we get

\[\begin{array}{l}
    \dagger'_n\,\forall m. \forall \Upsilon_n, \Theta_n, \mc{B}^{n'}_1.\,\m{if}\, \mc{B}^n_1 \mapsto^{*_{\Upsilon_n; \Theta_n}} \mc{B}^{n'}_1\, \m{then}\, \exists \Upsilon_{n_2}, \mc{B}^{n'}_2 \,\m{such\, that} \, \,\mc{B}^n_2\mapsto^{{*}_{ \Upsilon_{n_2}}} \mc{B}^{n'}_2\, \m{and} 
    \, \Upsilon_n \subseteq \Upsilon_{n_2}\, \m{and}\\ 
     \; \; \forall\, x_\gamma \in \mb{Out}(\Delta^n \Vdash K^n).\, \mb{if}\, x_\gamma \in \Upsilon_n.\, \mb{then}\,\, (\mc{B}^{n'}_1; \mc{B}^{n'}_2)\in \mc{V}\llbracket  \Delta^n \Vdash K^n\rrbracket_{\cdot;x_\gamma}^{m+1}\,\m{and}\\ 
     \; \; \forall\, x_\gamma \in \mb{In}(\Delta^n \Vdash K^n).\,\mb{if}\, x_\gamma \in \Theta_n.\, \mb{then} \,(\mc{B}^{n'}_1; \mc{B}^{n'}_2)\in \mc{V}\llbracket \Delta^n \Vdash K^n\rrbracket_{x_\gamma;\cdot}^{m+1}.
 \end{array}\]

By \Cref{apx:lem:moving-existential}, we get 
\[\begin{array}{l}
    \dagger''_n\,\,\forall \Upsilon_n, \Theta_n, \mc{B}^{n'}_1.\,\m{if}\, \mc{B}^n_1 \mapsto^{*_{\Upsilon_n; \Theta_n}} \mc{B}^{n'}_1\, \m{then}\, \exists \Upsilon_{n_2} \mc{B}^{n'}_2 \,\m{such\, that} \, \,\mc{B}^n_2\mapsto^{{*}_{ \Upsilon_{n_2}}} \mc{B}^{n'}_2\, \m{and} 
    \, \Upsilon_n \subseteq \Upsilon_{n_2}\, \m{and}\\ 
     \; \; \forall\, x_\gamma \in \mb{Out}(\Delta^n \Vdash K^n).\, \mb{if}\, x_\gamma \in \Upsilon_n.\, \mb{then}\,\, \forall\, m.\,(\mc{B}^{n'}_1; \mc{B}^{n'}_2)\in \mc{V}\llbracket  \Delta^n \Vdash K^n\rrbracket_{\cdot;x_\gamma}^{m+1}\,\m{and}\\ 
     \; \; \forall\, x_\gamma \in \mb{In}(\Delta^n \Vdash K^n).  \, \mb{if}\, x_\gamma \in \Theta_n.\, \mb{then}\, \forall\, m. \,(\mc{B}^{n'}_1; \mc{B}^{n'}_2)\in \mc{V}\llbracket \Delta^n \Vdash K^n\rrbracket_{x_\gamma;\cdot}^{m+1}.
 \end{array}\]

We instantiate the for all quantifier in $\dagger''_n$ by $\mathcal{B}^n_1$ and the sets $\Upsilon_n$ and $\Theta_n$ along which $\mathcal{B}^n_1$ sends and receives. Note that by definition we have $\Upsilon_1 \subseteq \bigcup\{\Upsilon_n\}_{n \in I}$ and $\Theta_1 \subseteq \bigcup\{\Theta_n\}_{n \in I}$. We get:

\[\begin{array}{l}
    \exists \Upsilon_{n_2}, \mc{B}^{n'}_2 \,\m{such\, that} \, \,\mc{B}^n_2\mapsto^{{*}_{ \Upsilon_{n_2}}} \mc{B}^{n'}_2\, \m{and}\\ 
    \; \; \forall\, x_\gamma \in \mb{Out}(\Delta^n \Vdash K^n).\, \mb{if}\, x_\gamma \in \Upsilon_n.\, \mb{then}\,\, \forall\, m.\,(\mc{B}^{n}_1; \mc{B}^{n'}_2)\in \mc{V}\llbracket  \Delta^n \Vdash K^n\rrbracket_{\cdot;x_\gamma}^{m+1}\,\m{and}\\ 
    \; \; \forall\, x_\gamma \in \mb{In}(\Delta^n \Vdash K^n). \mb{if}\, x_\gamma \in \Theta_n.\, \mb{then}\, \forall\, m. \,(\mc{B}^{n}_1; \mc{B}^{n'}_2)\in \mc{V}\llbracket \Delta^n \Vdash K^n\rrbracket_{x_\gamma;\cdot}^{m+1}.
 \end{array}\]

 By existential elimination, for all $n \in I$, we get a $\mc{B}^{n'}_2$ and $\Upsilon_{n_2}$ such that  $\mc{B}_2\mapsto^{*_{\Upsilon_{n_2}}} \mc{B}^{n'}_2$, and  $ \Upsilon_n \subseteq \Upsilon_{n_2}$ and

 \[\begin{array}{l}
    \dagger'''_n\, 
     \; \; \forall\, x_\gamma \in \mb{Out}(\Delta^n \Vdash K^n).\, \mb{if}\, x_\gamma \in \Upsilon_n.\, \mb{then}\,\, \forall\, m.\,(\mc{B}^{n}_1; \mc{B}^{n'}_2)\in \mc{V}\llbracket  \Delta^n \Vdash K^n\rrbracket_{\cdot;x_\gamma}^{m+1}\,\m{and}\\ 
     \; \; \forall\, x_\gamma \in \mb{In}(\Delta^n \Vdash K^n). \mb{if}\, x_\gamma \in \Theta_n.\, \mb{then}\, \forall\, m. \,(\mc{B}^{n}_1; \mc{B}^{n'}_2)\in \mc{V}\llbracket \Delta^n \Vdash K^n\rrbracket_{x_\gamma;\cdot}^{m+1}.
 \end{array}\]

 Note that by definition $\mb{Out}(\Delta \Vdash K)\subseteq \bigcup_{n \in I}\mb{Out}(\Delta^n \Vdash K^n)$ and  $\mb{In}(\Delta \Vdash K)\subseteq \bigcup_{n \in I}\mb{In}(\Delta^n \Vdash K^n)$.

We apply \Cref{apx:lem:least} to get the minimal sending configurations $\mc{B}^{n''}_2$ and set $\Upsilon'_n$ for each given $\Upsilon_n$ and $\mc{B}^n_2$. We get $\mc{B}^{n''}_2$ such that  $\mc{B}^n_2\mapsto^{*_{\Upsilon'_n}} \mc{B}^{n''}_2$. Since these configurations are minimal, for all $n \in I$, we have $\mc{B}^{n''}_2\mapsto^{*_{\Upsilon_{n_2}}} \mc{B}^{n'}_2$.

Since $\Upsilon_n \subseteq \Upsilon'_n \subseteq \Upsilon_{n_2}$,  we can apply the backward closure (\Cref{apx:lem:backsecondrun}) on $\dagger'''_n$ to get

\[\begin{array}{l}
    \dagger^4_n\, 
    \; \; \forall\, x_\gamma \in \mb{Out}(\Delta^n \Vdash K^n).\, \mb{if}\, x_\gamma \in \Upsilon_n.\, \mb{then}\,\, \forall\, m.\,(\mc{B}^{n}_1; \mc{B}^{n''}_2)\in \mc{V}\llbracket  \Delta^n \Vdash K^n\rrbracket_{\cdot;x_\gamma}^{m+1}\,\m{and}\\ 
     \; \; \forall\, x_\gamma \in \mb{In}(\Delta^n \Vdash K^n). \,  \mb{if}\, x_\gamma \in \Theta_n.\forall\, m. \,(\mc{B}^{n}_1; \mc{B}^{n''}_2)\in \mc{V}\llbracket \Delta^n \Vdash K^n\rrbracket_{x_\gamma;\cdot}^{m+1}.
 \end{array}\]

 Moreover, we apply forward closure on the second run \Cref{apx:lem:fwdsecond} on assumptions $\dagger_n$ to get for all $n \in I$:
 \[\circ_n\,\,\forall m.\, (\mc{B}^n_1;\mc{B}^{n''}_2) \in \mc{E}\llbracket \Delta^n \Vdash K^n\rrbracket^m\]
 We can apply the forward closure on the second run since the conditions of \Cref{apx:lem:fwdsecond} are satisfied, i.e. $\Upsilon_n\subseteq \Upsilon'_n$ and $\mc{B}^{n''}_2$is the minimal sending configuration with respect to $\mc{B}^n_2$ and $\Upsilon_n$.

We build $\mc{D}'_2$ to be $\{\mc{B}^{n''}_2\}_{n \in I}$. We have $\{\mc{B}^{n}_2\}_{n \in I} \mapsto^{*_{\Upsilon_2}}\{\mc{B}^{n''}_2\}_{n \in I}$ with $\Upsilon_1 \subseteq \Upsilon_2$, and $\{\mc{B}^{n''}_2\}_{n \in I} \mapsto^{*_{\Upsilon_3}} \{\mc{B}^{n'}_2\}_{n \in I}$ with $\Upsilon_2 \subseteq \Upsilon_3$.
We, then instantiate the existential quantifier in the goal ($\star$) with $\Upsilon_2$ and $\mc{D}'_2$ that we built for which we know $\{\mc{B}^{n}_2\}_{n \in I} \mapsto^{*_{\Upsilon_2}}\mc{D}'_2$.  We need to show  
\[\begin{array}{l}
    \star'
      \; \; \forall\, x_\gamma \in \mb{Out}( \Delta \Vdash K).\, \mb{if}\, x_\gamma \in \Upsilon_1.\, \mb{then}\,\, (\{\mc{B}^{n}_1\}_{n \in I} ; \{\mc{B}^{n''}_2\}_{n \in I} )\in \mc{V}\llbracket   \Delta \Vdash K \rrbracket_{\cdot;x_\gamma}^{k'+1}\,\m{and}\\ 
      \; \; \forall\, x_\gamma \in \mb{In}( \Delta \Vdash K). \, \mb{if}\, x_\gamma \in \Theta_1.\, \mb{then}\, \,(\{\mc{B}^{n}_1\}_{n \in I} ; \{\mc{B}^{n''}_2\}_{n \in I} )\in \mc{V}\llbracket   \Delta \Vdash K \rrbracket_{x_\gamma;\cdot}^{k'+1}.
  \end{array}\]

  \begin{description}
\item{\bf Part 1.} Consider arbitrary  $x_\gamma \in \mb{Out}( \Delta \Vdash K)$ and assume $x_\gamma \in \Upsilon_1$. By the structure of the configurations, for some $n \in I$, $x_\gamma \in \Delta^n, K^n$ and thus $x_\gamma \in \mb{Out}(\Delta^n \Vdash K^n)$ and $x_\gamma \in \Upsilon_n$. The goal is to prove
\[ \star_1\,(\{\mc{B}^{n}_1\}_{n \in I} ; \{\mc{B}^{n''}_2\}_{n \in I} )\in \mc{V}\llbracket   \Delta \Vdash K \rrbracket_{\cdot;x_\gamma}^{k'+1}\]

\item{\bf Part 2.}
Consider arbitrary  $x_\gamma \in \mb{In}( \Delta \Vdash K)$ and assume $x_\gamma \in \Theta_1$. By the structure of the configurations, for some $n \in I$, $x_\gamma \in \Delta^n, K^n$ and thus $x_\gamma \in \mb{In}(\Delta^n\Vdash K^n)$ and $x_\gamma \in \Theta_n$. The goal is to prove
\[\star_2\,(\{\mc{B}^{n}_1\}_{n \in I} ; \{\mc{B}^{n''}_2\}_{n \in I} )\in \mc{V}\llbracket   \Delta \Vdash K \rrbracket_{x_\gamma;\cdot}^{k'+1}\]
\end{description}

In both parts, we continue the proof by considering the type of $x_\gamma$. The type of $x_\gamma$ determines whether we need to prove {\bf Part 1.} or {\bf Part 2}. We provide the detailed proof for two interesting cases, the proof of the rest of cases is similar.

\begin{description}
\item[\bf Subcase 1.] $x_\gamma{:}A \otimes B \in K$. This case corresponds to {\bf Part 1.} of the goal in which we have $x_\gamma \in \mb{Out}(\Delta \Vdash x _\gamma{:}A \otimes B )$ and $x_\gamma \in \Upsilon_1$. By the structure of the configuration, there exists a tree $\mc{B}_1^\kappa$ that provides the root channel $K= K^\kappa= x_\gamma{:}A \otimes B$. We use assumption $\dagger^4_n$ for that specific channel ($\dagger^4_\kappa$), we have
\[ \dagger^5_\kappa\,\, \forall\, m.\,(\mc{B}^{\kappa}_1; \mc{B}^{\kappa''}_2)\in \mc{V}\llbracket  \Delta^\kappa \Vdash K^\kappa\rrbracket_{\cdot;x_\gamma}^{m+1}\]

First, instantiate the for all quantifier with $m=0$.
By {\bf Row 4.} of the logical relation, we have $\Delta^\kappa=\Delta^\kappa_1, \Delta^\kappa_2$ and for some $y_\beta \in \mathbf{chnl}$, we have:
$\mc{B}^\kappa_1 = \mc{B}^{\kappa'}_1\mc{A}_1\msg{\mathbf{send}\, y_\beta x_\gamma} $ and $\mc{B}^{\kappa''}_2= \mc{B}^{\kappa'''}_2\mc{A}_2\msg{\mathbf{send}\, y_\beta x_\gamma}$.

We want to prove \[\circ'\,\forall\, m. \,(\mc{A}_1, \mc{A}_2) \in \mc{E}\llbracket \Delta^\kappa_2\Vdash y_\beta{:}A  \rrbracket^m\,\,\quad\,
\m{and}
\,\,\quad \, \circ'' \forall \, m.\, (\mc{B}^{\kappa'}_1, \mc{B}^{\kappa'''}_2) \in \mc{E}\llbracket \Delta^\kappa_1 \Vdash x_{\gamma+1}{:}B  \rrbracket^m.\]

Consider an arbitrary $m$ given by $\forall \mathbf{I}$ on the goals $\circ'$ and $\circ''$. Once again, instantiate the quantifier in $\dagger^5_\kappa$, this time with the arbitrary $m$. Again, we get $\Delta^\kappa=\Delta^\kappa_1, \Delta^\kappa_2$ and for some $y_\beta \in \mathbf{chnl}$, we have:
$\mc{B}^\kappa_1 = \mc{B}^{\kappa'}_1\mc{A}_1\msg{\mathbf{send}\, y_\beta x_\gamma} $ and $\mc{B}^{\kappa''}_2= \mc{B}^{\kappa'''}_2\mc{A}_2\msg{\mathbf{send}\, y_\beta x_\gamma}$. Moreover, 
\[(\mc{A}_1, \mc{A}_2) \in \mc{E}\llbracket \Delta^\kappa_2\Vdash y_\beta{:}A  \rrbracket^m\,\,\quad\,
\m{and}
\,\,\quad \,  (\mc{B}^{\kappa'}_1, \mc{B}^{\kappa'''}_2) \in \mc{E}\llbracket \Delta^\kappa_1 \Vdash x_{\gamma+1}{:}B  \rrbracket^m.\]

Since the naming in the configuration is unique, we get the above for the same $y_\beta$ as we got in the case of $m=0$ and the proof of $\circ'$ and $\circ''$ is complete.

From this we can prove 
\[\begin{array}{ll}
    \mc{D}_1=\{\mc{B}^n_1\}_{n \in I}= \{\mc{B}^n_1\}_{n \in I \& n \neq \kappa}\,\, \mc{B}^{\kappa'}_1\mc{A}_1\msg{\mathbf{send}\, y_\beta x_\gamma}\,\m{and}\\
    \mc{D}'_2=\{\mc{B}^{n''}_2\}_{n \in I }= \{\mc{B}^{n''}_2\}_{n \in I \& n \neq \kappa}\,\,\mc{B}^{\kappa'''}_2\mc{A}_2\msg{\mathbf{send}\, y_\beta x_\gamma}
\end{array}\]
First, observe that by the structure of the configuration, there is no tree $\mc{B}_1^n$ or $\mc{B}_2^{n''}$ using $K^\kappa=x_\gamma{:}A \otimes B$ as its resource, i.e., $x_\gamma$ is the root.
We can break down the resources $\Delta^\kappa_1$ and $\Delta^\kappa_2$ as $\Delta^\kappa_1= \Delta^{\kappa'}_1, \Delta^{\kappa''}_1$ and $\Delta^\kappa_2= \Delta^{\kappa'}_2, \Delta^{\kappa''}_2$, such that  $\Delta^{\kappa'}_1$ and  $\Delta^{\kappa'}_2$ are in the interface of $\mc{D}_1$ and $\mc{D}'_2$ and $\Delta^{\kappa''}_1$ and  $\Delta^{\kappa''}_2$ are the resources provided by other trees. We can partition $I\backslash \{\kappa\}$ into two disjoint sets $I_1$ and $I_2$ such that the configurations $\{\mc{B}^{n}_1\}_{n \in I_1}$ and $\{\mc{B}^{n''}_2\}_{n \in I_1}$ provide the resources in  $\Delta^{\kappa''}_1$ and configurations $\{\mc{B}^{n}_1\}_{n \in I_2}$ and $\{\mc{B}^{n''}_2\}_{n \in I_2}$ provide the resources in $\Delta^{\kappa''}_2$. In other words, we have $\Delta= \Delta_1, \Delta^{\kappa'}_1,\Delta_2, \Delta^{\kappa'}_2$ and $K=x_\gamma{:}A \otimes B$ and
{\small \[
\begin{array}{llll}
 (i) &  \Delta_1 \Vdash \{\mc{B}^{n}_1\}_{n \in I_1}:: \Delta^{\kappa''}_1 &\quad \Delta^{\kappa'}_1, \Delta^{\kappa''}_1 \Vdash \mc{B}^{\kappa'}_1::x_{\gamma+1}{:}B & \quad \Delta_1, \Delta^{\kappa'}_1  \Vdash \{\mc{B}^{n}_1\}_{n \in I_1} \mc{B}^{\kappa'}_1:: x_{\gamma+1}{:}B\\
  &  \Delta_1 \Vdash \{\mc{B}^{n''}_2\}_{n \in I_1}:: \Delta^{\kappa''}_1 &\quad \Delta^{\kappa'}_1, \Delta^{\kappa''}_1 \Vdash \mc{B}^{\kappa'''}_2::x_{\gamma+1}{:}B & \quad \Delta_1, \Delta^{\kappa'}_1  \Vdash \{\mc{B}^{n''}_2\}_{n \in I_1}\mc{B}^{\kappa'''}_2:: x_{\gamma+1}{:}B\\[4pt]
  (ii)&  \Delta_2 \Vdash \{\mc{B}^{n}_1\}_{n \in I_2}:: \Delta^{\kappa''}_2 &\quad \Delta^{\kappa'}_2, \Delta^{\kappa''}_2 \Vdash \mc{A}_1::y_{\beta}{:}A  & \quad \Delta_2, \Delta^{\kappa'}_2  \Vdash \{\mc{B}^{n}_1\}_{n \in I_2} \mc{A}_1:: y_{\beta}{:}A\\
  & \Delta_2 \Vdash \{\mc{B}^{n''}_2\}_{n \in I_2}:: \Delta^{\kappa''}_2 & \quad \Delta^{\kappa'}_2, \Delta^{\kappa''}_2 \Vdash \mc{A}_2::y_{\beta}{:}A&\quad \Delta_2, \Delta^{\kappa'}_2  \Vdash \{\mc{B}^{n''}_2\}_{n \in I_2}\mc{A}_2:: y_{\beta}{:}A\\
\end{array}    
\]}
We also have 
{\small\[\begin{array}{lll}
  \circ''' &  \mc{D}_1=\{\mc{B}^n_1\}_{n \in I}= \{\mc{B}^n_1\}_{n \in I \& n \neq \kappa}\,\, \mc{B}^{\kappa'}_1\mc{A}_1\msg{\mathbf{send}\, y_\beta x_\gamma}= \{\mc{B}^n_1\}_{n \in I_1}\, \{\mc{B}^n_1\}_{n \in I_2}\, \mc{B}^{\kappa'}_1\mc{A}_1\msg{\mathbf{send}\, y_\beta x_\gamma} \,\m{and}\\
  &  \mc{D}'_2=\{\mc{B}^{n''}_2\}_{n \in I }= \{\mc{B}^{n''}_2\}_{n \in I \& n \neq \kappa}\,\,\mc{B}^{\kappa'''}_2\mc{A}_2\msg{\mathbf{send}\, y_\beta x_\gamma}= \{\mc{B}^{n''}_2\}_{n \in I_1}\{\mc{B}^{n''}_2\}_{n \in I_2}\,\,\mc{B}^{\kappa'''}_2\mc{A}_2\msg{\mathbf{send}\, y_\beta x_\gamma}
\end{array}\]
}
Recall that earlier we proved 
\[\circ'\,\forall\, m. \,(\mc{A}_1, \mc{A}_2) \in \mc{E}\llbracket \Delta^\kappa_2\Vdash y_\beta{:}A  \rrbracket^m\,\,\quad\,
\m{and}
\,\,\quad \, \circ'' \forall \, m.\, (\mc{B}^{\kappa'}_1, \mc{B}^{\kappa'''}_2) \in \mc{E}\llbracket \Delta^\kappa_1 \Vdash x_{\gamma+1}{:}B  \rrbracket^m,\]
and we also have for all $n \in I_1 \cup I_2,$ 
\[\circ_2\,\,\forall m.\, (\mc{B}^n_1;\mc{B}^{n''}_2) \in \mc{E}\llbracket \Delta^{n}\Vdash K^n\rrbracket^m\]

We can apply the induction hypothesis on the smaller index $k'$, (i), $\circ''$, and $\circ_2$ for those trees indexed in $I_1$ to get 
    \[\circ_3\,(\{\mc{B}^n_1\}_{n \in I_1}\, \mc{B}^{\kappa'}_1;\{\mc{B}^{n''}_2\}_{n \in I_1}\, \mc{B}^{\kappa'''}_2) \in \mc{E}\llbracket \Delta_1,\Delta^{\kappa'}_1\Vdash  x_{\gamma+1}{:}B\rrbracket^{k'}.\]

Similarly, we can apply the induction hypothesis on the smaller index $k'$, (ii), $\circ'$, and $\circ_2$ for those trees indexed in $I_2$ to get 
    \[\circ_3\,(\{\mc{B}^n_1\}_{n \in I_2}\,\mc{A}_1;\{\mc{B^{''}}^n_2\}_{n \in I_2}\,\mc{A}_2) \in \mc{E}\llbracket \Delta_1,\Delta^{\kappa'}_2\Vdash  y_{\beta}{:}A\rrbracket^{k'}.\]

By {\bf Row 4.} of the logical relation, this is enough to establish the goal
\[ \star_1\,(\mc{D}_1; \mc{D}'_2)\in \mc{V}\llbracket  \Delta \Vdash K \rrbracket_{\cdot;x_\gamma}^{k'+1}\]

\item[\bf Subcase 2.] $x_\gamma{:}A \otimes B \in \Delta$, i.e., $\Delta=\Delta', x_\gamma{:} A \otimes B$. This case corresponds to {\bf Part 2.} of the goal in which we have $x_\gamma \in \mb{In}(\Delta\Vdash K)$ and $x_\gamma \in \Theta_1$, and the goal is to prove
$(\{\mc{B}^{n}_1\}_{n \in I} ; \{\mc{B}^{n''}_2\}_{n \in I} )\in \mc{V}\llbracket   \Delta \Vdash K \rrbracket_{x_\gamma;\cdot}^{k'+1}$.

By the structure of the configuration, for some index $\kappa \in I$, we have $x_\gamma{:} A \otimes B \in \Delta^\kappa$, i.e. $\Delta^\kappa=\Delta^{\kappa'}, x_\gamma{:} A \otimes B$.

By assumption $\dagger^4_\kappa$, we have
\[  \forall\, m.\,(\mc{B}^\kappa_1; \mc{B}^{\kappa''}_2)\in \mc{V}\llbracket  \Delta^{\kappa'}, x_\gamma{:}A \otimes B \Vdash K^\kappa \rrbracket_{x_\gamma; \cdot}^{m+1}\]
By {\bf Row 9.} of the logical relation, we get:

{\small\[ \dagger^5_\kappa\,\, \forall y_\beta \not \in \mathit{dom}(\Delta^{\kappa'}, x_{\gamma}{:} A \otimes B, K^\kappa)\, \forall m.\,(\msg{\mathbf{send}\, y_\beta x_\gamma}\mc{B}^\kappa_1; \msg{\mathbf{send}\, y_\beta x_\gamma}\mc{B}^{\kappa''}_2)\in \mc{E}\llbracket  \Delta^{\kappa'}, y_\beta{:}A,  x_{\gamma+1}{:}B\Vdash K^\kappa \rrbracket^{m}\]}

Moreover, we have
\[\Delta, y_\beta{:}A ,x_\gamma{:} B \Vdash  \msg{\mathbf{send}\, y_\beta x_\gamma}\{\mc{B}^{n}_1\}_{n \in I}::K \qquad \Delta, y_\beta{:}A ,x_\gamma{:} B \Vdash  \msg{\mathbf{send}\, y_\beta x_\gamma}\{\mc{B}^{n''}_2\}_{n \in I}::K \]

Recall that for all $n \in I\backslash \{\kappa\}$ we also have \[\circ_2\,\,\forall m.\, (\mc{B}^{n}_1;\mc{B}^{n''}_2) \in \mc{E}\llbracket \Delta^n\Vdash  K^n\rrbracket^m\]

We can apply the induction hypothesis on the smaller index $k'$, $\circ_2$, and $\dagger^5_\kappa$ to get 
{\small\[\forall y_\beta \not \in \mathit{dom}(\Delta,  x_{\gamma}{:}A \otimes B,  K).(\msg{\mathbf{send}\, y_\beta x_\gamma}\{\mc{B}^{n}_1\}_{n \in I};\msg{\mathbf{send}\, y_\beta x_\gamma}\{\mc{B}^{n''}_1\}_{n \in I}) \in \mc{E}\llbracket \Delta, y_\beta{:}A,  x_{\gamma+1}{:}B\Vdash  K\rrbracket^{k'}.\]}
By {\bf Row 9.} of the logical relation, this is enough to establish the goal.

\end{description}

\item{\bf Inductive case ($j=j'+1$).}
Consider an arbitrary index set $I$ and an arbitrary session-typed configurations $\Delta^n\Vdash \mc{B}^n_i:: K^n$ that satisfy the conditions $\dagger_n$. 
We need to show that

\[\begin{array}{l}
    \forall \Upsilon_1, \Theta_1,\,\mc{D}'_1.\, \mathbf{if}\, {\{\mc{B}^n_1\}_{n \in I}}\mapsto^{{j'+1}_{ \Upsilon_1;\Theta_1}} {\mc{D}'_1}\,\m{then}\, \exists \Upsilon_2, \mc{D}'_2 \,\m{such\, that} \, \,\{\mc{B}^n_2\}_{n \in I}\mapsto^{{*}_{ \Upsilon_2}} \mc{D}'_2\, \m{and}\,\Upsilon_1 \subseteq \Upsilon_2 \, \m{and}\\ 
     \; \; \forall\, x_\gamma \in \mb{Out}( \Delta  \Vdash K).\, \mb{if}\, x_\gamma \in \Upsilon_1.\, \mb{then}\,\, (\mc{D}'_1; \mc{D}'_2)\in \mc{V}\llbracket \Delta \Vdash K \rrbracket_{\cdot;x_\gamma}^{k'+1}\,\m{and}\\ 
     \; \; \forall\, x_\gamma \in \mb{In}( \Delta \Vdash K). \mb{if}\, x_\gamma \in \Theta_1.\, \mb{then} \,(\mc{D}'_1; \mc{D}'_2)\in \mc{V}\llbracket   \Delta \Vdash K \rrbracket_{x_\gamma;\cdot}^{k'+1}.
 \end{array}\]

We apply a $\forall \mathbf{I}.$ and $\mathbf{if}\,\mathbf{I}.$ on the goal: consider an arbitrary $\Upsilon_1$, $\Theta_1$, and $\mc{D}'_1$ and the first step of ${\{\mc{B}^n_1\}_{n \in I}}\mapsto^{{j'+1}_{ \Upsilon_1}; \Theta_1} {\mc{D}'_1}$. There are two cases to consider:

\item{\bf Case 1. the first step is an internal step but not a communication between the subtrees.}

Without loss of generality, let's assume that the communication is internal to $\mc{B}^\kappa_1$ for some $\kappa \in I$ and all other trees $\mc{B}^n_1$ for $n \neq \kappa$ remain intact, i.e. we have $\{\mc{B}^n_1\}_{n \in I}=\{\mc{B}^n_1\}_{n \in I_1}\, \mc{B}^{\kappa}_1 \{\mc{B}^n_1\}_{n \in I_2}$ and \[ \{\mc{B}^n_1\}_{n \in I_1}\, \mc{B}^{\kappa}_1 \{\mc{B}^n_1\}_{n \in I_2}\;\mapsto \{\mc{B}^n_1\}_{n \in I_1}, \mc{B}^{\kappa'}_1 \{\mc{B}^n_1\}_{n \in I_2} \;\mapsto^{{j'}_{\Upsilon_1; \Theta_1}} {\mc{D}'_1}.\]

By forward closure (\Cref{apx:lem:fwdfirst}) on the assumption $\dagger_\kappa$ (i.e.,$\dagger_n$for $n = \kappa$), we get
\[\dagger'_\kappa\,\,\forall m.\, (\mc{B}^{\kappa'}_1;\mc{B}_2) \in \mc{E}\llbracket \Delta^\kappa\Vdash K^\kappa \rrbracket^m.\]

Recall that by $\dagger_n$, we also have for $n \neq \kappa$
\[\forall m.\, (\mc{B}^n_1;\mc{B}^n_2) \in \mc{E}\llbracket \Delta^n \Vdash K^n\rrbracket^m.\]
 We can apply the induction hypothesis on the number of steps $j'$ to get:
 {\small\[\begin{array}{l}
    \forall \Upsilon_1, \Theta_1, \,\mc{D}'_1.\,\forall\, {\Upsilon_1}. \, \mathbf{if}\,  \{\mc{B}^n_1\}_{n \in I_1}\, \mc{B}^{\kappa'}_1 \{\mc{B}^n_1\}_{n \in I_2} \mapsto^{{j'}_{ \Upsilon_1; \Theta_1}} {\mc{D}'_1}\,\m{then}\, \exists \Upsilon_2\,\mc{D}'_2. \, \,\{\mc{B}^n_2\}_{n \in I}\mapsto^{{*}_{ \Upsilon_2}} \mc{D}'_2\, \m{and}\, \Upsilon_1 \subseteq \Upsilon_2\,\m{and}\\ 
     \; \; \forall\, x_\gamma \in \mb{Out}(\Delta  \Vdash K).\, \mb{if}\, x_\gamma \in \Upsilon_1.\, \mb{then}\,\, (\mc{D}'_1; \mc{D}'_2)\in \mc{V}\llbracket   \Delta \Vdash K \rrbracket_{\cdot;x_\gamma}^{k'+1}\,\m{and}\\ 
     \; \; \forall\, x_\gamma \in \mb{In}( \Delta  \Vdash K).\,\mb{if}\, x_\gamma \in \Theta_1.\, \mb{then}\, (\mc{D}'_1; \mc{D}'_2)\in \mc{V}\llbracket  \Delta \Vdash K \rrbracket_{x_\gamma;\cdot}^{k'+1}.
 \end{array}\]}

 Since $\{\mc{B}^n_1\}_{n \in I_1}\, \mc{B}^{\kappa}_1 \{\mc{B}^n_1\}_{n \in I_2}\;\mapsto \{\mc{B}^n_1\}_{n \in I_1}, \mc{B}^{\kappa'}_1 \{\mc{B}^n_1\}_{n \in I_2} \;\mapsto^{{j'}_{\Upsilon_1; \Theta_1}} {\mc{D}'_1},$
we can prove the goal:
\[\begin{array}{l}
    \exists \Upsilon_2, \mc{D}'_2 \,\m{such\, that} \, \,\{\mc{B}^n_2\}_{n \in I}\mapsto^{{*}_{ \Upsilon_2}} \mc{D}'_2\, \m{and}\, \Upsilon_1 \subseteq \Upsilon_2\, \m{and}\\ 
     \; \; \forall\, x_\gamma \in \mb{Out}(\Delta  \Vdash K).\, \mb{if}\, x_\gamma \in \Upsilon_1.\, \mb{then}\,\, (\mc{D}'_1; \mc{D}'_2)\in \mc{V}\llbracket   \Delta \Vdash K \rrbracket_{\cdot;x_\gamma}^{k'+1}\,\m{and}\\ 
     \; \; \forall\, x_\gamma \in \mb{In}( \Delta  \Vdash K).\,\mb{if}\, x_\gamma \in \Theta_1.\, (\mc{D}'_1; \mc{D}'_2)\in \mc{V}\llbracket  \Delta \Vdash K \rrbracket_{x_\gamma;\cdot}^{k'+1}.
 \end{array}\]
which completes the proof of this case.

\item{\bf Case 2. the first step is a communication between two sub-configurations.} Without loss of generality, we assume that the communication is between trees indexed by $\kappa, \lambda \in I$, i.e., $\mc{B}^\kappa_1$ offers a resource $u_\alpha{:}T$ and $\mc{B}^\lambda_1$ uses the resource, and there is a message available along $u_\alpha$ that is received in the first step.
The proof proceeds by case analysis on type of $u_\alpha^c{:}T$. We provide the detailed proof for one case. The proof of the rest of the cases is similar.

\begin{description}
\item{\bf  Subcase 1. $T= A \otimes B$}, i.e., we have
\[\begin{array}{llll}
   (i)& \Delta^\kappa \Vdash \mc{B}^{\kappa}_1:: K^\kappa & \quad \m{where}\,\, K^\kappa= u_\alpha{:}A \otimes B\\ &\Delta^\kappa \Vdash \mc{B}^{\kappa}_2 ::K^\kappa &  \\[4pt]
   (ii)& \Delta^\lambda \Vdash \mc{B}^{\lambda}_1:: K^\lambda & \quad \m{where}\,\, \Delta^\lambda= \Delta^{\lambda'}, u_\alpha{:}A \otimes B \\ &\Delta^\lambda \Vdash \mc{B}^{\lambda}_2 ::K^\lambda& \\\
\end{array}
   \]

By the assumptions of this case, we know that there is a message sent along $u_\alpha^c{:}A \otimes B$ in $\mc{B}^\kappa_1$ that is received by a process in $\mc{B}^\lambda_1$, i.e., $\mc{B}^{\kappa}_1= \mc{B}^{\kappa'}_1 \mc{A}_1\mb{msg}(\mb{send}y_\beta\, u_\alpha)$ and

\[\begin{array}{l}
    \{\mc{B}^n_1\}_{n \in I}=\{\mc{B}^n_1\}_{n \in I_1}, \mc{B}^{\kappa}_1 \mc{B}^{\lambda}_1 \{\mc{B}^n_1\}_{n \in I_2}= \{\mc{B}^n_1\}_{n \in I_1}, \mc{B}^{\kappa'}_1 \mc{A}_1\,\mb{msg}(\mb{send}y_\beta\, u_\alpha) \mc{B}^{\lambda}_1 \{\mc{B}^n_1\}_{n \in I_2}\,\, \m{and}\\
    \{\mc{B}^n_1\}_{n \in I_1}, \mc{B}^{\kappa'}_1 \mc{A}_1\,\mb{msg}(\mb{send}y_\beta\, u_\alpha) \mc{B}^{\lambda}_1 \{\mc{B}^n_1\}_{n \in I_2}\;\mapsto \{\mc{B}^n_1\}_{n \in I_1}, \mc{B}^{\kappa'}_1 \mc{A}_1\mc{B}^{\lambda'}_1 \{\mc{B}^n_1\}_{n \in I_2} \;\mapsto^{{j'}_{\Upsilon_1; \Theta_1}} {\mc{D}'_1}.
   
\end{array} \]

By $\dagger_{\kappa}$, \Cref{apx:lem:moving-existential}, and $\mc{B}^{\kappa'}_1 \mc{A}_1\mb{msg}(\mb{send}y_\beta\, u_\alpha)\mapsto^{0_{\Upsilon'_\kappa; \Theta'_\kappa}}\mc{B}^{\kappa'}_1 \mc{A}_1\mb{msg}(\mb{send}y_\beta\, u_\alpha)$, we get

{\small\[\begin{array}{ll}
    \,\exists \Upsilon_{\kappa_2}\mc{B}^{\kappa'}_2 \,\m{such\, that} \, \,\mc{B}^\kappa_2\mapsto^{{*}_{ \Upsilon_{\kappa_2}}} \mc{B}^{\kappa'}_2\, \m{and}\, \Upsilon'_\kappa \subseteq \Upsilon_{\kappa_2}\\ 
    \; \; \forall\, x_\gamma \in \mb{Out}(\Delta^\kappa \Vdash u_{\alpha}{:}A \otimes B ).\, \mb{if}\, x_\gamma \in \Upsilon'_\kappa.\, \mb{then}\,\,\forall\, m.\, (\mc{B}^{\kappa'}_1 \mc{A}_1\mb{msg}(\mb{send}y_\beta\, u_\alpha); \mc{B}^{\kappa'}_2)\in \mc{V}\llbracket  \Delta^ \kappa \Vdash u_{\alpha}{:}A \otimes B\rrbracket_{\cdot;x_\gamma}^{m+1}\\
    \,\m{and}\\ 
    \; \; \forall\, x_\gamma \in \mb{In}(\Delta^\kappa \Vdash u_{\alpha}{:} A \otimes B).\, \mb{if}\, x_\gamma \in \Theta'_\kappa.\, \mb{then}\, \forall\,m. \,(\mc{B}^{\kappa'}_1 \mc{A}_1\mb{msg}(\mb{send}y_\beta\, u_\alpha); \mc{B}^{\kappa'}_2)\in \mc{V}\llbracket \Delta^\kappa \Vdash u_{\alpha}{:}A \otimes B\rrbracket_{x_\gamma;\cdot}^{m+1}
\end{array}\]}

In particular, we know that $u_\alpha{:}A \otimes B \in \mb{Out}(\Delta^\kappa \Vdash  u_{\alpha}{:} A \otimes B)$ and $u_\alpha \in \Upsilon'_\kappa$, which gives us:
\[\begin{array}{ll}
\forall\, m.\, (\mc{B}^{\kappa'}_1 \mc{A}_1\mb{msg}(\mb{send}y_\beta\, u_\alpha); \mc{B}^{\kappa'}_2)\in \mc{V}\llbracket  \Delta^\kappa \Vdash u_{\alpha}{:}A \otimes B\rrbracket_{\cdot;u_\alpha}^{m+1}
\end{array}\]

By {\bf Row 4.} of the logical relation, we have $\Delta^{\kappa}= \Delta^{\kappa}_1, \Delta^{\kappa}_2$ and
\[\begin{array}{ll}
    \mc{B}^{\kappa'}_2=\mc{B}^{\kappa''}_2\mc{A}_2 \mb{msg}(\mb{send}y_\beta\, u_\alpha),\,\m{and}\\
\circ'\,   \forall\, m.\, (\mc{A}_1 ; \mc{A}_2)\in \mc{E}\llbracket  \Delta^{\kappa}_2 \Vdash y_{\beta}{:}A \rrbracket^{m}\,\,\quad\, \m{and}\,\quad \,\,
  \circ''\,  \forall\, m.\, (\mc{B}^{\kappa'}_1 ; \mc{B}^{\kappa''}_2)\in \mc{E}\llbracket  \Delta^{\kappa}_1 \Vdash u_{\alpha+1}{:}B\rrbracket^{m}.\\
\end{array}\]

Next, we consider $\mc{B}^\lambda_1$.
By $\dagger_\lambda$, \Cref{apx:lem:moving-existential}, and $\mc{B}^\lambda_1 \mapsto^{0_{\Upsilon'_\lambda}}\mc{B}^\lambda_1$, we get
{\small\[\begin{array}{ll}
    \,\exists \Upsilon_{\lambda_2} \mc{B}^{\lambda'}_2 \,\m{such\, that} \, \,\mc{B}^{\lambda}_2\mapsto^{{*}_{ \Upsilon_{\lambda_2}}} \mc{B}^{\lambda'}_2\, \m{and}\\ 
    \; \; \forall\, x_\gamma \in \mb{Out}(\Delta^{\lambda'}, u_{\alpha}{:}A \otimes B \Vdash K^\lambda).\, \mb{if}\, x_\gamma \in \Upsilon'_\lambda.\, \mb{then}\,\,\forall\, m.\, (\mc{B}^\lambda_1; \mc{B}^{\lambda'}_2)\in \mc{V}\llbracket  \Delta^{\lambda'}, u_{\alpha}{:}A \otimes B \Vdash K^\lambda \rrbracket_{\cdot;x_\gamma}^{m+1}\,\m{and}\\ 
    \; \; \forall\, x_\gamma \in \mb{In}(\Delta^{\lambda'}, u_{\alpha}{:} A \otimes B \Vdash K^\lambda). \,\mb{if}\, x_\gamma \in \Theta'_\lambda.\, \forall\,m. \,(\mc{B}^{\lambda}_1; \mc{B}^{\lambda'}_2)\in \mc{V}\llbracket  \Delta^{\lambda}, u_{\alpha}{:}A \otimes B \Vdash K^{\lambda} \rrbracket_{x_\gamma;\cdot}^{m+1}
\end{array}\]}

In particular, we know that $u_\alpha{:}A \otimes B \in \mb{In}(\Delta^{\lambda'}, u_{\alpha}{:} A \otimes B \Vdash K^\lambda)$ and also $u_\alpha \in \Theta'_\lambda$, which gives us:
\[\begin{array}{ll}
\forall\, m.\, (\mc{B}^\lambda_1; \mc{B}^{\lambda'}_2)\in \mc{V}\llbracket  \Delta^{\lambda'}, u_{\alpha}{:}A \otimes B \Vdash K^{\lambda}\rrbracket_{u_\alpha; \cdot}^{m+1}
\end{array}\]

By {\bf Row 9.} of the logical relation for the specific channel $y_\beta$ (for which by the tree structure, we know $y_\beta \not \in \Delta^\lambda, u_{\alpha}{:}A \otimes B, K^\lambda$) we have

\[\begin{array}{ll}
    \forall m.\, (\mb{msg}(\mb{send}y_\beta\, u_\alpha)\mc{B}^\lambda_1 ; \mb{msg}(\mb{send}y_\beta\, u_\alpha)\mc{B}^{\lambda'}_2)\in \mc{E}\llbracket  \Delta^\lambda, y_{\beta}{:}A, u_{\alpha+1}{:}B \Vdash K^\lambda \rrbracket^{m}.\\
\end{array}\]

By forward closure (\Cref{apx:lem:fwdfirst}) and $\mb{msg}(\mb{send}y_\beta\, u_\alpha)\mc{B}^\lambda_1 \mapsto \mc{B}^{\lambda'}_1$ we have:
\[\begin{array}{ll}
   \circ'''\,\, \forall m.\, (\mc{B}^{\lambda'}_1 ; \mb{msg}(\mb{send}y_\beta\, u_\alpha)\mc{B}^{\lambda'}_2)\in \mc{E}\llbracket  \Delta, y_{\beta}{:}A, u_{\alpha+1}{:}B \Vdash K \rrbracket^{m}\\
\end{array}\]

Put $I'=I,\ell$, where  $\ell$ does not occur in I and define $\mc{B}^\ell_1= \mc{A}_1$ and $\mc{B}^\ell_2= \mc{A}_2$.
By induction on the number of steps, $\circ'$, $\circ''$, $\circ'''$ and $\dagger_n$ for $n \neq \kappa, \lambda$ we get 

{\small\[\begin{array}{l}
    \forall \Upsilon_1, \Theta_1,\,\mc{D}'_1. \, \mathbf{if}\, {\{\mc{B}^n_1\}_{n \in I_1} \mc{B}^{\kappa'}_1 \mc{A}_1\mc{B}^{\lambda'}_1 \{\mc{B}^n_1\}_{n \in I_2}}\mapsto^{{j'}_{ \Upsilon_1; \Theta_1}} {\mc{D}'_1}\\ \qquad  \qquad \,\mb{then}\, \exists \Upsilon_2, \mc{D}'_2 \,\m{such\, that} \, \,\{\mc{B}^n_2\}_{n \in I_1} \mc{B}^{\kappa''}_2 \mc{A}_2 \mb{msg}(\mb{send}y_\beta\, u_\alpha)\mc{B}^{\lambda'}_2 \{\mc{B}^n_2\}_{n \in I_2} 
    \mapsto^{{*}_{ \Upsilon_2}} \mc{D}'_2\, \m{and}\Upsilon_1 \subseteq \Upsilon_2\, \m{and}\\ 
     \; \; \forall\, x_\gamma \in \mb{Out}(\Delta  \Vdash K).\, \mb{if}\, x_\gamma \in \Upsilon_1.\, \mb{then}\,\, (\mc{D}'_1; \mc{D}'_2)\in \mc{V}\llbracket   \Delta \Vdash K \rrbracket_{\cdot;x_\gamma}^{k'+1}\,\m{and}\\ 
     \; \; \forall\, x_\gamma \in \mb{In}( \Delta  \Vdash K).\,\mb{if}\, x_\gamma \in \Theta_1.\, \mb{then}\,\, (\mc{D}'_1; \mc{D}'_2)\in \mc{V}\llbracket  \Delta \Vdash K \rrbracket_{x_\gamma;\cdot}^{k'+1}.
 \end{array}\]}

By a $\forall \mathbf{E}.$ and ${\{\mc{B}^n_1\}_{n \in I_1} \mc{B}^{\kappa'}_1 \mc{A}_1\mc{B}^{\lambda'}_1 \{\mc{B}^n_1\}_{n \in I_2}}\mapsto^{{j'}_{ \Upsilon_1}} {\mc{D}'_1}$ we get:
\[\begin{array}{l}
    \exists \Upsilon_2, \mc{D}'_2 \,\m{such\, that} \, \,\{\mc{B}^n_2\}_{n \in I_1} \mc{B}^{\kappa''}_2 \mc{A}_2 \mb{msg}(\mb{send}y_\beta\, u_\alpha)\mc{B}^{\lambda'}_2 \{\mc{B}^n_2\}_{n \in I_2} 
    \mapsto^{{*}_{ \Upsilon_2}} \mc{D}'_2\, \m{and}\Upsilon_1 \subseteq \Upsilon_2\, \m{and}\\ 
     \; \; \forall\, x_\gamma \in \mb{Out}(\Delta  \Vdash K).\, \mb{if}\, x_\gamma \in \Upsilon_1.\, \mb{then}\,\, (\mc{D}'_1; \mc{D}'_2)\in \mc{V}\llbracket   \Delta \Vdash K \rrbracket_{\cdot;x_\gamma}^{k'+1}\,\m{and}\\ 
     \; \; \forall\, x_\gamma \in \mb{In}( \Delta  \Vdash K).\,\mb{if}\, x_\gamma \in \Theta_1.\, \mb{then}\,\, (\mc{D}'_1; \mc{D}'_2)\in \mc{V}\llbracket  \Delta \Vdash K \rrbracket_{x_\gamma;\cdot}^{k'+1}.
 \end{array}\]

{\small We also know that $\{\mc{B}^n_2\}_{n \in I}=\{\mc{B}^n_2\}_{n \in I_1} \mc{B}^{\kappa}_2 \mc{B}^{\lambda}_2 \{\mc{B}^n_2\}_{n \in I_2}\mapsto^* \{\mc{B}^n_2\}_{n \in I_1} \mc{B}^{\kappa''}_2  \mc{A}_2 \mb{msg}(\mb{send}y_\beta\, u_\alpha) \mc{B}^{\lambda'}_2 \{\mc{B}^n_2\}_{n \in I_2}.$}
We get the following for the same $\Upsilon_2$ and $\mc{D}'_2$:
\[\begin{array}{l}
    \exists \Upsilon_2, \mc{D}'_2 \,\m{such\, that} \, \,\{\mc{B}^n_2\}_{n \in I_1} \mc{B}^{\kappa''}_2 \mc{A}_2 \mb{msg}(\mb{send}y_\beta\, u_\alpha)\mc{B}^{\lambda'}_2 \{\mc{B}^n_2\}_{n \in I_2} 
    \mapsto^{{*}_{ \Upsilon_2}} \mc{D}'_2\, \m{and}\Upsilon_1 \subseteq \Upsilon_2\, \m{and}\\ 
     \; \; \forall\, x_\gamma \in \mb{Out}(\Delta  \Vdash K).\, \mb{if}\, x_\gamma \in \Upsilon_1.\, \mb{then}\,\, (\mc{D}'_1; \mc{D}'_2)\in \mc{V}\llbracket   \Delta \Vdash K \rrbracket_{\cdot;x_\gamma}^{k'+1}\,\m{and}\\ 
     \; \; \forall\, x_\gamma \in \mb{In}( \Delta  \Vdash K).\,\mb{if}\, x_\gamma \in \Theta_1.\, \mb{then}\,\, (\mc{D}'_1; \mc{D}'_2)\in \mc{V}\llbracket  \Delta \Vdash K \rrbracket_{x_\gamma;\cdot}^{k'+1}.
 \end{array}\]
\end{description}
\end{description}

\end{proof}

\section{Logical equivalence}
\label{apx:sec:equivalence}
\subsection{Equivalence}

\begin{lemma}[Reflexivity]\label{apx:lem:reflexivity-eq}
  For all security levels $\xi$ and configurations $\Delta \Vdash \mathcal{D}:: x_\alpha {:}T$, we have 
  {
  \[(\Delta \Vdash\mathcal{D}::x_\alpha {:}T)  \equiv (\Delta \Vdash \mathcal{D}:: x_\alpha {:}T).\]
  }
  \end{lemma}
  \begin{proof}
 The proof is straightforward by applying the reflexivity Lemma (\Cref{apx:lem:reflexivity}) proved in the next section on a trivial lattice that has only one element $\perp$. We annotate all channels and processes with max secrecies and running secrecies equal to $\perp$ and annotate process definitions in the signature all with one secrecy variable $\psi$. The spawn terms in a process use a substitution that maps $\psi$ to $\perp$.
 With this translation, all session-typed configurations are also IFC-typed under the trivial lattice.
  \end{proof}
  
  \begin{lemma}[Symmetry]\label{apx:lem:symmetry-eq}
  For all configurations $\mc{D}_1$ and $\mc{D}_2$, we have 
  {\small
  $(\Delta \Vdash \mathcal{D}_1:: x_\alpha {:}T)  \equiv (\Delta \Vdash \mathcal{D}_2:: x_\alpha {:}T),$
  iff 
  $(\Delta \Vdash \mathcal{D}_2:: x_\alpha {:}T)  \equiv (\Delta \Vdash \mathcal{D}_1:: x_\alpha {:}T),$
  }
  \end{lemma}
  \begin{proof}
  The proof is straightforward by the definition of logical equivalence (\Cref{def:eq}).
  \end{proof}
  
  \begin{lemma}[Transitivity]\label{apx:lem:transitivity-eq}
  For all  configurations $\mc{D}_1$, $\mc{D}_2$, and $\mc{D}_3$, we have 

  {\small
   \[
  \begin{array}{l}
    \m{if}\,(\Delta \Vdash \mathcal{D}_1:: x_\alpha {:}T)  \equiv (\Delta \Vdash \mathcal{D}_2:: x_\alpha {:}T),\, \m{and}\, (\Delta \Vdash \mathcal{D}_2:: x_\alpha {:}T)  \equiv (\Delta \Vdash \mathcal{D}_3:: x_\alpha {:}T)\\
    \m{then}\,(\Delta \Vdash \mathcal{D}_1:: x_\alpha {:}T)  \equiv (\Delta \Vdash \mathcal{D}_3:: x_\alpha {:}T).
  \end{array}  
  \]
  }
  \end{lemma}
\begin{proof}
  The proof follows from \Cref{apx:cor:transitivity} when we instantiate $\Psi_0$ with the trivial lattice only containing the element $\perp$ and $\xi$ to be $\perp$.
\end{proof}
\subsection{Noninterference}
\begin{lemma}[Reflexivity -- only for IFC-typed processes]\label{apx:lem:reflexivity}
For all security levels $\xi$ and configurations $\Psi_0; \Gamma \Vdash \mathbb{D}:: x_\alpha {:}T[c]$, we have 
{
\[(\Gamma \Vdash \erasure{\mathbb{D}}:: x_\alpha {:}T[c])  \equiv^{\Psi_0}_{\xi} (\Gamma \Vdash \erasure{\mathbb{D}}:: x_\alpha {:}T[c]).\]
}
\end{lemma}
\begin{proof}
Corollary of the fundamental theorem (\Cref{apx:thm:ftlr}). 
\end{proof}

\begin{lemma}[Symmetry]\label{apx:lem:symmetry}
For all security levels $\xi$ and configurations $\mc{D}_1$ and $\mc{D}_2$, we have 
{\small
$(\Gamma_1 \Vdash \mathcal{D}_1:: x_\alpha {:}T_1[c_1])  \equiv^{\Psi_0}_{\xi} (\Gamma_2 \Vdash \mathcal{D}_2:: y_\beta {:}T_2[c_2]),$
iff 
$(\Gamma_2 \Vdash \mathcal{D}_2:: y_\beta {:}T_2[c_2]) \equiv^{\Psi_0}_{\xi} (\Gamma_1 \Vdash \mathcal{D}_1:: x_\alpha {:}T_1[c_1]).$
}
\end{lemma}
\begin{proof}
The proof is straightforward by \Cref{apx:def:noninterference}
\end{proof}

\begin{corollary}[Transitivity]\label{apx:cor:transitivity}
For all security levels $\xi$, and configurations $\mc{D}_1$, $\mc{D}_2$, and $\mc{D}_3$, we have 
{\small
 \[
\begin{array}{l}
  \m{if}\,(\Gamma_1 \Vdash \mathcal{D}_1:: x_\alpha {:}T_1[c_1])  \equiv^{\Psi_0}_{\xi} (\Gamma_2 \Vdash \mathcal{D}_2:: y_\beta {:}T_2[c_2]),\, \m{and}\, (\Gamma_2 \Vdash \mathcal{D}_2:: y_\beta {:}T_2[c_2])  \equiv^{\Psi_0}_{\xi} (\Gamma_3 \Vdash \mathcal{D}_3:: z_\eta {:}T_3[c_3])\\
  \m{then}\,(\Gamma_1 \Vdash \mathcal{D}_1:: x_\alpha {:}T_1[c_1]) \equiv^{\Psi_0}_{\xi} (\Gamma_3 \Vdash \mathcal{D}_3:: z_\eta {:}T_3[c_3]).
\end{array}  
\]
}
\end{corollary}
\begin{proof}
  Consider arbitrary $\mc{C}_1$, $\mc{F}_1$ and $\mc{C}_3$, and $\mc{F}_3$ we need to show \[\forall m.\, (\mc{C}_1\mc{D}_1\mc{F}_1; \mc{C}_3\mc{D}_3\mc{F}_3) \in \mc{E}^\llbracket \Delta \Vdash K\rrbracket^m \,\, \m{and}\,\,
  \forall m. \,(\mc{C}_1\mc{D}_1\mc{F}_1; \mc{C}_3\mc{D}_3\mc{F}_3) \in \mc{E}\llbracket \Delta \Vdash K\rrbracket^m\]
  By the assumptions, we get 
  $(\mc{C}_1\mc{D}_1\mc{F}_1; \mc{C}_2\mc{D}_2\mc{F}_2) \in \mc{E}\llbracket \Delta \Vdash K\rrbracket^m$
  and
  $(\mc{C}_2\mc{D}_2\mc{F}_2; \mc{C}_3\mc{D}_3\mc{F}_3) \in \mc{E}\llbracket \Delta \Vdash K\rrbracket^m.$
 The result follows by \Cref{apx:lem:transitivity-g}.
\end{proof}

\begin{lemma}[Transitivity of the term relation]\label{apx:lem:transitivity-g}
  If $\dagger_1\, \forall m. \,(\mc{D}_1;\mc{D}_2) \in \mc{E}\llbracket \Delta \Vdash K\rrbracket^m$ and $\dagger_2\,\forall m.\,(\mc{D}_2;\mc{D}_3) \in \mc{E}\llbracket \Delta \Vdash K\rrbracket^m$ then $\star\,\forall k.\, (\mc{D}_1;\mc{D}_3) \in \mc{E}\llbracket \Delta \Vdash K\rrbracket^k$.
  \end{lemma}

 \begin{proof}

Our goal is to prove for all $\mc{D}_1$, $\mc{D}_2$, and $\mc{D}_3$ with $\forall m. \,(\mc{D}_1;\mc{D}_2) \in \mc{E}\llbracket \Delta \Vdash K\rrbracket^m$ and $\forall m.\,(\mc{D}_2;\mc{D}_3) \in \mc{E}\llbracket \Delta \Vdash K\rrbracket^m$ then $\forall k.\, (\mc{D}_1;\mc{D}_3) \in \mc{E}\llbracket \Delta \Vdash K\rrbracket^k$. We prove an equivalent statement that says 
 for all $k$, $\mc{D}_1$, $\mc{D}_2$, and $\mc{D}_3$ with $\forall m. \,(\mc{D}_1;\mc{D}_2) \in \mc{E}\llbracket \Delta \Vdash K\rrbracket^m$ and $\forall m.\,(\mc{D}_2;\mc{D}_3) \in \mc{E}\llbracket \Delta \Vdash K\rrbracket^m$ then $(\mc{D}_1;\mc{D}_3) \in \mc{E}\llbracket \Delta \Vdash K\rrbracket^k$.
We proceed by induction on $k$. 

{\bf Base case. ($k=0$)} Consider arbitrary configurations $\mc{D}_1$, $\mc{D}_2$, and $\mc{D}_3$. By the assumptions, we know that $(\mc{D}_1; \mc{D}_2) \in \m{Tree}(\Delta \Vdash K)$
and $(\mc{D}_2; \mc{D}_3) \in \m{Tree}(\Delta \Vdash K)$, which gives us $(\mc{D}_1; \mc{D}_3) \in \m{Tree}(\Delta \Vdash K)$. It is enough to complete the proof in this case.

{\bf Inductive case. ($k=k'+1$)} Consider arbitrary configurations $\mc{D}_1$, $\mc{D}_2$, and $\mc{D}_3$.
Our goal is to show

\[\begin{array}{l}
  \forall\, \mathcal{D}'_1.\, \forall \,\Upsilon_1. \, \m{if}\,  \mc{D}_1 \mapsto^{*_{ \Upsilon_1}} \mc{D}'_1,\, \m{then}\, \exists \mc{D}'_3 \,\m{such\, that} \, \,\mc{D}_3\mapsto^{{*}_{ \Upsilon_1}} \mc{D}'_3\, \m{and}\\ 
   \; \; \forall\, x_\alpha \in \mb{Out}(\Delta \Vdash K).\, \mb{if}\, x_\alpha \in \Upsilon_1.\, \mb{then}\,\, (\mc{D}'_1; \mc{D}'_3)\in \mc{V}\llbracket  \Delta \Vdash K \rrbracket_{\cdot;x_\alpha}^{k'+1}\,\m{and}\\ 
   \; \; \forall\, x_\alpha \in \mb{In}(\Delta \Vdash K).\, (\mc{D}'_1; \mc{D}'_3)\in \mc{V}\llbracket  \Delta \Vdash K \rrbracket_{x_\alpha;\cdot}^{k'+1}.
  \end{array}\]

  Consider an arbitrary $\mathcal{D}'_1$ and $\Upsilon_1$, and assume $\mc{D}_1 \mapsto^{*_{ \Upsilon_1}} \mc{D}'_1$. Our goal is to prove 

  \[\begin{array}{l}
   \exists \mc{D}'_3 \,\m{such\, that} \, \,\mc{D}_3\mapsto^{{*}_{ \Upsilon_1}} \mc{D}'_3\, \m{and}\\ 
     \; \; \forall\, x_\alpha \in \mb{Out}(\Delta \Vdash K).\, \mb{if}\, x_\alpha \in \Upsilon_1.\, \mb{then}\,\, (\mc{D}'_1; \mc{D}'_3)\in \mc{V}\llbracket  \Delta \Vdash K \rrbracket_{\cdot;x_\alpha}^{k'+1}\,\m{and}\\ 
     \; \; \forall\, x_\alpha \in \mb{In}(\Delta \Vdash K).\, (\mc{D}'_1; \mc{D}'_3)\in \mc{V}\llbracket  \Delta \Vdash K \rrbracket_{x_\alpha;\cdot}^{k'+1}.
    \end{array}\]
  
  By assumption $\dagger_1$ and \Cref{apx:lem:moving-existential}, we have
  \[\begin{array}{l}
    \dagger'_1\,\forall\, \mathcal{D}'_1.\, \forall \,\Upsilon_1. \, \m{if}\,  \mc{D}_1 \mapsto^{*_{ \Upsilon_1}} \mc{D}'_1,\, \m{then}\, \exists \mc{D}'_2 \,\m{such\, that} \, \,\mc{D}_2\mapsto^{{*}_{ \Upsilon_1}} \mc{D}'_2\, \m{and}\\ 
      \; \; \forall\, x_\alpha \in \mb{Out}(\Delta \Vdash K).\, \mb{if}\, x_\alpha \in \Upsilon_1.\, \mb{then}\,\, \forall m. (\mc{D}'_1; \mc{D}'_2)\in \mc{V}\llbracket  \Delta \Vdash K \rrbracket_{\cdot;x_\alpha}^{m+1}\,\m{and}\\ 
      \; \; \forall\, x_\alpha \in \mb{In}(\Delta \Vdash K).\, \forall\, m.(\mc{D}'_1; \mc{D}'_2)\in \mc{V}\llbracket  \Delta \Vdash K \rrbracket_{x_\alpha;\cdot}^{m+1}.
    \end{array}\]

    And by assumption $\dagger_2$ and \Cref{apx:lem:moving-existential}, we have
    \[\begin{array}{l}
      \dagger'_2\,\forall\, \mathcal{D}'_2.\, \forall \,\Upsilon_1. \, \m{if}\,  \mc{D}_2 \mapsto^{*_{ \Upsilon_1}} \mc{D}'_2,\, \m{then}\, \exists \mc{D}'_3 \,\m{such\, that} \, \,\mc{D}_3\mapsto^{{*}_{ \Upsilon_1}} \mc{D}'_3\, \m{and}\\ 
        \; \; \forall\, x_\alpha \in \mb{Out}(\Delta \Vdash K).\, \mb{if}\, x_\alpha \in \Upsilon_1.\, \mb{then}\,\, \forall\, m.\,(\mc{D}'_2; \mc{D}'_3)\in \mc{V}\llbracket  \Delta \Vdash K \rrbracket_{\cdot;x_\alpha}^{m+1}\,\m{and}\\ 
        \; \; \forall\, x_\alpha \in \mb{In}(\Delta \Vdash K).\,\forall\,m.\, (\mc{D}'_2; \mc{D}'_3)\in \mc{V}\llbracket  \Delta \Vdash K \rrbracket_{x_\alpha;\cdot}^{m+1}.
      \end{array}\]

We apply $\forall \mathbf{E}.$ on $\dagger'_1$ by instantiating the existential quantifiers with $\mc{D}'_1$ and $\Upsilon_1$. We can apply $\mathbf{if} \mathbf{E}.$ since we know
$\mc{D}_1 \mapsto^{*_{ \Upsilon_1}} \mc{D}'_1$. We get a $\mc{D}'_2$ with $\mc{D}_2 \mapsto^{*_{ \Upsilon_1}} \mc{D}'_2$. Next, we apply $\forall \mathbf{E}.$ on $\dagger_2$ by instantiating the existential quantifiers with $\mc{D}'_2$ and $\Upsilon_1$. We apply $\mathbf{if} \mathbf{E}.$ as we know
$\mc{D}_2 \mapsto^{*_{ \Upsilon_1}} \mc{D}'_2$. As a result, we get a $\mc{D}'_3$ with 
$\mc{D}_3 \mapsto^{*_{ \Upsilon_1}} \mc{D}'_3$ as required by the goal. We need to prove:

\[\begin{array}{l}
  \; \; \forall\, x_\alpha \in \mb{Out}(\Delta \Vdash K).\, \mb{if}\, x_\alpha \in \Upsilon_1.\, \mb{then}\,\, (\mc{D}'_1; \mc{D}'_3)\in \mc{V}\llbracket  \Delta \Vdash K \rrbracket_{\cdot;x_\alpha}^{k'+1}\,\m{and}\\ 
  \; \; \forall\, x_\alpha \in \mb{In}(\Delta \Vdash K).\, (\mc{D}'_1; \mc{D}'_3)\in \mc{V}\llbracket  \Delta \Vdash K \rrbracket_{x_\alpha;\cdot}^{k'+1}.
  \end{array}\]

By $\dagger'_1$, and $\dagger'_2$, we have as assumptions
\[\begin{array}{l}
  \dagger''_1\,
    \; \forall\, x_\alpha \in \mb{Out}(\Delta \Vdash K).\, \mb{if}\, x_\alpha \in \Upsilon_1.\, \mb{then}\,\, \forall m. (\mc{D}'_1; \mc{D}'_2)\in \mc{V}\llbracket  \Delta \Vdash K \rrbracket_{\cdot;x_\alpha}^{m+1}\,\m{and}\\ 
    \; \;\quad \forall\, x_\alpha \in \mb{In}(\Delta \Vdash K).\, \forall\, m.(\mc{D}'_1; \mc{D}'_2)\in \mc{V}\llbracket  \Delta \Vdash K \rrbracket_{x_\alpha;\cdot}^{m+1}.
  \end{array}\]

and
  \[\begin{array}{l}
    \dagger''_2\,
      \; \forall\, x_\alpha \in \mb{Out}(\Delta \Vdash K).\, \mb{if}\, x_\alpha \in \Upsilon_1.\, \mb{then}\,\, \forall\, m.\,(\mc{D}'_2; \mc{D}'_3)\in \mc{V}\llbracket  \Delta \Vdash K \rrbracket_{\cdot;x_\alpha}^{m+1}\,\m{and}\\ 
      \; \;\quad \forall\, x_\alpha \in \mb{In}(\Delta \Vdash K).\,\forall\,m.\, (\mc{D}'_2; \mc{D}'_3)\in \mc{V}\llbracket  \Delta \Vdash K \rrbracket_{x_\alpha;\cdot}^{m+1}.
    \end{array}\]

There are two parts to prove:
\begin{description}
\item{\bf Part 1.}
Consider an arbitrary $x_\alpha \in \mb{Out}(\Delta \Vdash K)$ and assume $x_\alpha \in \Upsilon_1$. Our goal is to show
\[\begin{array}{l}
  (\mc{D}'_1; \mc{D}'_3)\in \mc{V}\llbracket  \Delta \Vdash K \rrbracket_{\cdot;x_\alpha}^{k'+1}
\end{array}\]

And by $\dagger''_1$, and $\dagger''_2$, we have as assumptions
\[\begin{array}{ll}
  \dagger'''_1\,
    \forall m. (\mc{D}'_1; \mc{D}'_2)\in \mc{V}\llbracket  \Delta \Vdash K \rrbracket_{\cdot;x_\alpha}^{m+1}\, \quad \m{and}& \quad
    \dagger'''_2\,
    \forall\, m.\,(\mc{D}'_2; \mc{D}'_3)\in \mc{V}\llbracket  \Delta \Vdash K \rrbracket_{\cdot;x_\alpha}^{m+1}.
    \end{array}\]
We consider cases based on the type of $x_\alpha$. We provide the detailed proof for a few interesting cases. The proof of other cases is similar.
\begin{description}
  \item[{\bf Case 1. ($K=x_{\alpha}{:}A \otimes B $)}.]
  By $\dagger'''_1$ and $\dagger'''_2$we get
   \[\begin{array}{lll}
    \Delta=\Delta_1, \Delta_2 & \quad \m{and} \quad \mc{D}'_1= \mc{D}''_1 \mc{A}_1\mb{msg}(\mb{send}y_\beta\, x_\alpha)& \quad \m{and}  \quad \mc{D}'_2= \mc{D}''_2 \mc{A}_2\mb{msg}(\mb{send}y_\beta\, x_\alpha)\\ 
   & \quad \m{and}\quad  \mc{D}'_2= \mc{D}''_2 \mc{A}_1\mb{msg}(\mb{send}y_\beta\, x_\alpha)& \quad \m{and}  \quad \mc{D}'_3= \mc{D}''_3 \mc{A}_2\mb{msg}(\mb{send}y_\beta\, x_\alpha)
   \end{array}
   \]  

  Moreover,
  \[\begin{array}{ll}
    \forall m. (\mc{D}''_1; \mc{D}''_2)\in \mc{E}\llbracket  \Delta_1 \Vdash x_{\alpha+1}{:}B \rrbracket^{m} & \qquad \forall m. (\mc{A}_1; \mc{A}_2)\in \mc{E}\llbracket  \Delta_2 \Vdash y_{\beta}{:}A \rrbracket^{m}\\
    \forall m. (\mc{D}''_2; \mc{D}''_3)\in \mc{E}\llbracket  \Delta_1 \Vdash x_{\alpha+1}{:}B \rrbracket^{m} & \qquad \forall m. (\mc{A}_2; \mc{A}_3)\in \mc{E}\llbracket  \Delta_2 \Vdash y_{\beta}{:}A \rrbracket^{m}
  \end{array}
\]

We apply the induction hypothesis on a smaller observation index $k'$ to get 
\[\begin{array}{ll}
  (\mc{D}''_1; \mc{D}''_3)\in \mc{E}\llbracket  \Delta_1 \Vdash x_{\alpha+1}{:}B \rrbracket^{k'} & \qquad (\mc{A}_1; \mc{A}_3)\in \mc{E}\llbracket  \Delta_2 \Vdash y_{\beta}{:}A \rrbracket^{k'}
\end{array}
\]
Which by {\bf Row 4} of the logical relation is wnough to prove the goal of this subcase, i.e.,
\[ (\mc{D}'_1; \mc{D}'_3)\in \mc{V}\llbracket  \Delta \Vdash K \rrbracket_{\cdot;x_\alpha}^{k'+1}\]
\end{description}

\item {\bf Part 2.}
Consider an arbitrary $x_\alpha \in \mb{In}(\Delta \Vdash K)$ and assume $x_\alpha \in \Theta_1x$. 
Our goal is to show
\[\begin{array}{l}
   (\mc{D}'_1; \mc{D}'_3)\in \mc{V}\llbracket  \Delta \Vdash K \rrbracket_{x_\alpha;\cdot}^{k'+1}.
  \end{array}\]

By $\dagger''_1$, and $\dagger''_2$, we have as assumptions
\[\begin{array}{ll}
  \dagger'''_1\,\forall\, m.(\mc{D}'_1; \mc{D}'_2)\in \mc{V}\llbracket  \Delta \Vdash K \rrbracket_{x_\alpha;\cdot}^{m+1}\quad \m{and} \quad
  \dagger'''_2\,\forall\,m.\, (\mc{D}'_2; \mc{D}'_3)\in \mc{V}\llbracket  \Delta \Vdash K \rrbracket_{x_\alpha;\cdot}^{m+1}.
    \end{array}\]
    We consider cases based on the type of $x_\alpha$. We provide the detailed proof for a few interesting cases. The proof of other cases is similar.
    \begin{description}
      \item[{\bf Case 1. ($\Delta=\Delta', x_{\alpha}{:}A \otimes B $)}]
      By $\dagger'''_1$ and $\dagger'''_2$we get for all $   y_\beta \not\in \mathit{dom}(\Delta, x_{\alpha}:A \otimes B, K).$
  \[\begin{array}{ll}
    \forall m. (\mb{msg}(\mb{send}y_\beta\, u_\alpha)\mc{D}'_1; \mb{msg}(\mb{send}y_\beta\, x_\alpha)\mc{D}'_2)\in \mc{E}\llbracket  \Delta, y_\beta:A, x_{\alpha+1}:B \Vdash K \rrbracket^{m}\\
    \forall m. (\mb{msg}(\mb{send}y_\beta\, u_\alpha)\mc{D}'_2; \mb{msg}(\mb{send}y_\beta\, x_\alpha)\mc{D}'_3)\in \mc{E}\llbracket  \Delta, y_\beta:A, x_{\alpha+1}:B \Vdash K \rrbracket^{m}
  \end{array}
\]

We apply the induction hypothesis on a smaller observation index $k'$ to get 
{\small\[\begin{array}{ll}
  \forall y_\beta \not\in \mathit{dom}(\Delta, x_{\alpha}:A \otimes B, K).\, (\mb{msg}(\mb{send}y_\beta\, u_\alpha)\mc{D}'_1; \mb{msg}(\mb{send}y_\beta\, x_\alpha)\mc{D}'_3)\in \mc{E}\llbracket  \Delta, y_\beta:A, x_{\alpha+1}:B \Vdash K \rrbracket^{k'}
\end{array}
\]}

Which by {\bf Row 9} of the logical relation is wnough to prove the goal of this subcase, i.e.,
\[ (\mc{D}'_1; \mc{D}'_3)\in \mc{V}\llbracket  \Delta \Vdash K \rrbracket_{x_\alpha; \cdot}^{k'+1}\]
    \end{description}
\end{description}
\end{proof}

\section{Adequacy}
\label{apx:sec:adequacy}
To show that the \slr is adequate, we prove that configurations related by the logical relation are bisimilar and vice versa.
To facilitate this proof, we first give the definition of \emph{asynchronous bisimilarity} \myCite{SangiorgiWalkerBook2001}
(\Cref{apx:def:asynchronousbisim}).
The definition relies on the labeled transition system resulting from our \slr displayed in \Cref{apx:fig:lts}
and on weak transitions and free names defined next.

\begin{figure}
\centering
\begin{small}
\input{figs/lts.tex}
\end{small}
\caption{Labeled transition system induced by \slr.}
\label{apx:fig:lts}
\end{figure}

\begin{definition}[Free names of a configuration]
  For $\Delta \Vdash \mc{D}:: \Delta'$, we define $\m{fn}(\mc{D})$ as $\mathit{dom}(\Delta,\Delta')$.
\end{definition}

\begin{definition}[Weak transition relations]$\,$\\

  \begin{enumerate}
    \item[(1)] $\xRightarrow{}$ is the reflexive and transitive closure of $\xrightarrow{\tau}$.
    \item[(2)] $\xRightarrow{\alpha}$ is $\xRightarrow{} \xrightarrow{\alpha}$.
  \end{enumerate}

\end{definition}
\begin{definition}[Asynchronous bisimilarity]\label{apx:def:asynchronousbisim}
Asynchronous bisimilarity is the largest symmetric relation such that whenever { $\mc{D}_1 \approx_a \mc{D}_2$}, we have
\begin{enumerate}
  \item{$(\tau-\mathtt{step})$} if  { $\mc{D}_1 \xrightarrow{\tau} \mc{D}'_1$} then {$\exists \mc{D}'_2. \, \mc{D}_2 \xRightarrow{\tau} \mc{D}'_2$} and  {$\mc{D}'_1 \approx_a \mc{D}'_2,$}
  \item{$(\m{output})$} if $\mc{D}_1 \xrightarrow{\overline{x_\alpha}\,q} \mc{D}'_1$ then $\exists \mc{D}'_2. \, \mc{D}_2 \xRightarrow{\overline{x_\alpha}\,q} \mc{D}'_2$ and {$\mc{D}'_1 \approx_a \mc{D}'_2$}
  \item{$(\m{left\, input})$} for all $q\not \in \m{fn}(\mc{D}_1).$, if $\mc{D}_1 \xrightarrow{{\mathbf{L}\, x_\alpha\,q}} \mc{D}'_1$ then  $\exists \mc{D}'_2. \, \mc{D}_2 \xRightarrow{\tau} \mc{D}'_2$ and {$\mc{D}'_1 \approx_a \mathbf{msg}(x_\alpha.q) \mc{D}'_2,$}
\item{$(\m{right\, input})$} for all $q\not \in \m{fn}(\mc{D}_1).$, if $\mc{D}_1 \xrightarrow{{\mathbf{R}\, x_\alpha\,q}} \mc{D}'_1$ then  $\exists \mc{D}'_2. \, \mc{D}_2 \xRightarrow{\tau} \mc{D}'_2$ and   {$\mc{D}'_1 \approx_a \mc{D}'_2\mathbf{msg}(x_\alpha.q),$}
\end{enumerate}
where $\mb{msg}(x_\alpha.q)$ is defined as $\mb{msg}(\mathbf{close}\, x_\alpha)$ if $q=\mathbf{close}$, $\mb{msg}(x_\alpha.k)$ if $q=k$, and $\mb{msg}(\mathbf{send}\,z_\delta\, x_\alpha)$ if $q=z_\delta$.
\end{definition}

\begin{definition}[High provider and High client]
 We repeat the definition of high provider and high client configurations (\Cref{apx:def:highprovider}) here.
 {\small \[\begin{array}{lll}
    \cdot \,\,\in \mb{H\text{-}Provider}^\xi(\cdot) \\[5pt]
    \mc{B} \in \mb{H\text{-}Provider}^\xi(\Gamma, x_\alpha{:}A[c])\,  & \m{iff} & c \not \sqsubseteq \xi  \, \m{and}\,
    \mc{B} = \mc{B}' \mc{T} \,\, \m{and}\,\, \mc{B}' \in \mb{H\text{-}Provider}^\xi(\Gamma)\\
    && \qquad \qquad \,\,\m{and}\,\, \mc{T} \in \m{Tree}(\cdot \Vdash x_\alpha{:}A), \mb{or}\\
    &&c \sqsubseteq \xi  \, \m{and}\,
   \mc{B} \in \mb{H\text{-}Provider}^\xi(\Gamma)\\[10pt]

    \mc{T} \in \mb{H\text{-}Client}^\xi( x_\alpha{:}A[c])\,  & \m{iff} & c \not \sqsubseteq \xi  \, \m{and}\, \mc{T} \in \m{Tree}(x_\alpha{:}A \Vdash \_:1), \mb{or}\\
    &&c \sqsubseteq \xi  \, \m{and}\, \mc{B}= \cdot
  \end{array}\]}
\defend
\end{definition}

\begin{definition}\label{apx:def:simequiv}
  For $\mc{D}_1\in \m{Tree}(\erasure{\Gamma_1} \Vdash x_\alpha {:}A_1)$, $\mc{D}_2\in \m{Tree}(\erasure{\Gamma_2} \Vdash y_\beta {:}A_2)$ we define
   \[\Gamma_1 \Vdash \mc{D}_1 :: x_\alpha {:}A_1[c_1] \approx^\xi_a \Gamma_2 \Vdash \mc{D}_2:: y_\beta {:}A_2[c_2]\,\,\m{as}\]
{\small\[\begin{array}{ll}
  \Gamma_1 \Downarrow \xi= \Gamma_2,  \Downarrow \xi\,\,\,\m{and}\,\,\, y_\beta {:}A_2[c_2] \Downarrow \xi =  x_\alpha {:}A_1[c_1] \Downarrow \xi\,\, \,\m{and}\, \\

  \forall \mc{B}_1 \in \m{H\text{-}Provider}^\xi(\Gamma_1), \mc{B}_2\in \m{H\text{-}Provider}^\xi(\Gamma_2), \mc{T}_1 \in \m{H\text{-}CLient}^\xi(x_\alpha{:}A_1[c_1]),\, \mc{T}_2 \in \m{H\text{-}Client}^\xi(y_\beta{:}A_2[c_2]).\\ \mc{B}_1\mc{D}_1\mc{T}_1 \approx_a \mc{B}_2\mc{D}_2\mc{T}_2.
\end{array}
\] }
\end{definition}

\begin{corollary}\label{apx:cor:eq-soundness-completeness}
  For all $\mc{D}_1\in \m{Tree}(\erasure{\Gamma_1} \Vdash x_\alpha {:}A_1)$ and $\mc{D}_2\in \m{Tree}(\erasure{\Gamma_2} \Vdash y_\beta {:}A_2)$, we have
 $(\Gamma_1 \Vdash \mc{D}_1:: x_\alpha {:}A_1[{c_1}]) \equiv^{\Psi_0}_{\xi} (\Gamma_2 \Vdash \mc{D}_2:: y_\beta {:}A_2[{c_2}])$ iff $(\Gamma_1 \Vdash \mc{D}_1:: x_\alpha {:}A_1[{c_1}]) \approx_a^{\xi} (\Gamma_2 \Vdash \mc{D}_2:: y_\beta {:}A_2[{c_2}])$.
  \end{corollary}
  \begin{proof}
    The proof is straightforward by considering \Cref{apx:def:noninterference}, \Cref{apx:def:simequiv}, and the following corollary (\Cref{apx:cor:soundness-completeness}).
  \end{proof}

\begin{corollary}\label{apx:cor:soundness-completeness}
  For all $(\mc{D}_1, \mc{D}_2) \in \m{Tree}(\Delta \Vdash K)$, we have $\forall m. (\mc{D}_1, \mc{D}_2) \in \mc{E}\llbracket \Delta \Vdash K\rrbracket^m$ and $\forall m. (\mc{D}_2, \mc{D}_1) \in \mc{E}\llbracket \Delta \Vdash K\rrbracket^m$  iff $\mc{D}_1 \approx_a \mc{D}_2$.
  \end{corollary}
  \begin{proof}
It is a corollary of the following lemma (\Cref{apx:lem:soundness-completeness})
  \end{proof}
\begin{lemma}\label{apx:lem:soundness-completeness}
Consider a pair of session-typed forests $(\mc{C}_1, \mc{C}_2) \in \m{Forest}(\Delta \Vdash \Delta')$ consisting of multiple session-typed trees indexed in the set $I$, i.e., $\mc{C}_i=\{\mc{C}^j_i\}_{j\in I}$ for $i \in \{1,2\}$, and $(\mc{C}^j_1, \mc{C}^j_2) \in \m{Tree}(\Delta^j \Vdash K^j)$ with $\Delta= \{\Delta^j\}_{j\in I}$ and $\Delta'= \{K^j\}_{j\in I}$. We have:
 \[\forall j\in I. \forall m.\, (\mc{C}^j_1, \mc{C}^j_2) \in \mc{E}\llbracket \Delta^j \Vdash K^j\rrbracket^m \,\, \m{and}\,\, (\mc{C}^j_2, \mc{C}^j_1) \in \mc{E}\llbracket \Delta^j \Vdash K^j\rrbracket^m \,\,\m{iff} \,\,\mc{C}_1 \approx_a \mc{C}_2.\]
\end{lemma}
\begin{proof}
  The proof consists of two parts.
\begin{description}
  \item[\bf (1) Soundness.]
   We want to prove for any arbitrary pair of forests $\mc{C}_1=\{\mc{C}^j_1\}_{j\in I}$ and $\mc{C}_2=\{\mc{C}^j_2\}_{j\in I}$ such that $\forall j \in I. \, (\mc{C}^j_1, \mc{C}^j_2) \in \m{Tree}(\Delta^j \Vdash K^j)$, if $\forall j \in I. \forall m. (\mc{C}^j_1, \mc{C}^j_2) \in \mc{E}\llbracket \Delta^j \Vdash K^j\rrbracket^m$  and $\forall j \in I. \forall m. (\mc{C}^j_2, \mc{C}^j_1) \in \mc{E}\llbracket \Delta^j \Vdash K^j\rrbracket^m$ then $\{\mc{C}^j_1\}_{j \in I} \approx_a \{\mc{C}^j_2\}_{j \in I}$.
  
  We proceed the proof by coinduction on the generating function of the bisimilarity. Consider an arbitrary $\mc{C}_1=\{\mc{C}^j_1\}_{j\in I}$ and $\mc{C}_2=\{\mc{C}^j_2\}_{j\in I}$ such that $\forall j \in I. \, (\mc{C}^j_1, \mc{C}^j_2) \in \m{Tree}(\Delta^j \Vdash K^j)$, and assume $\forall j \in I. \forall m. (\mc{C}^j_1, \mc{C}^j_2) \in \mc{E}\llbracket \Delta^j \Vdash K^j\rrbracket^m$. Our goal is to show $\{\mc{C}^j_1\}_{j \in I} \approx_a \{\mc{C}^j_2\}_{j \in I}$. We prove it by showing the items (1)-(4) \Cref{apx:def:asynchronousbisim} by assuming $\forall j \in I. \forall m. (\mc{C}^j_1, \mc{C}^j_2) \in \mc{E}\llbracket \Delta^j \Vdash K^j\rrbracket^m$  only. The symmetric relation can be established with the other assumption $\forall j \in I. \forall m. (\mc{C}^j_2, \mc{C}^j_1) \in \mc{E}\llbracket \Delta^j \Vdash K^j\rrbracket^m$ We continue by establishing items (1)-(4) in \Cref{apx:def:asynchronousbisim}:

  \begin{enumerate}
    \item Assume $\mc{C}_1 \xrightarrow{\tau} \mc{C}'_1$, then by the definition of $\tau-$transition, we have $\mc{C}_1 \mapsto \mc{C}'_1$. In other words we have $\{\mc{C}^j_1\}_{j \in I} \mapsto \{\mc{C}^{j'}_1\}_{j \in I}$ for $\mc{C}'_1=\{\mc{C}^{j'}_1\}_{j \in I}$. To establish item (1) in the definition, it is enough to prove that $\{\mc{C}^{j'}_1\}_{j \in I} \approx_a \{\mc{C}^j_2\}_{j \in I}$. Since we unfolded the coinductive definition of the bisimilarity's generating function once, we can apply the coinductive case if the assumptions are satisfied, i.e., we need to prove that each tree in the post-step forest is session-typed and the pairs are related by the logical relation.
    
    By assumption $\forall j \in I. \forall m. (\mc{C}^j_1, \mc{C}^j_2) \in \mc{E}\llbracket \Delta^j \Vdash K^j\rrbracket^m$ and by the forward closure lemma (\Cref{apx:lem:fwdfirst}), we get $\forall j \in I. \forall m. (\mc{C}^{j'}_1, \mc{C}^j_2) \in \mc{E}\llbracket \Delta^j \Vdash K^j\rrbracket^m$.  This along with type-preservation of session-typed programs, is enough to get $\{\mc{C}^{j'}_1\}_{j \in I} \approx_a \{\mc{C}^j_2\}_{j \in I}$ by coinduction.

    \item Assume $\mc{C}_1 \xrightarrow{\overline{x_\alpha}\,q} \mc{C}'_1$, then by \Cref{apx:fig:lts}  we have $\mc{C}_1= \mc{C}^1_1\mathbf{msg}(x_\alpha.q)\,\mc{C}^2_1$, with $\mc{C}'_1= \mc{C}^1_1, \mc{C}^2_1$. 
    
    This means that for some $\kappa \in I$, we have either  $\mc{C}^\kappa_1= \mc{C}^{\kappa'}_1 \mathbf{msg}(x_\alpha.q)$ or $\mc{C}^\kappa_1= \mathbf{msg}(x_\alpha.q) \mc{C}^{\kappa'}_1$, and $\mc{C}'_1= \{\mc{C}^{j}_1\}_{j \in I-\{\kappa\}}\mc{C}^{\kappa'}_1$. In particular, we know that $\forall m. (\mc{C}^{\kappa}_1, \mc{C}^\kappa_2) \in \mc{E}\llbracket \Delta^\kappa \Vdash K^\kappa\rrbracket^m$, and $\mc{C}^{\kappa}_1\mapsto^{0_{\Upsilon^\kappa_1;\Theta^\kappa_1}} \mc{C}^{\kappa}_1$, such that $x_\alpha \in \Upsilon^\kappa_1$.
   By the definition of the term interpretation and \Cref{apx:lem:moving-existential}, we get 
   \[\begin{array}{l}
    \forall\, \Upsilon^\kappa_1, \Theta^\kappa_1, \mathcal{C}^{\kappa'}_1. \m{if}\,  \mc{C}^\kappa_1 \mapsto^{*_{ \Upsilon^\kappa_1; \Theta^\kappa_1}} \mc{C}^{\kappa'}_1,\, \m{then}\, \exists \Upsilon^\kappa_2,\mc{C}^{\kappa'}_2 \,\m{such\, that} \, \,\mc{C}^\kappa_2\mapsto^{{*}_{ \Upsilon^\kappa_2}} \mc{C}^{\kappa'}_2\, \m{and}\, \Upsilon^\kappa_1 \subseteq \Upsilon^\kappa_2 \, \m{and}\\ 
    \; \; \forall\, x_\alpha \in \mb{Out}(\Delta^\kappa \Vdash K^\kappa).\, \mb{if}\, x_\alpha \in \Upsilon^\kappa_1.\, \mb{then}\,\, \forall \,k.\, (\mc{C}^{\kappa'}_1; \mc{C}^{\kappa'}_2)\in \mc{V}\llbracket  \Delta^\kappa \Vdash K^\kappa \rrbracket_{\cdot;x_\alpha}^{k+1}\,\m{and}\\ 
    \; \; \forall\, x_\alpha \in \mb{In}(\Delta^\kappa \Vdash K^\kappa). \mb{if}\, x_\alpha \in \Theta^\kappa_1.\,\mb{then}\,\,\forall k.\, (\mc{C}^{\kappa'}_1; \mc{C}^{\kappa'}_2)\in \mc{V}\llbracket  \Delta^\kappa \Vdash K^\kappa \rrbracket_{x_\alpha;\cdot}^{k+1}.
\end{array}\]
   
We can instantiate the $\forall$ quantifiers and apply an $\mathbf{if}-$Elimination rule to get
   $\mc{C}^{\kappa}_2\mapsto^{*_{\Upsilon^\kappa_2}} \mc{C}^{\kappa'}_2$ for some $\Upsilon^\kappa_2 \supseteq \Upsilon^\kappa_1$ and some $\mc{C}^{\kappa'}_2$. Since $x_\alpha \in \Upsilon_1^\kappa$, we get by the definition of the logical relation $\forall m. (\mc{C}^{\kappa}_1, \mc{C}^{\kappa'}_2) \in \mc{V}_{\Psi_0}^{\xi}\llbracket \Delta^\kappa \Vdash K^\kappa\rrbracket_{\cdot;x_\alpha}^{m+1}$. This (depending on the type of the channel $x_\alpha$) gives us $ \mc{C}^{\kappa'}_2= \mc{C}_{\kappa''_2} \mathbf{msg}(x_\alpha.q)$ or $ \mc{C}^{\kappa'}_2=  \mathbf{msg}(x_\alpha.q) \mc{C}_{\kappa''_2}$. Note that if $q$ is a channel name, $\mc{C}^{\kappa''}_2$ is a forest, rather than a tree, and if $q$ is the label $\mb{close}$, the configuration $\mc{C}_2^{\kappa''}$ is empty.

  Put $\mc{C}'_2= \{\mc{C}^{j}_2\}_{j \in I-\{\kappa\}}, \mc{C}^{\kappa''}_2$. To establish item (2) in the definition, it is enough to prove that $\mc{C}_1' \approx_a \mc{C}_2'$. Since we unfolded the coinductive definition of the bisimilarity's generating function once, we can apply the coinductive case if the assumptions are satisfied, i.e., we need to prove that each tree in the post-step forest is session-typed and the pairs are related by the logical relation.

  The proof proceeds by cases on the type of $x_\alpha$. Here we only provide one interesting case, the proof of other cases is similar.
    
    \begin{description}
    \item[\bf Subcase 1. $K^\kappa= x_\alpha{:} A\otimes B \in \Delta'$.]
    We have $\Delta^\kappa=\Delta^\kappa_1, \Delta^\kappa_2$ and  $\mc{C}^\kappa_1= \mc{C}^{\kappa'}_1 \mathbf{msg}(\mathbf{send}\,y_\delta\,x_\alpha)$ with $\mc{C}^{\kappa'}_1= \mc{C}^{\kappa''}_1\mc{A}_1$.
    Moreover, we have $\mc{C}^{\kappa'}_2= \mc{C}^{\kappa''}_2 \mathbf{msg}(\mathbf{send}\,y_\delta\,x_\alpha)$ with $\mc{C}^{\kappa''}_2= \mc{C}^{\kappa'''}_2\mc{A}_2$.
    And $\forall m. (\mc{C}^{\kappa''}_1; \mc{C}^{\kappa'''}_2) \in \mc{E}\llbracket \Delta^\kappa_1 \Vdash x_{\alpha+1}{:}B\rrbracket^m$ and $\forall m. (\mc{A}_1; \mc{A}_2) \in \mc{E}\llbracket \Delta^\kappa_2 \Vdash y_{\delta}{:}A\rrbracket^m$. 
    
    Recall that all other trees in the forest are still related, i.e. $\forall j \in I-\{\kappa\}$, we have $\forall m. (\mc{C}^{j}_1; \mc{C}^{j}_2) \in \mc{E}\llbracket \Delta^j \Vdash K^j\rrbracket^m$
    This is enough to apply the coinductive argument and get $\mc{C}_1' \approx_a \mc{C}_2'$ where  $\mc{C}'_1= \{\mc{C}^{j}_1\}_{j \in I-\{\kappa\}}, \mc{C}^{\kappa'}_1$ and   $\mc{C}'_2= \{\mc{C}^{j}_2\}_{j \in I-\{\kappa\}}, \mc{C}^{\kappa''}_2$.
    \end{description}

    \item Assume for an arbitrary $q$, we have $\mc{C}_1 \xrightarrow{{\mathbf{L}\,x_\alpha\,q}} \mc{C}'_1$.
   By \Cref{apx:fig:lts}, there is a process $\mb{proc}(z_\beta, P_{x_\alpha}) \in \mc{C}_1$ which is waiting to receive a message along $x_\alpha$. In particular, for some tree $\mc{C}^\kappa_1$ for $\kappa \in I$, we have $\mb{proc}(z_\beta, P_{x_\alpha}) \in \mc{C}^\kappa_1$. 

   We know that $\forall m. (\mc{C}^{\kappa}_1, \mc{C}^\kappa_2) \in \mc{E}\llbracket \Delta^\kappa \Vdash K^\kappa\rrbracket^m$, and $\mc{C}^{\kappa}_1\mapsto^{0_{\Upsilon^\kappa_1;\Theta^\kappa_1}} \mc{C}^{\kappa}_1$, such that $x_\alpha \in \Theta^\kappa_1$.
   By the definition of the term interpretation and \Cref{apx:lem:moving-existential}, we get 
   \[\begin{array}{l}
    \forall\, \Upsilon^\kappa_1, \Theta^\kappa_1, \mathcal{C}^{\kappa'}_1. \m{if}\,  \mc{C}^\kappa_1 \mapsto^{*_{ \Upsilon^\kappa_1; \Theta^\kappa_1}} \mc{C}^{\kappa'}_1,\, \m{then}\, \exists \Upsilon^\kappa_2,\mc{C}^{\kappa'}_2 \,\m{such\, that} \, \,\mc{C}^\kappa_2\mapsto^{{*}_{ \Upsilon^\kappa_2}} \mc{C}^{\kappa'}_2\, \m{and}\, \Upsilon^\kappa_1 \subseteq \Upsilon^\kappa_2 \, \m{and}\\ 
    \; \; \forall\, x_\alpha \in \mb{Out}(\Delta^\kappa \Vdash K^\kappa).\, \mb{if}\, x_\alpha \in \Upsilon^\kappa_1.\, \mb{then}\,\, \forall \,k.\, (\mc{C}^{\kappa'}_1; \mc{C}^{\kappa'}_2)\in \mc{V}\llbracket  \Delta^\kappa \Vdash K^\kappa \rrbracket_{\cdot;x_\alpha}^{k+1}\,\m{and}\\ 
    \; \; \forall\, x_\alpha \in \mb{In}(\Delta^\kappa \Vdash K^\kappa). \mb{if}\, x_\alpha \in \Theta^\kappa_1.\,\mb{then}\,\,\forall k.\, (\mc{C}^{\kappa'}_1; \mc{C}^{\kappa'}_2)\in \mc{V}\llbracket  \Delta^\kappa \Vdash K^\kappa \rrbracket_{x_\alpha;\cdot}^{k+1}.
\end{array}\]
   
We can instantiate the $\forall$ quantifiers and apply an $\mathbf{if}-$Elimination rule to get
   $\mc{C}^{\kappa}_2\mapsto^{*_{\Upsilon^\kappa_2}} \mc{C}^{\kappa'}_2$ for some $\Upsilon^\kappa_2 \supseteq \Upsilon^\kappa_1$ and some $\mc{C}^{\kappa'}_2$. Since $x_\alpha \in \Theta_1^\kappa$, we get by the definition of the logical relation $\forall m. (\mc{C}^{\kappa}_1, \mc{C}^{\kappa'}_2) \in \mc{V}_{\Psi_0}^{\xi}\llbracket \Delta^\kappa \Vdash K^\kappa\rrbracket_{x_\alpha; \cdot}^{m+1}$. We put $\mc{C}'_2= \{\mc{C}^{j}_2\}_{j \in I-\{\kappa\}}, \mc{C}^{\kappa'}_2$.
 
 The proof proceeds by cases on the type of $x_\alpha$. Here we only provide one interesting case, the proof of other cases is similar.

 \begin{description}
  \item[\bf Subcase 1. $x_\alpha{:} A\otimes B \in \Delta^\kappa \subseteq \Delta$ (i.e., $\Delta^\kappa=\Delta^{\kappa'}, x_\alpha{:} A\otimes B$)] In this case, we know that $\mc{C}^\kappa_1= \mc{C}^{\kappa_1}_1\, \mb{proc}(z_\delta, w \leftarrow \mb{recv}\, x_\alpha)\,\mc{C}^{\kappa_2}_1$ and $q$ is a channel $y_\delta$ which is fresh in both $\mc{C}^\kappa_1$ and $\mc{C}^{\kappa'}_2$. Moreover, we have $\mc{C}^{\kappa'}_1=\mb{msg}(\mb{send}\, y_\delta\, x_\alpha)\mc{C}^{\kappa}_1$, and thus $\mc{C}'_1=\mb{msg}(\mb{send}\, y_\delta\, x_\alpha)\mc{C}_1$.

By the logical relation, we get \\
$\forall m. (\mb{msg}(\mb{send}\, y_\delta\, x_\alpha)\mc{C}^{\kappa}_1; \mb{msg}(\mb{send}\, y_\delta\, x_\alpha)\mc{C}^{\kappa'}_2) \in \mc{E}\llbracket \Delta^{\kappa'}, y_\delta{:}A, x_{\alpha+1}{:}B \Vdash K^\kappa \rrbracket^m$.
  
  Recall that all other trees in the forest are still related, i.e. $\forall j \in I-\{\kappa\}$, we have $\forall m. (\mc{C}^{j}_1; \mc{C}^{j}_2) \in \mc{E}\llbracket \Delta^j \Vdash K^j\rrbracket^m$.\\
  This is enough to apply the coinductive argument and get $\mc{C}'_1 \approx_a \mb{msg}(\mb{send}\, y_\delta\, x_\alpha)\mc{C}'_2 $ as required.
  \end{description}
  \item Assume for an arbitrary $q$, we have $\mc{C}_1 \xrightarrow{{\mathbf{R}\,x_\alpha\,q}} \mc{C}'_1$. The proof is similar to the previous case:
  
  By \Cref{apx:fig:lts}, there is a process $\mb{proc}(x_\alpha, P_{x_\alpha}) \in \mc{C}_1$ which is waiting to receive a message along $x_\alpha$. In particular, for some tree $\mc{C}^\kappa_1$ for $\kappa \in I$, we have $\mb{proc}(x_\alpha, P) \in \mc{C}^\kappa_1$. 

   We know that $\forall m. (\mc{C}^{\kappa}_1, \mc{C}^\kappa_2) \in \mc{E}\llbracket \Delta^\kappa \Vdash K^\kappa\rrbracket^m$, and $\mc{C}^{\kappa}_1\mapsto^{0_{\Upsilon^\kappa_1;\Theta^\kappa_1}} \mc{C}^{\kappa}_1$, such that $x_\alpha \in \Theta^\kappa_1$.
   By the definition of the term interpretation and \Cref{apx:lem:moving-existential}, we get 
   \[\begin{array}{l}
    \forall\, \Upsilon^\kappa_1, \Theta^\kappa_1, \mathcal{C}^{\kappa'}_1. \m{if}\,  \mc{C}^\kappa_1 \mapsto^{*_{ \Upsilon^\kappa_1; \Theta^\kappa_1}} \mc{C}^{\kappa'}_1,\, \m{then}\, \exists \Upsilon^\kappa_2,\mc{C}^{\kappa'}_2 \,\m{such\, that} \, \,\mc{C}^\kappa_2\mapsto^{{*}_{ \Upsilon^\kappa_2}} \mc{C}^{\kappa'}_2\, \m{and}\, \Upsilon^\kappa_1 \subseteq \Upsilon^\kappa_2 \, \m{and}\\ 
    \; \; \forall\, x_\alpha \in \mb{Out}(\Delta^\kappa \Vdash K^\kappa).\, \mb{if}\, x_\alpha \in \Upsilon^\kappa_1.\, \mb{then}\,\, \forall \,k.\, (\mc{C}^{\kappa'}_1; \mc{C}^{\kappa'}_2)\in \mc{V}\llbracket  \Delta^\kappa \Vdash K^\kappa \rrbracket_{\cdot;x_\alpha}^{k+1}\,\m{and}\\ 
    \; \; \forall\, x_\alpha \in \mb{In}(\Delta^\kappa \Vdash K^\kappa). \mb{if}\, x_\alpha \in \Theta^\kappa_1.\,\mb{then}\,\,\forall k.\, (\mc{C}^{\kappa'}_1; \mc{C}^{\kappa'}_2)\in \mc{V}\llbracket  \Delta^\kappa \Vdash K^\kappa \rrbracket_{x_\alpha;\cdot}^{k+1}.
\end{array}\]
   
We can instantiate the $\forall$ quantifiers and apply an $\mathbf{if}-$Elimination rule to get
   $\mc{C}^{\kappa}_2\mapsto^{*_{\Upsilon^\kappa_2}} \mc{C}^{\kappa'}_2$ for some $\Upsilon^\kappa_2 \supseteq \Upsilon^\kappa_1$ and some $\mc{C}^{\kappa'}_2$. Since $x_\alpha \in \Theta_1^\kappa$, we get by the definition of the logical relation $\forall m. (\mc{C}^{\kappa}_1, \mc{C}^{\kappa'}_2) \in \mc{V}_{\Psi_0}^{\xi}\llbracket \Delta^\kappa \Vdash K^\kappa\rrbracket_{x_\alpha; \cdot}^{m+1}$. We put $\mc{C}'_2= \{\mc{C}^{j}_2\}_{j \in I-\{\kappa\}}, \mc{C}^{\kappa'}_2$.
 
 The proof proceeds by cases on the type of $x_\alpha$. Here we only provide one interesting case, the proof of other cases is similar.

 \begin{description}
  \item[\bf Subcase 1. $K^\kappa=x_\alpha{:} A\multimap B \in \Delta'$.] In this case, we know that\\ $\mc{C}^\kappa_1= \mc{C}^{\kappa_1}_1\, \mb{proc}(x_\alpha, w \leftarrow \mb{recv}\, x_\alpha)\,\mc{C}^{\kappa_2}_1$ and $q$ is a channel $y_\delta$ which is not free in $\mc{C}^\kappa_1$ (and by the typing not in $\mc{C}^{\kappa'}_2$ either). Moreover, we have $\mc{C}^{\kappa'}_1=\mc{C}^{\kappa}_1\mb{msg}(\mb{send}\, y_\delta\, x_\alpha)$, and thus $\mc{C}'_1=\mc{C}_1\mb{msg}(\mb{send}\, y_\delta\, x_\alpha)$.

By the logical relation, we get $\forall m. (\mc{C}^{\kappa}_1\mb{msg}(\mb{send}\, y_\delta\, x_\alpha); \mc{C}^{\kappa'}_2\mb{msg}(\mb{send}\, y_\delta\, x_\alpha)) \in \mc{E}\llbracket \Delta^\kappa, y_\delta{:}A \Vdash x_{\alpha+1}{:}B\rrbracket^m$.
  
  Recall that all other trees in the forest are still related, i.e. $\forall j \in I-\{\kappa\}$, we have $\forall m. (\mc{C}^{j}_1; \mc{C}^{j}_2) \in \mc{E}\llbracket \Delta^j \Vdash K^j\rrbracket^m$
  This is enough to apply the coinductive argument and get\\ $\mc{C}'_1 \approx_a \mc{C}'_2 \mb{msg}(\mb{send}\, y_\delta\, x_\alpha)$ as required.
  \end{description}

  \end{enumerate}

  \item[\bf (2) Completeness.] We want to prove for any arbitrary pair of forests $\mc{C}_1=\{\mc{C}^j_1\}_{j\in I}$ and $\mc{C}_2=\{\mc{C}^j_2\}_{j\in I}$ such that $\forall j \in I. \, (\mc{C}^j_1, \mc{C}^j_2) \in \m{Tree}(\Delta^j \Vdash K^j)$, if $\{\mc{C}^j_1\}_{j \in I} \approx_a \{\mc{C}^j_2\}_{j \in I}$ then $\forall j \in I. \forall m. (\mc{C}^j_1, \mc{C}^j_2) \in \mc{E}\llbracket \Delta^j \Vdash K^j\rrbracket^m$.
  
  We instead prove an equivalent statement that says
\begin{quote}
  For any natural number $m$ and any arbitrary pair of forests $\mc{C}_1=\{\mc{C}^j_1\}_{j\in I}$ and $\mc{C}_2=\{\mc{C}^j_2\}_{j\in I}$ such that $\forall j \in I. \, (\mc{C}^j_1, \mc{C}^j_2) \in \m{Tree}(\Delta^j \Vdash K^j)$, if $\{\mc{C}^j_1\}_{j \in I} \approx_a \{\mc{C}^j_2\}_{j \in I}$ then $\forall j \in I. (\mc{C}^j_1, \mc{C}^j_2) \in \mc{E}\llbracket \Delta^j \Vdash K^j\rrbracket^m$.
\end{quote}
We proceed by induction on $m$. 
\begin{description}
\item{\bf Base case. ($m=0$)} The proof is straightforward by the definition of logical relation and the fact that the configurations are session-typed.
\item{\bf Inductive case. ($m=m'+1$)} Consider an arbitrary $\kappa\in I$. By the definition of the term interpretation, we need to prove 

\[\begin{array}{l}
  \forall\, \Upsilon^\kappa_1, \Theta^\kappa_1, \mathcal{C}^\kappa_1. \m{if}\,  \mc{C}^{\kappa'}_1 \mapsto^{*_{ \Upsilon^\kappa_1; \Theta^\kappa_1}} \mc{C}^{\kappa'}_1,\, \m{then}\, \exists \Upsilon^\kappa_2,\mc{C}^{\kappa'}_2 \,\m{such\, that} \, \,\mc{C}^\kappa_2\mapsto^{{*}_{ \Upsilon^\kappa_2}} \mc{C}^{\kappa'}_2\, \m{and}\, \Upsilon^\kappa_1 \subseteq \Upsilon^\kappa_2 \, \m{and}\\ 
  \; \; \forall\, x_\alpha \in \mb{Out}(\Delta^\kappa \Vdash K^\kappa).\, \mb{if}\, x_\alpha \in \Upsilon^\kappa_1.\, \mb{then}\,\, (\mc{C}^{\kappa'}_1; \mc{C}^{\kappa'}_2)\in \mc{V}\llbracket  \Delta^\kappa\Vdash K^\kappa \rrbracket_{\cdot;x_\alpha}^{m'+1}\,\m{and}\\ 
  \; \; \forall\, x_\alpha \in \mb{In}(\Delta^\kappa \Vdash K^\kappa). \mb{if}\, x_\alpha \in \Theta^\kappa_1.\,\mb{then}\,\, (\mc{C}^{\kappa'}_1; \mc{C}^{\kappa'}_2)\in \mc{V}\llbracket  \Delta^\kappa \Vdash K^\kappa \rrbracket_{x_\alpha;\cdot}^{m'+1}.
\end{array} \]

Consider an arbitrary $\Upsilon^\kappa_1$, $\Theta^\kappa_1$, and $\mc{C}^{\kappa'}_1$, and assume $\mc{C}^{\kappa'}_1 \mapsto^{*_{ \Upsilon^\kappa_1; \Theta^\kappa_1}} \mc{C}^{\kappa'}_1$, we need to prove 

\[\begin{array}{l}
  \exists \Upsilon^\kappa_2,\mc{C}^{\kappa'}_2 \,\m{such\, that} \, \,\mc{C}^\kappa_2\mapsto^{{*}_{ \Upsilon^\kappa_2}} \mc{C}^{\kappa'}_2\, \m{and}\, \Upsilon^\kappa_1 \subseteq \Upsilon^\kappa_2 \, \m{and}\\ 
  \; \; \forall\, x_\alpha \in \mb{Out}(\Delta^\kappa \Vdash K^\kappa).\, \mb{if}\, x_\alpha \in \Upsilon^\kappa_1.\, \mb{then}\,\, (\mc{C}^{\kappa'}_1; \mc{C}^{\kappa'}_2)\in \mc{V}\llbracket  \Delta^\kappa\Vdash K^\kappa \rrbracket_{\cdot;x_\alpha}^{m'+1}\,\m{and}\\ 
  \; \; \forall\, x_\alpha \in \mb{In}(\Delta^\kappa \Vdash K^\kappa). \mb{if}\, x_\alpha \in \Theta^\kappa_1.\,\mb{then}\,\, (\mc{C}^{\kappa'}_1; \mc{C}^{\kappa'}_2)\in \mc{V}\llbracket  \Delta^\kappa \Vdash K^\kappa \rrbracket_{x_\alpha;\cdot}^{m'+1}.
\end{array} \]

By local transition steps, we know that $\mc{C}_1\mapsto^* \mc{C}'_1$, where $\mc{C}'_1=\{\mc{C}^{\kappa}_1\}_{j \in I-\{\kappa\}}\mc{C}^{\kappa'}_1$. By $\mc{C}_1 \approx_a \mc{C}_2$, we can apply the clause for $\tau$-transitions in \Cref{apx:def:asynchronousbisim} for zero or multiple times, to get $\mc{C}_2 \mapsto^* \mc{C}'_2$ such that $\mc{C}'_1 \approx_a \mc{C}'_2$. Consider the channels in the sets $\Upsilon^\kappa_1$ and $\Theta^k_1$. There are three cases:
\begin{enumerate}
  \item[A.]  By item (2) of \Cref{apx:def:asynchronousbisim} and $\mc{C}'_1 \approx_a \mc{C}'_2$, we know that for all $x_\alpha \in \mb{Out}(\Delta \Vdash K)$ such that $x_\alpha \in \Upsilon^\kappa_1$, we have $\mc{C}'_2 \mapsto^* \mc{C}^{x_s}_2$ such that $\mc{C}^{x_s}_2$ sends along the channel $x_\alpha$ and  $\mc{C}'_1= \mc{C}^1_1 \mathbf{msg}(x_\alpha.q)\mc{C}^2_1$ and  $\mc{C}^{x_s}_2= \mc{C}^{x'_s}_2 \mathbf{msg}(x_\alpha.q)\mc{C}^{x''_s}_2$, and we have $\mc{C}^1_1 \mc{C}^2_1 \approx_a \mc{C}^{x'_s}_2 \mc{C}^{x''_s}_2$.
  \item[B.] By item (3) of \Cref{apx:def:asynchronousbisim} and $\mc{C}'_1 \approx_a \mc{C}'_2$, we know that for all $x_\alpha \in \mb{In}(\Delta \Vdash \cdot)$ such that $x_\alpha \in \Theta^\kappa_1$, we have $\mc{C}'_2 \mapsto^* \mc{C}^{x_r}_2$ such that $\mathbf{msg}(x_\alpha.q) \mc{C}'_1 \approx_a \mathbf{msg}(x_\alpha.q)\mc{C}^{x_r}_2$. 
  \item[C.] By item (4) of \Cref{apx:def:asynchronousbisim} and $\mc{C}'_1 \approx_a \mc{C}'_2$, we know that for all $x_\alpha \in \mb{In}(\cdot \Vdash K)$ such that $x_\alpha \in \Theta^\kappa_1$, we have $\mc{C}'_2 \mapsto^* \mc{C}^{x_r}_2$ such that $\mc{C}'_1 \mathbf{msg}(x_\alpha.q)\approx_a \mc{C}^{x_r}_2\mathbf{msg}(x_\alpha.q)$. 
\end{enumerate}
Apply the confluence lemma (\Cref{apx:lem:confluence}) on (i) $\mc{C}'_2 \mapsto^* \mc{C}^{x_s}_2$ for  all $x_\alpha \in \mb{Out}(\Delta \Vdash K)$ such that $x_\alpha \in \Upsilon^\kappa_1$ and (ii) $\mc{C}'_2 \mapsto^* \mc{C}^{x_r}_2$ for all $x_\alpha \in \mb{In}(\Delta \Vdash K)$ such that $x_\alpha \in \Theta^\kappa_1$ to build  $\mc{C}''_2$. In particular, by the confluence lemma, we get $\mc{C}'_2\mapsto^{*_{\Upsilon_2}} \mc{C}''_2$ where $\Upsilon^{\kappa}_1 \subseteq \Upsilon_2,$ and (i') $\mc{C}^{x_s}_2\mapsto^* \mc{C}''_2$ for  all $x_\alpha \in \mb{Out}(\Delta \Vdash K)$ such that $x_\alpha \in \Upsilon^\kappa_1$ and (ii) $ \mc{C}^{x_r}_2 \mapsto^* \mc{C}''_2$ for all $x_\alpha \in \mb{In}(\Delta \Vdash K)$ such that $x_\alpha \in \Theta^\kappa_1$.

More precisely, by the forest structure, we have $\mc{C}''_2= \{\mc{C}^{\kappa''}_2\}_{j \in I-\{\kappa\}}\mc{C}^{\kappa''}_2$, with $\mc{C}^\kappa_2 \mapsto^{*_{\Upsilon^\kappa_2}} \mc{C}^{\kappa''}_2$ with $\Upsilon^\kappa_1 \subseteq \Upsilon^\kappa_2$. We use $\Upsilon^\kappa_2$ and $\mc{C}^{\kappa''}_2$ to instantiate the existential quantifier in the goal. We need to prove that
\[\begin{array}{l}
  \; \; \forall\, x_\alpha \in \mb{Out}(\Delta^\kappa \Vdash K^\kappa).\, \mb{if}\, x_\alpha \in \Upsilon^\kappa_1.\, \mb{then}\,\, (\mc{C}^{\kappa'}_1; \mc{C}^{\kappa''}_2)\in \mc{V}\llbracket  \Delta^\kappa\Vdash K^\kappa \rrbracket_{\cdot;x_\alpha}^{m'+1}\,\m{and}\\ 
  \; \; \forall\, x_\alpha \in \mb{In}(\Delta^\kappa \Vdash K^\kappa). \mb{if}\, x_\alpha \in \Theta^\kappa_1.\,\mb{then}\,\, (\mc{C}^{\kappa'}_1; \mc{C}^{\kappa''}_2)\in \mc{V}\llbracket  \Delta^\kappa \Vdash K^\kappa \rrbracket_{x_\alpha;\cdot}^{m'+1}.
\end{array} \]
By the forward closure lemma for the bisimulation (\Cref{apx:lem:bisimforward}), we get
\begin{enumerate}
 \item[A'.]  By item $[A.]$ we get, for all $x_\alpha \in \mb{Out}(\Delta \Vdash K)$ such that $x_\alpha \in \Upsilon^\kappa_1$, we have $\mc{C}'_2 \mapsto^* \mc{C}''_2$ such that $\mc{C}''_2$ sends along the channel $x_\alpha$, and we have $\mc{C}'_1= \mc{C}^1_1 \mathbf{msg}(x_\alpha.q)\mc{C}^2_1$  and  $\mc{C}''_2= \mc{C}^{x'_s} \mathbf{msg}(x_\alpha.q)\mc{C}^{x''_s}$ and $\mc{C}^1_1 \mc{C}^2_1 \approx_a \mc{C}^{x'_s} \mc{C}^{x''_s}$. 
 \item[B'.] By item $\textit{B}.$, the confluence lemma, and the forward closure lemma for the bisimulation (\Cref{apx:lem:bisimforward}), we get:  for all $x_\alpha \in \mb{In}(\Delta \Vdash \cdot)$ such that $x_\alpha \in \Theta^\kappa_1$, we have $\mc{C}'_2 \mapsto^* \mc{C}''_2$ such that $\mathbf{msg}(x_\alpha.q) \mc{C}'_1 \approx_a \mathbf{msg}(x_\alpha.q)\mc{C}''_2$. 
 \item[C'.] By item $\textit{C}.$, the confluence lemma, and the forward closure lemma for the bisimulation (\Cref{apx:lem:bisimforward}), we get:  for all $x_\alpha \in \mb{In}(\cdot \Vdash K)$ such that $x_\alpha \in \Theta^\kappa_1$, we have $\mc{C}'_2 \mapsto^* \mc{C}''_2$ such that $ \mc{C}'_1 \mathbf{msg}(x_\alpha.q)\approx_a \mc{C}''_2 \mathbf{msg}(x_\alpha.q)$. 
\end{enumerate}
There are two parts we need to prove:
\begin{description}
  \item{\bf Part 1.} $\forall\, x_\alpha \in \mb{Out}(\Delta^\kappa \Vdash K^\kappa).\, \mb{if}\, x_\alpha \in \Upsilon^\kappa_1.\, \mb{then}\,\, (\mc{C}^{\kappa'}_1; \mc{C}^{\kappa''}_2)\in \mc{V}\llbracket  \Delta^\kappa\Vdash K^\kappa \rrbracket_{\cdot;x_\alpha}^{m'+1}$
  
  Assume an arbitrary $x_\alpha \in\mb{Out}(\Delta^\kappa \Vdash K^\kappa)$ with $x_\alpha \in \Upsilon_1$. We consider cases based on the type of $x_\alpha$. Here we provide the detailed proof for one case, the proof of other cases is similar.
  \begin{description}
    \item {\bf Subcase 1. $K^\kappa= x_\alpha{:} A \otimes B$}.  We need to prove $(\mc{C}^{\kappa'}_1, \mc{C}^{\kappa''}_2) \in \mathcal{V}\llbracket \Delta^\kappa \Vdash x_\alpha{:} A \otimes B\rrbracket^{m'+1}_{\cdot;x_\alpha}.$ We first establish $\mc{C}^{\kappa'}_1= \mc{C}^{\kappa''}_1 \mc{A}_1 \mb{msg}(\mb{send}y_\beta\,x_\alpha)$ and  $\mc{C}^{\kappa''}_2= \mc{C}^{\kappa'''}_2 \mc{A}_2 \mb{msg}(\mb{send}y_\beta\,x_\alpha)$, using item [A'] above and well-typedness of programs. Next, we apply the inductive hypothesis to show $(\mc{A}_1; \mc{A}_2) \in \mc{E}\llbracket  \Delta^\kappa_1 \Vdash y_\beta{:}A \rrbracket^{m'}$ and 
    also $( \mc{C}^{\kappa''}_1 ;  \mc{C}^{\kappa'''}_2 ) \in \mc{E}\llbracket  \Delta^\kappa_2 \Vdash x_{\alpha+1}{:}B  \rrbracket^{m'}$ with $\Delta^\kappa=\Delta^\kappa_1, \Delta^\kappa_2$. We get this by configuration typing, $\mc{C}^1_1 \mc{C}^2_1 \approx_a \mc{C}^{x'_s} \mc{C}^{x''_s}$ given in item $[A']$ above and the fact that both $\mc{A}_1$, $\mc{A}_2$, $\mc{C}^{\kappa''}_1$ and $\mc{C}^{\kappa'''}_2$ are separate trees in the forests $\mc{C}^1_1 \mc{C}^2_1$ and $\mc{C}^{x'_s} \mc{C}^{x''_s}$.
  \end{description}
 
  \item{\bf Part 2.} $\forall\, x_\alpha \in \mb{In}(\Delta^\kappa \Vdash K^\kappa). \mb{if}\, x_\alpha \in \Theta^\kappa_1.\,\mb{then}\,\, (\mc{C}^{\kappa'}_1; \mc{C}^{\kappa''}_2)\in \mc{V}\llbracket  \Delta^\kappa \Vdash K^\kappa \rrbracket_{x_\alpha;\cdot}^{m'+1}.$
  Assume an arbitrary $x_\alpha \in\mb{In}(\Delta^\kappa \Vdash K^\kappa)$ with $x_\alpha \in \Theta_1$. We consider cases based on the type of $x_\alpha$. Here we provide the detailed proof for one case, the proof of other cases is similar.
  \begin{description}
    \item {\bf Subcase 1. $\Delta^\kappa=\Delta^\kappa_1 x_\alpha{:} A \otimes B$}.  We need to prove $(\mc{C}^{\kappa'}_1, \mc{C}^{\kappa''}_2) \in \mathcal{V}\llbracket \Delta^\kappa_1, x_\alpha{:}A  \otimes B \Vdash K^\kappa \rrbracket^{m'+1}_{x_\alpha;\cdot}.$ 
  We apply the inductive hypothesis on $\mathbf{msg}(\mb{send}\, y_\beta\, x_\alpha) \mc{C}'_1 \approx_a \mathbf{msg}(\mb{send} y_\beta\,x_\alpha)\mc{C}''_2$ which is given by item [B'] above to get\\
  $(\mathbf{msg}(\mb{send} y_\beta\,x_\alpha)\mc{C}^{\kappa'}_1, \mathbf{msg}(\mb{send} y_\beta\,x_\alpha)\mc{C}^{\kappa''}_2) \in \mc{E}\llbracket  \Delta^\kappa_1, y_\beta{:}A, x_{\alpha+1}{:}B \Vdash K^\kappa \rrbracket^{m'}$
    which completes the proof of this case.
  \end{description}
\end{description}
\end{description}
\end{description}
\end{proof}
\begin{lemma}[Forward closure]\label{apx:lem:bisimforward}
   For all $\mc{D}_1, \mc{D}_2$ if $\mc{D}_1 \approx_a \mc{D}_2$ and $\mc{D}_2 \Rightarrow \mc{D}'_2$, then $\mc{D}_1 \approx_a \mc{D}'_2$.
\end{lemma}
\begin{proof}
  The proof is by coinduction on the generative function of the bisimulation ($\mc{D}_1 \approx_a \mc{D}'_2$). We consider four cases required to establish $\mc{D}_1 \approx_a \mc{D}'_2$.
   \begin{enumerate}
    \item{$(\tau-\mathtt{step})$} if  { $\mc{D}_1 \xrightarrow{\tau} \mc{D}'_1$} then by the assumption $\mc{D}_2 \xRightarrow{\tau} \mc{D}''_2$ and $\mc{D}'_1 \approx_a \mc{D}''_2$. This gives us $\mc{D}_2 \mapsto^* \mc{D}''_2$. We also know by assumption that $\mc{D}_2 \mapsto^* \mc{D}'_2$. By the confluence lemma(\Cref{apx:lem:confluence}), we can build a $\mc{D}$ such that $\mc{D}''_2 \mapsto^* \mc{D}$ and $\mc{D}'_2 \mapsto^* \mc{D}$. By coinduction on $\mc{D}'_1 \approx_a \mc{D}''_2$ having the assumption $\mc{D}''_2 \mapsto^* \mc{D}$, we get  $\mc{D}'_1 \approx_a \mc{D}$, which along with $\mc{D}'_2 \xRightarrow{\tau} \mc{D}$ is enough to get the result.
    \item{$(\m{output})$} if $\mc{D}_1 \xrightarrow{\overline{x_\alpha}\,q} \mc{D}'_1$ then by the assumption $\mc{D}_2 \xRightarrow{\overline{x_\alpha}\,q} \mc{D}^4_2$ and $\mc{D}'_1 \approx_a \mc{D}^4_2$. This gives us $\mc{D}_2 \mapsto^* \mc{D}''_2$, and $\mc{D}''_2 \xrightarrow{\overline{x_\alpha}\,q}\mc{D}'''_2$, and $\mc{D}'''_2 \mapsto^* \mc{D}^4_2$. We also know by assumption that $\mc{D}_2 \mapsto^* \mc{D}'_2$. By the confluence lemma(\Cref{apx:lem:confluence}), we can build a $\mc{D}$ such that $\mc{D}''_2 \mapsto^* \mc{D}$ and $\mc{D}'_2 \mapsto^* \mc{D}$. Moreover, for some $\mc{D}'$ we have $\mc{D} \xrightarrow{\overline{x_\alpha}\,q}\mc{D}'$, such that $\mc{D}'''_2 \mapsto^* \mc{D}'$.  Recall that we also have $\mc{D}'''_2 \mapsto^* \mc{D}^4_2$. Again, we apply the confluence lemma(\Cref{apx:lem:confluence}) to get a $\mc{D}''$ such that $\mc{D}^4_2\mapsto^* \mc{D}''$ and $\mc{D}'\mapsto^* \mc{D}''$. 
   By coinduction on $\mc{D}'_1 \approx_a \mc{D}^4_2$ having the assumption $\mc{D}^4_2 \mapsto^* \mc{D}''$, we get  $\mc{D}'_1 \approx_a \mc{D}''$. Moreover, we have $\mc{D}'_2 \xRightarrow{\overline{x_\alpha}\,q} \mc{D}''$, i.e., $\mc{D}'_2 \mapsto^* \mc{D}$, and $\mc{D} \xrightarrow{\overline{x_\alpha}\,q}\mc{D}'$, and $\mc{D}'\mapsto^* \mc{D}''$.
   This is enough to get the result.
    \item{$(\m{left\, input})$} Consider an arbitrary $q$ which is not free in $\mc{D}_1$, and assume $\mc{D}_1 \xrightarrow{{\mathbf{L}\, x_\alpha\,q}} \mc{D}'_1$, then $\exists \mc{D}''_2. \, \mc{D}_2 \xRightarrow{\tau} \mc{D}''_2$ and {$\mc{D}'_1 \approx_a \mathbf{msg}(x_\alpha.q) \mc{D}''_2,$}.  This gives us $\mc{D}_2 \mapsto^* \mc{D}''_2$. We also know by assumption that $\mc{D}_2 \mapsto^* \mc{D}'_2$. By the confluence lemma(\Cref{apx:lem:confluence}), we can build a $\mc{D}$ such that $\mc{D}''_2 \mapsto^* \mc{D}$ and $\mc{D}'_2 \mapsto^* \mc{D}$. This also gives us $\mathbf{msg}(x_\alpha.q)\mc{D}''_2 \mapsto^* \mathbf{msg}(x_\alpha.q)\mc{D}$.
    By coinduction on $\mc{D}'_1 \approx_a \mathbf{msg}(x_\alpha.q) \mc{D}''_2$ having the assumption $\mathbf{msg}(x_\alpha.q)\mc{D}''_2 \mapsto^* \mathbf{msg}(x_\alpha.q)\mc{D}$, we get  $\mc{D}'_1 \approx_a \mathbf{msg}(x_\alpha.q) \mc{D}$. Moreover, we have $\mc{D}'_2 \xRightarrow{} \mc{D}$, i.e., $\mc{D}'_2 \mapsto^* \mc{D}$. This is enough to complete the proof of this case.
   
  \item{$(\m{right\, input})$} for all $q\not \in \m{fn}(\mc{D}_1)$, if $\mc{D}_1 \xrightarrow{{\mathbf{R}\, x_\alpha\,q}} \mc{D}'_1$ then  $\exists \mc{D}'_2. \, \mc{D}_2 \xRightarrow{\tau} \mc{D}'_2$ and   {$\mc{D}'_1 \approx_a \mc{D}'_2\mathbf{msg}(x_\alpha.q).$} The proof is similar to the previous case.
  \end{enumerate}
\end{proof}

\section{Biorthogonality}
\label{apx:sec:biorthogonal}
\subsection{Equivalence}
\begin{definition}[Session-typed environment]
    A session-typed environment with the interface $\Delta \Vdash K$, is of the form $\mc{C}[\,\,]\mc{F}$, such that for some $\Delta'$ and $K'$, we have $\Delta' \Vdash \mc{C}:: \Delta$ and $K \Vdash \mc{F}:: K'$.
\end{definition}

\begin{definition}[Program- and environment- relations]
A session program-relation is a binary relation between session-typed programs, i.e., open configurations of the form $\Delta \Vdash \mc{D}:: K$. Given the interface $\Delta \Vdash K$, we write $\mathit{PRel}(\Delta \Vdash K)$ for the set of all program relations that relate programs $\Delta \Vdash \mc{D}:: K$.

A session environment-relation is a binary relation between session-typed environments. Given the interface $\Delta \Vdash K$, we write $\mathit{ERel}(\Delta \Vdash K)$ for the set of all environment relations that relate environments $\mc{C}[\,\,]\mc{F}$ with the interface $\Delta \Vdash K$.
\end{definition}

\begin{definition}[The $(\_)^\top$ operation on relations]
    Given any interface $\Delta \Vdash K$ and $r \in \mathit{PRel}(\Delta \Vdash K)$, we define $r^\top \in \mathit{ERel}(\Delta \Vdash K)$ by
    \[(\mc{C}_1 [\,\,]\mc{F}_1, \mc{C}_2 [\,\,] \mc{F}_2) \in r^\top  \,\, \m{iff}\,\, \forall (\mc{D}_1, \mc{D}_2) \in r. (\mc{C}_1\mc{D}_1\mc{F}_1 \approx_a \mc{C}_2\mc{D}_2\mc{F}_2);\]
    and given any $s \in \mathit{ERel}(\Delta \Vdash K)$, we define $s^\top \in \mathit{PRel}(\Delta \Vdash K)$ by
    \[(\mc{D}_1, \mc{D}_2) \in s^\top  \,\, \m{iff}\,\, \forall (\mc{C}_1[\,\,]\mc{F}_1, \mc{C}_2[\,\,]\mc{F}_2) \in s. (\mc{C}_1\mc{D}_1\mc{F}_1 \approx_a \mc{C}_2\mc{D}_2\mc{F}_2);\]

    As explained in Pitts \myCite{PittsMSCS2000}, just by virtue of how these definitions are defined, we get that the operation is a Galois connection, which is inflationary, and idempotent.
\end{definition}

\begin{definition}\label{apx:def:eq} $\,$
\begin{itemize}
  \item  Define the relation $\Delta \Vdash \mc{D}_1 :: K \equiv \Delta \Vdash \mc{D}_2 :: K $  as $(\mc{D}_1, \mc{D}_2)\in \m{Tree}(\Delta \Vdash K)$ and $\forall m.\, (\mc{D}_1, \mc{D}_2)\in \mc{E}\llbracket \Delta \Vdash K \rrbracket$ and $\forall m. (\mc{D}_2, \mc{D}_1) \in \mc{E}\llbracket \Delta \Vdash K \rrbracket$.  
  \item  Define the relation $ \Delta \dashv \mc{C}_1[\,\,] \mc{F}_1 \dashv K \equiv \Delta \dashv \mc{C}_2 [\,\,]\mc{F}_2 \dashv K$  as (i) \,$\exists K'$ such that $K \Vdash \mc{F}_1 ::K' \equiv K \Vdash \mc{F}_2:: K'$ and (ii) \,$\exists \Delta'$ such that $(\mc{C}_1, \mc{C}_2)\in \m{Forest}(\Delta' \Vdash \Delta)$ and $\forall \mc{T}_1\in \mc{C}_1,\,\m{and}\, \forall \mc{T}_2 \in \mc{C}_2$ with $(\mc{T}_1,\mc{T}_2)\in \m{Tree}(\Delta'_1\Vdash K'')$, we have 
  $\Delta'_1 \Vdash \mc{T}_1 ::K'' \equiv \Delta'_1 \Vdash \mc{T}_2:: K''$.
\end{itemize}
\end{definition}

\begin{theorem}[$\top\top$-closure]\label{apx:thm:biorthogonal}
Consider $(\mc{D}_1, \mc{D}_2) \in \m{Tree}(\Delta \Vdash K)$, we have
    {\small\[\begin{array}{c}
    (\Delta \Vdash \mc{D}_1:: K) \equiv (\Delta \Vdash \mc{D}_2:: K)\\[5pt]
    \mathbf{iff} \\[4pt]
    \forall \mathcal{C}_1, \mathcal{C}_2, \mathcal{F}_1, \mathcal{F}_2.\,\m{s.t.}\,
    (\Delta \dashv \mc{C}_1[\,\,]\mc{F}_1\dashv  K) \equiv (\Delta \dashv \mc{C}_2[\,\,]\mc{F}_2 \dashv K)\,\,\m{we \, have}\\
    \forall m. \, {{\mathcal{C}_1}{\mathcal{D}_1} {\mathcal{F}_1}}\, \approx_a \,\mathcal{C}_2{\mathcal{D}_2} {\mathcal{F}_2}
  \end{array}\]}
\end{theorem}
  \begin{proof}
There are two directions to consider:
\begin{description}
  \item {\bf Left to Right:} The result is straightforward by the compositionality result(\Cref{apx:lem:compositionality}) and the soundness result (\Cref{apx:lem:soundness-completeness}) and the definition of equivalence $\equiv$ for programs and environments (\Cref{apx:def:eq}). 
  
  \item {\bf Right to Left:} Consider $\mc{C}_1=\mc{C}_2=\mc{F}_1=\mc{F}_2 = \cdot$. By \Cref{apx:lem:reflexivity} and \Cref{apx:def:eq}, we know that $(\Delta \dashv \cdot\,[\,\,]\,\cdot\dashv  K) \equiv (\Delta \dashv\, \cdot[\,\,]\,\cdot \dashv K)$. From this we get  $\forall m. \, {{\mathcal{D}_1} }\, \approx_a \,{\mathcal{D}_2}$, and by the completeness lemma (\Cref{apx:lem:soundness-completeness}) we get $\forall m. \, (\mathcal{D}_1 ,\mathcal{D}_2) \in \mathcal{E}\llbracket \Delta \Vdash K \rrbracket^m$ and $\forall m. \, (\mathcal{D}_2 ,\mathcal{D}_1) \in \mathcal{E}\llbracket \Delta \Vdash K \rrbracket^m$.  By \Cref{apx:def:eq}, we have $\Delta \Vdash \mc{D}_1 :: K \equiv \Delta \Vdash \mc{D}_2 :: K $, which completes the proof. 
\end{description}
This result is enough to prove that our equivalence relation is $\top\top$-closed, i.e., following the results presented in the results presented in Pitts \myCite{PittsMSCS2000}, it shows $r=s^\top$, when $r$ and $s$ defined as our logical equivalence.  
  \end{proof}

\subsection{Noninterference}
\begin{definition}[Session-typed environment]
    A session-typed environment with the interface $\Delta \Vdash K$, is of the form $\mc{C}[\,\,]\mc{F}$, such that for some $\Delta'$ and $K'$, we have $\Delta' \Vdash \mc{C}:: \Delta$ and $K \Vdash \mc{F}:: K'$.
\end{definition}

\begin{definition}[Program- and environment- relations]
A security session program-relation is a binary relation between session-typed programs, i.e., open configurations of the form $\Delta \Vdash \mc{D}:: K$. Given a security level $\xi \in \Psi_0$, and two interfaces $\Gamma_1 \Vdash x_\alpha{:}A_1[c_1]$ and $\Gamma_2 \Vdash y_\beta{:}A_2[c_2]$ 
we write $\mathit{PRel}^\xi(\Gamma_1 \Vdash x_\alpha{:}A_1[c_1], \Gamma_2 \Vdash y_\beta{:}A_2[c_2])$ for the set of all security program relations that relate programs $\erasure{\Gamma_1}\Vdash \mc{D}_1:: x_\alpha{:}A_1$ and $\erasure{\Gamma_2}\Vdash \mc{D}_2:: y_\beta{:}A_2$.

A {\bf low-security} session environment-relation is a binary relation between low security session-typed environments. Given two interfaces $\Gamma_1 \Vdash x_\alpha{:}A_1[c_1]$ and $\Gamma_2 \Vdash y_\beta{:}A_2[c_2]$, we write $\mathit{ERel}^\xi(\Gamma_1 \Vdash x_\alpha{:}A_1[c_1], \Gamma_2 \Vdash y_\beta{:}A_2[c_2])$ for the set of all environment relations that relate low security session-typed environments $\mc{C}_1[\,\,]\mc{F}_1$ with the interface $\erasure{\Gamma_1 \Downarrow \xi} \Vdash \erasure{x_\alpha{:}A_1[c_1] \Downarrow \xi}$ and $\mc{C}_2[\,\,]\mc{F}_2$ with the interface $\erasure{\Gamma_2 \Downarrow \xi} \Vdash {\erasure{y_\beta{:}A_2[c_2] \Downarrow \xi}}$. 
\end{definition}

\begin{definition}[The $(\_)^\top$ operation on relations]
  Given two interfaces $\Gamma_1 \Vdash x_\alpha{:}A_1[c_1]$ and  $\Gamma_2 \Vdash y_\beta{:}A_2[c_2]$ and the observer security level $\xi\in \Psi_0$ and $r \in \mathit{PRel}^\xi(\Gamma_1 \Vdash x_\alpha{:}A_1[c_1], \Gamma_2 \Vdash y_\beta{:}A_2[c_2])$, we define $r^\top \in \mathit{ERel}^\xi(\Gamma_1 \Vdash x_\alpha{:}A_1[c_1], \Gamma_2 \Vdash y_\beta{:}A_2[c_2])$ by
  {\small \[\begin{array}{cc}
    (\mc{C}_1 [\,\,]\mc{F}_1, \mc{C}_2 [\,\,] \mc{F}_2) \in r^\top  \,\, \m{iff}\,\, \forall (\mc{D}_1, \mc{D}_2) \in r.\\ 
     \forall \mc{B}_1 \in \m{H\text{-}Provider}^\xi(\Gamma_1), \mc{B}_2\in \m{H\text{-}Provider}^\xi(\Gamma_2), \mc{T}_1 \in \m{H\text{-}CLient}^\xi(x_\alpha{:}A_1[c_1]),\, \mc{T}_2 \in \m{H\text{-}Client}^\xi(y_\beta{:}A_2[c_2]).\\\,{\mc{B}_1{\mathcal{C}_1}{\mathcal{D}_1} {\mathcal{F}_1}\mc{T}_1}\, \approx_a \,\mc{B}_2\mathcal{C}_2{\mathcal{D}_2} {\mathcal{F}_2}\mc{T}_2
  \end{array}\]}
  and given any $s \in \mathit{ERel}(\Delta \Vdash K)$, we define $s^\top \in \mathit{PRel}(\Delta \Vdash K)$ by
 {\small \[\begin{array}{c}
    (\mc{D}_1, \mc{D}_2) \in s^\top  \,\, \m{iff}\,\, \forall (\mc{C}_1[\,\,]\mc{F}_1, \mc{C}_2[\,\,]\mc{F}_2) \in s.\\
    \forall \mc{B}_1 \in \m{H\text{-}Provider}^\xi(\Gamma_1), \mc{B}_2\in \m{H\text{-}Provider}^\xi(\Gamma_2), \mc{T}_1 \in \m{H\text{-}CLient}^\xi(x_\alpha{:}A_1[c_1]),\, \mc{T}_2 \in \m{H\text{-}Client}^\xi(y_\beta{:}A_2[c_2]).\\
    \,{\mc{B}_1{\mathcal{C}_1}{\mathcal{D}_1} {\mathcal{F}_1}\mc{T}_1}\, \approx_a \,\mc{B}_2\mathcal{C}_2{\mathcal{D}_2} {\mathcal{F}_2}\mc{T}_2
  \end{array}
    \]}

  As explained in Pitts \myCite{PittsMSCS2000}, just by virtue of how these definitions are defined, we get that the operation is a Galois connection, which is inflationary, and idempotent.
\end{definition}

\begin{definition}[Equivalence of forests by the logical relation upto the observer level]\label{def:eq-forests-sec}
We write $\Gamma'_1 \Vdash \mc{C}_1:: \Gamma_1 \equiv^\xi_{\Psi_0} \Gamma'_2 \Vdash \mc{C}_2::\Gamma_2$ iff
\begin{itemize}
  \item[(i)]{\bf (Both forests are well-typed.)}  $\mc{C}_1 \in \m{Forest}(\erasure{\Gamma'_1}\Vdash \erasure{\Gamma_1})$ and  $\mc{C}_2 \in \m{Forest}(\erasure{\Gamma'_2}\Vdash \erasure{\Gamma_2})$
  \item[(ii)] {\bf (Their observable interface is the same.)}
  $\Gamma'_1 \Downarrow \xi = \Gamma'_2 \Downarrow \xi$ and  $\Gamma_1 \Downarrow \xi = \Gamma_2 \Downarrow \xi$, and
  \item[(iii)] {\bf (Each tree in the 1st forest with an observable interface has a counterpart in the 2nd forest which is equivalent to it.)}  for every $\mc{T}_1 \in \mc{C}_1$ such that $ \mc{T}_1\in \m{Tree}(\erasure{\Gamma''_1} \Vdash  \erasure{x_\alpha{:}A_1[c_1]})$, and $\Gamma''_1 \subseteq \Gamma'_1$ and $x_\alpha{:}A_1[c_1] \in \Gamma_1$ and $(\Gamma''_1, x_\alpha{:}A_1[c_1]) \Downarrow \xi \not = \cdot$ there exists a $\mc{T}_2 \in \mc{C}_2$ such that $\ \mc{T}_2 \in \m{Tree}(\erasure{\Gamma''_2}\Vdash \erasure{y_\beta{:}A_2[c_2]})$, and $\Gamma''_2 \subseteq \Gamma'_2$ and $y_\beta{:}A_2[c_2] \in \Gamma_2$ for $\Gamma''_1 \Downarrow \xi =\Gamma''_2 \Downarrow \xi$ and $y_\beta{:}A_2[c_2] \Downarrow \xi  = x_\alpha{:}A_1[c_1] \Downarrow \xi $, and
  \item[(iv)]{\bf (Each tree in the 2nd forest with an observable interface has a counterpart in the 1st forest which is equivalent to it.)}  vice versa. 
\end{itemize}
\end{definition}

\begin{definition}\label{apx:def:eq-sc} $\,$
 Define the relation \[ \Gamma_1 \dashv \mc{C}_1[\,\,] \mc{F}_1 \dashv x_\alpha{:}A_1[c_1] \equiv^\xi_{\Psi_0} \Gamma_2 \dashv \mc{C}_2 [\,\,]\mc{F}_2 \dashv y_\beta{:}A_2[c_2]\,\,\, \m{as}\]
 \begin{enumerate}
  \item[(i)] {\bf (The observable interface of the environments are the same.)}  We have $\Gamma=\Gamma_1\Downarrow \xi= \Gamma_2\Downarrow \xi$ and $K^s=x_\alpha{:}A_1[c_1]\Downarrow \xi= y_\beta{:}A_2[c_2]\Downarrow \xi,$ and 
  \item[(ii)]{\bf (The configurations $\mc{C}_1$ and $\mc{C}_2$ offer channels annotated as observable and use the same channels and are equivalent for any observable (low-confidentiality) annotation of their resource channel.)} for some $\Delta$, we have $(\mc{C}_1, \mc{C}_2)\in \m{Forest}(\Delta \Vdash \erasure{\Gamma})$. In particular, if $\Gamma= \cdot$, we have $\mc{C}_1=\mc{C}_2=\cdot$.
  
  Moreover,  for every $\Gamma'_1$ and $\Gamma'_2$ such that $\erasure{\Gamma'_1}=\erasure{\Gamma'_2}= \Delta$, and $\forall w_\eta{:}A[c]\in \Gamma'_1. \,\,c \sqsubseteq \xi$ and $\forall w_\eta{:}A[c]\in \Gamma'_2. \,\,c \sqsubseteq \xi$  we have $ \Gamma'_1 \Vdash \mc{C}_1:: \Gamma \equiv^\xi_{\Psi_0} \Gamma'_2 \Vdash \mc{C}_2 :: \Gamma$.
  \item[(iii)] {\bf(The configurations $\mc{F}_1$ and $\mc{F}_2$ use channels annotated as observable, offer the same channels, and are equivalent for any observable (low-confidentiality) annotation of their offering channels.)} If $K^s= \_{:}1[\top]$, then $\mc{F}_1=\mc{F}_2=\cdot$. Otherwise, for some $w_\eta{:}A$, we have $(\mc{F}_1,\mc{F}_2) \in \m{Tree}(K^s \Vdash w_{\eta}{:}A)$. Moreover, \,$\forall c,c' \sqsubseteq \xi $, we have $K^s \Vdash \mc{F}_1 ::w_{\eta}{:}A[c] \equiv^\xi_{\Psi_0} K^s \Vdash \mc{F}_2::w_{\eta}{:}A[c']$ and.
  
 \end{enumerate} 
\end{definition}

\begin{lemma}[Compositionality of $\equiv^\xi_{\Psi_0}$] \label{apx:lem:compositionality-eq} 
 If \[\begin{array}{lll}
  (i) & (\Gamma'_1 \Vdash \mc{C}_1:: \Gamma) \equiv^\xi_{\Psi_0} (\Gamma'_2 \Vdash \mc{C}_2:: \Gamma)\,\, \m{with}\,\, \Gamma=\Gamma_1 \Downarrow_\xi= \Gamma_2\Downarrow_\xi\,\,\m{and}\\
  (ii) &(\Gamma_1 \Vdash \mc{D}_1:: x_\alpha{:}A_1[c_1]) \equiv^\xi_{\Psi_0} (\Gamma_2 \Vdash \mc{D}_2:: y_\beta{:}A_2[c_2])\,\m{and}\\
  (iii) &(K^s \Vdash \mc{F}_1::K^s_1) \equiv^\xi_{\Psi_0} (K^s \Vdash \mc{F}_2:: K^s_2)\,\,\m{with}\, K^s=  x_\alpha{:}A_1[c_1] \Downarrow \xi = y_\beta{:}A_2[c_2] \Downarrow \xi\,\m{then}\\
 \end{array}\]
 \[(\Gamma'_1, \Gamma^h_1 \Vdash \mc{C}_1\mc{D}_1\mc{F}_1:: K^s_1) \equiv^\xi_{\Psi_0} (\Gamma'_2, \Gamma^h_2 \Vdash \mc{C}_2\mc{D}_2\mc{F}_2:: K^s_2),\]
 where $\Gamma^h_1$ is the set of all channels $w_\eta{:}C[d] \in \Gamma_1$ with $d \not \sqsubseteq \xi$ and $\Gamma^h_2$ is the set of all channels $w_\eta{:}C[d] \in \Gamma_2$ with $d \not \sqsubseteq \xi$.
\end{lemma}
\begin{proof}
By the configuration typing and definition of equivalence (\Cref{apx:def:noninterference}) and (i), (ii), (iii), we know that $\mc{C}_1\mc{D}_1\mc{F}_1\in \m{Tree}(\erasure{\Gamma'_1} \Vdash \erasure{K^s_1})$ and $\mc{C}_2\mc{D}_2\mc{F}_2\in \m{Tree}(\erasure{\Gamma'_2} \Vdash \erasure{K^s_2})$. Moreover, by definition of equivalence (\Cref{apx:def:noninterference}) and (i), (iii) we get $\Gamma'=(\Gamma'_1, \Gamma^h_1) \Downarrow \xi= (\Gamma'_2, \Gamma^h_1) \Downarrow \xi$ and $K^{s'}=K^s_1\Downarrow \xi = K^s_2 \Downarrow \xi$.

Assume arbitrary configurations {\small$\mc{B}_1 \in \mb{H\text{-}Provider}^\xi(\Gamma'_1)$, $\mc{B}^h_1 \in \mb{H\text{-}Provider}^\xi(\Gamma^h_1)$,  $\mc{B}_2 \in \mb{H\text{-}Provider}^\xi(\Gamma'_2)$, $\mc{B}^h_2 \in \mb{H\text{-}Provider}^\xi(\Gamma^h_2)$,  $\mc{T}_1 \in \mb{H\text{-}Client}^\xi(K^s_1)$, and $\mc{T}_2 \in \mb{H\text{-}Client}^\xi(K^s_2)$.} Our goal is to prove 
\[\begin{array}{l}  \forall\,m.\,(\mc{B}_1\mc{B}^h_1\mc{C}_1\mc{D}_1\mc{F}_1\mc{T}_1, \mc{B}_2\mc{B}^h_2\mc{C}_2\mc{D}_2\mc{F}_2\mc{T}_2) \in \mc{E}\llbracket \erasure{\Gamma'} \Vdash \erasure{K^{s'}} \rrbracket^{m},\,
  \m{and} \\[1pt]
  \forall\,m.\,(\mc{B}_2\mc{B}^h_2\mc{C}_2\mc{D}_2\mc{F}_2\mc{T}_2, \mc{B}_1\mc{B}^h_1\mc{C}_1\mc{D}_1\mc{F}_1\mc{T}_1) \in \mc{E}\llbracket \erasure{\Gamma'} \Vdash \erasure{K^{s'}} \rrbracket^{m}.
  \end{array}
  \]

By assumptions (i)-(iii) we have:
\begin{itemize}
\item[(i')] 
Note that for any trees $\mc{A}''_1 \in \mc{C}_1$ and $\mc{A}''_2 \in \mc{C}_2$ they have an observable offering channel occuring in $\Gamma$.
Assume that $\mc{C}_1$ as a forest includes $n$ separate trees. By the previous observation, we know that $\mc{C}_2$ also consists of $n$ separate trees. For any $j\le n$ consider the tree $\mc{A}^j_1 \in \mc{C}_1$ consider the corresponding tree $\mc{A}^j_2 \in \mc{C}_2$ that exists by \Cref{def:eq-forests-sec} and for which we have $(\dagger_j)\,\,\Gamma^{j'}_1\Vdash \mc{A}^j_1:: w^j_\eta{:}A^j[c^j] \equiv^\xi_{\Psi_0}\Gamma^{j'}_2 \Vdash \mc{A}^j_2:: w^j_\eta{:}A^j[c^j]$. 

\,\,From $\dagger_j$, we get $\Gamma^{j'}=\Gamma^{j'}_1=\Gamma^{j'}_2 $ and for arbitrary configurations $\mc{T}^j_1 \in \mb{H\text{-}Client}^\xi(\Gamma^{j'}_1)$, and $\mc{T}^j_2 \in \mb{H\text{-}Client}^\xi(\Gamma^{j'}_2)$, we have 

\[\begin{array}{l}  \forall\,m.\,(\mc{B}^j_1\mc{C}_1, \mc{B}^j_2\mc{C}_2) \in \mc{E}\llbracket \erasure{\Gamma^{j'} \Downarrow \xi} \Vdash w^j_\eta{:}A^j\rrbracket^{m},\,
  \m{and} \\[1pt]
  \forall\,m.\,(\mc{B}^j_2\mc{C}_2, \mc{B}^j_1\mc{C}_1) \in \mc{E}\llbracket  \erasure{\Gamma^{j'} \Downarrow \xi} \Vdash w^j_\eta{:}A^j \rrbracket^{m}.
  \end{array}
  \]

Observe that $\Gamma'_1=\{\Gamma^{j'}_1\}_{j\le n}$ and  $\Gamma'_2=\{\Gamma^{j'}_2\}_{j\le n}$, and $\Gamma=\{w^j_\eta{:}A^j[c^j]\}_{j \le n}$. 
 \item[(ii')]By \Cref{apx:def:noninterference}, we know that for arbitrary configurations $\mc{B}^h_1 \in \mb{H\text{-}Provider}^\xi(\Gamma_1)$, and  $\mc{B}^d_2 \in \mb{H\text{-}Provider}^\xi(\Gamma_2)$, and $\mc{T}^d_1 \in \mb{H\text{-}Client}^\xi(x_\alpha{:}{A}_1[c_1])$, and $\mc{T}^d_2 \in \mb{H\text{-}Client}^\xi(y_\beta{:}A_2[c_2])$.

  \[\begin{array}{l}  \forall\,m.\,(\mc{B}^d_1\mc{D}_1\mc{T}^d_1, \mc{B}^d_2\mc{D}_2\mc{T}^d_2) \in \mc{E}\llbracket \erasure{\Gamma} \Vdash \erasure{K^s} \rrbracket^{m},\,
    \m{and} \\[1pt]
    \forall\,m.\,(\mc{B}^d_2\mc{D}_2\mc{T}^d_2, \mc{B}^d_1\mc{D}_1\mc{T}^d_1) \in \mc{E}\llbracket \erasure{\Gamma} \Vdash \erasure{K^s} \rrbracket^{m}.
    \end{array}
    \] 

  \item[(iii')] By \Cref{apx:def:noninterference}, if $c_1\sqsubseteq \xi$ and $c_2 \sqsubseteq \xi$, we have  for any $\mc{T}_1 \in \mb{H\text{-}Client}^\xi(K^s_1)$, and $\mc{T}_2 \in \mb{H\text{-}Client}^\xi(K^s_2)$
  
  \[\begin{array}{l}  \forall\,m.\,(\mc{F}_1\mc{T}^h_1, \mc{F}_2\mc{T}^h_2) \in \mc{E}\llbracket \erasure{K^s} \Vdash \erasure{K^{s'}} \rrbracket^{m},\,
    \m{and} \\[1pt]
    \forall\,m.\,(\mc{F}_2\mc{T}^h_2, \mc{F}_1\mc{T}^h_1) \in \mc{E}\llbracket \erasure{K^s} \Vdash \erasure{K^{s'}} \rrbracket^{m},\,
    \end{array}
    \] 

\end{itemize}

It is straightforward to show that given (i'-iii'), and several applications of the compositionality lemma \Cref{apx:lem:compositionality} we get the goal.
\end{proof}

\begin{theorem}[$\top\top$-closure]\label{apx:thm:biorthogonal-sc}
Consider $\mc{D}_1 \in \m{Tree}(\erasure{\Gamma_1} \Vdash \erasure{x_\alpha{:}A_1[c_1]})$ and $\mc{D}_2 \in \m{Tree}(\erasure{\Gamma_2} \Vdash \erasure{y_\beta{:}A_2[c_2]})$ and a given observer level $\xi \in \Psi_0$. We have
  {\small \[\begin{array}{c}
  (\Gamma_1 \Vdash \mc{D}_1:: x_\alpha{:}A_1[c_1]) \equiv^\xi_{\Psi_0} (\Gamma_2 \Vdash \mc{D}_2:: y_\beta{:}A_2[c_2])\\[5pt]
  \mathbf{iff} \\[4pt]
  \forall \mathcal{C}_1, \mathcal{C}_2, \mathcal{F}_1, \mathcal{F}_2.\,\m{s.t.}\,
  (\Gamma_1 \dashv \mc{C}_1[\,\,]\mc{F}_1\dashv  x_\alpha{:}A_1[c_1]) \equiv^\xi_{\Psi_0} (\Gamma_2 \dashv \mc{C}_2[\,\,]\mc{F}_2 \dashv y_\beta{:}A_2[c_2])\,\,\m{we \, have}\\
  \forall \mc{B}_1 \in \m{H\text{-}Provider}^\xi(\Gamma_1), \mc{B}_2\in \m{H\text{-}Provider}^\xi(\Gamma_2), \mc{T}_1 \in \m{H\text{-}CLient}^\xi(x_\alpha{:}A_1[c_1]),\, \mc{T}_2 \in \m{H\text{-}Client}^\xi(y_\beta{:}A_2[c_2]).\\{\mc{B}_1{\mathcal{C}_1}{\mathcal{D}_1} {\mathcal{F}_1}\mc{T}_1}\, \approx_a \,\mc{B}_2\mathcal{C}_2{\mathcal{D}_2} {\mathcal{F}_2}\mc{T}_2
\end{array}\]}
\end{theorem}
\begin{proof}
There are two directions to consider:
\begin{description}
\item {\bf Left to Right:} Consider $(\Gamma_1 \Vdash \mc{D}_1:: x_\alpha{:}A_1[c_1]) \equiv^\xi_{\Psi_0} (\Gamma_2 \Vdash \mc{D}_2:: y_\beta{:}A_2[c_2])$ and arbitrary $\mc{C}_1, \mc{C}_2, \mc{F}_1, \mc{F}_2$ such that we have 
\[(\Gamma_1 \dashv \mc{C}_1[\,\,]\mc{F}_1\dashv  x_\alpha{:}A_1[c_1]) \equiv^\xi_{\Psi_0} (\Gamma_2 \dashv \mc{C}_2[\,\,]\mc{F}_2 \dashv y_\beta{:}A_2[c_2]).\] 

By \Cref{apx:def:eq-sc}, we get that $\Gamma=\Gamma_1\Downarrow \xi= \Gamma_2\Downarrow \xi$ and $K^s=x_\alpha{:}A_1[c_1]\Downarrow \xi= y_\beta{:}A_2[c_2]\Downarrow \xi$. Moreover for some $\Delta$, we have $(\mc{C}_1,\mc{C}_2) \in \m{Forest}(\Delta \Vdash \erasure{\Gamma})$ and for any low secrecy annotations of $\Delta$ as $\Gamma'_1$ and $\Gamma'_2$ we get $\Gamma'_1 \Vdash \mc{C}_1:: \Gamma\equiv^\xi_{\Psi_0} \Gamma'_2 \Vdash \mc{C}_2:: \Gamma$. Also if $K^s=\_{:}1[\top]$, we have $\mc{F}_1=\mc{F}_2=\cdot$, and otherwise $(\mc{F}_1, \mc{F}_2)\in \m{Tree}(K^s\Vdash w_\eta{:}A)$ for some $w_\eta{:}A$, for any low secrecy annotations (with $c_3,c_4 \sqsubseteq \xi$) of $w_\eta{:}A$ as $w_\eta{:}A[c_3]$  and $w_\eta{:}A[c_4]$ we get $K^s \Vdash \mc{C}_1:: w_\eta{:}A[c_3]\equiv^\xi_{\Psi_0} K^s \Vdash \mc{C}_2:: w_\eta{:}A[c_4]$.

 Our goal is to prove:
{\small\[\begin{array}{c}
  \forall \mc{B}_1 \in \m{H\text{-}Provider}^\xi(\Gamma_1), \mc{B}_2\in \m{H\text{-}Provider}^\xi(\Gamma_2), \mc{T}_1 \in \m{H\text{-}CLient}^\xi(x_\alpha{:}A_1[c_1]),\, \mc{T}_2 \in \m{H\text{-}Client}^\xi(y_\beta{:}A_2[c_2]).\\{{\mathcal{C}_1}{\mathcal{D}_1} {\mathcal{F}_1}}\, \approx_a \,\mathcal{C}_2{\mathcal{D}_2} {\mathcal{F}_2}.
\end{array}\]
}

By compositionality of equivalence upto relation (\Cref{apx:lem:compositionality-eq}), we get\\
 $ \Gamma'_1, \Gamma^h_1  \Vdash \mc{C}_1\mc{D}_1 \mc{F}_1:: x_\alpha{:}A_1[c_1] \equiv^\xi_{\Psi_0} \Gamma'_2, \Gamma^h_2 \Vdash \mc{C}_2\mc{D}_2\mc{F}_2 :: y_\beta{:}A_2[c_2]$ in the case where $K^s:= \_{:}1[\top]$, i.e., $c_1, c_2 \not \sqsubseteq \xi$, and otherwise we get $\Gamma'_1, \Gamma^h_1  \Vdash \mc{C}_1\mc{D}_1 \mc{F}_1:: w_\eta{:}A[c_3] \equiv^\xi_{\Psi_0} \Gamma'_2, \Gamma^h_2 \Vdash \mc{C}_2\mc{D}_2\mc{F}_2 ::  w_\eta{:}A[c_4]$ where $\Gamma^h_1$ is the set of all channels $z_\gamma{:}C[d] \in \Gamma_1$ with $d \not \sqsubseteq \xi$ and $\Gamma^h_2$ is the set of all channels $z_\gamma{:}C[d] \in \Gamma_2$ with $d \not \sqsubseteq \xi$.

By the soundness theorem for equivalence relation (\Cref{apx:cor:eq-soundness-completeness}), we get $\, \Gamma'_1, \Gamma^h_1 \Vdash \mathcal{C}_1{\mathcal{D}_1} {\mathcal{F}_1} :: x_\alpha{:}A_1[c_1] \, \approx^\xi_a \,\Gamma'_1, \Gamma^h_1 \Vdash \mathcal{C}_2{\mathcal{D}_2} {\mathcal{F}_2} :: y_\beta{:}A_2[c_2]$ in the first case where $K^s=\_{:}1[\top]$ and otherwise $ \,\Gamma'_1, \Gamma^h_1 \Vdash \mathcal{C}_1{\mathcal{D}_1} {\mathcal{F}_1} :: w_\eta{:}A[c_3] \, \approx^\xi_a \,\Gamma'_1, \Gamma^h_1 \Vdash \mathcal{C}_2{\mathcal{D}_2} {\mathcal{F}_2} ::w_\eta{:}A[c_4]$. 
In the first case, the proof is complete by \Cref{apx:def:simequiv} and the observation that all channels in $\Gamma'_1$ and $\Gamma'_2$ are annotated with channels of secrecy $d' \sqsubseteq \xi$. In the second case, the proof is complete by \Cref{apx:def:simequiv} and observing that all channels in $\Gamma'_1$ and $\Gamma'_2$ are annotated with channels of secrecy $d' \sqsubseteq \xi$ and that $c_1,c_2 \sqsubseteq \xi$ and $c_3,c_4 \sqsubseteq \xi$.

\item {\bf Right to Left:} Assume 
$\mc{C}_1=\mc{C}_2=\mc{F}_1=\mc{F}_2 = \cdot$. By \Cref{apx:lem:reflexivity} and \Cref{apx:def:eq}, we know that $(\Gamma_1 \dashv\cdot\,[\,\,]\,\cdot\dashv  x_\alpha{:}A_1[c_1]) \equiv (\Gamma_2 \dashv\, \cdot[\,\,]\,\cdot \dashv y_\beta{:}A_1[c_2])$. From this we get $ {\Gamma_1 \Vdash {\mathcal{D}_1}:: x_{\alpha}{:}A_1[c_1] }\, \approx^\xi_a \,{\Gamma_2 \Vdash \mathcal{D}_2}:: y_{\beta}{:}A_2[c_2]$. Now, we can apply the completeness lemma (\Cref{apx:cor:eq-soundness-completeness}) to get $(\Gamma_1 \Vdash \mc{D}_1:: x_\alpha{:}A_1[c_1]) \equiv^\xi_{\Psi_0} (\Gamma_2 \Vdash \mc{D}_2:: y_\beta{:}A_2[c_2])$.  
\end{description}

This result is enough to prove that our equivalence relation is $\top\top$-closed, i.e., following the results presented in the results presented in Pitts \myCite{PittsMSCS2000}, it shows $r=s^\top$, when $r$ and $s$ defined as our logical equivalence.  

\end{proof}

\end{document}